\documentclass[11pt,a4paper]{article}

\usepackage{amsmath,amsthm,amsfonts,amssymb}
\usepackage[a4paper,margin=2cm]{geometry}
\usepackage{colonequals}
\usepackage{graphicx,tikz}
\usepackage{ifthen}
\usepackage{authblk}

\newboolean{ElectronicVersion}
\setboolean{ElectronicVersion}{true}

\ifthenelse{\boolean{ElectronicVersion}}{
    \usepackage[a4paper=true,pdftex,bookmarks,pagebackref,
	plainpages=false, 
        pdfpagelabels=true 
        ]{hyperref}}{}

\makeatletter
\newtheorem*{rep@theorem}{\rep@title}
\newcommand{\newreptheorem}[2]{%
\newenvironment{rep#1}[1]{%
 \def\rep@title{#2 \ref*{##1}}%
 \begin{rep@theorem}}%
 {\end{rep@theorem}}}
\makeatother

\usepackage{hyperref}
\hypersetup{
    bookmarksnumbered=true, 
    unicode=false, 
    pdfstartview={FitH}, 
    pdftitle={Quantum Subroutine Composition}, 
    pdfauthor={Stacey Jeffery}, 
    pdfsubject={}, 
    pdfcreator={}, 
    pdfproducer={}, 
    pdfkeywords={}, 
    pdfnewwindow=true, 
    colorlinks=true, 
    linkcolor=blue, 
    citecolor=blue, 
    filecolor=blue, 
    urlcolor=blue 
}

\newcommand{\eq}[1]{\hyperref[eq:#1]{(\ref*{eq:#1})}}
\renewcommand{\sec}[1]{\hyperref[sec:#1]{Section~\ref*{sec:#1}}}
\newcommand{\thm}[1]{\hyperref[thm:#1]{Theorem~\ref*{thm:#1}}}
\newcommand{\lem}[1]{\hyperref[lem:#1]{Lemma~\ref*{lem:#1}}}
\newcommand{\cor}[1]{\hyperref[cor:#1]{Corollary~\ref*{cor:#1}}}
\newcommand{\app}[1]{\hyperref[app:#1]{Appendix~\ref*{app:#1}}}
\newcommand{\tabl}[1]{\hyperref[tab:#1]{Table~\ref*{tab:#1}}}
\newcommand{\defin}[1]{\hyperref[def:#1]{Definition~\ref*{def:#1}}}
\newcommand{\fig}[1]{\hyperref[fig:#1]{Figure~\ref*{fig:#1}}}
\newcommand{\clm}[1]{\hyperref[clm:#1]{Claim~\ref*{clm:#1}}}
\newcommand{\conj}[1]{\hyperref[conj:#1]{Conjecture~\ref*{conj:#1}}}
\newcommand{\rem}[1]{\hyperref[rem:#1]{Remark~\ref*{rem:#1}}}

\newcommand{\thmthm}[2]{\hyperref[thm:#1]{Theorem~\ref*{thm:#1}} and~\hyperref[thm:#2]{\ref*{thm:#2}}}
\newcommand{\lemlem}[2]{\hyperref[lem:#1]{Lemma~\ref*{lem:#1}} and~\hyperref[lem:#2]{\ref*{lem:#2}}}

\newtheorem{theorem}{Theorem}[section]
\newtheorem{lemma}[theorem]{Lemma}

\newtheorem{corollary}[theorem]{Corollary}

\newtheorem{claim}[theorem]{Claim}

\newtheorem{definition}[theorem]{Definition}

\def\ket#1{{\lvert}#1\rangle}
\def\bra#1{{\langle}#1\rvert}

\def\braket#1#2{{{\langle}#1\vert}#2\rangle}

\def\norm#1{\left\| #1 \right\|}

\newcommand{\eps}{\varepsilon}

\def\w{{\sf w}}
\def\q{{\sf q}}
\def\M{{\sf M}}

\title{Quantum Subroutine Composition}
\author[{}]{Stacey Jeffery\thanks{This work is supported by ERC STG grant 101040624-ASC-Q, NWO Klein project number OCENW.Klein.061, and ARO contract no W911NF2010327. SJ is a CIFAR Fellow in the Quantum Information Science Program.}}
\affil[{}]{CWI \& QuSoft, the Netherlands}

\begin{document}

\maketitle

\begin{abstract}
An important tool in algorithm design is the ability to build algorithms from other algorithms that run as subroutines. In the case of quantum algorithms, a subroutine may be called on a superposition of different inputs, which complicates things. For example, a classical algorithm that calls a subroutine ${\sf Q}$ times, where the average probability of querying the subroutine on input $i$ is $\bar{p}_i$, and the cost of the subroutine on input $i$ is a random variable $T_i$, incurs expected cost ${\sf Q}\sum_i\bar{p}_i \mathbb{E}[T_i]$ from all subroutine queries. While this statement is obvious for classical algorithms, for quantum algorithms, it is much less so, since naively, if we run a quantum subroutine on a superposition of inputs, we need to wait for all branches of the superposition to terminate before we can apply the next operation. We nonetheless show an analogous quantum statement (*): If $\bar{q}_i$ is the average \emph{query weight on $i$} over all queries, the cost from all quantum subroutine queries is ${\sf Q}\sum_i\bar{q}_i \mathbb{E}[T_i]$. Here the query weight on $i$ for a particular query is the probability of measuring $i$ in the input register if we were to measure right before the query. 

We prove this result using the technique of multidimensional quantum walks, recently introduced in [Jeffery, Zur 2022]. We present a more general version of their quantum walk edge composition result, which yields \emph{variable-time quantum walks}, generalizing variable-time quantum search, by, for example, replacing the update cost with $\sqrt{\sum_{u,v}\pi_u P_{u,v} \mathbb{E}[T_{u,v}^2]}$, where $T_{u,v}$ is the cost to move from vertex $u$ to vertex $v$. The same technique that allows us to compose quantum subroutines in quantum walks can also be used to compose in any quantum algorithm, which is how we prove (*). 
\end{abstract}

\section{Introduction}\label{sec:intro}

Classical algorithms make extensive use of subroutines, which allow an algorithm designer to build up an algorithm in modular blocks. Analysis of such algorithms can also be done in a simple, modular way. Suppose an outer classical algorithm makes use of some subroutine at ${\sf Q}$ steps of the algorithm, and suppose the subroutine's stopping time on input $i$ is some random variable $T_i$. Suppose $\bar{p}_i$ is the average probability, over all ${\sf Q}$ queries to the subroutine, that the subroutine is called on input $i$ -- so ${\sf Q}\bar{p}_i$ is the
expected number of times the subroutine is called on input $i$, and $\sum_i\bar{p}_i=1$. Let ${\sf L}$ denote the total number of \emph{other} operations used by the outer algorithm. Then the expected running time of the classical algorithm is:
\begin{equation}\label{eq:intro-classical-alg}
{\sf L}+{\sf Q}\sum_i \bar{p}_i \mathbb{E}[T_i].
\end{equation}
We can get a bounded-error algorithm with this asymptotic running time by simply stopping after 10 times the expected number of steps, if the algorithm has not already halted. At that point we can output an arbitrary (likely wrong) answer if the algorithm hasn't already halted, but this happens with probability at most some small constant. 

While subroutines are also quite useful in quantum algorithms, it is not at all obvious how to do the above classical analysis in a quantum setting. Suppose an outer quantum algorithm makes use of some subroutine. In general, the subroutine will be called on a superposition of inputs $i$, so if some branches of the superposition are finished early, it is not obvious how to exploit this in the overall complexity. Generally, if ${\sf T}_{\max}$ is the maximum time we must wait for the quantum subroutine to terminate, the total complexity of the quantum algorithm is:
\begin{equation}\label{eq:intro-naive}
O({\sf L}+{\sf Q}\cdot {\sf T}_{\max}),
\end{equation}
where again, ${\sf Q}$ is the number of queries to the subroutine, and ${\sf L}$ is the number of other operations made by the outer algorithm. Our main result is to give a quantum analogue of \eq{intro-classical-alg} for the special case where each subroutine outputs a bit. Given a quantum algorithm that decides $f$ using ${\sf L}$ basic operations, as well as ${\sf Q}$ queries to a subroutine that 
runs in time $T_i$ on input $i$ to compute some bit\footnote{We can think of the bit $g_i$ as being $g_i(x)$ for some function $g_i$, and $x$ the full input to the composed algorithm.} $g_i$, there is a bounded error quantum algorithm that computes $f(g)$ with complexity (neglecting log factors):
$${\sf L}+{\sf Q}\sum_i\bar{q}_i\mathbb{E}[T_i],$$
where $\bar{q}_i$ is the \emph{average query weight on $i$} -- the squared norm on $\ket{i}$ averaged over all queries, so in particular, $\sum_{i}\bar{q}_i=1$. We describe the result in more detail in \sec{intro-alg}. While such a result is obvious in classical algorithms, it is much less so in quantum algorithms. In fact, if the outer algorithm and subroutine are zero-error algorithms, and we want to compose them to get a zero-error algorithm with this \emph{expected} running time, while again obvious in the classical case, this is \emph{not} possible in general for quantum algorithms~\cite{buhrman2003zeroerror}. 

If $\mathbb{E}[T_i]=\mu$ is a known constant (in $i$, meaning $\mu$ does not depend on $i$, though $T_i$ could), then this result is much less interesting: we can always stop the subroutine after $10\mu$ steps, and by Markov's inequality, this introduces at most $1/10$ additional probability of error (which we can reduce through logarithmic repetitions). However, if the values $\mathbb{E}[T_i]$ vary significantly in $i$, in contrast to the classical case, it is not obvious (to the best of our knowledge) that this result should hold for quantum algorithms, and special cases of this result have been the subject of significant effort. 

For example, consider evaluating an unbalanced formula that is the OR of $n$ ANDs, of arity $k_1,\dots,k_n$:
$$f(x^{(1)},\dots,x^{(n)}) = \mathrm{OR}_n(\mathrm{AND}_{k_1}(x^{(1)}),\dots,  \mathrm{AND}_{k_n}(x^{(n)})),$$
where $x^{(j)}\in\{0,1\}^{k_j}$. Naively, using nested quantum search, one could search for a $j$ such that AND$_{k_j}(x^{(j)})=1$, where 
AND$_{k_j}(x^{(j)})$ can be evaluated in $\sqrt{k_j}$ steps, for total cost \mbox{$\sqrt{n}\cdot\max_{j\in [n]}\sqrt{k_j}$} (we neglect log factors in this paragraph), but this may be far from optimal. However, using highly non-trivial techniques, it has been shown how to evaluate any such formula in $\sqrt{n}\sqrt{\frac{1}{n}\sum_{j=1}^nk_j}$ quantum time
or \emph{any} AND-OR formula on $N$ bits in $\sqrt{N}$ time~\cite{childs2007NANDformula,reichardt2009GameTree}. 

This non-trivial upper bound for OR of ANDs can be seen as a special case of a technique called \emph{variable-time quantum search}. 
Consider a setting in which a quantum subroutine runs in time ${\sf T}_i$ on input $i$, and one wants to decide if there exists an $i\in [n]$ on which the subroutine outputs 1. Classically, it is easy to see that this can be decided in cost $\sum_{i\in [n]}{\sf T}_i$. Naive use of quantum search gives an upper bound of $\sqrt{n}{\sf T}_{\max}$, where ${\sf T}_{\max}=\max_{i\in [n]}{\sf T}_i$. Ambainis showed, in a highly non-trivial way, an upper bound of $\widetilde{O}\left(\sqrt{\sum_{i=1}^n{\sf T}_i^2}\right)$~\cite{ambainis2010VTSearch}. This result assumes that if the $i$-th subroutine consists of unitaries $U_1^i,\dots,U_{{\sf T}_{\max}}^i$, the operator $\sum_{t=1}^{{\sf T}_{\max}}\ket{t}\bra{t}\otimes U_t^i$ can be implemented in ``unit cost''. We will make this assumption as well, throughout this paper, and use ``unit cost'' to describe a cost we are willing to accept as a multiplicative factor on all complexities (for example, $O(1)$  or polylogarithmic in some natural variable). This assumption is only reasonable if the unitaries each have unit cost. It does not hold for strict gate complexity when $U_1^i,\dots,U_{{\sf T}_{\max}}^i$ are arbitrary gates (local unitaries), but it holds, for example, in the \emph{fully quantum} QRAM model if $U_1^i,\dots,U_{{\sf T}_{\max}}^i$ are stored as a list of gates in QCRAM\footnote{Classical memory to which a quantum computer has read-only superposition access, sometimes referred to as ``QRAM'' in previous literature.}, or 
if they satisfy certain uniformity conditions (see~\cite[Section 2.2]{jeffery2022kDist}). 

In~\cite{ambainis2010VTAA}, Ambainis considers a slightly different setting, where a quantum subroutine runs in time $T$, where $T$ is a random variable on $\{1,\dots,{\sf T}_{\max}\}$ -- that is, the algorithm may stop at different times with different probabilities, outputting either a 0 or 1 in some answer register whenever it stops.  Ambainis shows how to amplify the 1 part of the computation in complexity (neglecting log factors):
$${\sf T}_{\max}+\sqrt{\frac{\mathbb{E}[T^2]}{\eps}}$$
where $\eps$ is a lower bound on the probability that the algorithm outputs 1 (assuming this is non-zero).

We consider a combination of Ambainis' two settings, where a subroutine computes some bit $g_i$ for any input $i$, and the time it takes is a random variable $T_i$ on $\{1,\dots,{\sf T}_{\max}\}$. When the expected stopping times $\mathbb{E}[T_i]$ are unknown, and $\eps$ is a lower bound on the probability in uniform input $i$ that the subroutine outputs 1, we obtain (neglecting log factors)
\begin{equation}\label{eq:intro-VT1}
\sqrt{\frac{1}{n\eps}\sum_{i\in[n]}\mathbb{E}[T_i^2]}.
\end{equation}
When the expected stopping times $\mathbb{E}[T_i]$ are known, we can replace $\mathbb{E}[T_i^2]$ with $\mathbb{E}[T_i]^2$, which is better when $T_i$ has large variance, and improve, but not remove, the polylogarithmic dependence on ${\sf T}_{\max}$.

We also show two alternative variable-time search results that may be better in certain settings. 
Our variable-time search results are a special case of a new technique: variable-time quantum walk algorithms. Loosely speaking, if the cost of moving from a vertex $u$ to a vertex $v$ in a random walk on a graph $G$ is some random variable $T_{u,v}$ on $\{1,\dots,{\sf T}_{\max}\}$, previous quantum walk search algorithms would have incurred a multiplicative overhead on the complexity of ${\sf T}_{\max}$. We show how to improve this overhead to $\sqrt{\sum_{\{u,v\}\in E(G)}\pi(u)P_{u,v}\mathbb{E}[T_{u,v}^2]}$, where $P$ is the transition matrix of the random walk, and $\pi$ is its stationary distribution. We show a similar variable-time dependence on the \emph{checking cost}.
We describe our variable-time quantum search and variable-time quantum walk results in \sec{intro-walks}. 

\paragraph{Techniques:} Our results build on multidimensional quantum walks, recently described in~\cite{jeffery2022kDist}. Given a collection of states, each of which can be efficiently prepared, and therefore reflected around, we can define an \emph{overlap graph}, with one node for every state, and a pair of nodes adjacent if and only if the states overlap. If this graph is bipartite, with bipartition $V_{\cal A},V_{\cal B}$, then reflections around the states in $V_{\cal A}$ (resp. $V_{\cal B}$) commute, and so we can also efficiently reflect around the span of all states in $V_{\cal A}$, and around the span of all states in $V_{\cal B}$. Then sufficiently precise phase estimation of the product of these two reflections can be used to reflect around the span of \emph{all} states. When the states are \emph{quantum walk states} for some graph $G$: $\ket{\psi_\star^G(u)}=\sum_{v:\{u,v\}\in E(G)}\sqrt{P_{u,v}}\ket{\{u,v\}}$, then $\ket{\psi_\star^G(u)}$ and $\ket{\psi_\star^G(v)}$ overlap if and only if $\{u,v\}\in G$, so the overlap graph of these states is exactly $G$. Quantum walk search algorithms, like those we describe shortly in \sec{intro-walks}, use analysis of the properties of the graph $G$ to analyze what precision of phase estimation is needed to reflect around the span of all states. 

Ref.~\cite{jeffery2022kDist} extends this idea to a set of \emph{spaces} that can be efficiently reflected around. If $\{\Psi_u\}_{u\in V}$ is a set of subspaces of some inner product space $H$, we can define an overlap graph on $V$, with an edge between $u$ and $v$ whenever $\Psi_u$ and $\Psi_v$ are non-orthogonal. Just as in the case when each space is one-dimensional, if the overlap graph is bipartite, a bipartition gives rise to two efficiently implementable reflections, whose product can be used to reflect around the span of all the spaces using sufficiently precise phase estimation. The properties of the overlap graph can be used to bound the required precision. 

Ref.~\cite{jeffery2022kDist} use this idea in two ways. The first is the technique of \emph{alternative neighbourhoods}, which is orthogonal to the ideas of this paper: we do not deal with this technique here for simplicity, but there is no reason it could not be used simultaneously with our techniques. This work builds on the second technique, called \emph{edge composition}\footnote{which is similar to the concurrent, independent technique of \emph{graph composition} of span programs~\cite{cornelissen2023thesis}.}. Suppose there is a subroutine that implements a step of the quantum walk: for now imagine some reversible version of moving from $u$, to some neighbour $v$, in cost ${\sf T}_{u,v}$. Then implementing the reflection around the states $\{\ket{\psi_\star^G(u)}\}_{u\in V_{\cal A}}$ costs $\max_{u,v}{\sf T}_{u,v}$. The edge composition technique uses the random walk and the subroutine to define, for each $\{u,v\}\in E$, a sequence of ${\sf T}_{u,v}$ spaces whose overlap graph is similar to $G$, but each edge is replaced by a path of length ${\sf T}_{u,v}$ (see \fig{intro}). Our quantum walk composition described in \sec{intro-walks} is similar, except that we allow the subroutine to have a variable, unknown stopping time, and then we describe spaces whose overlap graph looks like $G$, but in place of an edge $\{u,v\}$, there is a ladder-like gadget (see \fig{intro}), where we can imagine getting from $u$ to $v$ by going up one side (representing computation) and that at various ``rungs'' with various probabilities, crossing to the other side of the ladder and uncomputing. By increasing or decreasing the ``weights'' ($\{\alpha_t\}_t$ in \thm{graph-fwk}) as the ladder goes up, we can get various variable-time complexities without knowing the total lengths of the ladders. 

Our result for general algorithmic composition (\sec{intro-alg}) uses the same ideas, but within an algorithm rather than a graph. This can be done by defining a sequence of spaces from the outer algorithm whose overlap graph is a line of length ${\sf L}$, and then plugging in ladder-like gadgets like those in \fig{intro} wherever there is a subroutine query (see \fig{alg-fwk-ortho}).

\begin{figure}
\centering
\begin{tikzpicture}

\node at (0,0) {\begin{tikzpicture}
\draw (0,0)--(2,0)--(4,1); \filldraw (.75,0) circle (.05); \filldraw (1.25,0) circle (.05);
\draw(0,2)--(2,2)--(4,1); \filldraw (.6,2) circle (.05); \filldraw (.8,2) circle (.05); \filldraw (1,2) circle (.05); \filldraw (1.2,2) circle (.05); \filldraw (1.4,2) circle (.05); 
\draw (0,0)--(2,2); \filldraw (.75,.75) circle (.05); \filldraw (1.25,1.25) circle (.05);
\filldraw (2.75,1.625) circle (.05); \filldraw (3.25,1.375) circle (.05);
\filldraw (2.6,.3) circle (.05); \filldraw (3,.5) circle (.05); \filldraw (3.4,.7) circle (.05);

\node[circle,draw, thick, fill=white] at (0,0) {$u_1$};

\node[circle,draw, thick, fill=white] at (0,2) {$u_2$};

\node[circle,draw, thick, fill=white] at (2,0) {$u_3$};

\node[circle,draw, thick, fill=white] at (2,2) {$u_4$};

\node[circle,draw, thick, fill=white] at (4,1) {$u_5$};
\end{tikzpicture}};

\node at (8,0) {\begin{tikzpicture}

\node at (1,3) {\begin{tikzpicture}
\draw (6,-2)--(6.5,-1.75)--(6.5,-.09)--(7,-.09)--(7,-1.75)--(7.5,-2);
	\filldraw (6.5,-.09) circle (.05);	\filldraw (6.75,-.09) circle (.05);				\filldraw (7,-.09) circle (.05);
	\filldraw (6.5,-.42) circle (.05);	\filldraw (6.75,-.42) circle (.05);	\draw (6.5,-.42)--(7,-.42);			\filldraw (7,-.42) circle (.05);
	\filldraw (6.5,-.75) circle (.05);	\filldraw (6.75,-.75) circle (.05);	\draw (6.5,-.75)--(7,-.75);			\filldraw (7,-.75) circle (.05);
	\filldraw (6.5,-1.08) circle (.05);	\filldraw (6.75,-1.08) circle (.05);	\draw (6.5,-1.08)--(7,-1.08);	\filldraw (7,-1.08) circle (.05);
	\filldraw (6.5,-1.417) circle (.05);	\filldraw (6.75,-1.417) circle (.05);	\draw (6.5,-1.417)--(7,-1.417);	\filldraw (7,-1.417) circle (.05);
	\filldraw (6.5,-1.75) circle (.05);
				\filldraw (7,-1.75) circle (.05);
\end{tikzpicture}};

\node[rotate=45] at (.75,1.25) {\begin{tikzpicture}
\draw (5.5,-1.75)--(6.5,-1.75)--(6.5,-1.08)--(7,-1.08)--(7,-1.75)--(8,-1.75);
	\filldraw (6.5,-1.08) circle (.05);	\filldraw (6.75,-1.08) circle (.05);	\draw (6.5,-1.08)--(7,-1.08);	\filldraw (7,-1.08) circle (.05);
	\filldraw (6.5,-1.417) circle (.05);	\filldraw (6.75,-1.417) circle (.05);	\draw (6.5,-1.417)--(7,-1.417);	\filldraw (7,-1.417) circle (.05);
	\filldraw (6.5,-1.75) circle (.05);
				\filldraw (7,-1.75) circle (.05);
\end{tikzpicture}};

\node[rotate=180] at (1,-.6) {\begin{tikzpicture}
\draw (6,-2)--(6.5,-1.75)--(6.5,-1.08)--(7,-1.08)--(7,-1.75)--(7.5,-2);
	\filldraw (6.5,-1.08) circle (.05);	\filldraw (6.75,-1.08) circle (.05);	\draw (6.5,-1.08)--(7,-1.08);	\filldraw (7,-1.08) circle (.05);
	\filldraw (6.5,-1.417) circle (.05);	\filldraw (6.75,-1.417) circle (.05);	\draw (6.5,-1.417)--(7,-1.417);	\filldraw (7,-1.417) circle (.05);
	\filldraw (6.5,-1.75) circle (.05);
				\filldraw (7,-1.75) circle (.05);
\end{tikzpicture}};

\node[rotate=-30] at (3.35,2.15) {\begin{tikzpicture}
\draw (6,-2)--(6.5,-1.75)--(6.5,-1.08)--(7,-1.08)--(7,-1.75)--(7.5,-2);
	\filldraw[white] (6.5,-.75) circle (.05);	\filldraw[white] (6.75,-.75) circle (.05);				\filldraw[white] (7,-.75) circle (.05);
	\filldraw (6.5,-1.08) circle (.05);	\filldraw (6.75,-1.08) circle (.05);	\draw (6.5,-1.08)--(7,-1.08);	\filldraw (7,-1.08) circle (.05);
	\filldraw (6.5,-1.417) circle (.05);	\filldraw (6.75,-1.417) circle (.05);	\draw (6.5,-1.417)--(7,-1.417);	\filldraw (7,-1.417) circle (.05);
	\filldraw (6.5,-1.75) circle (.05);
				\filldraw (7,-1.75) circle (.05);
\end{tikzpicture}};

\node[rotate=210] at (3.35,-.15) {\begin{tikzpicture}
\draw (6,-2)--(6.5,-1.75)--(6.5,-.75)--(7,-.75)--(7,-1.75)--(7.5,-2);
	\filldraw (6.5,-.75) circle (.05);	\filldraw (6.75,-.75) circle (.05);				\filldraw (7,-.75) circle (.05);
	\filldraw (6.5,-1.08) circle (.05);	\filldraw (6.75,-1.08) circle (.05);	\draw (6.5,-1.08)--(7,-1.08);	\filldraw (7,-1.08) circle (.05);
	\filldraw (6.5,-1.417) circle (.05);	\filldraw (6.75,-1.417) circle (.05);	\draw (6.5,-1.417)--(7,-1.417);	\filldraw (7,-1.417) circle (.05);
	\filldraw (6.5,-1.75) circle (.05);
				\filldraw (7,-1.75) circle (.05);
\end{tikzpicture}};

\node[circle,draw, thick, fill=white] at (0,0) {$u_1$};

\node[circle,draw, thick, fill=white] at (0,2) {$u_2$};

\node[circle,draw, thick, fill=white] at (2,0) {$u_3$};

\node[circle,draw, thick, fill=white] at (2,2) {$u_4$};

\node[circle,draw, thick, fill=white] at (4,1) {$u_5$};
\end{tikzpicture}};

\end{tikzpicture}
\caption{Instead of using a $T_{u,v}$ step quantum algorithm for the transition from $u$ to $v$ to build a ``bridge'' from $u$ to $v$ that functions like a path of length $T_{u,v}$ (left), we use a variable-time quantum algorithm to put a ladder-like gadget between $u$ and $v$ (right). The rungs of the ladder correspond to the steps of the quantum algorithm, and intuitively, we should think of the weight of each rung as corresponding to the probability that the algorithm terminates at that step.}\label{fig:intro}
\end{figure}
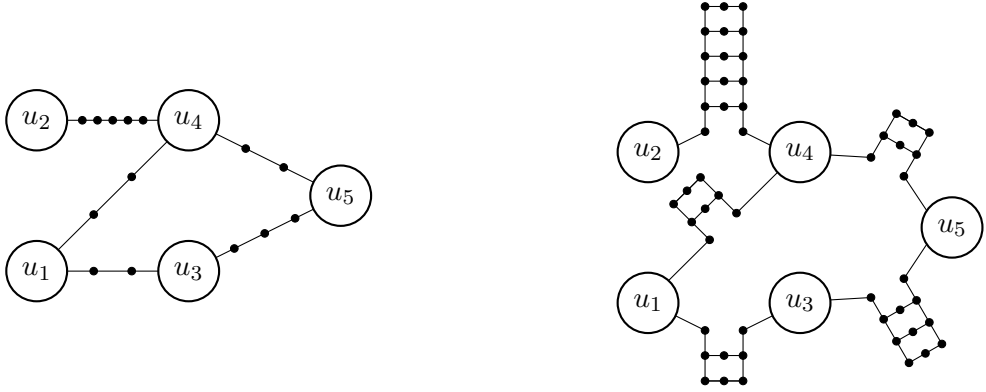

\paragraph{Comparison with Previous and Concurrent Work:} Aside from the special cases of formula evaluation and variable-time search already mentioned, previous work in composing quantum algorithms in non-trivial ways has mainly been restricted to the model of query complexity. \emph{Span programs} can be composed in various nice ways~\cite{reichardt2011spanformulas,reichardt2009unbalanced}, for example, to evaluate formulas in various gate sets with a quantum query complexity that amortizes nicely -- a fact exploited in \cite{childs2022Divide} (concurrent with this work) to analyze the quantum query complexity of divide-and-conquer algorithms. It is also rather simple to get a version of Ambainis' variable-time search result that applies only to query complexity using span program composition.
Ref.~\cite{cornelissen2020SpanProgramTime} showed how to extend span program composition to time complexity, also reproducing Ambainis' variable-time search in a time-efficient way using span program composition. The main idea of~\cite{cornelissen2020SpanProgramTime} allows span programs, also known as \emph{dual adversary solutions}, to maintain time complexity structure where it is given, whereas previously span programs were only used to study query complexity.
The techniques of~\cite{cornelissen2020SpanProgramTime} are the basis for characterizing time complexty in~\cite{jeffery2022kDist}, and now in this work as well. 

In a concurrent, independent work, Belovs and Yolcu defined the notion of quantum Las Vegas query complexity, and in particular, showed a version of our main theorem (\thm{alg-composition-informal}) that applies only to query complexity~\cite{belovs2023LasVegas}. While their results do not apply to time complexity, they work with a significantly more general class of subroutines. While we assume subroutines compute a single bit (which easily extends to any classical function), their results work for subroutines that enact arbitrary state conversion, or even non-unitary maps. Their techniques are quite novel, and distinct from ours. 

\paragraph{Subsequent Work:} In subsequent work, our algorithmic composition results have been extended to quantum subroutines whose behaviour is described by an arbitrary unitary, rather than subroutines that are assumed to compute a single bit~\cite{belovs2024Taming}. This improved composition is obtained by a new tool \emph{transducers}, that generalizes multidimensional quantum walks, using the ideas from this work to extend the constructions in~\cite{belovs2023LasVegas} to capture time complexity. For subroutines that compute deterministic functions with bounded error -- i.e.~the setting where majority voting can be used to amplify the success probability -- \cite{belovs2024Taming} also shows how to get composition results like ours, but without the log factor overhead that would come from such majority voting.

\subsection{Quantum Walk Composition}\label{sec:intro-walks}

Quantum walk search algorithms were first introduced by Szegedy~\cite{szegedy2004QMarkovChainSearch} who showed the following. Let $P$ be the transition matrix of a random walk on a graph $G$ with stationary distribution $\pi$, and $M\subset V(G)$ a \emph{marked set}, both of which may implicitly depend on some input $x$ -- we write, for example $M_x$ when we want to make this dependence explicit. Let ${\cal H}$ be a known bound such that whenever $M\neq\emptyset$, we are promised that the hitting time to $M$ -- the expected number of steps needed by a classical random walk starting in the distribution $\pi$ to reach some $u\in M$ -- is at most ${\cal H}$. Then there is a quantum algorithm that decides, with bounded error, if $M=\emptyset$, which starts by generating an initial state $\sum_u\sqrt{\pi(u)}\ket{u}$, and then makes $\sqrt{\cal H}$ calls to a subroutine whose main components are: (1) for any $u\in V(G)$, generating the quantum walk state $\sum_v\sqrt{P_{u,v}}\ket{u,v}$, which can be seen as the quantum analogue of taking a random walk step (and are essentially the states $\ket{\psi_\star^G(u)}$ mentioned above); (2) for any $u\in V(G)$, checking if $u\in M$. In early work, the most important parameter in the complexity was $\sqrt{\cal H}$ -- a generic quadratic speedup over the analogous classical algorithm -- and it was assumed that other operations, such as generating the quantum walk states had unit cost. In later applications, these operations often had non-trivial costs, impacting the complexity in different ways. Generally ${\sf S}$ is used to denote the cost of generating $\sum_u\sqrt{\pi(u)}\ket{u}$, ${\sf U}$ the cost of generating the quantum walk states, and ${\sf C}$ the cost of checking if $u\in M$. In our work, we will subdivide the task of generating quantum walk states. We suppose (as is common in classical random walk algorithms) that for each $u\in V(G)$, there is a set of labels $L(u)$, for example, $L(u)=[d_u]$, and a mapping $f_u:L(u)\rightarrow V$ such that $f_u(i)$ is the $i$-th neighbour of $u$. Then the following two subroutines can be used in place of generating quantum walk states (this is not entirely obvious): 
\begin{enumerate}
\item Sampling: For each $u\in V$, generate $\sum_{i\in L(u)}\sqrt{P_{u,f_u(i)}}\ket{i}$. We assume this can be done in unit cost.
\item Transitions: For $\{u,v\}\in E(G)$ with $f_u(i)=v$ and $f_v(j)=u$, map $\ket{u,i}\mapsto \ket{v,j}$, in cost $T_{u,v}$, which is a random variable on $[{\sf T}_{\max}]$.
\end{enumerate}
In all applications that we know of, sampling has at most polylogarithmic cost, and it is the transition that may be expensive.\footnote{For example, in algorithms for triangle finding there is often a quantum walk step where the sampling step consists of selecting a new vertex of the input graph (not the graph being walked on) to add to a stored set of vertices, but to complete the transition, it is necessary to find all neighbours of the new vertex in the already stored set.}  We can also suppose the checking cost is variable, with random variable $C_u$ on $[{\sf C}_{\max}]$ representing the cost of checking if $u$ is marked. With this notation in place, the complexity of a Szegedy walk is
\begin{equation}\label{eq:szegedy-fwk}
{\sf S}+\sqrt{\cal H}({\sf T}_{\max}+{\sf C}_{\max}).
\end{equation}

Belovs generalized Szegedy's framework to the \emph{electric network framework}~\cite{belovs2013ElectricWalks} (published in~\cite{belovs2013TimeEfficientQW3Distintness}), in which the walk may start in any distribution $\sigma$. If ${\sf S}_\sigma$ is the cost of generating $\ket{\sigma}=\sum_{u}\sqrt{\sigma(u)}\ket{u}$ (so ${\sf S}={\sf S}_{\pi}$ above), and ${\cal C}_{\sigma,\M}$ is an upper bound on a quantity ${\cal C}_{\sigma,M}(G)$, to be described shortly, then Belovs exhibited a quantum algorithm for deciding if $M=\emptyset$ with complexity 
\begin{equation}\label{eq:intro-elec}
{\sf S}_{\sigma}+\sqrt{{\cal C}_{\sigma,\M}}({\sf T}_{\max}+{\sf C}_{\max}).
\end{equation}
Let us say more about this quantity ${\cal C}_{\sigma,M}(G)$. If ${\cal W}(G)$ is the \emph{total weight} of $G$ (the number of edges when $G$ is unweighted) and ${\cal R}_{\sigma,M}(G)$ is the \emph{effective resistance} between $\sigma$ and $M$ (see \defin{resistance}), then ${\cal C}_{\sigma,M}(G)=2{\cal W}(G){\cal R}_{\sigma,M}(G)$. This is probably not a very helpful definition, so we mention some special cases that give some intuition. 
If $\sigma=\pi$ is the stationary distribution, then ${\cal C}_{\pi,M}(G)$ is just the hitting time from $\pi$ to $M$, and we recover the Szegedy framework. If $\sigma$ is supported on a single vertex $s$, then ${\cal C}_{s,M}(G)$ is the \emph{commute time} or expected number of steps starting from $s$, to get to any vertex in $M$ and then return to $s$~\cite{chandra1996ElectricalResAndCommute} (see~\cite{apers2019UnifiedFrameworkQWSearch} for a proof of the case when $|M|>1$, which was previously folklore). 

Another framework, incomparable to Szegedy's, is the MNRS framework~\cite{magniez2006SearchQuantumWalk}, in which the walk must start in the stationary distribution $\pi$, but the cost is:
\begin{equation}\label{eq:MNRS-fwk}
{\sf S}_{\pi}+\frac{1}{\sqrt{\eps}}\left(\frac{1}{\sqrt{\delta}}{\sf T}_{\max}+{\sf C}_{\max}\right),
\end{equation}
where $\delta$ is the \emph{spectral gap} of $P$, and $\eps$ is a lower bound on $\pi(M)=\sum_{u\in M}\pi(u)$. This framework has different dependence on the cost of transitions and the cost of checking, since the checking subroutine is run less frequently.  Since the hitting time is between $\frac{1}{\eps}$ and $\frac{1}{\eps\delta}$, this may be better or worse than a Szegedy walk, depending on the relative values of ${\sf T}_{\max}$ and ${\sf C}_{\max}$. 
In~\cite{apers2019UnifiedFrameworkQWSearch}, the electric network framework was extended to be able to not only detect if $M\neq \emptyset$, but to \emph{find} an element of $M$ as well, and this extension was shown to also include the MNRS framework as a special case -- simulating an MNRS walk with a walk whose hitting time is $1/\eps$, and update cost is ${\sf T}_{\max}/\sqrt{\delta}$. In this paper, we do not consider finding a marked vertex, but only detecting if there exists a marked vertex.

Finally, a result of Dohotaru and H{\o}yer~\cite{dohotaru2017controlledQAmp} shows how to achieve an optimal number of both checking and update steps in the special case where $|M|\leq 1$, obtaining complexity:
\begin{equation}\label{eq:intro-HD}
{\sf S}_{\pi}+\sqrt{\cal H}{\sf T}_{\max}+\frac{1}{\sqrt{\eps}}{\sf C}_{\max}.
\end{equation}

Previously, quantum walk applications have used various tricks to handle the case when the transitions can have different complexities. The MNRS framework is already one means of handling the case when the checking cost is much larger than the update cost. Nested quantum walks~\cite{jeffery2012NestedQW,childs2013arXivTimeEfficientQW3Distintness}, where the checking or update is implemented by another quantum walk subroutine, can be seen as one large quantum walk with multiple types of update operations with different costs. Variable-time search has also been used \emph{within} the checking subroutine of a quantum walk~\cite{legall2014triangle}. 
The most common method for handling variability in the transition or checking cost has been through the use of tail bounds, but this only works when the cost is highly concentrated around the average. 
In the special case of MNRS quantum walks on \emph{Johnson graphs}, where subroutines are restricted to being in the framework of \emph{extended learning graphs} and the measure of complexity is query complexity rather than time complexity,~\cite{carette2019triangleLG} showed how to replace ${\sf T}_{\max}$ and ${\sf C}_{\max}$ with $\ell_2$-averages, also getting an average in the setup cost ${\sf S}_\pi$, in a way that is specific to Johnson graphs. While Johnson graphs are quite a specific case, almost all known MNRS quantum walk algorithms work on Johnson graphs. 

The recent work of~\cite{jeffery2022kDist}, on whose techniques our results are heavily based, achieves an improvement on the electric network framework (and by extension, its special cases) as follows. Suppose the transition algorithm, rather than having random stopping times, stops after a known time ${\sf T}_{u,v}\leq {\sf T}_{\max}$ on input $(u,i)$, where $f_u(i)=v$. The value ${\sf T}_{u,v}$ must be computable from $(u,i)$, but also from $(v,j)$, where $f_v(j)=u$. Suppose checking can be done in unit cost. Then there is a quantum algorithm that decides if $M=\emptyset$ in complexity 
\begin{equation}\label{eq:intro-JZ}
{\sf S}_\sigma+\sqrt{{\cal C}_{\sigma,\M}^{\sf T}},
\end{equation}
where ${\cal C}_{\sigma,\M}^{\sf T}$ is an upper bound on ${\cal C}_{\sigma,M}(G^{\sf T})$, where $G^{\sf T}$ is the graph $G$ where every edge $\{u,v\}$ has been replaced by a path of length ${\sf T}_{u,v}$. 
This is already a huge improvement over having worst case ${\sf T}_{\max}$ complexity. The results presented in this section are based on a generalization of the results in~\cite{jeffery2022kDist}, and a careful analysis of their implications.

\paragraph{Variable-time Quantum Walks:} We now state our results. For simplicity, we begin by stating our results in the case that checking has unit cost. This is without loss of generality, since we can always add extra edges to the graph from a vertex $u$ to a new vertex that includes a bit indicating if $u\in M$, although this change to the graph changes parameters such as hitting time. In that case, we can simply replace ${\sf T}_{\max}$ with a weighted $\ell_2$ average, weighted by $\pi(u) P_{u,v}$, which is the proportion of time spent traversing edge $(u,v)$ in an infinitely long random walk starting from any distribution, to get complexity (see \cor{variable-time-walk}):
\begin{equation}\label{eq:intro-VT-walk-I}
{\sf S}_\sigma+\sqrt{{\cal C}_{\sigma,\M}}\sqrt{\sum_{(u,v)\in\overrightarrow{E}(G)}\pi(u)P_{u,v}\mathbb{E}[T_{u,v}^2]}\log^{1.5}{\sf T}_{\max}.
\end{equation}
Above, $\overrightarrow{E}(G)$ is the set of all edges, assigned some arbitrary orientation (see \defin{network}). 
If we further assume that the values $\mathbb{E}[T_{u,v}]$ are computable in some strong sense (see \cor{variable-time-walk}), then there is a quantum algorithm that decides if $M=\emptyset$ with bounded error in complexity:
\begin{equation}\label{eq:intro-VT-walk-I-known}
{\sf S}_\sigma+\sqrt{{\cal C}_{\sigma,\M}}\sqrt{\sum_{(u,v)\in\overrightarrow{E}(G)}\pi(u)P_{u,v}\mathbb{E}[T_{u,v}]^2}\log{\sf T}_{\max}.
\end{equation}
Aside from the log factor improvement, \eq{intro-VT-walk-I-known} is better when the variances of $T_{u,v}$ are large, as $\mathrm{Var}(T_{u,v}) = \mathbb{E}[T_{u,v}^2]-\mathbb{E}[T_{u,v}]^2\geq 0$. 
The expressions in \eq{intro-VT-walk-I} and \eq{intro-VT-walk-I-known} should be compared with the electric network framework, \eq{intro-elec} (or \eq{szegedy-fwk} in the special case when $\sigma=\pi$, and so ${\cal C}_{\sigma,\M}={\cal H}$), and the graph composition in \eq{intro-JZ}, although comparison with \eq{intro-JZ} is more difficult. These results follow from a more general statement (\thm{graph-fwk}) of which a version of \eq{intro-JZ} where the transition subroutine may have variable stopping times, and the values $\mathbb{E}[T_{u,v}]$ need not be known in advance, also follows as a special case. As we discuss further below, in some cases, the expressions in \eq{intro-VT-walk-I} and \eq{intro-VT-walk-I-known} are better, whereas in some cases the expression in \eq{intro-JZ} is better, but analysis of \eq{intro-VT-walk-I} and \eq{intro-VT-walk-I-known} may be easier, since we can work with the original graph $G$, and need not analyze a modified graph $G^{\sf T}$.

As stated above, the checking cost can be made part of the update cost by adding a new edge to each vertex $u$, whose other endpoint encodes $u$ and a bit indicating if $u\in M$, and whose transition cost is $C_u$. For comparison with previous work, we work out what impact this might have on the complexity. For comparison with the MNRS framework, we restrict our attention to initial distribution $\sigma=\pi$. Then we can decide if $M=\emptyset$ in complexity (see \cor{variable-time-MNRS}):
\begin{equation}
{\sf S}_\pi+\frac{1}{\sqrt{\eps}}\left(\frac{1}{\sqrt{\delta}}\sqrt{\sum_{(u,v)\in\overrightarrow{E}}\pi(u) P_{u,v}\mathbb{E}[T_{u,v}^2]}+\sqrt{\sum_{u\in V(G)}\pi(u)\mathbb{E}[C_u^2]}\right)\log^{.5}\frac{1}{\pi_{\min}}\log^{1.5}\left({\sf T}_{\max}{\sf C}_{\max}\right),
\end{equation}
where $\pi_{\min}=\min_u\pi(u)$.
Compared to the complexity achieved by the MNRS framework in \eq{MNRS-fwk}, ignoring log factors, we have replaced ${\sf T}_{\max}$ and ${\sf C}_{\max}$ with weighted $\ell_2$-averages. This also generalizes the results of \cite{carette2019triangleLG} from Johnson graphs to general graphs, learning graphs to arbitrary subroutines, and query complexity to time complexity.
As before, if the values $\mathbb{E}[C_u]$ and $\mathbb{E}[T_{u,v}]$ are computable in some strong sense, then we can replace $\mathbb{E}[T_{u,v}^2]$ with $\mathbb{E}[T_{u,v}]^2$ and $\mathbb{E}[C_u^2]$ with $\mathbb{E}[C_u]^2$, and shave off a $\sqrt{\log}$ factor. 
For more general initial distributions, see \cor{variable-time-walk}.

In the simpler case where $|M|\leq 1$, we can once again consider a more general starting distribution $\sigma$. In that case, we get a variable-time analogue of the results of Dohotaru and H{\o}yer in \eq{intro-HD}, but even more general. Specifically, let $\tau$ be \emph{any} distribution on $V(G)$ such that we are promised that if $M=\{m\}\neq \emptyset$, then $\tau(m)\geq \eps$. Then we can decide if $M=\emptyset$ in complexity (see \cor{variable-time-walk}):
$${\sf S}_{\sigma}+\left(\sqrt{{\cal C}_{\sigma,\M}}\sqrt{\sum_{(u,v)\in\overrightarrow{E}(G)}\pi(u)P_{u,v}\mathbb{E}[T_{u,v}^2]}
+\frac{1}{\sqrt{\eps}}\sum_{u\in V(G)}\tau(u)\mathbb{E}[{C}_u^2]\right)\log^{1.5}\left({\sf T}_{\max}{\sf C}_{\max}\right).$$
When $\sigma=\tau=\pi$, so ${\cal C}_{\sigma,\M}={\cal H}$, we get a variable-time version of the Dohotaru-H{\o}yer result in \eq{intro-HD}.  As before, if the values $\mathbb{E}[C_u]$ and $\mathbb{E}[T_{u,v}]$ are computable in some strong sense, then we can replace $\mathbb{E}[T_{u,v}^2]$ with $\mathbb{E}[T_{u,v}]^2$ and $\mathbb{E}[C_u^2]$ with $\mathbb{E}[C_u]^2$, and shave off a $\sqrt{\log}$ factor.

\paragraph{Variable-time Search:} As a special case of our variable-time quantum walk results, we recover something similar to Ambainis variable-time search result, but allowing the stopping time on input $i$ to be a random variable $T_i$, similar to the setting Ambainis considers separately in variable-time amplitude amplification. We restate our result (mentioned already in \eq{intro-VT1}) with slightly more detail. Let $M\subset [n]$ be some marked set, and let $\pi$ be any distribution on $[n]$ such that if $M\neq\emptyset$, $\pi(M)\geq \eps$ for some known bound $\eps$. Leting $T_i$ be the stopping time on input $i\in [n]$ of a subroutine that checks if $i\in M$, we can decide if $M=\emptyset$ with bounded error in the following complexities, assuming the values $\mathbb{E}[T_i]$ are unknown (first expression), or known in a certain strong sense (see \cor{variable-time}):
$$\mbox{(1) unknown:} \sqrt{\frac{1}{\eps}\sum_{i\in [n]}\pi(i)\mathbb{E}[T_i^2]}\log^{1.5}{\sf T}_{\max}
\qquad
\mbox{known:} \sqrt{\frac{1}{\eps}\sum_{i\in [n]}\pi(i)\mathbb{E}[T_i]^2}\log{\sf T}_{\max}.
$$
The case when the values $\mathbb{E}[T_i]$ are known (computable from $i$) follows from Ambainis' variable-time search result, since in that case, on input $i$ we can simply stop after ${\sf T}_i= 10\mathbb{E}[T_i]$ steps (repeating $O(\log n)$ times and taking a majority to reduce the error). In the case when these values are unknown, this would probably also follow from a combination of variable-time search and variable-time amplitude amplification. What is perhaps more interesting is that we obtain alternative complexities for variable-time search, by using different parameters. In the case of unknown expected stopping times, we can also choose either of the following two complexities (neglecting log factors). 
$$\mbox{(2) }\quad \sqrt{\frac{\sum_{i\in [n]}\pi(i)\mathbb{E}[T_i]}{\min_{x:M_x\neq\emptyset}\sum_{i\in M}\frac{\pi(i)}{\mathbb{E}[T_i]}}}
\quad\mbox{ or }\quad\mbox{ (3) }\quad
\frac{1}{\sqrt{\min_{x:M_x\neq\emptyset}\sum_{i\in M}\frac{\pi(i)}{\mathbb{E}[T_i^2]}}}.
$$
Above, we minimize over all allowed (non-empty) marked sets, assuming the marked set depends on some implicit input. \tabl{variable-time-search} shows the three complexities we can obtain, and argues that there are settings where each of the three is the smallest complexity. 
We note that we can achieve analogous complexities to each of (1), (2) and (3) for more general quantum walk algorithms. The generalization of (1) is the results we presented above. The generalization of (2) is essentially what was shown in~\cite{jeffery2022kDist} (\eq{intro-JZ}). We have not explored the generalization of (3). Other settings of the parameters in our main quantum walk theorem, \thm{graph-fwk}, could lead to further alternatives, although it is unclear if these would be useful. 

\subsection{Quantum Algorithm Composition}\label{sec:intro-alg}

We use a similar technique to compose an arbitrary quantum query algorithm with a variable-time subroutine. That is, fix functions $f:\{0,1\}^n\rightarrow\{0,1\}$ and $\{g_i:\{0,1\}^m\rightarrow\{0,1\}\}_{i\in [n]}$, and define $f\circ g:\{0,1\}^m\rightarrow\{0,1\}^n$ by $f\circ g(x)=f(g(x))=f(g_1(x),\dots,g_n(x))$. Suppose we have a quantum algorithm that decides $f$ with bounded error in ${\sf L}$ time steps and ${\sf Q}$ queries to the input to $f$, and a quantum algorithm that decides $g(i,x)=g_i(x)$ in ${\sf T}_{\max}$ steps with some sufficiently small error. Then by composing these algorithms in a naive way, we get a bounded error quantum algorithm for $f\circ g$ with complexity:
$$\widetilde{O}({\sf L}+{\sf Q}\cdot{\sf T}_{\max}).$$
As in the case of quantum walk algorithms, we show how we can do significantly better if the inner algorithm's running time varies significantly in the input $i$, as well as its internal randomness. We show the following (see \thm{alg-composition}):
\begin{theorem}[Informal]\label{thm:alg-composition-informal}
Let $\bar\q_i$ be the average (over all queries made by the outer algorithm) squared amplitude on querying index $i\in [n]$ (see \eq{alg-comp-bar-q_i}). Let $\epsilon_i$ be the error of the inner subroutine on input $i$, and $T_i$ the stopping time of the inner subroutine on input $i$, which is a random variable. 
Let ${\sf T}_{\mathrm{avg}}$ be an upper bound such that:
$$\sum_{i\in [n]}\bar\q_i\mathbb{E}[T_i]\leq {\sf T}_{\mathrm{avg}}$$
and suppose the subroutine's errors satisfy the following condition:
$$\epsilon_{\mathrm{avg}}:=\sum_{i\in [n]}\bar\q_i \epsilon_i \leq \frac{1}{{\sf Q}({\sf L}+{\sf Q}\cdot{\sf T}_{\mathrm{avg}})}.$$
Then there is a quantum algorithm that computes $f\circ g$ with bounded error in complexity
$$\widetilde{O}\left({\sf L}+{\sf Q}\cdot{\sf T}_{\mathrm{avg}}\right).$$
\end{theorem}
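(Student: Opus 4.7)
The plan is to reduce the theorem to the generalized quantum-walk edge-composition result proved earlier in the paper by modelling the outer algorithm as a ``line'' in the multidimensional-walk framework and replacing each query by a ladder gadget of the type sketched in \fig{intro}. Following the construction of~\cite{cornelissen2020SpanProgramTime} as reinterpreted in~\cite{jeffery2022kDist}, I would first associate to the outer algorithm a sequence of subspaces $\Psi_0,\Psi_1,\dots,\Psi_{\sf L}$ whose overlap graph is a bipartite path of length ${\sf L}$: the span of $\Psi_k$ encodes the computational state of the outer algorithm after $k$ basic operations, and consecutive subspaces overlap in a vector implementing the corresponding basic unitary. At each of the ${\sf Q}$ query positions, rather than using a single overlap vector to implement the query in unit cost, I would splice in a ladder gadget whose two legs have ${\sf T}_{\max}$ rungs, with rung $t$ corresponding to the possibility that the subroutine terminates at step $t$.

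The technical core is choosing the rung weights $\alpha_t$. I would pick them so that, when the query's input register is on the basis state $\ket{i}$, the effective resistance of the ladder is $O(\mathbb{E}[T_i])$ rather than ${\sf T}_{\max}$, exploiting that the subroutine on input $i$ terminates at step $t$ with some probability $p_{i,t}$ satisfying $\sum_t t\cdot p_{i,t}=\mathbb{E}[T_i]$. Crucially, the $\alpha_t$'s must be independent of $i$ so the same gadget works on every $i$ in superposition; the resistance contributed by the $k$-th query is then $\sum_i \q_i^{(k)}\,\mathbb{E}[T_i]$, where $\q_i^{(k)}$ is the query weight on $i$ at that query. Summing and using $\bar{\q}_i=\frac{1}{{\sf Q}}\sum_k \q_i^{(k)}$ yields a gadgetized graph whose total effective resistance $R$ and total weight $W$ are both of order ${\sf L}+{\sf Q}\cdot{\sf T}_{\mathrm{avg}}$.

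Invoking the generalized edge-composition theorem, phase estimation of the product of the two bipartition reflections then traverses the gadgetized line in $O(\sqrt{RW})=\widetilde{O}({\sf L}+{\sf Q}\cdot{\sf T}_{\mathrm{avg}})$ basic operations, after which a projective measurement reads out $f\circ g(x)$. For erroneous subroutines, each ladder gadget differs in operator norm from its ideal version by $O(\sqrt{\epsilon_i})$ on the input-register state $\ket{i}$; aggregating over the $\widetilde{O}({\sf Q}({\sf L}+{\sf Q}\cdot{\sf T}_{\mathrm{avg}}))$ effective subroutine evaluations inside the phase estimation and bounding via Cauchy--Schwarz in $\epsilon_{\mathrm{avg}}=\sum_i\bar{\q}_i\epsilon_i$, the total output error stays a small constant precisely under the stated hypothesis $\epsilon_{\mathrm{avg}}\le 1/({\sf Q}({\sf L}+{\sf Q}\cdot{\sf T}_{\mathrm{avg}}))$.

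The hard part will be designing the $\alpha_t$'s and verifying the associated spectral bound. Because the subroutine's stopping time is a random variable rather than a fixed quantity, each ladder really encodes a coherent distribution over paths of varying lengths, and one must show that the gadget's effective resistance in the multidimensional-walk sense averages to $\mathbb{E}[T_i]$ on input $i$ uniformly across coherent superpositions of inputs. This is precisely where the generalized edge-composition theorem promised in \sec{intro-walks} must go beyond the known-${\sf T}_{u,v}$ regime of~\cite{jeffery2022kDist}, and getting the right weights is the technical novelty making the whole reduction work.
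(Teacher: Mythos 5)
Your high-level picture --- a line of length ${\sf L}$ with a ladder gadget spliced in at each query position, analysed by phase estimation on the product of two bipartition reflections at cost $\sqrt{RW}$ --- is exactly the intuition the paper states in the introduction. But the central step of your reduction has a genuine gap: you propose to literally invoke the quantum-walk edge-composition theorem (\thm{graph-fwk}) on the gadgetized line, and that theorem does not apply here. Its positive and negative conditions are phrased for a network with positive real edge weights and a real-valued flow $\theta$ minimizing an energy, whereas the outer algorithm's non-query unitaries $V_\ell$ are arbitrary unitaries mixing the basis states $\ket{i,b,y}$ with complex amplitudes; the ``vertices'' of your line must therefore be whole subspaces, and the object playing the role of the flow is the complex history state built from the $\ket{w_O^\ell}$ of \eq{alg-fwk-w_O}, which is not a flow on a graph. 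The paper accordingly does not reduce to \thm{graph-fwk} at all: it descends to the underlying phase-estimation framework (\thm{phase-est-fwk}) and constructs the witnesses by hand --- outer transition states \eq{alg-fwk-outer-states}, connecting states \eq{alg-fwk-connecting-states}, the inner transition states of \defin{transition-states}, an initial transition state, and terminal states \eq{alg-fwk-M-states} carrying two different weights $\w_{0,\mathrm{out}}<\w_{1,\mathrm{out}}$ so that the outer algorithm's answer bit is what separates the positive from the negative case. How the output is read out, and how the outer algorithm's own error $\eps_O$ enters both witnesses, is entirely absent from your sketch and is not something \thm{graph-fwk} can supply.

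A second, smaller misdiagnosis: you locate the technical crux in designing non-constant rung weights $\alpha_t$ so that the ladder's resistance averages to $\mathbb{E}[T_i]$. For this theorem the paper simply takes $\alpha_t=1$; \cor{pos-witness-complexity} then gives $\norm{\ket{w_+(i)}}^2=\norm{\ket{w_-(i)}}^2=2\,\mathbb{E}[T_i+1]$, so the first moment appears in both the positive and the negative witness for free, with no tuning (non-constant $\alpha_t$ only matters for the quantum-walk corollaries that trade $\mathbb{E}[T^2]$ against $\log{\sf T}$). Finally, your error accounting (``each gadget differs in operator norm by $O(\sqrt{\epsilon_i})$, aggregate by Cauchy--Schwarz'') is a heuristic rather than an argument; in the actual proof the subroutine errors enter as the witness defects $\norm{\Pi_{\cal A}\ket{w}}^2$ and $\norm{(I-\Pi_{\cal A})\ket{w_{\cal A}}}^2$ via \clm{ortho-alg-trans} and \clm{neg-witness-error}, and the hypothesis on $\epsilon_{\mathrm{avg}}$ is exactly what makes these defects small relative to ${\cal C}_-=\Theta\bigl(({\sf L}+{\sf Q}{\sf T}_{\mathrm{avg}})^2\bigr)$ as \thm{phase-est-fwk} requires.
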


We compare this with the naive upper bound of ${\sf L}+{\sf Q}\cdot {\sf T}_{\max}$. Since ${\sf T}_{\max}$ is the \emph{maximum} running time of any subroutine, our new running time can be significantly better when $\mathbb{E}[T_i]$ varies widely over different $i$. If the values $\mathbb{E}[T_i]$ are \emph{known} (i.e. computable from $i$), we can always truncate the inner algorithm, on input $i$, at some time ${\sf T}_i = 10\mathbb{E}[T_i]$, and then by Markov's inequality, the probability that the algorithm has output correctly is not impacted by more than a constant. In that case, \thm{alg-composition-informal} does better than naive composition if and only if the values $\mathbb{E}[T_i]$ vary.   

We remark that our error dependence is not optimal. If $\epsilon_i=\epsilon$ for all $i$, then a subroutine called ${\sf Q}$ times needs $\epsilon\leq 1/{\sf Q}^2$ to achieve overall bounded error.  
In the usual way of calling a subroutine, there is no reason that the error bound on the subroutine queries should scale with anything other than the number of times the subroutine is called. In particular, it need not scale with the number of \emph{other} steps of the algorithm (${\sf L}$) or the number of steps taken by a given call of the subroutine, since $\epsilon$ is already totalled over all these steps. There is also no reason to think that the error should scale with these things when we combine a subroutine and outer algorithm in the way we do in the proof of \thm{alg-composition}, so we suspect there is something here that can be improved. In practice it does not make much difference, since the error of a bounded error subroutine can be made as small as any $\epsilon$ at a multiplication cost of $\log\frac{1}{\epsilon}$, so the overhead is at most logarithmic. 
Still, it would be nice to understand, conceptually, how different errors add up, and at the moment it seems there is something missing from the picture. 
We leave this for future work.

\subsection{Outline}

The rest of this paper is organized as follows. In \sec{prelim}, we give preliminaries on graph theory and random walks (\sec{prelim-graphs}) as well as phase estimation algorithms, on which all of our algorithms are based (\sec{phase-estimation}). In \sec{variable-time}, we introduce our model of variable-time quantum algorithms, which is a slight generalization of the models of \cite{ambainis2010VTSearch} and \cite{ambainis2010VTAA}. We define certain states and subspaces from such algorithms, and prove a number of properties of these that will be used in our composition results. In \sec{graph-composition}, we state and prove our results for composing variable-time subroutines in quantum walks, and in \sec{alg-composition}, we prove \thm{alg-composition-informal}, by showing how to compose a variable-time subroutine into an arbitrary quantum algorithm.

\section{Preliminaries}\label{sec:prelim}

\subsection{Graph Theory}\label{sec:prelim-graphs}

In this section, we define graph theoretic concepts and notation, used in our results on quantum walks. 

\begin{definition}[Network]\label{def:network}
A \emph{network} is a weighted graph $G$ with an (undirected) edge set $E(G)$, vertex set $V(G)$, and some weight function $\w:E(G)\rightarrow\mathbb{R}_{>0}$. Since edges are undirected, we can equivalently describe the edges by some set $\overrightarrow{E}(G)$ such that for all $\{u,v\}\in E(G)$, exactly one of $(u,v)$ or $(v,u)$ is in $\overrightarrow{E}(G)$. The choice of edge directions is arbitrary. Then we can view the weights as a function $\w:\overrightarrow{E}(G)\rightarrow\mathbb{R}_{> 0}$, and for all $(u,v)\in \overrightarrow{E}(G)$, define $\w_{v,u}=\w_{u,v}$. For convenience, we will define $\w_{u,v}=0$ for every pair of vertices such that $\{u,v\}\not\in E(G)$. 
The \emph{total weight} of $G$ is 
$${\cal W}(G):=\sum_{e\in \overrightarrow{E}(G)}\w_e.$$ 
\end{definition}

\noindent For an implicit network $G$, and $u\in V(G)$, we will let $\Gamma(u)$ denote the \emph{neighbourhood} of $u$:
$$\Gamma(u):=\{v\in V(G):\{u,v\}\in E(G)\}.$$
We use the following notation for \emph{the out- and in-neighbourhoods} of $u\in V(G)$:
\begin{equation}
\begin{split}
\Gamma^+(u) &:= \{v\in\Gamma(u):(u,v)\in \overrightarrow{E}(G)\}\\
\Gamma^-(u) &:= \{v\in\Gamma(u):(v,u)\in \overrightarrow{E}(G)\}.
\end{split}\label{eq:neighbourhoods}
\end{equation}

\paragraph{Random Walks:} A \emph{Markov process} on a finite set $V$ is specified by a stochastic \emph{transition matrix} $P\in \mathbb{R}^{V\times V}$, where we interpret $P_{u,v}$ as the probability of moving to the state $v$ when currently in the state $u$. We can define a Markov process $P$ from a weighted graph $G$ by letting:
$$P_{u,v} = \frac{\w_{u,v}}{\w_u},
\mbox{ where }
\w_u:=\sum_{v\in\Gamma(u)}\w_{u,v}.$$
Next, define 
$$\forall u\in V(G),\; \pi(u):=\frac{\w_u}{2{\cal W}(G)}.$$
Then it is easy to check that $\sum_{u\in V(G)}\pi(u)=1$. Furthermore, if $G$ is connected, $\pi$ is the unique left-1-eigenvector of $P$, called the \emph{stationary distribution}. We note that for any $u,v\in V(G)$,
\begin{equation}\label{eq:detailed-balance}
\pi(u) P_{u,v} = \frac{\w_u}{2{\cal W}(G)}\frac{\w_{u,v}}{\w_u} = \frac{\w_{u,v}}{2{\cal W}(G)}=\frac{\w_{v,u}}{2{\cal W}(G)}
= \pi(v) P_{v,u}.
\end{equation}
When $P$ satisfies this condition, called \emph{detailed balance}, we say that $P$ is \emph{reversible}. We have just seen that any random walk on an undirected weighted graph is reversible, and it is also true that any reversible Markov process can be modelled as a random walk on an undirected weighted graph, using edge weights $\w_{u,v} = \pi(u) P_{u,v}$.

Finally, we let $\delta$ denote the \emph{spectral gap} of $P$, which is the smallest non-zero eigenvalue of $I-P$. For intuition, it is useful to know that $\frac{1}{\delta}$ is within a $\log|V(G)|$ factor of the mixing time of $P$ (see, for example,~\cite{levin2017MarkovChainsMixingTimes}).

\paragraph{Accessing $G$:} In computations involving a (classical) random walk on a graph $G$, it is usually assumed that for any $u\in V(G)$, it is possible to sample a neighbour $v\in\Gamma(u)$ according to the distribution given by the $u$-th row of $P$.
It is standard to assume this is broken into two steps: (1)~sampling some $i\in [d_u]$, where $d_u:=|\Gamma(u)|$ is the degree of $u$, and (2)~computing the $i$-th neighbour of $u$. That is, we assume that for each $u\in V(G)$, there is an efficiently computable function $f_u:[d_u]\rightarrow V(G)$ such that $\mathrm{im}(f_u)=\Gamma(u)$, and we call $f_u(i)$ the \emph{$i$-th neighbour of $u$}. In the quantum case (see \defin{QW-access} below), we assume that sampling~(1) can be done coherently, and we use a reversible version of the map $(u,i)\mapsto f_u(i)$. We will also find it convenient to suppose the indices $i$ of the neighbours of $u$ come from some more general set $L(u)$, which may equal $[d_u]$, or some other convenient set, which we call the \emph{edge labels of $u$}. It is possible to have $|L(u)|>|\Gamma(u)|=d_u$, meaning that some elements of $L(u)$ do not label an edge adjacent to $u$ (these labels should be sampled with probability 0). 
We assume we have a partition of $L(u)$ into disjoint $L^+(u)$ and $L^-(u)$ such that:
\begin{equation*}
\begin{split}
L^+(u) &\supseteq \{i\in L(u): (u,f_u(i))\in\overrightarrow{E}(G)\} = \{i\in L(u):f_u(i)\in \Gamma^+(u)\}\\
L^-(u) &\supseteq \{i\in L(u): (f_u(i),u)\in\overrightarrow{E}(G)\} = \{i\in L(u):f_u(i)\in \Gamma^-(u)\}.
\end{split}
\end{equation*}
Note that for any $(u,v)\in\overrightarrow{E}(G)$, with $i=f_u^{-1}(v)$ and $j=f_v^{-1}(u)$, any of $(u,v)$, $(v,u)$, $(u,i)$, or $(v,j)$ fully specify the edge. Thus, it will be convenient to denote the weight of the edge using any of the alternatives:
$$\w_{u,v}=\w_{v,u}=\w_{u,i}=\w_{v,j}.$$
For any $i\in L(u)$, we set $\w_{u,i}=0$ if and only if $\{u,f_u(i)\}\not\in E(G)$.

\begin{definition}[Quantum Walk access to $G$]\label{def:QW-access}
For each $u\in V(G)$, let $L(u)=L^+(u)\cup L^-(u)$ be some finite set of \emph{edge labels}, and $f_u:L(u)\rightarrow V(G)$ a function such that $\Gamma(u)\subseteq \mathrm{im}(f_u)$. 
A quantum algorithm has \emph{quantum walk access} to $G$ if it has access to the following subroutines:
\begin{itemize}
\item A subroutine that generates quantum samples from $L(u)$ by implementing a unitary $U_\star$ in cost ${\sf A}_\star$ that acts as:
\begin{equation*}
U_\star\ket{u,0} \propto \sum_{i\in L^+(u)}\sqrt{\w_{u,i}}\ket{u,i}-\sum_{i\in L^-(u)}\sqrt{\w_{u,i}}\ket{u,i} =: \ket{\psi_\star^G(u)}.
\end{equation*}
\item A subroutine that implements the \emph{transition map}
\begin{equation*}
\ket{u,i}\mapsto \ket{v,j}
\end{equation*}
(possibly with some error) where $i=f^{-1}_u(v)$ and $j=f^{-1}_v(u)$, with costs $\{{\sf T}_{u,i}={\sf T}_{u,v}\}_{(u,v)\in\overrightarrow{E}(G)}$. 

\item Query access to the total vertex weights $\w_u=\sum_{v\in\Gamma(u)}\w_{u,v}$. 
\end{itemize}
We call $\{{\sf T}_e\}_{e\in\overrightarrow{E}(G)}$ the set of \emph{transition costs} and ${\sf A}_\star$ the \emph{cost of generating the star states}. 
\end{definition}

\paragraph{Flows and Resistances:} Electric networks of resistors can be modelled as weighted graphs, where the weights represent \emph{conductances} (so their inverses represent \emph{resistances}). This beautiful connection was expounded by Doyle and Snell~\cite{doyle1984RandomWalksAndElectriNetw}, or see~\cite{levin2017MarkovChainsMixingTimes} for a modern exposition. This connection inspires the following definitions. 

\begin{definition}[Flow, Circulation]\label{def:flow}
A \emph{flow} on a network $G$ is a real-valued function $\theta:\overrightarrow{E}(G)\rightarrow\mathbb{R}$, extended to edges in both directions by $\theta(u,v)=-\theta(v,u)$ for all $(u,v)\in\overrightarrow{E}(G)$. 
For any flow $\theta$ on $G$, and vertex $u\in V(G)$, we define $\theta(u)=\sum_{v\in \Gamma(u)}\theta(u,v)$ as the flow coming out of $u$. If $\theta(u)=0$, we say flow is conserved at $u$. If flow is conserved at every vertex, we call $\theta$ a \emph{circulation}. 
If $\theta(u)>0$, we call $u$ a \emph{source}, and if $\theta(u)<0$ we call $u$ a \emph{sink}. The set of sources and sinks is called the \emph{boundary} of $\theta$.
A flow with unique source $s$ and unique sink $t$ is called an \emph{$st$-flow}. If additionally $\theta(s)=1$, we call $\theta$ a \emph{unit} $st$-flow. 
The \emph{energy} of $\theta$ is 
$${\cal E}(\theta):=\sum_{(u,v)\in\overrightarrow{E}(G)}\frac{\theta(u,v)^2}{\w_{u,v}}.$$
\end{definition}

\begin{definition}[Effective Resistance]\label{def:resistance}
Let $\sigma$ and $\tau$ be distributions on $V(G)$. We define the \emph{effective resistance from $\sigma$ to $\tau$}:
$${\cal R}_{\sigma,\tau}(G):= \min\{{\cal E}(\theta): 
	\forall u \in V(G), \theta(u) = \sigma(u)-\tau(u)\}.$$
When $\sigma$ and $\tau$ have disjoint support, this is the minimum energy of a flow whose sources are exactly $\mathrm{supp}(\sigma)$, with $\theta(u)=\sigma(u)$, and whose sinks are exactly $\mathrm{supp}(\tau)$, with $\theta(u)=-\tau(u)$. In the special case when $\mathrm{supp}(\sigma)=\{s\}$ and $\mathrm{supp}(\tau)=\{t\}$, then $\theta$ ranges over all unit $st$-flows, and we denote ${\cal R}_{\sigma,\tau}(G)$ by ${\cal R}_{s,t}(G)$. Finally, for any $M\subseteq V(G)\setminus\mathrm{supp}(\sigma)$, we define:
$${\cal R}_{\sigma,M}(G):= \min\{{\cal R}_{\sigma,\tau}(G): \mathrm{supp}(\tau)\subseteq M\}.$$
That is, ${\cal R}_{\sigma,M}(G)$ is the minimum energy of a flow from $\sigma$ to $M$ (where any distribution on $M$ is allowed). 
\end{definition}
\noindent There is always a unique $\theta$ from $\sigma$ to $\tau$ (or $\sigma$ to $M$) that achieves energy ${\cal R}_{\sigma,\tau}(G)$ (or ${\cal R}_{\sigma,M}(G)$): If two $\sigma$-$\tau$ flows achieve energy ${\cal E}$, then there is some affine combination of them that achieves energy strictly smaller than ${\cal E}$.

\noindent These concepts are related to random walks as follows.
\begin{theorem}[\cite{chandra1996ElectricalResAndCommute}]
For any $s,t\in V(G)$, let ${\cal C}_{s,t}(G)$ be the \emph{commute time} from $s$ to $t$, which is the expected number of steps taken in a random walk starting at $s$, before the walker reaches $t$, and then returns to $s$. Then ${\cal C}_{s,t}(G) = 2{\cal W}(G){\cal R}_{s,t}(G)$. 
\end{theorem}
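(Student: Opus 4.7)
The plan is to invoke the classical electrical-network interpretation of effective resistance and to show that expected hitting times arise as voltages for a carefully chosen injection of current. I would begin by recalling the (standard) Kirchhoff characterization of the minimum-energy flow: the unique unit $st$-flow $\theta^\star$ achieving ${\cal R}_{s,t}(G)$ is the Ohm's-law current induced by some voltage function $\phi:V(G)\to\mathbb{R}$, meaning $\theta^\star(u,v)=\w_{u,v}(\phi(u)-\phi(v))$ for all $(u,v)\in\overrightarrow{E}(G)$, and the effective resistance equals the associated voltage drop, ${\cal R}_{s,t}(G)=\phi(s)-\phi(t)$.

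The key step is to identify hitting times with voltages for a different boundary. Define $h(u):=H(u,t)$, the expected hitting time of $t$ starting from $u$, and set $\theta_1(u,v):=\w_{u,v}(h(u)-h(v))$. The first-step recurrence $h(u)=1+\sum_{v} P_{u,v}h(v)$ for $u\neq t$, multiplied by $\w_u$ and combined with $\w_{u,v}=\w_u P_{u,v}$, yields
\[
\sum_{v\in\Gamma(u)}\w_{u,v}\bigl(h(u)-h(v)\bigr)=\w_u \quad\text{for every } u\neq t,
\]
so $\theta_1$ is a flow whose boundary sources $\w_u$ at every $u\neq t$ and sinks the balancing $2{\cal W}(G)-\w_t$ at $t$. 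Symmetrically, with $h'(u):=H(u,s)$ and $\theta_2(u,v):=\w_{u,v}(h'(v)-h'(u))$, one obtains a flow sinking $\w_u$ at every $u\neq s$ and sourcing $2{\cal W}(G)-\w_s$ at $s$. By linearity of Ohm's law, the voltage $\Phi(u):=h(u)-h'(u)$ induces the combined current $\theta_1+\theta_2$, whose per-vertex imbalances cancel everywhere except at $s$ (net source $2{\cal W}(G)$) and $t$ (net sink $2{\cal W}(G)$). Hence $(\theta_1+\theta_2)/(2{\cal W}(G))$ is a unit $st$-flow arising as the Ohm's-law current of the potential $\Phi/(2{\cal W}(G))$, and by the Kirchhoff characterization it coincides with $\theta^\star$.

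Using $H(s,s)=H(t,t)=0$, the voltage drop is
\[
\Phi(s)-\Phi(t)=\bigl(H(s,t)-H(s,s)\bigr)-\bigl(H(t,t)-H(t,s)\bigr)=H(s,t)+H(t,s)={\cal C}_{s,t}(G),
\]
so ${\cal R}_{s,t}(G)=(\Phi(s)-\Phi(t))/(2{\cal W}(G))={\cal C}_{s,t}(G)/(2{\cal W}(G))$, as required. The main obstacle is the bookkeeping in the second paragraph: translating the hitting-time recurrence into Kirchhoff's node law with precisely the source strengths $\w_u$, and keeping the signs and flow directions straight so that superposing $\theta_1$ and $\theta_2$ cleanly cancels the per-vertex injections outside $\{s,t\}$. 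Once that identification is pinned down, linearity of Ohm's law and uniqueness of the energy-minimizing flow make the rest routine.
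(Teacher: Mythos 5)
Your proof is correct. Note that the paper does not actually prove this statement — it is quoted as a known result of Chandra et al.\ with a citation and no argument — so there is no internal proof to compare against; what you have reconstructed is the standard electrical-network proof (essentially Tetali's superposition argument, also found in Levin--Peres). The individual steps all check out: multiplying the first-step recurrence $h(u)=1+\sum_v P_{u,v}h(v)$ by $\w_u$ and using $\w_{u,v}=\w_u P_{u,v}$ does give $\sum_{v\in\Gamma(u)}\w_{u,v}(h(u)-h(v))=\w_u$ for $u\neq t$, so $\theta_1$ injects $\w_u$ at each $u\neq t$ and (by antisymmetry, since the total divergence is zero) extracts $2{\cal W}(G)-\w_t$ at $t$; the superposition $\theta_1+\theta_2$ then has divergence $+2{\cal W}(G)$ at $s$, $-2{\cal W}(G)$ at $t$, and $0$ elsewhere, and is the Ohm's-law current of $\Phi=h-h'$. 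Dividing by $2{\cal W}(G)$ and invoking Thomson's principle (a unit $st$-flow induced by a potential is the unique energy minimizer, with energy equal to the potential drop) gives ${\cal R}_{s,t}(G)=(\Phi(s)-\Phi(t))/(2{\cal W}(G))=(H(s,t)+H(t,s))/(2{\cal W}(G))$, which is the claim. The only ingredients you take on faith — uniqueness of the minimum-energy unit flow (which the paper itself notes after its definition of effective resistance) and the identification of that minimizer with the current flow of a potential, with energy equal to the voltage drop — are standard and correctly applied, so the argument is complete.
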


\begin{theorem}[Folklore, or see~\cite{apers2019UnifiedFrameworkQWSearch}]
For any set $M\subset V(G)$ and $s\in V(G)$, let ${\cal C}_{s,M}(G)$ be the \emph{commute time} from $s$ to $M$, which is the expected number of steps taken in a random walk starting at $s$, before the walker reaches $M$, and then returns to $s$. Then ${\cal C}_{s,M}(G) = 2{\cal W}(G){\cal R}_{s,M}(G)$. 
\end{theorem}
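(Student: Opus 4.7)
The plan is to reduce the set case to the single-vertex case (the preceding theorem) by contracting $M$ into a single super-vertex. I would form $G'$ by identifying all vertices of $M$ into one vertex $m$: edges with both endpoints outside $M$ are preserved, edges with exactly one endpoint in $M$ become edges from $m$ to the other endpoint (parallel edges merged by summing weights), and edges with both endpoints in $M$ become self-loops at $m$ (also merged if parallel). With this bookkeeping, each edge of $G$ is accounted for exactly once in $G'$, so the total weight is preserved: ${\cal W}(G')={\cal W}(G)$.

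I would then verify ${\cal R}_{s,M}(G)={\cal R}_{s,m}(G')$ via a flow bijection. Any flow $\theta$ from $s$ to a distribution on $M$ in $G$ projects to a unit $sm$-flow $\theta'$ on $G'$ by summing the flow across edges that are identified under contraction, and in an energy minimizer one may take $\theta$ to be zero on all edges internal to $M$ (any nonzero circulation supported inside $M$ can be subtracted off without increasing energy). With this choice, the self-loops at $m$ carry no current and the energies of $\theta$ and $\theta'$ coincide. Conversely, any unit $sm$-flow on $G'$ lifts back to an $s$-to-$M$ flow on $G$ of the same energy, with the sink distribution on $M$ determined by the net flow entering each $u\in M$ in the lift.

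It remains to establish ${\cal C}_{s,M}(G)={\cal C}_{s,m}(G')$, after which the single-vertex theorem applied to $G'$ immediately yields ${\cal C}_{s,M}(G)=2{\cal W}(G'){\cal R}_{s,m}(G')=2{\cal W}(G){\cal R}_{s,M}(G)$. The hard part will be this commute-time equality: the pushforward of the $G$-walk onto $G'$ is not a time-homogeneous Markov chain, because inside $M$ the walk may wander through multiple vertices before exiting, whereas the $G'$-walk from $m$ simply lingers via self-loops. However, for any $v\notin M$ one has $\w_v^{G'}=\w_v^G$ and $P_{G'}(v,u)=P_G(v,u)$ for $u\notin M$, while $P_{G'}(v,m)=\sum_{w\in M}P_G(v,w)$, so the two walks agree exactly on $V(G)\setminus M$ with a consistent notion of ``entering $M$''. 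Consequently $\mathbb{E}_s^G[\tau_M]=\mathbb{E}_s^{G'}[\tau_m]$ by coupling. For the return segment, I would show by the strong Markov property that the expected total time spent inside $M$ before exiting to any fixed $v\notin M$, averaged over the hitting distribution on $M$, equals the expected total time spent at $m$ before exiting to $v$ in $G'$, and that the resulting exit distributions on $V(G)\setminus M$ agree; both quantities are determined by the weights of the ``$M$-to-outside'' edges, which are preserved by contraction. Chaining this with the already-matched ``outside-$M$'' walks completes the identification of commute times.
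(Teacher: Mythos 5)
The paper does not actually prove this statement --- it is cited as folklore with a pointer to \cite{apers2019UnifiedFrameworkQWSearch} --- so I can only assess your argument on its own terms.

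Your contraction strategy is natural, and the resistance half is essentially fine: ${\cal R}_{s,M}(G)={\cal R}_{s,m}(G')$ is the standard ``shorting'' argument (though for the converse direction you should say that the flow on a merged edge is split among the parallel originals in proportion to their weights, and note that zeroing out the flow on $M$-internal edges is legitimate because the sink distribution on $M$ is free in \defin{resistance}, not because the internal part is a circulation --- it generally is not one). There is also a bookkeeping point you must not gloss over: the paper's \defin{network} has no self-loops, and for both ${\cal W}(G')={\cal W}(G)$ and the walk dynamics to survive contraction you need the convention that a self-loop of weight $c$ contributes $c$ to ${\cal W}(G')$ but $2c$ to $\w_m$ (so $P_{m,m}=2c/\w_m$); with the ``count once'' convention the singleton identity already fails on $G'$ (e.g.\ the path $s$--$a$--$b$ with unit weights and $M=\{a,b\}$ has commute time $4=2{\cal W}{\cal R}$, but its contraction under the count-once convention has commute time $3$). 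Simply deleting the internal edges of $M$ is not an option either, since that changes ${\cal W}$.

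The genuine gap is the claimed identity ${\cal C}_{s,M}(G)={\cal C}_{s,m}(G')$, specifically your treatment of the return segment. It is not true that the first-exit distribution from $M$ and the expected sojourn time in $M$ are ``determined by the weights of the $M$-to-outside edges'': started from a specific $u\in M$, the exit law is the harmonic measure from $u$, which depends on $u$ and on the internal structure of $M$. For instance, if $M=\{a,b\}$ with $\w_{a,b}=0$, $a$ adjacent only to $v_1$ and $b$ only to $v_2$, then entering at $a$ you exit to $v_1$ with probability $1$, whereas $m$ in $G'$ exits to $v_1$ or $v_2$ with probabilities proportional to $\w_{a,v_1}$ and $\w_{b,v_2}$. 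Since the walk may re-enter $M$ many times before returning to $s$, and each entry point is history-dependent, no visit-by-visit coupling of the kind you describe exists; the two commute times agree only in aggregate. The clean way to get that aggregate statement is the renewal identity ${\cal C}_{s,M}(G)=\frac{1}{\pi(s)}\cdot\frac{1}{\Pr_s[\tau_M<\tau_s^+]}$ (the commute consists of a geometric number of i.i.d.\ excursions from $s$, and Wald's identity applies), combined with the fact that the escape probability equals $\frac{1}{\w_s{\cal R}_{s,M}(G)}$. But once you have these two ingredients you have proved the theorem for sets directly, with no contraction needed --- the contraction is only ever useful for the second, purely electrical, ingredient. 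As it stands, the step you label ``the hard part'' is indeed where the entire content of the theorem lives, and the argument you sketch for it does not go through.
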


\begin{theorem}[\cite{belovs2013ElectricWalks}]
For any set $M\subset V(G)$,
let ${\cal H}_{\pi,M}(G)$ be the \emph{hitting time} from the stationary distribution $\pi$ to $M$, which is the expected number of steps taken in a random walk starting in a vertex sampled according to $\pi$, before the walker reaches $M$. Then ${\cal H}_{\pi,M}(G) = 2{\cal W}(G){\cal R}_{\pi,M}(G)$. 
\end{theorem}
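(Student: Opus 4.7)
The plan is to prove this classical identity via the electrical--random-walk duality, essentially as in Chandra et al., whose commute-time version was cited just above. The key idea is to identify the expected-visit-count function of the random walk started from $\pi$ (absorbed upon entering $M$) with the potential of an electric network on $G$ whose source distribution matches $\pi$ outside $M$.

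For each $u \in V(G)$, let $N(u)$ denote the expected number of visits to $u$ before the absorbed walk first hits $M$, so $H_{\pi,M}(G) = \sum_{u \notin M} N(u)$. A standard first-step analysis gives $N(u) = \pi(u) + \sum_{v \notin M} N(v) P_{v,u}$ for $u \notin M$. Define the potential $\phi(u) := N(u)/\pi(u)$ for $u \notin M$ and $\phi(u) := 0$ for $u \in M$. Using the detailed balance relation $\pi(v) P_{v,u} = \pi(u) P_{u,v}$ from \eq{detailed-balance}, the first-step equation rearranges into the discrete Poisson equation
\begin{equation*}
\phi(u) - \sum_{v} P_{u,v}\phi(v) = 1 \quad \text{for } u \notin M,
\end{equation*}
with boundary condition $\phi \equiv 0$ on $M$.

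Next I would set $\theta(u,v) := \frac{1}{2{\cal W}(G)}\,\w_{u,v}(\phi(u) - \phi(v))$ on $\overrightarrow{E}(G)$ and compute its divergences. For $u \notin M$, using $\w_u = 2{\cal W}(G)\pi(u)$ and $\w_{u,v}=\w_u P_{u,v}$, the Poisson equation gives $\theta(u) = \frac{\w_u}{2{\cal W}(G)}\bigl(\phi(u) - \sum_v P_{u,v}\phi(v)\bigr) = \pi(u)$, so $\theta$ has the divergence structure demanded by ${\cal R}_{\pi,M}(G)$, with all sinks supported on $M$. Its energy then telescopes by summation by parts:
\begin{equation*}
{\cal E}(\theta) = \frac{1}{(2{\cal W}(G))^2}\cdot \tfrac12 \sum_{u,v}\w_{u,v}(\phi(u)-\phi(v))^2 = \frac{1}{(2{\cal W}(G))^2}\sum_{u \notin M}\phi(u)\cdot 2{\cal W}(G)\,\pi(u) = \frac{H_{\pi,M}(G)}{2{\cal W}(G)},
\end{equation*}
using the Poisson equation, the boundary condition $\phi\equiv 0$ on $M$, and the identity $\phi(u)\pi(u) = N(u)$ on $V(G)\setminus M$. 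This already yields ${\cal R}_{\pi,M}(G) \leq H_{\pi,M}(G)/(2{\cal W}(G))$.

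For the matching lower bound I would invoke Thomson's principle: any competing flow with the same divergences differs from $\theta$ by a circulation $c$, and the energy cross-term $\frac{1}{{\cal W}(G)}\sum_{(u,v)}(\phi(u)-\phi(v))c(u,v)=\frac{1}{{\cal W}(G)}\sum_u\phi(u)c(u)$ vanishes because circulations are divergence-free. The main obstacle is purely bookkeeping at the boundary $M$: the Poisson equation, the divergence computation, and the summation by parts all need the boundary condition $\phi \equiv 0$ on $M$ applied consistently, and the prefactor $1/(2{\cal W}(G))$ must be chosen exactly so that source divergences reduce to $\pi(u)$ rather than $\w_u$. (A minor annoyance is that \defin{resistance} formally requires $\mathrm{supp}(\pi)\cap M = \emptyset$, but since the walk stops in zero steps when started in $M$, one may freely replace $\pi$ by its restriction to $V(G)\setminus M$ without changing either side.) Once these are aligned, $H_{\pi,M}(G) = 2{\cal W}(G)\,{\cal R}_{\pi,M}(G)$ follows.
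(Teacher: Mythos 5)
The paper does not actually prove this statement: it is imported as a known theorem from Belovs' electric-network framework, so there is no internal proof to compare against. Your argument is the standard (and correct) electrical proof of that fact. The chain of identities checks out: the first-step equation for the Green's function $N(u)$, the passage to the potential $\phi(u)=N(u)/\pi(u)$ via detailed balance, the verification that $\theta(u,v)=\frac{1}{2{\cal W}(G)}\w_{u,v}(\phi(u)-\phi(v))$ has divergence $\pi(u)$ off $M$ (and nonpositive divergence on $M$, so the induced sink measure $\tau$ is a genuine distribution on $M$), and the summation-by-parts evaluation ${\cal E}(\theta)=H_{\pi,M}(G)/(2{\cal W}(G))$. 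One small imprecision in the optimality step: a competing flow feasible for ${\cal R}_{\pi,M}(G)$ may route its sink mass to a \emph{different} distribution $\tau'$ on $M$, in which case the difference $c=\theta'-\theta$ is not a circulation --- it has nonzero divergence on $M$. Your cross-term computation $\sum_{(u,v)}(\phi(u)-\phi(v))c(u,v)=\sum_u\phi(u)c(u)$ still vanishes, but the correct reason is that at every vertex either $c(u)=0$ (off $M$) or $\phi(u)=0$ (on $M$), not that $c$ is divergence-free. With that reading, Thomson's principle gives the matching lower bound over all admissible $\tau'$ simultaneously, and the identity follows. Your parenthetical about $\mathrm{supp}(\pi)\cap M\neq\emptyset$ versus the formal hypothesis of the paper's definition of ${\cal R}_{\sigma,M}$ is also the right way to reconcile the conventions.
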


\noindent Motivated by the three theorems above, we define 
\begin{equation}\label{eq:commute-time}
{\cal C}_{\sigma,\tau}(G):=2{\cal W}(G){\cal R}_{\sigma,\tau}(G)
\mbox{ and }
{\cal C}_{\sigma,M}(G):=2{\cal W}(G){\cal R}_{\sigma,M}(G).
\end{equation}

\subsection{Phase Estimation Algorithms}\label{sec:phase-estimation}

We will use the precise notion of a \emph{phase estimation algorithm} from \cite{jeffery2022kDist}, which is based on the phase estimation technique of~\cite{kitaev1996PhaseEst}. We refer the reader to \cite[Section~3.1]{jeffery2022kDist} for intuition about why such an algorithm works as stated here.

\begin{definition}[Parameters of a Phase Estimation Algorithm]\label{def:phase-est-alg}
For an implicit input $x\in\{0,1\}^*$, fix a finite-dimensional complex inner product space $H$, a unit vector $\ket{\psi_0}\in H$, and sets of vectors
$\Psi^{\cal A},\Psi^{\cal B}\subset H$. We further assume that $\ket{\psi_0}$ is orthogonal to every vector in $\Psi^{\cal B}$.  Let $\Pi_{\cal A}$ be the orthogonal projector onto ${\cal A}=\mathrm{span}\{\Psi^{\cal A}\}$, and similarly for $\Pi_{\cal B}$.
\end{definition}
Let 
$U_{\cal AB}=(2\Pi_{\cal A}-I)(2\Pi_{\cal B}-I)$.
The algorithm defined by $(H,\ket{\psi_0},\Psi^{\cal A},\Psi^{\cal B})$ performs phase estimation of $U_{\cal AB}$ on initial state $\ket{\psi_0}$, to sufficient precision that by measuring the phase register and checking if the output is 0, we can distinguish between a \emph{negative case} and a \emph{positive case}.

\begin{definition}[Negative Witness]\label{def:neg-witness}
A $\delta$-\emph{negative witness} for $(H,\ket{\psi_0},\Psi^{\cal A},\Psi^{\cal B})$ is a pair of vectors $\ket{w_{\cal A}},\ket{w_{\cal B}}\in H$ such that $\norm{(I-\Pi_{\cal A})\ket{w_{\cal A}}}^2\leq \delta$, $\norm{(I-\Pi_{\cal B})\ket{w_{\cal B}}}^2\leq \delta$, and $\ket{\psi_0}=\ket{w_{\cal A}}+\ket{w_{\cal B}}$. 
\end{definition}

\begin{definition}[Positive Witness]\label{def:pos-witness}
A $\delta$-\emph{positive witness} for $(H,\ket{\psi_0},\Psi^{\cal A},\Psi^{\cal B})$ is a vector $\ket{w}\in H$ such that $\braket{\psi_0}{w}\neq 0$ and $\ket{w}$ is almost orthogonal to all $\ket{\psi}\in \Psi^{\cal A}\cup \Psi^{\cal B}$, in the sense that $\norm{\Pi_{\cal A}\ket{w}}^2\leq \delta\norm{\ket{w}}^2$ and $\norm{\Pi_{\cal B}\ket{w}}^2\leq \delta\norm{\ket{w}}^2$.\footnote{We note that for technical reasons, positive witness error is defined multiplicatively (relative error), whereas negative witness error is defined additively.} 
\end{definition}

\begin{theorem}[\cite{jeffery2022kDist}]\label{thm:phase-est-fwk}
Fix $(H,\ket{\psi_0},\Psi^{\cal A},\Psi^{\cal B})$ as in \defin{phase-est-alg}.
Suppose we can generate the state $\ket{\psi_0}$ in cost ${\sf S}$, and implement $U_{\cal AB}=(2\Pi_{\cal A}-I)(2\Pi_{\cal B}-I)$ in cost ${\sf A}$.

\noindent Let $c_+\in [1,50]$ be some constant, and let ${\cal C}_-\geq 1$ be a positive real number that may scale with $|x|$. Let $\delta$ and $\delta'$ be positive real parameters such that 
\begin{equation*}
\delta\leq \frac{1}{(8c_+)^{3}\pi^8{{\cal C}_-}}\quad\mbox{ and }\quad\delta'\leq \frac{3}{4}\frac{1}{\pi^4c_+}.
\end{equation*}
Suppose we are guaranteed that exactly one of the following holds:
\begin{description}
\item[Positive Condition:] There is a $\delta$-positive witness $\ket{w}$ s.t.~$\frac{|\braket{w}{\psi_0}|^2}{\norm{\ket{w}}^2}\geq \frac{1}{c_+}$. 
\item[Negative Condition:] There is a $\delta'$-negative witness $\ket{w_{\cal A}},\ket{w_{\cal B}}$ s.t.~$\norm{\ket{w_{\cal A}}}^2 \leq {\cal C}_-$.
\end{description}
Then there is a quantum algorithm that distinguishes these two cases with bounded error in cost
$$O\left({\sf S}+\sqrt{{\cal C}_-}{\sf A}\right).$$
\end{theorem}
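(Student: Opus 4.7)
The plan is to run Kitaev-style phase estimation of $U_{\cal AB}=(2\Pi_{\cal A}-I)(2\Pi_{\cal B}-I)$ on the initial state $\ket{\psi_0}$, to precision $\Delta = \Theta(1/\sqrt{{\cal C}_-})$, and accept iff the measured phase is $0$. Preparing $\ket{\psi_0}$ costs ${\sf S}$, each (controlled) application of $U_{\cal AB}$ costs ${\sf A}$, and phase estimation to precision $\Delta$ uses $O(1/\Delta)$ calls, giving total cost $O({\sf S}+\sqrt{{\cal C}_-}\,{\sf A})$. What remains is to show correctness: in the positive case the measurement returns $0$ with constant probability, and in the negative case it returns $0$ with probability bounded strictly below $1$.

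The structural tool is Jordan's lemma applied to the pair $(\Pi_{\cal A},\Pi_{\cal B})$: $H$ decomposes into one- and two-dimensional subspaces jointly invariant under both projectors. On each $2$-dimensional block, $U_{\cal AB}$ is a rotation by $\pm 2\theta_j$, where $\theta_j$ is the principal angle between ${\cal A}$ and ${\cal B}$ in that block, and the $+1$ eigenspace of $U_{\cal AB}$ is precisely $({\cal A}\cap{\cal B})\oplus({\cal A}^\perp\cap{\cal B}^\perp)$ plus the $1$-dimensional blocks where $\theta_j=0$. Phase $0$ is returned with high probability only on components whose eigenphase lies within $O(\Delta)$ of $0$.

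For the positive case, the hypotheses $\norm{\Pi_{\cal A}\ket{w}}^2,\norm{\Pi_{\cal B}\ket{w}}^2\le \delta\norm{\ket{w}}^2$ place $\ket{w}$ within $O(\sqrt{\delta})\norm{\ket{w}}$ of ${\cal A}^\perp\cap{\cal B}^\perp$, a subspace of the $+1$ eigenspace of $U_{\cal AB}$. Projecting $\ket{w}$ onto this eigenspace and using $|\braket{\psi_0}{w}|^2/\norm{\ket{w}}^2\ge 1/c_+$, together with the assumed bound $\delta \ll 1/(c_+{\cal C}_-)$, I would conclude that $\ket{\psi_0}$ has $\Omega(1/c_+)$ overlap with the true $+1$ eigenspace, so phase estimation outputs $0$ with probability $\Omega(1/c_+)$.

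The negative case is the main obstacle and is where the precision $\Delta = \Theta(1/\sqrt{{\cal C}_-})$ earns its keep. Expand $\ket{\psi_0}=\sum_j\alpha_j\ket{v_j}$ in the Jordan eigenbasis of $U_{\cal AB}$ with phases $\phi_j$; the goal is to show $\sum_{j:|\phi_j|\le \Theta(\Delta)}|\alpha_j|^2$ is a constant strictly less than $1$. Here I would use the negative witness decomposition $\ket{\psi_0}=\ket{w_{\cal A}}+\ket{w_{\cal B}}$ block-by-block: in a $2$-dimensional Jordan block of principal angle $\theta_j$, writing both $\ket{\psi_0}$ and $\ket{w_{\cal A}},\ket{w_{\cal B}}$ in the $({\cal A},{\cal A}^\perp)$ basis and using that the component of $\ket{\psi_0}$ orthogonal to ${\cal B}$ must be carried by $\ket{w_{\cal A}}$, one obtains the geometric inequality $\norm{w_{\cal A}^{(j)}}^2\gtrsim |\alpha_j|^2/\sin^2\theta_j$. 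This is the effective spectral gap relation underlying span-program / adversary algorithms. Summing across blocks and absorbing the additive witness error through the hypotheses $\delta'\le 3/(4\pi^4 c_+)$ yields $\sum_j |\alpha_j|^2/\sin^2\theta_j \lesssim \norm{\ket{w_{\cal A}}}^2\le {\cal C}_-$, so restricting to $|\theta_j|\le \Theta(1/\sqrt{{\cal C}_-})$ and recalling $|\phi_j|=2|\theta_j|$, each retained term contributes $\Omega({\cal C}_-)\,|\alpha_j|^2$, forcing their total mass to be a tunable constant less than $1$. Care must be taken to propagate the witness errors $\delta,\delta'$ and the phase-estimation inaccuracy through these inequalities so that both cases remain separated; the stated numerical bounds on $\delta,\delta',c_+$ are exactly what is needed to close the gap.
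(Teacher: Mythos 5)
This paper does not prove \thm{phase-est-fwk}; it imports it verbatim from~\cite{jeffery2022kDist}, so the comparison is against the proof there. Your overall architecture --- phase estimation of $U_{\cal AB}$ on $\ket{\psi_0}$ to precision $\Theta(1/\sqrt{{\cal C}_-})$, Jordan's lemma, and the block-wise relation $\norm{w_{\cal A}^{(j)}}^2\gtrsim |\alpha_j|^2/\sin^2\theta_j$ (the effective spectral gap lemma) for the negative case --- is exactly the route taken in that work, and your negative analysis and cost accounting are sound modulo the error propagation you explicitly defer.

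The genuine gap is in your positive case. The conditions $\norm{\Pi_{\cal A}\ket{w}}^2,\norm{\Pi_{\cal B}\ket{w}}^2\leq \delta\norm{\ket{w}}^2$ do \emph{not} place $\ket{w}$ within $O(\sqrt{\delta})\norm{\ket{w}}$ of ${\cal A}^\perp\cap{\cal B}^\perp$, nor of the exact $+1$ eigenspace of $U_{\cal AB}$. Counterexample: in $\mathbb{R}^2$ with ${\cal A}=\mathrm{span}\{e_1\}$ and ${\cal B}=\mathrm{span}\{\cos\theta\, e_1+\sin\theta\, e_2\}$ for tiny $\theta$, the vector $e_2$ is a $\sin^2\theta$-positive witness, yet ${\cal A}^\perp\cap{\cal B}^\perp=\{0\}$, so it has zero overlap with the $+1$ eigenspace; all of its mass sits on the eigenvectors of phase $\pm 2\theta$. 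The statement the proof actually needs concerns the \emph{small-phase} eigenspace: in a Jordan block of angle $\theta_j$, a vector whose projections onto both lines are at most $\epsilon$ has norm $O(\epsilon/\sin\theta_j)$, so the mass of $\ket{w}$ on eigenvectors with phase exceeding $\Theta$ is $O(\delta\norm{\ket{w}}^2/\Theta^2)=O(\delta\,{\cal C}_-\norm{\ket{w}}^2)$, which the hypothesis $\delta\leq 1/((8c_+)^3\pi^8{\cal C}_-)$ makes negligible against $|\braket{w}{\psi_0}|^2\geq \norm{\ket{w}}^2/c_+$. This is the same $1/\sin^2\theta_j$ geometry you correctly invoke for the negative witness, applied in the opposite direction, and it is also the reason $\delta$ must scale as $1/{\cal C}_-$ rather than being a fixed constant --- a point your sketch treats as bookkeeping when it is the crux of the positive case. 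Once that step is repaired, the remainder of your outline (including the leakage of Kitaev phase estimation on large-phase eigenvectors) goes through along standard lines.
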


\section{Variable-Time Subroutines}\label{sec:variable-time}

We formally define a \emph{variable-time subroutine}, which is an algorithm whose runtime is a random variable, which may also depend on some input $i$, 
assumed to be called in the context of some outer algorithm.
Formally, a variable-time subroutine is a sequence of unitaries $U_1,\dots,U_{{\sf T}}$, for some ${\sf T}={\sf T}_{\max}$, acting on the space
$$H_{\cal I}\otimes H_{\cal A}\otimes H_{\cal Z} = \mathrm{span}\{\ket{i}\ket{a}\ket{z}:i\in{\cal I},a\in{\cal A},z\in{\cal Z}\}$$
 for finite sets ${\cal I}$, representing inputs to the subroutine; ${\cal A}$, representing answers the subroutine may output; and ${\cal Z}$, representing states of the algorithm's workspace. Both ${\cal A}$ and ${\cal Z}$ can be assumed to be initialized to $\ket{0}$. 
The unitaries are controlled on $H_{\cal I}$, so we can express $U_t=\sum_{i\in {\cal I}}\ket{i}\bra{i}\otimes U_t^i$ for some unitaries $U_t^i$. In other words, the input register is read only. Following~\cite{ambainis2010VTAA}, we make the following assumptions.
\begin{enumerate}
\item There are subspaces 
$$\{0\}=H_0\subseteq H_1\subseteq\dots \subseteq H_{{\sf T}}= H_{\cal Z}$$
such that $U_t$ leaves $H_{t-1}$ invariant, so any part of the algorithm in $H_{t-1}$ after $U_{t-1}$ has been applied, does not get changed any more. Letting $\Pi_{\leq t-1}$ denote the orthogonal projector onto $H_{t-1}$, we can express this mathematically as:
\begin{equation}\label{eq:Pi-t-U}
U_t\Pi_{\leq t-1}=\Pi_{\leq t-1}.
\end{equation}
For any $t$, we assume we can measure a bit indicating if we are in $H_t$ in unit cost. \footnote{For example, imagine having ${\sf T}$ single-qubit registers, $F_1,\dots,F_{\sf T}$, where we set a 1 in $F_t$ after $U_t$ has been applied to indicate that the algorithm is ready to halt, and then all subsequent unitaries are controlled on no 1s yet being set.}
\item We can implement $\sum_{t=1}^{{\sf T}}\ket{t}\bra{t}\otimes U_t$ in unit cost. 
\end{enumerate}

We say that $t$ is a {potential stopping time} if $\overline{H}_t:=H_t\cap H_{t-1}^\bot\neq\{0\}$, so $H_{\cal Z}:=\overline{H}_1\oplus\dots\oplus \overline{H}_{\sf T}$. 
We can imagine an algorithm that, for every potential stopping time $t$, after applying $U_t$, measures to see if the algorithm is in the space $H_t$, and if so, the algorithm is done and the answer register can also be measured, and if not, the algorithm continues. Since the later unitaries leave $H_t$ invariant, these measurements do not affect the computation. 

Let $\{\ket{z}:z\in{\cal Z}_t\}$ be an orthonormal basis for $\overline{H}_t$, for some disjoint sets ${\cal Z}=\bigcup_t{\cal Z}_t$. Note that ${\cal Z}_t\neq \emptyset$ precisely when $t$ is a potential stopping time. Since we assume we can measure if the algorithm is in each space, we assume we can generate each of these bases. 

For $t\in [{\sf T}]$, we let $\bar{p}_i(t)$ be the probability of stopping at time $t$ (i.e. measuring ``1'' right after $U_t$ is applied) on input $i$. Then 
\begin{equation}
p_i(t)=\bar{p}_i(0)+\dots+\bar{p}_i(t-1)\label{eq:p-i-t}
\end{equation}
is the probability that the algorithm halts some time before $U_t$ is applied. We let $T_i$ be the random variable such that $\Pr[T_i=t]=\bar{p}_i(t)$, so $T_i$ is the stopping time on input $i$. 

Suppose the algorithm is meant to compute some function $g:{\cal I}\rightarrow{\cal A}$. Then if we let $\Pi_t$ be the orthogonal projector onto $\overline{H}_t$, and $\Xi_a$ the orthogonal projector onto $a$ in the answer register, the probability that the algorithm outputs an incorrect answer on input $i$, given that it stops at time $t$, is:
\begin{equation}
\epsilon_i^t:= \frac{1}{\bar{p}_i(t)}\norm{((I-\Xi_{g(i)})\otimes \Pi_t)U_t\dots U_1\ket{i,0,0}}^2.
\end{equation}

\paragraph{Reversible Variable-Time Subroutines:} So far this is just a generalization of Ambainis' notion of a variable stopping time algorithm in~\cite{ambainis2010VTAA} to allowing the output to be more than just a bit. We will extend it to what we call \emph{Reversible Variable-Time Subroutines}. Building on the above notation, suppose what we are \emph{actually} interested in is computing $\ket{i}\mapsto A\ket{i}$, for some isometry $A$. A simple example we will consider (\sec{alg-composition}) is $A\ket{i}=(-1)^{g(i)}\ket{i}$ for some $g:[n]\rightarrow\{0,1\}$. In the other example we will see, in \sec{graph-composition}, $A\ket{u,i}=\ket{v,j}$ for vertices $(u,v)\in \overrightarrow{E}(G)$, with $f_u(i)=v$ and $f_v(j)=u$ (see \defin{QW-access}). 
Whatever the case, we will suppose that computing $A$ can be reduced to computing some auxiliary information $g(i)\in {\cal A}$, for which we have a variable-time subroutine. To this end, we say that a variable-time subroutine \emph{reversibly computes $A$} if it computes $g(i)$, and satisfies the following additional assumptions:
\begin{enumerate}
\item For $t\in\{1,\dots,{\sf T}\}$, letting
$$\tilde{U}_t=\sum_{i\in {\cal I}} A\ket{i}\bra{i}A^\dagger\otimes U_t^i,$$
the unitary $\sum_{t=1}^{{\sf T}}\ket{t}\bra{t}\otimes\tilde{U}_t$ can be implemented in unit cost. 
\item There exists a unitary $A'$ of the following form that can be implemented in unit cost:
$$A'=\sum_{a\in {\cal A}}\ket{a}\bra{a}\otimes A_a,$$
and such that $A_{g(i)}=A$. 
\item For all $i,i'\in{\cal I}$ such that $i\neq i'$, and all $a\in {\cal A}$, $\bra{i'}A^\dagger A_a\ket{i}=0$.
\end{enumerate}
If $A$ were easy to implement, then we would not need a subroutine. Instead, we are assuming (condition 2) that there is an easy to implement $A'$ that computes $A$ given the auxiliary information $g(i)$ computed by the variable-time subroutine. Given a reversible variable-time subroutine for computing $A$, we can implement the map $\ket{i}\mapsto A\ket{i}$ by running the variable-time subroutine $U_1,\dots,U_{{\sf T}}$ until the algorithm halts at some step $T_i$, applying $A'$, and then uncomputing by running $\tilde{U}_{T_i},\dots,\tilde{U}_1$ (condition 1). 
The last condition is not strictly necessary, but the total error of algorithms that use the subroutine (see \sec{graph-composition} and \sec{alg-composition}) is different if we omit it. The condition basically says that while the subroutine might output the wrong answer, it will not output an answer that could interfere with the correct part of another branch of the superposition running the algorithm on a different input. 

For example, suppose $A\ket{i}\ket{0}=\ket{i}\ket{g(i)}$, so $A$ just adds information to the state -- not in an information theoretic sense, but in a computational sense. In that case, it is still possible to control on $i$ to reverse the computation after we have applied $A$ (conditions 1 and 2), and condition 3 is satisfied, since $\braket{i',g(i')}{i,a}=0$ whenever $i\neq i'$, regardless of $a$. Similarly, if $g(i)$ is a single bit, and $A\ket{i}=(-1)^{g(i)}\ket{i}$, then a variable-time subroutine for $g$ is trivially reversible, by setting $\tilde{U}_t = U_t$ for all $t$, and $A_a\ket{i}=(-1)^a\ket{i}$.

\subsection{Transition States and Algorithm States}

We define several states associated with a variable stopping time algorithm, and prove some properties that will be useful later. 
We first define some states called \emph{transition states} from the subroutine, and sort them into two sets $\Psi_0$ and $\Psi_1$, such that each set is pairwise orthogonal. These sets could be used to define a phase estimation algorithm as described in \sec{phase-estimation}. There would be little point in turning an algorithm into a phase estimation algorithm -- if we already have an algorithm, we do not need another one -- but we will later (in \sec{graph-composition} and \sec{alg-composition}) combine these states with some other ones to get a more complicated phase estimation algorithm that uses the variable stopping time subroutine as a building block. 

\begin{definition}\label{def:transition-states}
Fix some set of positive weights $\{\alpha_t\}_{t=1}^{{\sf T}}$ such that $\alpha_0=1$. 
The \emph{transition states} of a reversible variable stopping time subroutine are defined as follows. 
Let ${\cal Z}_{>t} = \bigcup_{t'=1}^{t-1}{\cal Z}_{t'}$.
The \emph{forward transitions states} are defined:
\begin{multline*}
\forall i\in {\cal I}, t\in\{0,\dots,{\sf T}-1\}\\
\Psi_t^{i,\rightarrow}:=\left\{\ket{\psi_{a,z,t}^{i,\rightarrow}} := \ket{\rightarrow}\ket{i}\left( \sqrt{\alpha_t}\ket{a,z}\ket{t} - \sqrt{\alpha_{t+1}}U_{t+1}^i\ket{a,z}\ket{t+1} \right): a\in {\cal A},z\in {\cal Z}_{>t}\right\}.
\end{multline*}
The \emph{backward transition states} are defined:
\begin{multline*}
\forall i\in {\cal I}, t\in\{0,\dots,{\sf T}-1\}\\
\Psi_t^{i,\leftarrow}:=\left\{ \ket{\psi_{a,z,t}^{i,\leftarrow}} := \ket{\leftarrow}A\ket{i}\left( \sqrt{\alpha_t}\ket{a,z}\ket{t} - \sqrt{\alpha_{t+1}}U_{t+1}^i\ket{a,z}\ket{t+1}\right) : a\in {\cal A},z\in {\cal Z}_{>t}  \right\}.
\end{multline*}
The \emph{reversal states} are defined:
\begin{equation*}
\forall i\in {\cal I}, t\in\{1,\dots,{\sf T}\}
\qquad\Psi_t^{i,\leftrightarrow}:=\left\{ \ket{\psi_{a,z,t}^{i,\leftrightarrow}} := \sqrt{\alpha_t}\left(\ket{\rightarrow}\ket{i}-\ket{\leftarrow}A_a\ket{i}\right) \ket{a,z}\ket{t}: a\in{\cal A},z\in {\cal Z}_t\right\}.
\end{equation*}
We finally define $\Psi_0^{i,\leftrightarrow}=\Psi_{\sf T}^{i,\rightarrow}=\Psi_{\sf T}^{i,\leftarrow}=\emptyset$, and:
\begin{align*}
\Psi_0&=\bigcup_{i\in {\cal I}}\bigcup_{t=0: t\;\mathrm{even}}^{{\sf T}}\left(\Psi_t^{i,\rightarrow}\cup\Psi_t^{i,\leftarrow}\cup\Psi_{t}^{i,\leftrightarrow}\right)
\quad\mbox{and}\quad
\Psi_1=\bigcup_{i\in {\cal I}}\bigcup_{t=0: t\;\mathrm{odd}}^{{\sf T}}\left(\Psi_t^{i,\rightarrow}\cup\Psi_t^{i,\leftarrow}\cup\Psi_{t}^{i,\leftrightarrow}\right).
\end{align*}
\end{definition}

\begin{claim}\label{clm:variable-orthog}
For $b\in\{0,1\}$, the states of $\Psi_b$ are pairwise orthogonal. 
\end{claim}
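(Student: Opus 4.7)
The plan is to proceed by case analysis on the six unordered pairs of direction types ($\rightarrow$, $\leftarrow$, $\leftrightarrow$) that two states in $\Psi_b$ can have, and for each pair to exhibit a register whose inner product vanishes. The relevant registers are: the direction register $\rightarrow/\leftarrow$, the input register (on ${\cal I}$), the answer-times-workspace register (on ${\cal A}\otimes{\cal Z}$), and the time register (on $\{0,\ldots,{\sf T}\}$). The crucial property of $\Psi_b$ is that all states appearing in it have the same parity of $t$, which becomes essential whenever I want to compare time supports.

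I would first dispatch the inter-direction pairs. Forward versus backward is immediate from $\braket{\rightarrow}{\leftarrow}=0$. For forward versus reversal, only the $\ket{\rightarrow}\ket{i}$ half of the reversal state can possibly overlap; within that, mismatched $i$'s are killed by the input register, and if $i=i'$ then either $t\neq t'$ and the supports $\{t,t{+}1\}$ (forward) and $\{t'\}$ (reversal) are disjoint on the time register (here the same-parity hypothesis is what guarantees $t'\notin\{t,t{+}1\}$), or $t=t'$ and the forward state has $z\in{\cal Z}_{>t}$ while the reversal state has $z'\in{\cal Z}_t$, so the workspace bases are orthogonal. The backward-versus-reversal pair is symmetric, except that the $\ket{\leftarrow}A\ket{i}$ factor of the backward state and the $\ket{\leftarrow}A_a\ket{i'}$ factor of the reversal state produce a cross term $\bra{i}A^\dagger A_a\ket{i'}$; this is precisely where Condition~3 of reversibility is applied to kill $i\neq i'$, and the remaining $i=i'$ subcases fall to the same parity/time or ${\cal Z}_t$-vs-${\cal Z}_{>t}$ arguments.

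For the intra-direction pairs, forward-forward and backward-backward are parallel. For $i\neq i'$, the input register orthogonality gives the result directly (in the backward case using that $A$ is an isometry so $\bra{i}A^\dagger A\ket{i'}=\braket{i}{i'}=0$). For $i=i'$ and $t\neq t'$ of the same parity, the time supports $\{t,t{+}1\}$ and $\{t',t'{+}1\}$ are disjoint. For $i=i'$, $t=t'$, and $(a,z)\neq(a',z')$, I would expand the inner product; the cross terms vanish on the time register and the two diagonal terms collapse via unitarity of $U_{t+1}^i$ to $(\alpha_t+\alpha_{t+1})\braket{a,z}{a',z'}=0$. Reversal-reversal is the cleanest: different $t$'s separate trivially on the time register, different $(a,z)$'s separate on the workspace register, and for common $t$, $a$, $z$ but $i\neq i'$ the inner product becomes $\alpha_t(\braket{i}{i'}+\bra{i}A_a^\dagger A_a\ket{i'})$, which vanishes because $A_a$ is unitary (a consequence of $A'=\sum_a\ket a\bra a\otimes A_a$ being unitary).

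The main obstacle I anticipate is bookkeeping: the transition states pack together several registers and three directional flavours, and I have to ensure I consider every same-parity pair. The only genuinely substantive step, rather than a mechanical orthogonality check, is the backward-reversal case, and it is exactly the reason Condition~3 was built into the notion of a reversible variable-time subroutine in the first place. Every other case is a direct consequence of orthogonality of one of the registers combined with unitarity of $U_t^i$ or $A_a$, or with the isometry property of $A$.
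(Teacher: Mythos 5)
Your proposal is correct and is essentially the fully worked-out version of the paper's argument: the paper merely observes that states within each set $\Psi_t^{i,d}$ are pairwise orthogonal and that the only inter-set overlaps are the edges of the bipartite ladder graph in Figure 2, of which $\Psi_0,\Psi_1$ form a bipartition. Your register-by-register case analysis — including the parity argument separating time supports, the ${\cal Z}_t$ vs.\ ${\cal Z}_{>t}$ separation, and the use of Condition~3 for the backward--reversal cross terms — is exactly the verification the paper leaves to the reader.
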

\noindent To convince oneself of \clm{variable-orthog}, one should notice that the states in each set $\Psi_{t}^{i,d}$ are pairwise orthogonal, and the only overlaps between sets are those shown in the graph in \fig{variable-overlap-graph}. Then  since $\Psi_0$ and $\Psi_1$ are formed from a bipartition of this graph, they are each pairwise orthogonal.

\begin{figure}
\centering
\begin{tikzpicture}
\draw (0,3)--(12,3);
\draw(0,0)--(12,0);

\draw (0,3)--(0,0);
\node[rectangle, rounded corners, draw, thick, fill=white] at (0,3) {$\Psi_{0}^{i,\rightarrow}$};
\node[rectangle, rounded corners, draw, thick, fill=white] at (0,1.5) {$\Psi_{1}^{i,\leftrightarrow}$};
\node[rectangle, rounded corners, draw, thick, fill=white] at (0,0) {$\Psi_{0}^{i,\leftarrow}$};

\draw (2,3)--(2,0);
\node[rectangle, rounded corners, draw, thick, fill=white] at (2,3) {$\Psi_{1}^{i,\rightarrow}$};
\node[rectangle, rounded corners, draw, thick, fill=white] at (2,1.5) {$\Psi_{2}^{i,\leftrightarrow}$};
\node[rectangle, rounded corners, draw, thick, fill=white] at (2,0) {$\Psi_{1}^{i,\leftarrow}$};

\draw (4,3)--(4,0);
\node[rectangle, rounded corners, draw, thick, fill=white] at (4,3) {$\Psi_{2}^{i,\rightarrow}$};
\node[rectangle, rounded corners, draw, thick, fill=white] at (4,1.5) {$\Psi_{3}^{i,\leftrightarrow}$};
\node[rectangle, rounded corners, draw, thick, fill=white] at (4,0) {$\Psi_{2}^{i,\leftarrow}$};

\node[fill=white] at (6,3) {$\dots$};
\node[fill=white] at (6,1.5) {$\dots$};
\node[fill=white] at (6,0) {$\dots$};

\draw (8,3)--(8,0);
\node[rectangle, rounded corners, draw, thick, fill=white] at (8,3) {$\Psi_{t}^{i,\rightarrow}$};
\node[rectangle, rounded corners, draw, thick, fill=white] at (8,1.5) {$\Psi_{t+1}^{i,\leftrightarrow}$};
\node[rectangle, rounded corners, draw, thick, fill=white] at (8,0) {$\Psi_{t}^{i,\leftarrow}$};

\node[fill=white] at (10,3) {$\dots$};
\node[fill=white] at (10,1.5) {$\dots$};
\node[fill=white] at (10,0) {$\dots$};

\draw (12,3)--(12,0);
\node[rectangle, rounded corners, draw, thick, fill=white] at (12,3) {$\Psi_{{\sf T}-1}^{i,\rightarrow}$};
\node[rectangle, rounded corners, draw, thick, fill=white] at (12,1.5) {$\Psi_{\sf T}^{i,\leftrightarrow}$};
\node[rectangle, rounded corners, draw, thick, fill=white] at (12,0) {$\Psi_{{\sf T}-1}^{i,\leftarrow}$};

\end{tikzpicture}
\caption{The overlap graph of the sets of states defined in \defin{algorithm-states} for some fixed $i$. Each node represents a set of states that are pairwise orthogonal. Two nodes share an edge if and only if their sets contain overlapping states. For different values of $i$, all states are orthogonal. One can imagine an algorithm starting in the state $\ket{\rightarrow}\ket{i}\ket{0,0}\ket{0}$, which only overlaps $\Psi_{0}^{i,\rightarrow}$. The state of the algorithm moves through the graph until some part of it is of the form $\ket{\leftarrow}A_a\ket{i}\ket{a,z}\ket{t}$, hopefully for $a=g(i)$, then on that part, it uncomputes to move back down the other side of the ladder to a state that only overlaps $\Psi_0^{i,\leftarrow}$. The length of the algorithm's path depends on which ``rung'' $t$ of the ladder it uses to move from the $\rightarrow$ to the $\leftarrow$ part, which will depend on the stopping probabilities at various steps.}\label{fig:variable-overlap-graph}
\end{figure}
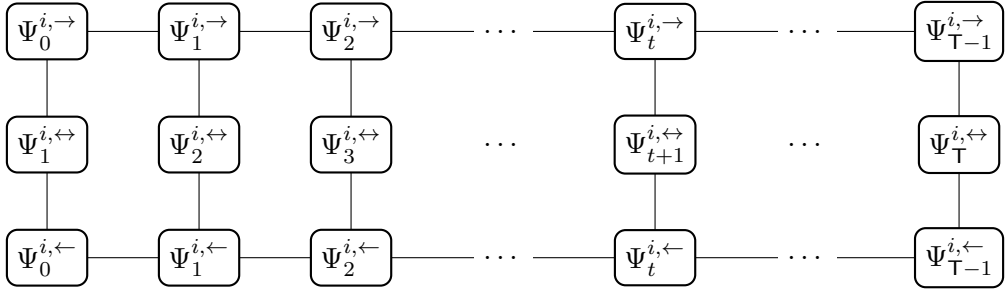

\begin{lemma}\label{lem:variable-time-unitary}
For $b\in \{0,1\}$, let $\Pi_{\Psi_b}$ denote the orthogonal projector onto the span of $\Psi_b$. Then $(2\Pi_{\Psi_b}-I)$ can be implemented in $O(\log {\sf T})$ complexity.
\end{lemma}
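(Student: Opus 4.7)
The plan is to exhibit, for each $b\in\{0,1\}$, a unitary $R_b$ of complexity $O(\log{\sf T})$ and a ``canonical'' subspace $C_b$ cut out by simple register-value tests, with $R_b$ mapping $C_b$ isometrically onto $\mathrm{span}(\Psi_b)$. Then $2\Pi_{\Psi_b}-I = R_b(2\Pi_{C_b}-I)R_b^\dagger$, and since the conditions defining $C_b$ involve only the parity of $t$, the partition ${\cal Z}=\bigcup_{t'}{\cal Z}_{t'}$ (into the spaces $\bar{H}_{t'}$, which we can test membership of in unit cost), the direction register, and a flag ancilla, the reflection $2\Pi_{C_b}-I$ is also $O(\log{\sf T})$. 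By \clm{variable-orthog} the states in $\Psi_b$ are pairwise orthogonal; a direct check shows further that the forward, backward, and reversal families are mutually orthogonal (reversal states require $z\in{\cal Z}_t$ while forward and backward require $z\in{\cal Z}_{>t}$; forward and backward are separated by $\ket{\rightarrow}$ vs.\ $\ket{\leftarrow}$), so $R_b$ splits naturally into three independent branches keyed by the direction-plus-$z$-range flag.

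The forward branch starts from $\ket{\rightarrow}\ket{i}\ket{a,z}\ket{t}$ with $t$ of the correct parity and $z\in{\cal Z}_{>t}$; it applies a two-dimensional rotation on the time register (using $\alpha_t,\alpha_{t+1}$, arithmetic on which is $O(\log{\sf T})$) to produce $(\sqrt{\alpha_t}\ket{t}-\sqrt{\alpha_{t+1}}\ket{t+1})/\sqrt{\alpha_t+\alpha_{t+1}}$, uses a flag to shift both components to a common time slot, applies the unit-cost $\sum_{t'}\ket{t'}\bra{t'}\otimes U_{t'}$ conditioned on the flag, and undoes the shift; the output is the normalized forward transition state. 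The reversal branch starts from $\ket{\rightarrow}\ket{i}\ket{a,z}\ket{t}$ with $z\in{\cal Z}_t$, Hadamard-transforms the direction register, and (with a sign flip) applies $A'=\sum_a\ket{a}\bra{a}\otimes A_a$ (unit cost) controlled on $\ket{\leftarrow}$, yielding $(\ket{\rightarrow}\ket{i}-\ket{\leftarrow}A_a\ket{i})\ket{a,z}\ket{t}/\sqrt{2}$, i.e.\ the normalized reversal state.

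The hard part is the backward branch, because the backward state explicitly contains $A\ket{i}$ in the input register, yet we only have the derivative operations $\tilde{U}_t$ and $A'$, not $A$ itself. The resolution is to run the forward preparation circuit verbatim but with $\ket{\leftarrow}$ in the direction register and with $\sum_{t'}\ket{t'}\bra{t'}\otimes\tilde{U}_{t'}$ in place of $\sum_{t'}\ket{t'}\bra{t'}\otimes U_{t'}$. Since by definition $\tilde{U}_{t'}$ acts as $U_{t'}^i$ on the $A\ket{i}$-slice of the input space, and $\{A\ket{i}\}_{i\in{\cal I}}$ is an orthonormal basis of $\mathrm{im}(A)$, linearity implies that this mirrored circuit maps $\ket{\leftarrow}A\ket{i}\ket{a,z}\ket{t}$ to the normalized backward state, and extends to an isometry from $\ket{\leftarrow}\otimes\mathrm{im}(A)\otimes\mathrm{span}\{\ket{a,z}\ket{t}:\text{correct parity},\,z\in{\cal Z}_{>t}\}$ onto the span of the backward states. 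Thus the backward canonical subspace only requires a projection onto $\mathrm{im}(A)$ on the input register; in all applications of interest in this paper $A$ is a unitary so $\mathrm{im}(A)=H_{\cal I}$ and this projection is the identity, leaving a canonical subspace cut out entirely by register-value checks. Combining the three branches produces an $R_b$ of total complexity $O(\log{\sf T})$, and composing with $2\Pi_{C_b}-I$ (also $O(\log{\sf T})$) gives the claimed reflection.
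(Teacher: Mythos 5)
Your construction is essentially the paper's own proof: the same conjugation $2\Pi_{\Psi_b}-I=R_b(2\Pi_{C_b}-I)R_b^\dagger$, the same three branches keyed on direction and on whether $z\in{\cal Z}_t$ or $z\in{\cal Z}_{>t}$, the same $\alpha_t,\alpha_{t+1}$ rotation on an ancilla, the same use of $\sum_t\ket{t}\bra{t}\otimes\tilde{U}_t$ to handle the backward states (which act on $A\ket{i}$ by definition of a reversible subroutine), and the same use of $A'$ for the reversal states. The one step to make precise is that after the controlled increment and controlled application of $U_{t+1}$/$\tilde{U}_{t+1}$ the two components sit at times $t$ and $t+1$, so the flag ancilla cannot be ``unshifted'' away; it must instead be uncomputed from the \emph{parity} of the time register, which is exactly where the restriction of $\Psi_b$ to $t\equiv b\pmod 2$ is used.
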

\begin{proof}
The basic idea is that for $b\in \{0,1\}$, we can implement a unitary $U_b$ that acts, for all $i\in {\cal I}$, $a\in\{0,1\}$, and $t$ of parity $b$, as:
\begin{enumerate}
\item $\ket{\rightarrow}\ket{i}\ket{a,z,t} \overset{U_b}{\mapsto} \propto \ket{\psi_{a,z,t}^{i,\rightarrow}}$ for all $z\in {\cal Z}_{>t}$
\item $\ket{\leftarrow}A\ket{i}\ket{a,z,t} \overset{U_b}{\mapsto} \propto \ket{\psi_{a,z,t}^{i,\leftarrow}}$ for all $z\in {\cal Z}_{>t}$
\item $\ket{\leftrightarrow}\ket{i}\ket{a,z,t} \overset{U_b}{\mapsto} \propto \ket{\psi_{a,z,t}^{i,\leftrightarrow}}$ for all $z\in {\cal Z}_{t}$
\end{enumerate}
using, respectively:
\begin{enumerate}
\item The ability to implement $\sum_{t=1}^{\sf T}\ket{t}\bra{t}\otimes U_t$ in unit cost.
\item The ability to implement $\sum_{t=1}^{\sf T}\ket{t}\bra{t}\otimes\tilde{U}_t$ in unit cost.
\item The ability to implement $\sum_{a\in{\cal A}}\ket{a}\bra{a}\otimes A_a$ in unit cost. 
\end{enumerate}
We implement $U_b$ in two parts: the first is conditioned on $z\in {\cal Z}_{>t}$, which will ensure (1) and (2); the second is conditioned on $z\in {\cal Z}_t$, which will ensure condition (3). First, conditioned on $z\in {\cal Z}_{>t}$, do the following. Append an auxiliary qubit initialized to $\ket{0}$. Controlled on $t$, rotate this qubit to $\propto\sqrt{\alpha_t}\ket{0}-\sqrt{\alpha_{t+1}}\ket{1}$. When $z\in {\cal Z}_{>t}$, we only care about the cases where the first qubit is $\ket{\rightarrow}$ or $\ket{\leftarrow}$, and in these cases we now have:
\begin{align*}
(\rightarrow)\mbox{-case: }&\sqrt{\alpha_t}\ket{0}\ket{\rightarrow}\ket{i}\ket{a,z,t} - \sqrt{\alpha_{t+1}}\ket{1}\ket{\rightarrow}\ket{i}\ket{a,z,t},\\
(\leftarrow)\mbox{-case: }&\sqrt{\alpha_t}\ket{0}\ket{\leftarrow}A\ket{i}\ket{a,z,t} - \sqrt{\alpha_{t+1}}\ket{1}\ket{\leftarrow}A\ket{i}\ket{a,z,t}.
\end{align*}
Controlled on the auxiliary qubit being $\ket{1}$, increment the last register, and then (still controlled on $\ket{1}$) controlled on $\ket{\rightarrow}$, apply $\sum_{t=1}^{\sf T}\ket{t}\bra{t}\otimes U_t$, and controlled on $\ket{\leftarrow}$, apply $\sum_{t=1}^{\sf T}\ket{t}\bra{t}\otimes \tilde{U}_t$ to get:
\begin{align*}
(\rightarrow)\mbox{-case: }&\sqrt{\alpha_t}\ket{0}\ket{\rightarrow}\ket{i}\ket{a,z,t} - \sqrt{\alpha_{t+1}}\ket{1}\ket{\rightarrow}\ket{i}U_{t+1}\ket{a,z}\ket{t+1},\\
(\leftarrow)\mbox{-case: }&\sqrt{\alpha_t}\ket{0}\ket{\leftarrow}A\ket{i}\ket{a,z,t} - \sqrt{\alpha_{t+1}}\ket{1}\ket{\leftarrow}A\ket{i}U_{t+1}\ket{a,z}\ket{t+1}.
\end{align*}
Since we only care about the action of $U_b$ when $t$ has parity $b$, we can uncompute the auxiliary using the parity of the last register. 

Next, conditioned on $z\in {\cal Z}_t$, do the following. Map $\ket{\leftrightarrow}$ to 
$(\ket{\rightarrow}-\ket{\leftarrow})/\sqrt{2}$, and then conditioned on $\ket{\leftarrow}$, apply $\sum_a\ket{a}\bra{a}\otimes A_a$ to get:
\begin{equation*}
\frac{1}{\sqrt{2}}(\ket{\rightarrow}\ket{i}\ket{a,z,t} - \ket{\leftarrow}A_a\ket{i}\ket{a,z,t}) \propto \ket{\psi_{a,z,t}^{i,\leftrightarrow}}.
\end{equation*}

The map $U_b$ can be combined with a reflection $R_b$ that checks, for any state $\ket{d}\ket{i}\ket{a,z}\ket{t}$: $t$ has parity $b$; if $d\in\{\leftarrow,\rightarrow\}$, then $z\in {\cal Z}_{>t}$ and $t<{\sf T}$; and if $d=\leftrightarrow$, then $z\in {\cal Z}_t$ and $t>0$. Then $(2\Pi_{\Psi_b}-I) = U_bR_bU_b^\dagger$. 
\end{proof}

We now define some states which are essentially positive and negative witnesses for a phase estimation algorithm defined by $\Psi_0$ and $\Psi_1$. When we combine $\Psi_0$ and $\Psi_1$ with other states to get more complex phase estimation algorithms in \sec{graph-composition} and \sec{alg-composition}, these will be building blocks for the positive and negative witnesses for those phase estimation algorithms, so we prove several claims analyzing these states here.

\begin{definition}[Algorithm States]\label{def:algorithm-states}
Fix a reversible variable-time subroutine $U_1,\dots,U_{\sf T},A$ and 
some set of positive weights $\{\alpha_t\}_t$.
Let $\Pi_{\geq t}:=\Pi_{t}+\dots+\Pi_{{\sf T}}$ be the orthogonal projector onto $H_{t-1}^\bot = \overline{H}_{t}\oplus \dots\oplus \overline{H}_{{\sf T}}$. Define the following \emph{algorithm states} in $H_{\cal A}\otimes H_{\cal Z}$, for all $i\in {\cal I}$: 
\begin{align*}
\ket{w^{0}(i)} &= \ket{0,0}\\
\forall t\in [{\sf T}],\; \ket{w^t(i)} &= U_t^i\Pi_{\geq t}\ket{w^{t-1}(i)}.
\end{align*}
The \emph{positive history state} of the algorithm on input $i\in {\cal I}$ is defined:
\begin{equation*}
\ket{w_+(i)} = (\ket{\rightarrow}\ket{i}+\ket{\leftarrow}A\ket{i})\sum_{t=0}^{{\sf T}}\frac{1}{\sqrt{\alpha_t}}\ket{w^t(i)}\ket{t}.
\end{equation*}
The \emph{negative history state} of the algorithm on input $i\in {\cal I}$ is defined:
\begin{equation*}
\ket{w_-(i)} = (\ket{\rightarrow}\ket{i}-\ket{\leftarrow}A\ket{i})\sum_{t=0}^{{\sf T}}{\sqrt{\alpha_t}}(-1)^t\ket{w^t(i)}\ket{t}.
\end{equation*}
\end{definition}

We have defined the algorithm states in this way because $\Pi_{\geq t}\ket{w^{t-1}(i)}$ is the part of the algorithm that has not yet output before we apply $U_t^i$. The following claim makes clear that $\ket{w^t(i)}$ consists of two orthogonal parts: the part in $\overline{H}_t$, which is output at this step, and the part in $H_t^\bot = \overline{H}_{t+1}\oplus\dots\oplus \overline{H}_{{\sf T}}$, which has yet to halt.

\begin{claim}\label{clm:alg-state-forms}
For all $i\in {\cal I}$ and $t\in \{0,\dots,{\sf T}\}$, 
$$\ket{w^t(i)}=\Pi_{\geq t}U_t^i\Pi_{\geq t-1}U_{t-1}^i\dots \Pi_{\geq 1}U_1^i\ket{0,0}
=\Pi_{\geq t}U_t^i\dots U_1^i\ket{0,0}.$$
\end{claim}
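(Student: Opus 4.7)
The plan is to reduce both equalities to a single commutation relation derived from the invariance condition \eq{Pi-t-U}. First I would show that $[U_t^i,\Pi_{\geq s}]=0$ for every $s\leq t$, and that $\Pi_{\geq t}\Pi_{\geq s}=\Pi_{\geq t}$. Decomposing $U_t=\sum_i\ket{i}\bra{i}\otimes U_t^i$, the assumption $U_t\Pi_{\leq t-1}=\Pi_{\leq t-1}$ gives $U_t^i\Pi_{\leq t-1}=\Pi_{\leq t-1}$ for each $i$, so $U_t^i$ acts as the identity on $H_{t-1}$. Since $H_{s-1}\subseteq H_{t-1}$, it also acts as the identity on $H_{s-1}$, giving $U_t^i\Pi_{\leq s-1}=\Pi_{\leq s-1}$; taking adjoints and using unitarity yields $\Pi_{\leq s-1}U_t^i=\Pi_{\leq s-1}$. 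Subtracting both sides from $U_t^i$ gives $U_t^i\Pi_{\geq s}=\Pi_{\geq s}U_t^i$. The relation $\Pi_{\geq t}\Pi_{\geq s}=\Pi_{\geq t}$ holds because $H_{t-1}^\bot\subseteq H_{s-1}^\bot$ whenever $s\leq t$.

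For the first equality, I would induct on $t$. The base case $t=0$ is immediate from $\ket{w^0(i)}=\ket{0,0}$. For the inductive step, the defining recursion gives $\ket{w^t(i)}=U_t^i\Pi_{\geq t}\ket{w^{t-1}(i)}$; applying the commutation relation with $s=t$ rewrites this as $\Pi_{\geq t}U_t^i\ket{w^{t-1}(i)}$, and substituting the inductive hypothesis for $\ket{w^{t-1}(i)}$ yields $\Pi_{\geq t}U_t^i\Pi_{\geq t-1}U_{t-1}^i\cdots\Pi_{\geq 1}U_1^i\ket{0,0}$, as claimed.

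For the second equality, I would eliminate the internal projectors one at a time. For each $s=t-1,t-2,\dots,1$ in turn, commute $\Pi_{\geq s}$ leftward past the unitaries $U_{s+1}^i,\dots,U_t^i$ (valid because $s\leq r$ for each $r$ in this range) until it sits directly to the right of $\Pi_{\geq t}$, and then absorb it via $\Pi_{\geq t}\Pi_{\geq s}=\Pi_{\geq t}$. Iterating leaves $\Pi_{\geq t}U_t^i U_{t-1}^i\cdots U_1^i\ket{0,0}$, which is the desired form. The only mild bookkeeping subtlety is that the projectors act on $H_{\cal Z}$ (extended by identity on $H_{\cal A}$) while the $U_t^i$ act on $H_{\cal A}\otimes H_{\cal Z}$; once this is kept straight the argument is a direct manipulation using unitarity and the invariance condition, and there is no substantial obstacle.
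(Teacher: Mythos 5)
Your proposal is correct and follows essentially the same route as the paper: derive the commutation relation $U_t^i\Pi_{\geq t}=\Pi_{\geq t}U_t^i$ from the invariance condition \eq{Pi-t-U} (via the adjoint/unitarity step you make explicit), then telescope. The only cosmetic difference is that the paper proves the second equality by induction using only the adjacent absorption $\Pi_{\geq t}\Pi_{\geq t-1}=\Pi_{\geq t}$ and the $s=t$ commutation, whereas you eliminate the internal projectors from the fully unrolled expression using the slightly more general relation $[U_r^i,\Pi_{\geq s}]=0$ for $s\leq r$; both are valid.
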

\begin{proof}
First, note that since $U_t^i\Pi_{\leq t-1} = \Pi_{\leq t-1}$ (by \eq{Pi-t-U}), we have:
\begin{equation}\label{eq:U-t-Pi-commute}
U_t^i\Pi_{\geq t} = U_t^i(I-\Pi_{\leq t-1}) = U_t^i - \Pi_{\leq t-1} = (I-\Pi_{\leq t-1})U_t^i = \Pi_{\geq t}U_t^i,
\end{equation}
so the first equality follows easily from the definition of $\ket{w^t(i)}$. 

We prove the second equality by induction. For the base step, we have:
$\ket{w^0(i)}=\ket{0,0}=\Pi_{\geq 0}\ket{0,0}.$
For the induction step, assuming  $\ket{w^{t-1}(i)}=\Pi_{\geq t-1}U_{t-1}^i\dots U_1^i\ket{0,0}$, we have:
\begin{align*}
\ket{w^t(i)}=U_t\Pi_{\geq t}\ket{w^{t-1}(i)}
&=  U_t\Pi_{\geq t} \Pi_{\geq t-1}U_{t-1}^i\dots U_1^i\ket{0,0}
=U_t\Pi_{\geq t}U_{t-1}^i\dots U_1^i\ket{0,0}\\
&=\Pi_{\geq t}U_tU_{t-1}^i\dots U_1^i\ket{0,0}.\qedhere
\end{align*}
\end{proof}

\begin{corollary}
Let $i\in {\cal I}$. Recall the definitions of $\bar{p}_i(t)$ and $p_i(t)$ (\eq{p-i-t}). Then we have:
\begin{equation*}
\norm{\Pi_t\ket{w^t(i)}}^2 = \bar{p}_i(t)
\mbox{ and }
\norm{\ket{w^t(i)}}^2 = 1-p_i(t).
\end{equation*}
\end{corollary}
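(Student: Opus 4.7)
The plan is to use \clm{alg-state-forms} to rewrite $\ket{w^t(i)} = \Pi_{\geq t}U_t^i\cdots U_1^i\ket{0,0}$ and then exploit the invariance property \eq{Pi-t-U} of the unitaries $U_s$ to reduce both norms to sums of stopping probabilities.

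For the first identity, observe that $\Pi_t$ projects onto $\overline{H}_t\subseteq H_{t-1}^\bot$, so $\Pi_t\Pi_{\geq t}=\Pi_t$. Hence $\Pi_t\ket{w^t(i)}=\Pi_t U_t^i\cdots U_1^i\ket{0,0}$, whose squared norm is by definition the probability $\bar{p}_i(t)$ of measuring ``1'' right after $U_t$ is applied (since for every later time $s>t$ the unitary $U_s$ acts as the identity on $\overline{H}_t\subseteq H_{s-1}$, so the probability of being in $\overline{H}_t$ right after $U_t$ coincides with the probability of stopping at time $t$).

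For the second identity, using $\Pi_{\geq t}+\Pi_{\leq t-1}=I$ and the unitarity of $U_t^i\cdots U_1^i$, write
\begin{equation*}
\norm{\ket{w^t(i)}}^2 = 1-\norm{\Pi_{\leq t-1}U_t^i\cdots U_1^i\ket{0,0}}^2.
\end{equation*}
The key lemma I need is that $\Pi_{\leq s-1}U_s=\Pi_{\leq s-1}$ for every $s$. This follows by taking the adjoint of \eq{Pi-t-U} (which gives $\Pi_{\leq s-1}U_s^\dagger=\Pi_{\leq s-1}$, so $U_s$ is the identity on $H_{s-1}$, and then unitarity forces $U_s$ to preserve $H_{s-1}^\bot$, giving $\Pi_{\leq s-1}U_s=\Pi_{\leq s-1}$). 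Iterating, $\Pi_{\leq t-1}U_t^i\cdots U_1^i=\Pi_{\leq t-1}U_{t-1}^i\cdots U_1^i$, and for each $s\leq t-1$ the same argument yields $\Pi_s U_{t-1}^i\cdots U_1^i=\Pi_s U_s^i\cdots U_1^i$.

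Since the subspaces $\overline{H}_s$ are mutually orthogonal, $\Pi_{\leq t-1}=\sum_{s=1}^{t-1}\Pi_s$ decomposes the norm additively:
\begin{equation*}
\norm{\Pi_{\leq t-1}U_t^i\cdots U_1^i\ket{0,0}}^2 = \sum_{s=1}^{t-1}\norm{\Pi_s U_s^i\cdots U_1^i\ket{0,0}}^2 = \sum_{s=1}^{t-1}\bar{p}_i(s) = p_i(t),
\end{equation*}
where the middle equality uses the first identity we just proved. Substituting back gives $\norm{\ket{w^t(i)}}^2=1-p_i(t)$. I expect the only mildly fiddly step to be the adjoint/unitarity dance that upgrades \eq{Pi-t-U} from $U_s\Pi_{\leq s-1}=\Pi_{\leq s-1}$ to $\Pi_{\leq s-1}U_s=\Pi_{\leq s-1}$; everything else is bookkeeping.
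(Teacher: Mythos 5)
Your proof is correct and follows essentially the same route as the paper: both reduce to $\ket{w^t(i)}=\Pi_{\geq t}U_t^i\cdots U_1^i\ket{0,0}$ via \clm{alg-state-forms}, read off $\norm{\Pi_t\ket{w^t(i)}}^2=\bar{p}_i(t)$, and write $\norm{\ket{w^t(i)}}^2=1-\norm{\Pi_{\leq t-1}U_t^i\cdots U_1^i\ket{0,0}}^2$. The only difference is that the paper identifies the subtracted term with $p_i(t)$ by a verbal appeal to its meaning as the already-halted part of the state, whereas you justify this rigorously by decomposing $\Pi_{\leq t-1}=\sum_{s<t}\Pi_s$ and commuting each $\Pi_s$ past the later unitaries.
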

\begin{proof}
By \clm{alg-state-forms}, we have
$\Pi_t\ket{w^t(i)} = \Pi_tU_t^i\dots U_1^i\ket{0,0}$,
which is the part of the state after applying $U_t$ that did not already output, but does output in the measurement after $U_t$, so 
$\bar{p}_i(t)=\norm{\Pi_t\ket{w^t(i)}}^2.$
For the second part, we have 
$$\norm{\ket{w^t(i)}}^2 = \norm{U_t^i\dots U_1^i\ket{0,0}}^2-\norm{\Pi_{<t}U_t^i\dots U_1^i\ket{0,0}}^2=1-p_i(t),$$
since $\Pi_{<t}U_t^i\dots U_1^i\ket{0,0}$ is the part of the state that has already halted before $U_t$ was applied.
\end{proof}

\begin{corollary}\label{cor:pos-witness-complexity}
For all $i\in {\cal I}$, 
\begin{align*}
\norm{\ket{w_-(i)}}^2 &= 2\sum_{t=0}^{{\sf T}}\alpha_t(1-p_i(t)) = 2\mathbb{E}\left[\sum_{t=0}^{T_i}\alpha_t\right]\\
\mathrm{ and }\;
\norm{\ket{w_+(i)}}^2 &= 2\sum_{t=0}^{{\sf T}}\frac{1}{\alpha_t}(1-p_i(t)) =  2\mathbb{E}\left[\sum_{t=0}^{T_i}\frac{1}{\alpha_t}\right].
\end{align*}
\end{corollary}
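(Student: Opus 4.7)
My plan is to compute each norm by exploiting the obvious tensor-product structure of the history states, then reindex the resulting single sum to get the expectation form.

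First I would observe that the ``direction'' factor contributes a clean constant of $2$ in both cases. Since $\ket{\rightarrow}$ and $\ket{\leftarrow}$ are orthogonal, and $A$ is an isometry so $\norm{A\ket{i}}=1$, we have
\begin{equation*}
\norm{\ket{\rightarrow}\ket{i}\pm\ket{\leftarrow}A\ket{i}}^2 = \norm{\ket{i}}^2 + \norm{A\ket{i}}^2 = 2.
\end{equation*}
So both $\norm{\ket{w_\pm(i)}}^2$ factor as $2$ times the squared norm of the second factor.

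Next I would note that within each second factor, the states $\ket{w^t(i)}\ket{t}$ for different $t$ live in orthogonal subspaces because of the last register $\ket{t}$. Therefore the squared norm of the sum is simply the sum of the individual squared norms:
\begin{align*}
\Bigl\|\sum_{t=0}^{{\sf T}}\tfrac{1}{\sqrt{\alpha_t}}\ket{w^t(i)}\ket{t}\Bigr\|^2 &= \sum_{t=0}^{{\sf T}}\tfrac{1}{\alpha_t}\norm{\ket{w^t(i)}}^2,\\
\Bigl\|\sum_{t=0}^{{\sf T}}\sqrt{\alpha_t}(-1)^t\ket{w^t(i)}\ket{t}\Bigr\|^2 &= \sum_{t=0}^{{\sf T}}\alpha_t\norm{\ket{w^t(i)}}^2,
\end{align*}
and the signs $(-1)^t$ drop out. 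Applying the previous corollary $\norm{\ket{w^t(i)}}^2 = 1-p_i(t)$ gives the first equalities in the statement.

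Finally, to obtain the expectation form, I would reindex using the identity $1-p_i(t)=\Pr[T_i\geq t]$ (which follows directly from $p_i(t)=\bar p_i(0)+\dots+\bar p_i(t-1)=\Pr[T_i<t]$) and then swap the order of summation by Fubini (valid because all terms are non-negative):
\begin{equation*}
\mathbb{E}\Bigl[\sum_{t=0}^{T_i}\alpha_t\Bigr] = \sum_{s=0}^{{\sf T}}\bar p_i(s)\sum_{t=0}^{s}\alpha_t = \sum_{t=0}^{{\sf T}}\alpha_t\sum_{s\geq t}\bar p_i(s) = \sum_{t=0}^{{\sf T}}\alpha_t\Pr[T_i\geq t],
\end{equation*}
and identically with $\alpha_t$ replaced by $1/\alpha_t$. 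Multiplying by $2$ completes the identity. There is no real obstacle here; the only point where one has to be careful is the Fubini reindexing, since convincing oneself that the inner sum runs from $t=0$ to $t=s$ (matching the range of the tail sum) requires paying attention to the endpoint conventions in the definition of $p_i(t)$.
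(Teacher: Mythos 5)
Your proposal is correct and follows essentially the same route as the paper: the factor of $2$ from the orthogonal direction register, orthogonality in the $\ket{t}$ register reducing the norm to $\sum_t \alpha_t^{\pm 1}(1-p_i(t))$ via the preceding corollary, and then the same swap of summation order $\sum_t \alpha_t \sum_{t'\geq t}\bar p_i(t') = \sum_{t'}\bar p_i(t')\sum_{t\leq t'}\alpha_t$ to reach the expectation form. The paper merely leaves the first two steps implicit and records only the reindexing, so your write-up is just a more explicit version of the same argument.
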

\begin{proof}
By \eq{p-i-t} and the fact that $\sum_{t=1}^{\sf T}\bar{p}_i(t)=1$, and letting $\bar{p}_i(0)=0$, we have:
\begin{align*}
\norm{\ket{w_-(i)}}^2 = 
\sum_{t=0}^{\sf T}\alpha_t(1-p_i(t)) &= \sum_{t=0}^{\sf T}\alpha_t\sum_{t'=t}^{\sf T}\bar{p}_i(t')
=\sum_{t'=0}^{{\sf T}}\bar{p}_i(t')\sum_{t=0}^{t'}\alpha_t=\mathbb{E}\left[\sum_{t=0}^{T_i}\alpha_t\right].
\end{align*}
A similar computation applies in the positive case.
\end{proof}

\begin{claim}[Positive Witness]\label{clm:ortho-alg-trans}
Assume ${\sf T}$ is odd.
For all $i$, $\ket{w_+(i)}$ is orthogonal to all forward and backward transition states, and all reversal states $\ket{\psi_{a,z,t}^{i',\leftrightarrow}}$ such that $i'\neq i$ or $a'\neq g(i)$ (see \defin{transition-states}). Moreover, if $\Pi_{\Psi_b}$ is the orthogonal projector onto $\mathrm{span}\{\Psi_b\}$, then 
\begin{align*}
\norm{\Pi_{\Psi_0}\ket{w_+(i)}}^2 &\leq 2\sum_{\substack{t=0\\\mathrm{even}}}^{{\sf T}-1}\frac{1}{\alpha_t}\bar{p}_i(t)\epsilon_i^t\leq 2\mathbb{E}\left[\frac{\epsilon_i^{T_i}}{\alpha_{T_i}}\right]
\text{and}\; \norm{\Pi_{\Psi_1}\ket{w_+(i)}}^2  \leq 2\sum_{\substack{t=1\\\mathrm{odd}}}^{{\sf T}}\frac{1}{\alpha_t}\bar{p}_i(t)\epsilon_i^t\leq 2\mathbb{E}\left[\frac{\epsilon_i^{T_i}}{\alpha_{T_i}}\right].
\end{align*}
\end{claim}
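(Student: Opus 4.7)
The plan is to split the analysis of $\braket{\psi}{w_+(i)}$ into three cases according to which family $\ket{\psi}$ belongs to --- forward transition, backward transition, or reversal --- and then bound $\norm{\Pi_{\Psi_b}\ket{w_+(i)}}^2$ by identifying the residual non-orthogonal states and summing their squared projections using the pairwise orthogonality of $\Psi_b$ from \clm{variable-orthog}. Throughout I would exploit the coin structure $\ket{\rightarrow},\ket{\leftarrow}$ of $\ket{w_+(i)}$ together with the invariance identity $U_t\Pi_{\leq t-1}=\Pi_{\leq t-1}$ from \eq{Pi-t-U}.

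For a forward transition state $\ket{\psi_{a,z,t}^{i',\rightarrow}}$ (with $z\in {\cal Z}_{>t}$), only the $\ket{\rightarrow}\ket{i}$ summand of $\ket{w_+(i)}$ can contribute, since $\ket{\leftarrow}$ is orthogonal to $\ket{\rightarrow}$. When $i'\neq i$ the overlap vanishes from $\braket{i'}{i}=0$. When $i'=i$, the $\sqrt{\alpha_t}$ and $\sqrt{\alpha_{t+1}}$ factors cancel their reciprocals in $\ket{w_+(i)}$, and I would then apply $U_{t+1}^{i\,\dagger}\ket{w^{t+1}(i)} = U_{t+1}^{i\,\dagger}U_{t+1}^i\Pi_{\geq t+1}\ket{w^t(i)} = \Pi_{\geq t+1}\ket{w^t(i)}$ together with $I-\Pi_{\geq t+1}=\Pi_{\leq t}$ to reduce the overlap to $\bra{a,z}\Pi_{\leq t}\ket{w^t(i)}$. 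Since $z\in {\cal Z}_{>t}$ places $\ket{z}$ in $H_t^\bot$, this vanishes. Backward transition states are handled identically, using that $A$ is an isometry so $\bra{i'}A^\dagger A\ket{i}=\braket{i'}{i}$ covers both the $i'=i$ and $i'\neq i$ sub-cases uniformly.

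For a reversal state $\ket{\psi_{a',z,t}^{i',\leftrightarrow}}$ (with $z\in {\cal Z}_t$), the time register picks out only the $t'=t$ summand of $\ket{w_+(i)}$, and $\sqrt{\alpha_t}$ cancels $1/\sqrt{\alpha_t}$. Both coin summands now contribute, giving the overlap $\bigl(\braket{i'}{i}-\bra{i'}A_{a'}^\dagger A\ket{i}\bigr)\bra{a',z}\ket{w^t(i)}$. For $i'\neq i$, condition 3 of reversibility (applied after swapping $i\leftrightarrow i'$ and taking a complex conjugate) kills $\bra{i'}A_{a'}^\dagger A\ket{i}$, and $\braket{i'}{i}=0$ kills the other term. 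For $i'=i,a'=g(i)$, condition 2 gives $A_{g(i)}=A$, so $A_{a'}^\dagger A=I$ and the coefficient is $1-\braket{i}{i}=0$. Hence the only reversal states whose overlap with $\ket{w_+(i)}$ can be nonzero are those with $i'=i$ and $a'\neq g(i)$.

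For the quantitative bound, each such $\ket{\psi_{a',z,t}^{i,\leftrightarrow}}$ has squared norm $2\alpha_t$ (each $A_{a'}$ is unitary, being a block of the unitary $A'$), and $|1-\bra{i}A_{a'}^\dagger A\ket{i}|\le 2$ since $A_{a'}^\dagger A$ has operator norm $1$. By the pairwise orthogonality of $\Psi_b$ from \clm{variable-orthog},
\[\norm{\Pi_{\Psi_b}\ket{w_+(i)}}^2 \;\le\; \sum_{t\text{ of parity }b}\frac{4}{2\alpha_t}\sum_{\substack{a'\neq g(i)\\ z\in {\cal Z}_t}}|\bra{a',z}\ket{w^t(i)}|^2.\]
By \clm{alg-state-forms} and $\bra{z}\Pi_{\geq t}=\bra{z}$ for $z\in {\cal Z}_t$, we have $\bra{a',z}\ket{w^t(i)}=\bra{a',z}U_t^i\cdots U_1^i\ket{0,0}$, so the inner double sum equals $\norm{((I-\Xi_{g(i)})\otimes\Pi_t)U_t^i\cdots U_1^i\ket{0,0}}^2=\bar{p}_i(t)\epsilon_i^t$ by the definition of $\epsilon_i^t$, yielding the first stated inequality. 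The second inequality $2\mathbb{E}[\epsilon_i^{T_i}/\alpha_{T_i}]$ follows by enlarging the parity-restricted sum to all $t$ and recognising $\bar{p}_i(t)=\Pr[T_i=t]$. The main obstacle I expect is the bookkeeping across the coin, input, and workspace/answer/time registers in the reversal-state computation, and in particular ensuring that condition 3 genuinely eliminates the $i'\neq i$ overlap even though it does not directly constrain $\bra{i'}A_{a'}^\dagger A\ket{i}$ --- an adjoint/swap of $i,i'$ is what does the work.
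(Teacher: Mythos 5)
Your proposal is correct and follows essentially the same route as the paper's proof: the same case split into forward/backward transition states and reversal states, the same use of \eq{Pi-t-U} to kill the transition-state overlaps, the same reduction of the reversal-state overlap to $(1-\bra{i}A_{a'}^\dagger A\ket{i})\braket{a',z}{w^t(i)}$, and the same $4/(2\alpha_t)$ accounting to reach $\bar{p}_i(t)\epsilon_i^t$. If anything you are slightly more explicit than the paper about why condition 3 of reversibility eliminates the $i'\neq i$ reversal states, which the paper's proof leaves implicit.
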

\begin{proof}
We start with the \textbf{forward transition states}. We always have $\braket{w_+(i)}{\psi_{a,z,t}^{i',\rightarrow}}=0$ when $i\neq i'$. For all $z\in {\cal Z}_{>t}$, we have:
\begin{align*}
\braket{\psi_{a,z,t}^{i,\rightarrow}}{w_+(i)} = \braket{a,z}{w^t(i)}-\bra{a,z}(U_{t+1}^i)^\dagger\ket{w^{t+1}(i)}
= \braket{a,z}{w^t(i)}-\bra{a,z}(U_{t+1}^i)^\dagger U_{t+1}^i\Pi_{\geq t+1}\ket{w^{t}(i)} = 0,
\end{align*}
since $\bra{a,z}\Pi_{\geq t}=\bra{a,z}$ for all $z\in {\cal Z}_{>t}$. 
A nearly identical argument holds for the \textbf{backward transition states}. 

Next consider the \textbf{reversal states}. We have
\begin{align}
\braket{\psi_{a,z,t}^{i,\leftrightarrow}}{w_+(i)} = \left(1-\bra{i}A_a^\dagger A\ket{i} \right)\braket{a,z}{w^t(i)}.\label{eq:ortho-alg-trans}
\end{align}
When $a=g(i)$, so $A_a=A_{g(i)}=A$, this is 0, and otherwise we have $|\braket{\psi_{a,z,t}^{i,\leftrightarrow}}{w_+(i)}|^2\leq 4|\braket{a,z}{w^t(i)}|^2$. Thus:
\begin{align*}
\Pi_{\Psi_0}\ket{w_+(i)} &= \sum_{\substack{t=0\\ \mathrm{even}}}^{{\sf T}-1}\sum_{\substack{a\in{\cal A}\setminus\{g(i)\}\\ z\in{\cal Z}_t}}\frac{\ket{\psi_{a,z,t}^{i,\leftrightarrow}}\bra{\psi_{a,z,t}^{i,\leftrightarrow}} }{\norm{\ket{\psi_{a,z,t}^{i,\leftrightarrow}}}^2}\ket{w_+(i)}\\
\norm{\Pi_{\Psi_0}\ket{w_+(i)}}^2 &= \sum_{\substack{t=0\\ \mathrm{even}}}^{{\sf T}-1}\sum_{\substack{a\in{\cal A}\setminus\{g(i)\}\\ z\in{\cal Z}_t}}\frac{|\braket{\psi_{a,z,t}^{i,\leftrightarrow}}{w_+(i)}|^2 }{\norm{\ket{\psi_{a,z,t}^{i,\leftrightarrow}}}^2}
\leq \sum_{\substack{t=0\\ \mathrm{even}}}^{{\sf T}-1}\sum_{\substack{a\in{\cal A}\setminus\{g(i)\}\\ z\in{\cal Z}_t}}\frac{4|\braket{a,z}{w^t(i)}|^2}{2\alpha_t}.
\end{align*}
We have, for any $t$,
$$\sum_{\substack{a\in{\cal A}\setminus\{g(i)\}\\ z\in{\cal Z}_t}}|\braket{a,z}{w^t(i)}|^2 = \bar{p}_i(t)\epsilon_i^t$$
is the probability that the algorithm outputs at time $t$, and the answer is incorrect (not equal to $g(i)$). Thus:
\begin{align*}
\norm{\Pi_{\Psi_0}\ket{w_+(i)}}^2 &\leq 2\sum_{\substack{t=0\\ \mathrm{even}}}^{{\sf T}-1}\bar{p}_i(t)\frac{1}{\alpha_t}\epsilon_i^t
\leq 2\sum_{t=0}^{{\sf T}}\bar{p}_i(t)\frac{1}{\alpha_t}\epsilon_i^t
=2\mathbb{E}\left[\frac{\epsilon_i^{T_i}}{\alpha_{T_i}}\right],
\end{align*}
and similarly for $\Psi_1$. 
\end{proof}

\begin{claim}[Negative witness]\label{clm:neg-witness-error}
Assume ${\sf T}$ is odd.
For $i\in {\cal I}$, let $\ket{{w}_-(i)}$ be the negative history state defined in \defin{algorithm-states}. Then, letting $\Pi_{\Psi_b}$ be the orthogonal projector onto $\mathrm{span}\{\Psi_b\}$, we have:
\begin{align*}
\norm{(I-\Pi_{\Psi_0})\ket{{w}_-(i)}}^2 &\leq 2\sum_{\substack{t=0\\ \mathrm{even}}}^{{\sf T}-1}{\alpha_t}\bar{p}_i(t)\epsilon^t_i\leq 2\mathbb{E}\left[\alpha_{T_i}\epsilon_i^{T_i}\right] \\
\text{and}\; \norm{(I-\Pi_{\Psi_1})\left(\ket{{w}_-(i)} - (\ket{\rightarrow}\ket{i}-\ket{\leftarrow}A\ket{i})\ket{0,0}\ket{0}\right)}^2 &\leq 2\sum_{\substack{t=1\\ \mathrm{odd}}}^{{\sf T}}{\alpha_t}\bar{p}_i(t)\epsilon^t_i\leq 2\mathbb{E}\left[\alpha_{T_i}\epsilon_i^{T_i}\right].
\end{align*}
\end{claim}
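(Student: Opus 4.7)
The plan is to realize $\ket{w_-(i)}$ as an explicit linear combination of elements of $\Psi_b$ (up to a small error term), which then bounds $\norm{(I-\Pi_{\Psi_b})\ket{w_-(i)}}^2$. Following the ``ladder'' picture in \fig{variable-overlap-graph}, I would group consecutive-time terms of the sum $\sum_{t=0}^{\sf T}\sqrt{\alpha_t}(-1)^t\ket{w^t(i)}\ket{t}$ defining $\ket{w_-(i)}$ into pairs $(t,t+1)$ and telescope each pair using the forward and backward transition states.

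For the first bound, pair with $t\in\{0,2,\ldots,{\sf T}-1\}$ (possible since ${\sf T}$ is odd). Using \clm{alg-state-forms} to write $\ket{w^{t+1}(i)}=U_{t+1}^i\Pi_{\geq t+1}\ket{w^t(i)}$ and expanding $\Pi_{\geq t+1}\ket{w^t(i)}=\sum_{a,z\in{\cal Z}_{>t}}\braket{a,z}{w^t(i)}\ket{a,z}$, the $t$-th pair rewrites as
\[
\sum_{a,\,z\in {\cal Z}_{>t}}\braket{a,z}{w^t(i)}\bigl(\ket{\psi_{a,z,t}^{i,\rightarrow}} - \ket{\psi_{a,z,t}^{i,\leftarrow}}\bigr) \;+\; \sqrt{\alpha_t}\bigl(\ket{\rightarrow}\ket{i}-\ket{\leftarrow}A\ket{i}\bigr)\Pi_t\ket{w^t(i)}\ket{t}.
\]
For even $t$, both families of transition states lie in $\Psi_0$, so the first piece lies in $\mathrm{span}(\Psi_0)$ and contributes nothing to $(I-\Pi_{\Psi_0})\ket{w_-(i)}$.

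To handle the second piece, split $\Pi_t\ket{w^t(i)}$ into its ``correct'' component, supported on $\ket{g(i)}\otimes H_{{\cal Z}_t}$, and its ``incorrect'' component. The correct component equals $\sum_{z\in{\cal Z}_t}\braket{g(i),z}{w^t(i)}\ket{\psi_{g(i),z,t}^{i,\leftrightarrow}}$ (using $A_{g(i)}=A$ from condition~2 of a reversible variable-time subroutine), which lies in $\Psi_0$ for $t\geq 2$; the $t=0$ term vanishes since ${\cal Z}_0=\emptyset$. What remains is the ``bad'' contribution $\sqrt{\alpha_t}(\ket{\rightarrow}\ket{i}-\ket{\leftarrow}A\ket{i})\sum_{a\neq g(i),\,z\in{\cal Z}_t}\braket{a,z}{w^t(i)}\ket{a,z,t}$, whose squared norm is exactly $2\alpha_t\bar{p}_i(t)\epsilon_i^t$ by the definition of $\epsilon_i^t$. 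The bad contributions for different $t$ are mutually orthogonal (distinct $\ket{t}$ slices), so summing yields the first inequality, and the expectation bound follows by including the (nonnegative) odd terms, since $\sum_t \bar p_i(t)\alpha_t\epsilon_i^t=\mathbb{E}[\alpha_{T_i}\epsilon_i^{T_i}]$.

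For the second bound, the same strategy applies after first subtracting the unpaired $t=0$ boundary term $(\ket{\rightarrow}\ket{i}-\ket{\leftarrow}A\ket{i})\ket{0,0}\ket{0}$ (which is exactly $\sqrt{\alpha_0}\ket{w^0(i)}\ket{0}$ since $\alpha_0=1$ and $\ket{w^0(i)}=\ket{0,0}$), then pairing $(t,t+1)$ for odd $t\in\{1,3,\ldots,{\sf T}-2\}$, and separately handling the leftover $t={\sf T}$ term. Since $\Pi_{\geq{\sf T}}=\Pi_{\sf T}$, we have $\ket{w^{\sf T}(i)}=\Pi_{\sf T}\ket{w^{\sf T}(i)}$, so this boundary term has no ``continuation'' to telescope against; it splits directly into a correct part absorbed by $\ket{\psi_{g(i),z,{\sf T}}^{i,\leftrightarrow}}\in\Psi_1$ (${\sf T}$ odd) and a bad part of squared norm $2\alpha_{\sf T}\bar{p}_i({\sf T})\epsilon_i^{\sf T}$. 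No single step is really the main obstacle: the content of the argument is careful bookkeeping of which transition and reversal states lie in $\Psi_0$ versus $\Psi_1$ for each parity of $t$, together with peeling off the $t=0$ and $t={\sf T}$ boundary terms in the two cases.
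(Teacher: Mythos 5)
Your proposal is correct and follows essentially the same route as the paper's proof: both telescope the $(t,t+1)$ pairs of the negative history state into combinations $\ket{\psi_{a,z,t}^{i,\rightarrow}}-\ket{\psi_{a,z,t}^{i,\leftarrow}}\in\mathrm{span}\{\Psi_b\}$, absorb the $\Xi_{g(i)}\Pi_t\ket{w^t(i)}$ part into the reversal states, and bound the residual incorrect-output part by $2\alpha_t\bar{p}_i(t)\epsilon_i^t$ using orthogonality of the $\ket{t}$ slices. The boundary handling (subtracting the $t=0$ term and treating $t={\sf T}$ via $\Pi_{\geq{\sf T}}=\Pi_{\sf T}$) also matches the paper.
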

\begin{proof}
Recall from \defin{algorithm-states} that 
\begin{align*}
\ket{w_-(i)} &= \left(\ket{\rightarrow}\ket{i} - \ket{\leftarrow}A\ket{i}\right)\sum_{t=0}^{{\sf T}}\sqrt{\alpha_t}(-1)^t\ket{w^t(i)}\ket{t},
\end{align*}
where $\ket{w^t(i)}=\Pi_{\geq t}\ket{w^t(i)}$ by \clm{alg-state-forms}. 
We have:
\begin{align*}
&\sum_{t=0}^{{\sf T}}\sqrt{\alpha_t}(-1)^t\ket{w^t(i)}\ket{t}
={} \sum_{\substack{t=0\\ \mathrm{even}}}^{{\sf T}-1}\sqrt{\alpha_t}\ket{w^t(i)}\ket{t}
- \sum_{\substack{t=1\\ \mathrm{odd}}}^{{\sf T}}\sqrt{\alpha_t}\ket{w^t(i)}\ket{t}\\
={}& \sum_{\substack{t=0\\ \mathrm{even}}}^{{\sf T}-1}\sqrt{\alpha_t}\Pi_t\ket{w^t(i)}\ket{t}
+\sum_{\substack{t=0\\ \mathrm{even}}}^{{\sf T}-1}\sqrt{\alpha_t}\Pi_{\geq t+1}\ket{w^t(i)}\ket{t}
- \sum_{\substack{t=1\\ \mathrm{odd}}}^{{\sf T}}\sqrt{\alpha_{t}}\Pi_{\geq t}U_{t}^i\ket{w^{t-1}(i)}\ket{t}\\
={}& \sum_{\substack{t=0\\ \mathrm{even}}}^{{\sf T}-1}\sqrt{\alpha_t}\Pi_{t}\ket{w^t(i)}\ket{t}
+\sum_{\substack{t=0\\ \mathrm{even}}}^{{\sf T}-1}\left(\sqrt{\alpha_t}\Pi_{\geq t+1}\ket{w^t(i)}\ket{t}
- \sqrt{\alpha_{t+1}}U_{t+1}^i\Pi_{\geq t+1}\ket{w^{t}(i)}\ket{t+1}\right) & \mbox{by \eq{U-t-Pi-commute}}.
\end{align*}
Then since $\Pi_{\geq t+1}\ket{w^t(i)}$ is supported on states $\ket{a,z}$ such that $z\in {\cal Z}_{>t}$, we have:  
 \begin{equation*}
\left(\ket{\rightarrow}\ket{i} - \ket{\leftarrow}A\ket{i}\right)\sum_{\substack{t=0\\ \mathrm{even}}}^{{\sf T}-1}\left(\sqrt{\alpha_t}\Pi_{\geq t+1}\ket{w^t(i)}\ket{t}
- \sqrt{\alpha_{t+1}}U_{t+1}^i\Pi_{\geq t+1}\ket{w^{t}(i)}\ket{t+1}\right)
\in\mathrm{span}\{\Psi_0\}.
\end{equation*}
Let $\Xi_a$ be the orthogonal projector onto states with $a$ in the answer register. We can thus see that:
\begin{equation}
\begin{split}
(I-\Pi_{\Psi_0})\ket{w_-(i)} 
&=(I-\Pi_{\Psi_0})\left(\ket{\rightarrow}\ket{i} - \ket{\leftarrow}A\ket{i}\right)\sum_{\substack{t=0\\ \mathrm{even}}}^{{\sf T}-1}\sqrt{\alpha_t}\Pi_{t}\ket{w^t(i)}\ket{t}\\
&\qquad -(I-\Pi_{\Psi_0})\underbrace{\left(\ket{\rightarrow}\ket{i} - \ket{\leftarrow}A_{g(i)}\ket{i}\right)\sum_{\substack{t=0\\ \mathrm{even}}}^{{\sf T}-1}\sqrt{\alpha_t}\Xi_{g(i)}\Pi_{t}\ket{w^t(i)}\ket{t}}_{\in \Psi_0}\\
&= (I-\Pi_{\Psi_0})\left(\ket{\rightarrow}\ket{i} - \ket{\leftarrow}A_{g(i)}\ket{i}\right)\sum_{\substack{t=0\\ \mathrm{even}}}^{{\sf T}-1}\sqrt{\alpha_t}\left(I-\Xi_{g(i)} \right)\Pi_t\ket{w^t(i)}\ket{t},
\end{split}\label{eq:variable-neg-error}
\end{equation}
since $A\ket{i}=A_{g(i)}\ket{i}$.  Thus:
\begin{align*}
\norm{(I-\Pi_{\Psi_0})\ket{w_-(i)}}^2 &\leq \norm{\left(\ket{\rightarrow}\ket{i} - \ket{\leftarrow}A_{g(i)}\ket{i}\right)\sum_{\substack{t=0\\ \mathrm{even}}}^{{\sf T}-1}\sqrt{\alpha_t}\left(I-\Xi_{g(i)} \right)\Pi_t\ket{w^t(i)}\ket{t}}^2\\
&= 2\sum_{\substack{t=0\\ \mathrm{even}}}^{{\sf T}-1}{\alpha_t}\norm{\left(I-\Xi_{g(i)} \right)\Pi_t\ket{w^t(i)}}^2
= 2\sum_{\substack{t=0\\ \mathrm{even}}}^{{\sf T}-1}\bar{p}_i(t){\alpha_t}\epsilon^t_i
\leq 2\mathbb{E}\left[\alpha_{T_i}\epsilon_i^{T_i}\right].
\end{align*}

For the second part of the claim, we note that:
\begin{align*}
&\sum_{t=0}^{{\sf T}}\sqrt{\alpha_t}(-1)^t\ket{w^t(i)}\ket{t}
\\
={}&\sqrt{\alpha_0}\ket{w^0(i)}\ket{0} -\sqrt{\alpha_{{\sf T}}}\ket{w^{{\sf T}}(i)}\ket{{\sf T}}\\
&\quad -\sum_{\substack{t=1\\ \mathrm{odd}}}^{{\sf T}-2}\sqrt{\alpha_t}\Pi_{t}\ket{w^t(i)}\ket{t}
-\sum_{\substack{t=1\\ \mathrm{odd}}}^{{\sf T}-2}\left(\sqrt{\alpha_t}\Pi_{\geq t+1}\ket{w^t(i)}\ket{t}
- \sqrt{\alpha_{t+1}}U_{t+1}^i\Pi_{\geq t+1}\ket{w^{t}(i)}\ket{t+1}\right)\\
={}&\ket{0,0}\ket{0} 
 -\sum_{\substack{t=1\\ \mathrm{odd}}}^{{\sf T}}\sqrt{\alpha_t}\Pi_{t}\ket{w^t(i)}\ket{t}
-\sum_{\substack{t=1\\ \mathrm{odd}}}^{{\sf T}-2}\left(\sqrt{\alpha_t}\Pi_{\geq t+1}\ket{w^t(i)}\ket{t}
- \sqrt{\alpha_{t+1}}U_{t+1}^i\Pi_{\geq t+1}\ket{w^{t}(i)}\ket{t+1}\right),
\end{align*}
since $\alpha_0=1$, $\ket{w^0(i)}=\ket{0,0}$, and $\ket{w^{{\sf T}}(i)}$ is in the support of $\Pi_{\geq {\sf T}}=\Pi_{{\sf T}}$.
Similar to above, we have
\begin{align*}
& (\ket{\rightarrow}\ket{i}-\ket{\leftarrow}A\ket{i})\sum_{\substack{t=1\\ \mathrm{odd}}}^{{\sf T}-2}\left(\sqrt{\alpha_t}\Pi_{\geq t+1}\ket{w^t(i)}\ket{t}
- \sqrt{\alpha_{t+1}}U_{t+1}^i\Pi_{\geq t+1}\ket{w^{t}(i)}\ket{t+1}\right)
\in \Psi_1,
\end{align*}
and just as in \eq{variable-neg-error}, 
\begin{align*}
&(I-\Pi_{\Psi_1})(\ket{\rightarrow}\ket{i}-\ket{\leftarrow}A\ket{i})\sum_{\substack{t=1\\ \mathrm{odd}}}^{{\sf T}}\sqrt{\alpha_t}\Pi_{t}\ket{w^t(i)}\ket{t}\\
={}& (I-\Pi_{\Psi_1})(\ket{\rightarrow}\ket{i}-\ket{\leftarrow}A\ket{i})\sum_{\substack{t=1\\ \mathrm{odd}}}^{{\sf T}}\sqrt{\alpha_t}\Pi_{t}(I-\Xi_{g(i)})\ket{w^t(i)}\ket{t}.
\end{align*}
Thus, we can see that:
\begin{multline*}
(I-\Pi_{\Psi_1})\left(\ket{w_-(i)}-(\ket{\rightarrow}\ket{i}-\ket{\leftarrow}A\ket{i})\ket{0,0}\ket{0}\right)\\
=(I-\Pi_{\Psi_1})(\ket{\rightarrow}\ket{i}-\ket{\leftarrow}A\ket{i})\left(\ket{0,0}\ket{0}  - \sum_{\substack{t=1\\ \mathrm{odd}}}^{{\sf T}}\sqrt{\alpha_t}\Pi_t(I-\Xi_{g(i)})\ket{w^t(i)}\ket{t}
-\ket{0,0}\ket{0}\right)
\end{multline*}
\begin{multline*}
\mbox{so }
\norm{(I-\Pi_{\Psi_1})\left(\ket{w_-(i)}-(\ket{\rightarrow}\ket{i}-\ket{\leftarrow}A\ket{i})\ket{0,0}\ket{0}\right)}^2\\
\leq 2\sum_{\substack{t=1\\ \mathrm{odd}}}^{{\sf T}}{\alpha_t}\norm{\Pi_t(I-\Xi_{g(i)})\ket{w^t(i)}}^2
=2\sum_{\substack{t=1\\ \mathrm{odd}}}^{{\sf T}}\bar{p}_i(t){\alpha_t}\epsilon_t^i\leq 2\mathbb{E}\left[\alpha_{T_i}\epsilon_i^{T_i}\right].
\qedhere
\end{multline*}
\end{proof}

\section{Composition in Quantum Walks}\label{sec:graph-composition}

In this section, we state and prove \thm{graph-fwk}, which extends the quantum walk electric network framework by allowing a variable-time subroutine to implement the edge transition subroutine. This is similar to \cite[Theorem 3.10]{jeffery2022kDist}, with the following differences:
\begin{enumerate}
\item We require that our transition subroutine be \emph{reversible}, meaning that the mapping $\ket{u,i}\mapsto \ket{v,j}$ must be done by first computing some auxiliary information $g(u,i)$, from which the transition becomes easy (see \sec{variable-time}). This is not required in \cite{jeffery2022kDist}, but it is not a particularly stringent constraint, and we know of no quantum walk application where the transition is not done this way. 
\item \cite{jeffery2022kDist} does not allow the running time $T_{u,v}$ for some fixed $(u,v)$ to be a random variable, but instead depends on a parameter ${\sf T}_{u,v}$, which is an upper bound on the running time on input $(u,v)$, and requires that this value be efficiently computable from $(u,i)$ and $(v,j)$ (where $f_u(i)=v$ and $f_v(j)=u$). We do not require $\mathbb{E}[T_{u,v}]$ to be efficiently computable.
\item In \cite{jeffery2022kDist}, there are no parameters $\{\alpha_t\}_t$ (equivalently, $\alpha_t=1$ for all $t$). 
\item \cite{jeffery2022kDist} also extends the electric network framework by allowing the use of \emph{alternative neighbourhoods}. For simplicity, we do not incude this extension here, but there is no reason it could not be combined with our results.
\end{enumerate}

We first state \thm{graph-fwk}, which is too general to be easily understood. Immediately afterwards, we discuss some more easily understood consequences of this theorem, for quantum walks (\cor{variable-time-walk} and \cor{variable-time-MNRS}) and the special case of black-box search (\cor{variable-time} and \tabl{variable-time-search}), before proving the theorem in \sec{graph-fwk-proof}.

\begin{theorem}[Quantum Walks with Edge Composition]\label{thm:graph-fwk}
Fix the following, which may implicitly depend on some input $x$:
\begin{itemize}
\item a network $G$ with disjoint sets $V_0,V_\M\subset V(G)$ such that for any vertex, checking if $v\in V_0$ (resp. if $v\in V_\M$) can be done in unit complexity;
\item a reversible variable-time subroutine (see \sec{variable-time}) that implements the transition map (see \defin{QW-access}) with stopping times $\{T_{u,v}\}_{(u,v)\in\overrightarrow{E}(G)}$, which are random variables on $[{\sf T}]$, and errors $\{\epsilon_{u,v}^t\}_{(u,v)\in \overrightarrow{E}(G),t\in[{\sf T}]}$;
\item a subset $M\subseteq V_\M$, and distribution $\sigma$ on $V_0$.
\end{itemize}
Fix some positive real numbers $\{\alpha_t\}_{t=0}^{{\sf T}}$ such that $\alpha_0=1$; and positive real-valued ${\cal W}$ and ${\cal R}$, that may scale with $|x|$.
Suppose the following conditions hold.
\begin{description}
\item[Setup Subroutine:] The state $\ket{\sigma}=\sum_{u\in V_0}\sqrt{\sigma(u)}\ket{u}$ can be generated in cost ${\sf S}$, and furthermore, for any $u\in V_0$, $\sigma(u)$ can be computed in unit cost. 
\item[Star State Generation Subroutine:] There is a subroutine that generates $\{\ket{\psi_\star^G(u)}\}_{u\in V(G)}$ in unit cost (see \defin{QW-access}).
\item[Checking Subroutine:] There is an algorithm that checks, for any $u\in V_\M$, if $u\in M$, in unit cost.
\item[Positive Condition:] If $M\neq \emptyset$, then there exists a flow $\theta$ on $G$ (see \defin{flow}) such that:
	\begin{description}
	\item[P1] The boundary of $\theta$ is in $V_0\cup M$, and $\sum_{u\in V_0}\theta(u)=1$.\footnote{A $\sigma$-$M$-flow always satisfies conditions \textbf{P1} and \textbf{P2}. While we do not make it a strict requirement that all sources are in $V_0$ and all sinks in $M$, \textbf{P1} implies that we do not simply have flow coming in at $V_0$ and then leaving again at $V_0$. \textbf{P2} implies that the flow in $V_0$ is roughly distributed as $\sigma$. }
	\item[P2] $\frac{1}{3}\leq \sum_{u\in V_0}\frac{\theta(u)^2}{\sigma(u)}\leq 3$.
	\item[P3] $\displaystyle\sum_{(u,v)\in\overrightarrow{E}(G)}\frac{\theta(u,v)^2}{\w_{u,v}}\mathbb{E}\left[\sum_{t=0}^{T_{u,v}}\frac{1}{\alpha_t}\right]\leq {\cal R}$.
	\item[P4] $\displaystyle\sum_{(u,v)\in\overrightarrow{E}(G)}\frac{\theta(u,v)^2}{\w_{u,v}}\mathbb{E}\left[\frac{\epsilon_{u,v}^{T_{u,v}}}{\alpha_{T_{u,v}}}\right] =o(1/{\cal W})$;
	\end{description}
\item[Negative Condition:] If $M=\emptyset$, then 
	\begin{description}
	\item[N1] $\displaystyle\sum_{(u,v)\in\overrightarrow{E}(G)}\w_{u,v}\mathbb{E}\left[\sum_{t=0}^{T_{u,v}-1}\alpha_t\right]\leq {\cal W}$. 
	\item[N2] $\displaystyle \sum_{(u,v)\in\overrightarrow{E}(G)}\w_{u,v}\mathbb{E}\left[\alpha_{T_{u,v}}\epsilon_{u,v}^{T_{u,v}}\right]= o(1/{\cal R})$;
	\end{description}
\end{description}
Then there is a quantum algorithm that decides if $M=\emptyset$ or not with bounded error in complexity:
$$O\left({\sf S}+\sqrt{{\cal R}{\cal W}}\cdot\log{\sf T}\right).$$
\end{theorem}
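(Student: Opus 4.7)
The plan is to instantiate the phase estimation framework of \thm{phase-est-fwk} with an inner product space and sets $\Psi^{\cal A},\Psi^{\cal B}$ built by gluing together, for every directed edge of $G$, a ladder gadget of transition states from \defin{transition-states} (applied to the reversible transition subroutine on input $(u,i)$) with ``vertex'' star states of \defin{QW-access}. Concretely, I will work in
$$H=\mathrm{span}\{\ket{\rightarrow},\ket{\leftarrow},\ket{\leftrightarrow}\}\otimes H_{\cal I}\otimes H_{\cal A}\otimes H_{\cal Z}\otimes\mathrm{span}\{\ket{0},\dots,\ket{{\sf T}}\},$$
where $H_{\cal I}$ is spanned by $\ket{u,i}$ with $i\in L(u)$, and $A\ket{u,i}=\ket{v,j}$ for $(u,v)\in\overrightarrow{E}(G)$. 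I will pad ${\sf T}$ to be odd so that \clm{ortho-alg-trans} and \clm{neg-witness-error} apply, take $\Psi^{\cal A}=\Psi_0$, and
$$\Psi^{\cal B}=\Psi_1\cup\bigl\{\ket{\psi_\star^G(u)}\ket{d}\ket{0,0}\ket{0}:u\in V(G)\setminus M,\ d\in\{\rightarrow,\leftarrow\}\bigr\}.$$
Pairwise orthogonality within $\Psi^{\cal A}$ is \clm{variable-orthog}; within $\Psi^{\cal B}$ the stars are orthogonal to one another and to $\Psi_1$ because they sit at $t=0$ while all $\Psi_1$ states have $t\ge 1$. The initial state $\ket{\psi_0}$ will be a normalized $\sqrt{\sigma}$-weighted combination supported on $\ket{\rightarrow}\ket{u,\cdot}\ket{0,0}\ket{0}$ with $u\in V_0$, chosen orthogonal to $\Psi^{\cal B}$; it can be prepared in cost ${\sf S}+O(1)$ using the setup subroutine and the star generator. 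The reflection $U_{\cal AB}$ costs $O(\log{\sf T})$ by \lem{variable-time-unitary}, the star generator ($O(1)$), and the checking subroutine ($O(1)$, used to skip stars at vertices in $M$).

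For the \textbf{positive witness}, I will set, given the flow $\theta$ from \textbf{P1--P4},
$$\ket{w}=\sum_{(u,v)\in\overrightarrow{E}(G)}\frac{\theta(u,v)}{\sqrt{\w_{u,v}}}\ket{w_+(u,i)},\qquad i=f_u^{-1}(v),$$
with $\ket{w_+(u,i)}$ the positive history state of \defin{algorithm-states}. By \clm{ortho-alg-trans} each $\ket{w_+(u,i)}$ is orthogonal to every forward/backward transition state and every ``wrong answer'' reversal state, so the only overlaps of $\ket{w}$ with $\Psi^{\cal A}\cup\Psi^{\cal B}$ come from (a)~star states at vertices where the flow is not conserved and (b)~``correct answer'' reversal states. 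By \textbf{P1} and flow conservation the star overlaps vanish outside $V_0\cup M$, and marked vertices contribute nothing because their stars are absent; at $V_0$ the overlap with $\ket{\psi_0}$ is controlled by \textbf{P2}, yielding $|\braket{\psi_0}{w}|^2/\norm{\ket{w}}^2=\Omega(1)$. The squared norm $\norm{\ket{w}}^2=O({\cal R})$ follows from \cor{pos-witness-complexity} and \textbf{P3}, and the almost-orthogonality error $\norm{\Pi_{\Psi_b}\ket{w}}^2$ is bounded by \clm{ortho-alg-trans} plus \textbf{P4}, giving the positive-witness error $\delta=o(1/{\cal W})$ required by \thm{phase-est-fwk}.

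For the \textbf{negative witness}, when $M=\emptyset$ I will build $\ket{w_{\cal A}},\ket{w_{\cal B}}$ from the negative history states $\ket{w_-(u,i)}$. \clm{neg-witness-error} gives that $\ket{w_-(u,i)}$ lies close to $\mathrm{span}\{\Psi_0\}=\mathrm{span}\{\Psi^{\cal A}\}$ while its correction $\ket{w_-(u,i)}-(\ket{\rightarrow}\ket{u,i}-\ket{\leftarrow}A\ket{u,i})\ket{0,0}\ket{0}$ lies close to $\mathrm{span}\{\Psi_1\}\subset\mathrm{span}\{\Psi^{\cal B}\}$. Summing with weights $\sqrt{\w_{u,v}/(2{\cal W})}$ and pairing each $(u,i)$ with $(v,j)$ so that the ``boundary tails'' at $t=0$ reproduce exactly the linear combination of stars and/or of $\ket{\rightarrow},\ket{\leftarrow}$ components that equals $\ket{\psi_0}$ (the residue being absorbed into the stars of $\Psi^{\cal B}$), I obtain a decomposition $\ket{\psi_0}=\ket{w_{\cal A}}+\ket{w_{\cal B}}$ where \cor{pos-witness-complexity} combined with \textbf{N1} gives $\norm{\ket{w_{\cal A}}}^2=O({\cal W})$, and \clm{neg-witness-error} combined with \textbf{N2} gives total negative-witness error $o(1/{\cal R})$.

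Putting the two conditions into \thm{phase-est-fwk} with ${\cal C}_-=O({\cal W})$, ${\sf S}$ as the setup cost, and ${\sf A}=O(\log{\sf T})$ yields the claimed complexity $O({\sf S}+\sqrt{{\cal R}{\cal W}}\log{\sf T})$. The main obstacle I anticipate is the bookkeeping around \emph{gluing}: making sure that the pairing between the $(u,i)$-subroutine (which traverses the ladder from $u$ to $v$) and the $(v,j)$-subroutine (which traverses from $v$ to $u$) yields consistent signs and amplitudes so that the flow contributions in the positive witness genuinely combine to satisfy flow conservation at non-boundary vertices, and so that in the negative witness the ``boundary tails'' at $V_0$ combine to the intended $\ket{\psi_0}$ while elsewhere they cancel against the star components of $\Psi^{\cal B}$. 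Once this gluing is nailed down, the error bounds from \textbf{P4} and \textbf{N2} drop in verbatim through \clm{ortho-alg-trans} and \clm{neg-witness-error}, and the proof is complete.
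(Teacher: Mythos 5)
Your overall architecture matches the paper's: a phase estimation algorithm in the sense of \thm{phase-est-fwk}, with $\Psi^{\cal A}=\Psi_0$ and $\Psi^{\cal B}$ consisting of $\Psi_1$ together with star states, a positive witness $\sum_{(u,v)}\frac{\theta(u,v)}{\sqrt{\w_{u,v}}}\ket{w_+(u,i)}$ analyzed via \clm{ortho-alg-trans}, \cor{pos-witness-complexity}, \textbf{P3}, \textbf{P4}, and a negative witness built from the $\ket{w_-(u,i)}$ analyzed via \clm{neg-witness-error}, \textbf{N1}, \textbf{N2}. However, there is a genuine gap in how you handle the flow \emph{sources}. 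With unmodified star states at $u\in V_0$ sitting in $\Psi^{\cal B}$, the positive witness has overlap $\braket{\psi_\star^G(u)}{w}=\sum_{v\in\Gamma(u)}\theta(u,v)=\theta(u)$, which by \textbf{P1}--\textbf{P2} is of order $\sigma(u)$ at each source and is not remotely small enough for the $\delta$-positive-witness condition (which requires $\delta=O(1/{\cal C}_-)$). Deleting the stars at $M$ handles the sinks, but you propose no mechanism for the sources; saying the overlap at $V_0$ is ``controlled by \textbf{P2}'' conflates the (desired) overlap with $\ket{\psi_0}$ with the (forbidden) overlap with $\Psi^{\cal B}$. The paper resolves this by adjoining an auxiliary vertex $v_0$ joined to each $u\in V_0$ by a dangling edge of weight $\w_0\sigma(u)$ (and to each $u\in M$ by an edge of weight $\w_\M$), including these new labels in the star states, supporting $\ket{\psi_0}$ entirely on the new edge slots, and extending $\theta$ to a \emph{circulation} on the augmented graph so that flow is conserved at every vertex of $G$ and the witness is exactly orthogonal to every star.

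This missing edge weight $\w_0$ is not cosmetic: it is the knob that couples ${\cal R}$ and ${\cal W}$. The witness amplitude on a source edge is $\theta(u)/\sqrt{\w_0\sigma(u)}$, so $|\braket{\psi_0}{w}|^2=1/\w_0$ and $\norm{\ket{w}}^2\approx 2{\cal R}+3/\w_0$; choosing $\w_0={\cal R}^{-1}$ makes the ratio a constant $c_+$. On the negative side, the telescoping of the $\ket{w_-(u,i)}$ plus stars must reproduce the unit vector $\ket{\psi_0}$ at the source boundary, which forces a global normalization $1/\sqrt{\w_0}$ on $\ket{w_{\cal A}}$ and hence $\norm{\ket{w_{\cal A}}}^2\leq 2{\cal W}/\w_0=2{\cal RW}={\cal C}_-$. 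Your stated ${\cal C}_-=O({\cal W})$ is inconsistent with the final bound $O({\sf S}+\sqrt{{\cal RW}}\log{\sf T})$ you claim (it would yield $O({\sf S}+\sqrt{\cal W}\log{\sf T})$), and this inconsistency is a symptom of the absent source-edge construction. A smaller issue: listing two separate star states $\ket{\psi_\star^G(u)}\ket{d}$ for $d\in\{\rightarrow,\leftarrow\}$ per vertex does not give the right overlap graph; the paper uses a single state per vertex with $\ket{\rightarrow}$ on outgoing labels and $-\ket{\leftarrow}$ on incoming labels, which is what makes the star at $u$ overlap the $t=0$ rung of the ladder for every edge incident to $u$, in either orientation.
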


We prove \thm{graph-fwk} in \sec{graph-fwk-proof} by defining a phase estimation algorithm and analysing it using \thm{phase-est-fwk}. First, we give some less general, easier to digest, corollaries of \thm{graph-fwk}. One thing that probably makes \thm{graph-fwk} particularly hard to parse is the parameters $\{\alpha_t\}_{t=0}^{\sf T}$, so we mention three cases of special interest. First, if we set $\alpha_t=1$ for all $t$, then we recover a version of \cite[Theorem~3.10]{jeffery2022kDist}: the resulting complexities are as if we had replaced the graph $G$ with a graph $G^{\sf T}$, in which each edge $(u,v)$ is replaced by a path of length $\mathbb{E}[T_{u,v}]+1$ (or an upper bound on $T_{u,v}+1$, in the case of \cite{jeffery2022kDist}), but now each transition has unit cost. 

We mention two other interesting choices of $\{\alpha_t\}_t$ that lead to two distinct ``variable-time'' quantum walk results. First, if we set $\alpha_t=t+1$, we satisfy the requirement $\alpha_0=1$, and we get:
$$\mathbb{E}\left[\sum_{t=0}^{T_{u,v}}\frac{1}{\alpha_t}\right] = \mathbb{E}\left[\sum_{t=1}^{T_{u,v}+1}\frac{1}{t}\right] = \Theta(\mathbb{E}[\log T_{u,v}])
\mbox{ and }
\mathbb{E}\left[\sum_{t=0}^{T_{u,v}}{\alpha_t}\right] = \mathbb{E}\left[\sum_{t=1}^{T_{u,v}+1}t\right] = \Theta(\mathbb{E}[T_{u,v}^2]).$$
Thus \textbf{P3} has only a logarithmic dependence on the transition times, whereas we get a weighted $\ell_2$-average in \textbf{N1}. 
On the other hand, by setting $\alpha_t=\frac{1}{1+t}$, $\mathbb{E}[\log T_{u,v}]$ and $\mathbb{E}[T_{u,v}^2]$ are swapped: now \textbf{P3} has a (different!) weighted $\ell_2$-average, whereas \textbf{N1} just has a log factor. 

If we additionally know the values $\mathbb{E}[T_{u,v}]$, then we can use $\alpha_t=1$ for all $t$, and scale $\w_{u,v}$ by a factor of $\mathbb{E}[T_{u,v}]$ to get:
$$\mathbf{P3:}\;\sum_{(u,v)\in\overrightarrow{E}(G)}\frac{\theta(u,v)^2}{\w_{u,v}} \leq {\cal R}
\quad
\mathbf{N1:}\;\sum_{(u,v)\in\overrightarrow{E}(G)}\w_{u,v}\mathbb{E}[T_{u,i}]^2 \leq {\cal W},$$
or alternatively, scaling $\w_{u,v}$ by $1/\mathbb{E}[T_{u,v}]$ gives:
$$\mathbf{P3:}\;\sum_{(u,v)\in\overrightarrow{E}(G)}\frac{\theta(u,v)^2}{\w_{u,v}}\mathbb{E}[T_{u,i}]^2 \leq {\cal R}
\quad
\mathbf{N1:}\;\sum_{(u,v)\in\overrightarrow{E}(G)}\w_{u,v} \leq {\cal W}.$$
Thus, when the expected stopping times are known, we can save a log factor, and also replace $\mathbb{E}[T_{u,v}^2]$ with $\mathbb{E}[T_{u,v}]^2$, which is better the higher the variance. 
We formally state these results for $\alpha_t=t+1$ in the following corollary (the $\alpha_t=1/(t+1)$ case can be worked out from \thm{graph-fwk}, but is less elegantly stated). We simplify things by only considering flows that are exactly from $\sigma$ to $M$, and assuming our subroutine has no error, but by \thm{graph-fwk}, it is clear that some deviation would also be tolerable.\footnote{In all the applications discussed in this section, we assume the subroutine has no error, for simplicity. We believe it is also interesting to further investigate the different ways the error can propagate depending on the choice of $\{\alpha_t\}_t$, but we leave this for future work. It is also possible, as in the case of the results of \sec{alg-composition}, that our error terms can be improved.}
\begin{corollary}[Variable-time quantum walks ($\alpha_t=t+1$)]\label{cor:variable-time-walk}
Fix a network $G$; marked set $M\subset V(G)$, and initial distribution $\sigma$ on $V(G)$, all of which may implicitly depend on an input $x$, although for simplicity, we assume ${\cal W}(G)$ only depends on $|x|$. Let $P$ be the transition matrix for the random walk on $G$, and $\pi$ its stationary distribution. Let ${\sf S}$ be the cost of generating $\ket{\sigma}$, and suppose $\{\ket{\psi_\star^G(u)}\}_{u\in V(G)}$ can be generated in unit cost.  Let ${\cal C}_{\sigma,\M}$ be an upper bound on ${\cal R}_{\sigma,M}(G){\cal W}(G)$ (see \defin{flow}) whenever $M\neq \emptyset$.

Suppose there is a variable-time subroutine\footnote{A variable-time subroutine that computes a bit is always reversible.} that checks, for any $u\in V(G)$, if $u\in M$, with stopping times $\{C_u\}_{u\in V(G)}$, which are random variables on $[{\sf C}]$, and all errors zero. Fix some (known) \emph{terminal} distribution $\tau$ on $V(G)$ such that $\tau(M)\geq \eps$ whenever $M\neq\emptyset$, and let ${\sf C}_{\mathrm{avg}}$ and ${\sf C}_{\mathrm{avg}}'$ be upper bounds such that whenever $M=\emptyset$,
$$\sqrt{\sum_{u\in V(G)}\tau(u)\mathbb{E}[C_u^2]} \leq {\sf C}_{\mathrm{avg}},
\quad\mbox{and}\quad
\sqrt{\sum_{u\in V(G)}\tau(u)\mathbb{E}[C_u]^2} \leq {\sf C}_{\mathrm{avg}}'.$$

Suppose there is a reversible variable-time subroutine that implements the transition map with stopping times $\{T_{u,v}\}_{(u,v)\in\overrightarrow{E}(G)}$, which are random variables on $[{\sf T}]$, and all errors zero. Let ${\sf T}_{\mathrm{avg}}$ and ${\sf T}_{\mathrm{avg}}'$ be upper bounds such that whenever $M=\emptyset$, 
$$\sqrt{\sum_{(u,v)\in\overrightarrow{E}(G)}\pi(u) P_{u,v}\mathbb{E}[T_{u,v}^2]} \leq {\sf T}_{\mathrm{avg}},
\quad\mbox{and}\quad
\sqrt{\sum_{(u,v)\in\overrightarrow{E}(G)}\pi(u) P_{u,v}\mathbb{E}[T_{u,v}]^2} \leq {\sf T}_{\mathrm{avg}}'.$$

\noindent\emph{\textbf{Claim 0:}} Suppose the checking cost is trivial, meaning for all $u\in V(G)$, $\mathbb{E}[C_u]=O(1)$. Then there is a quantum algorithm that detects if $M=\emptyset$ with bounded error in complexity: 
$$O\left({\sf S}+\sqrt{{\cal C}_{\sigma,\M}}{\sf T}_{\mathrm{avg}}\log^{1.5}{\sf T} \right).$$
Suppose in addition that the values $\mathbb{E}[T_{u,v}]=\mathbb{E}[T_{u,i}]$ are computable in the strong sense that for any $u$, we can generate a superposition proportional to $\sum_{i\in L(u)}\sqrt{\w_{u,i}\mathbb{E}[T_{u,i}]}\ket{i}$, and we have query access to $\w_u'=\sum_{i\in L(u)}\w_{u,i}\mathbb{E}[T_{u,i}]$. Then there is a quantum algorithm that detects if $M=\emptyset$ with bounded error in complexity:
$$O\left({\sf S}+\sqrt{{\cal C}_{\sigma,\M}}{\sf T}_{\mathrm{avg}}'\log{\sf T} \right).$$

\noindent\emph{\textbf{Claim I:}} Let ${\cal D}_{\M}$ be an upper bound such that $\sum_{u\in M}{\theta(u)^2}/{\tau_{M}(u)}\leq {\cal D}_{\M}$, where $\theta$ is the unit flow from $\sigma$ to $M$ with minimal energy and $\tau_{M}(u)=\tau(u)/\tau(M)$ is the normalized restriction of $\tau$ to $M$, whenever $M\neq \emptyset$. 
Then there is a quantum algorithm that detects if $M=\emptyset$ with bounded error in complexity:
$$O\left( {\sf S}+ \left(\sqrt{{\cal C}_{\sigma,\M}}{\sf T}_{\mathrm{avg}}+\sqrt{\frac{{\cal D}_{\M}}{\eps}}{\sf C}_{\mathrm{avg}}\right)\log^{1.5}({\sf TC})\right).$$ 
Suppose in addition that the values $\mathbb{E}[T_{u,v}]$ are computable in the same strong sense as Claim 0, and $\mathbb{E}[C_u]$ are computable in the the sense that for any $u\in V(G)$, we can query $\mathbb{E}[C_u]$ in unit cost. Then there is a quantum algorithm that detects in $M=\emptyset$ with bounded error in complexity:
$$O\left( {\sf S}+ \left(\sqrt{{\cal C}_{\sigma,\M}}{\sf T}_{\mathrm{avg}}'+\sqrt{\frac{{\cal D}_{\M}}{\eps}}{\sf C}_{\mathrm{avg}}'\right)\log({\sf TC})\right).$$

\noindent\emph{\textbf{Claim II:}} Suppose we are promised that either $M=\emptyset$, or $|M|=1$. Then there is a quantum algorithm that detects if $M=\emptyset$ with bounded error in complexity:
$$O\left( {\sf S}+ \left(\sqrt{{\cal C}_{\sigma,\M}}{\sf T}_{\mathrm{avg}}+\frac{1}{\sqrt{\eps}}{\sf C}_{\mathrm{avg}}\right)\log^{1.5}({\sf TC})\right).$$ 
Suppose in addition that the values $\mathbb{E}[T_{u,v}]$ and $\mathbb{E}[C_u]$ are computable as in Claim I. Then there is a quantum algorithm that detects in $M=\emptyset$ with bounded error in complexity:
$$O\left( {\sf S}+ \left(\sqrt{{\cal C}_{\sigma,\M}}{\sf T}_{\mathrm{avg}}'+\frac{1}{\sqrt{\eps}}{\sf C}_{\mathrm{avg}}'\right)\log({\sf TC})\right).$$ 

\noindent\emph{\textbf{Claim III:}} Let ${\cal C}_{\sigma,\tau}$ be an upper bound on ${\cal W}(G){\cal R}_{\sigma,\tau_{M}}(G)$ (see \defin{flow}). 
Then there is a quantum algorithm that detects if $M=\emptyset$ with bounded error in complexity:
$$O\left( {\sf S}+ \left(\sqrt{{\cal C}_{\sigma,\tau}}{\sf T}_{\mathrm{avg}}+\frac{1}{\sqrt{\eps}}{\sf C}_{\mathrm{avg}}\right)\log^{1.5}{\sf TC}\right).$$ 
Suppose in addition that the values $\mathbb{E}[T_{u,v}]$ and $\mathbb{E}[C_u]$ are computable as in Claim I. Then there is a quantum algorithm that detects in $M=\emptyset$ with bounded error in complexity:
$$O\left( {\sf S}+ \left(\sqrt{{\cal C}_{\sigma,\tau}}{\sf T}_{\mathrm{avg}}'+\frac{1}{\sqrt{\eps}}{\sf C}_{\mathrm{avg}}'\right)\log{\sf TC}\right).$$ 
\end{corollary}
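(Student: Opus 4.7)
The plan is to derive each claim as a corollary of \thm{graph-fwk} by making three choices: the weight sequence $\{\alpha_t\}_t$, the edge weights $\w$ of the network fed into the theorem, and (for Claims~I--III) an augmented network $G'$ that folds the checking subroutine into the transition. In every case, the flow $\theta$ witnessing the positive condition is a lift of the minimum-energy unit flow from $\sigma$ to the appropriate sink set in $G$; conditions \textbf{P1} and \textbf{P2} then follow by construction, since the sources of this flow are distributed as $\sigma$ and so $\sum_{u\in V_0}\theta(u)^2/\sigma(u) = \sum_u\sigma(u) = 1$.

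For the unknown-$\mathbb{E}[T_{u,v}]$ variant I take $\alpha_t = t+1$, so that $\mathbb{E}[\sum_{t=0}^{T_{u,v}}1/\alpha_t] = O(\log {\sf T})$ (harmonic sum plus Jensen) and $\mathbb{E}[\sum_{t=0}^{T_{u,v}-1}\alpha_t] = \Theta(\mathbb{E}[T_{u,v}^2])$, and keep the original weights. With the optimal unit $\sigma$-$M$ flow in $G$, \textbf{P3} is bounded by $O(\log {\sf T})\,\mathcal{R}_{\sigma,M}(G)$ and \textbf{N1} by $O(\sum_{(u,v)}\w_{u,v}\mathbb{E}[T_{u,v}^2]) = O(\mathcal{W}(G)\,{\sf T}_{\mathrm{avg}}^2)$, so the theorem's bound $\sqrt{\mathcal{R}\mathcal{W}}\log {\sf T}$ evaluates to $O(\sqrt{\mathcal{C}_{\sigma,\M}}\,{\sf T}_{\mathrm{avg}}\log^{1.5}{\sf T})$, yielding Claim~0. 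For the known-$\mathbb{E}[T_{u,v}]$ variant I truncate the transition subroutine at $10\,\mathbb{E}[T_{u,v}]$ steps (by Markov this costs only constant early-halting probability, drivable below the $o(1/\mathcal{W})$ threshold of \textbf{P4} by $O(\log)$ repetitions), rescale the weights to $\w'_{u,v} = \w_{u,v}\mathbb{E}[T_{u,v}]$---which the strong-computability hypothesis is precisely designed to support in the star-state generator---and take $\alpha_t = 1$. Then \textbf{P3} collapses to $\sum\theta(u,v)^2/\w_{u,v} = \mathcal{R}_{\sigma,M}(G)$ and \textbf{N1} to $\sum\w_{u,v}\mathbb{E}[T_{u,v}]^2 = 2\mathcal{W}(G)\,{\sf T}_{\mathrm{avg}}'^2$, giving the improved $\sqrt{\mathcal{C}_{\sigma,\M}}\,{\sf T}_{\mathrm{avg}}'\log {\sf T}$.

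To incorporate a non-trivial checking cost for Claims~I--III, I augment $G$ to $G'$ by attaching to every $u\in V(G)$ a fresh leaf $\check{u}$ via an edge $\{u,\check{u}\}$, using the checking subroutine (reversible, as it computes one bit) as the transition on this edge; the marked set of $G'$ becomes $\{\check{u} : u\in M\}$. The new weights $\w_{u,\check{u}}$ are free, so I set $\w_{u,\check{u}} = \lambda\tau(u)$ for a global parameter $\lambda$ to be optimized. In the negative case the check edges contribute $\Theta(\lambda\,{\sf C}_{\mathrm{avg}}^2)$ to \textbf{N1}; in the positive case, extending the underlying flow by routing $\theta(u)$ from $u$ to $\check{u}$ for each $u\in M$, they contribute $O(\log{\sf C}/\lambda)\sum_{u\in M}\theta(u)^2/\tau(u) \leq O(\log{\sf C}\,\mathcal{D}_{\M}/(\lambda\eps))$ to \textbf{P3}, using $\tau(u)\geq \eps\tau_M(u)$ for $u\in M$. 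Optimizing via the elementary $(a+A/\lambda)(b+B\lambda)\geq(\sqrt{ab}+\sqrt{AB})^2$ (equality at $\lambda=\sqrt{Ab/(aB)}$), the total $\sqrt{\mathcal{R}\mathcal{W}}$ decomposes additively into walk and checking contributions, yielding Claim~I. Claim~II is the degenerate $|M|=1$ case in which $\theta(m)=1$ forces $\mathcal{D}_{\M}=1$; Claim~III is obtained by instead letting $\theta$ be the minimum-energy $\sigma$-$\tau_M$ flow (its sink distribution is exactly $\tau_M$), so the check contribution becomes $O(\log{\sf C})\sum_{u\in M}\tau_M(u)^2/(\lambda\tau(u)) = O(\log{\sf C}/(\lambda\tau(M)))\leq O(\log{\sf C}/(\lambda\eps))$, again yielding the $1/\sqrt{\eps}$ prefactor after optimization.

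The step I expect to require the most care is verifying the error conditions \textbf{P4} and \textbf{N2} throughout: after truncation each subroutine has only constant correctness, so I must amplify by $O(\log(1/(\mathcal{R}\mathcal{W})))$ repetitions to drive the error-weighted sums below the required $o(1/\mathcal{W})$ and $o(1/\mathcal{R})$ thresholds, and then re-verify these conditions for both the transition and checking edges of $G'$ in both weight-scaling regimes. Beyond that, the remaining work is bookkeeping: confirming that the lifted flow on $G'$ still satisfies \textbf{P1}--\textbf{P2}, and that all log factors collapse cleanly into the claimed $\log({\sf T}{\sf C})$ and $\log^{1.5}({\sf T}{\sf C})$ in the final expressions.
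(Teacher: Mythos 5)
Your overall route is the paper's route: augment $G$ with a check-leaf at every vertex whose edge weight is proportional to $\tau(u)$ (the paper fixes the scale to $\w_{u,\M}={\cal W}(G){\cal D}_\M\tau(u)/(\eps{\cal C}_{\sigma,\M})$ in Claim I and ${\cal W}(G)\tau(u)/(\eps{\cal C}_{\sigma,\tau})$ in Claim III, which is exactly the optimizer of your $(a+A/\lambda)(b+B\lambda)$ trade-off), lift the minimum-energy flow and route $\theta(u)$ (resp.\ $\tau_M(u)$) across the leaf edge, take $\alpha_t=t+1$ for the unknown case and $\alpha_t=1$ with $\w'_{u,v}=\w_{u,v}\mathbb{E}[T_{u,v}]$ for the known case, and get Claim II from Claim I via ${\cal D}_\M=1$. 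All of that is sound and matches the paper, modulo the bookkeeping you defer (generating the modified star states, and keeping $V_0$ and $V_\M$ disjoint, which is why the paper augments the graph even in Claim 0).

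The genuine flaw is the truncation at $10\,\mathbb{E}[T_{u,v}]$ in the known-expectation case. It is not needed: with $\alpha_t=1$ the positive condition \textbf{P3} involves $\mathbb{E}\bigl[\sum_{t=0}^{T_{u,v}}1\bigr]=\mathbb{E}[T_{u,v}]+1$, which cancels directly against the rescaled weight $\w_{u,v}\mathbb{E}[T_{u,v}]$ in the denominator, and \textbf{N1} gives $\w_{u,v}\mathbb{E}[T_{u,v}]\cdot\mathbb{E}[T_{u,v}]=\w_{u,v}\mathbb{E}[T_{u,v}]^2$; the subroutine is run as-is with its variable stopping time. The corollary's hypotheses assume all subroutine errors are zero, so \textbf{P4} and \textbf{N2} are vacuous and there is nothing to amplify. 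Your truncation manufactures a constant error that the zero-error hypothesis had ruled out, and the $O(\log)$ repetitions you then invoke to restore \textbf{P4}/\textbf{N2} multiply every stopping time, hence ${\sf T}'_{\mathrm{avg}}$ and ${\cal W}$, by that log factor — which would break the claimed single-$\log{\sf T}$ (resp.\ $\log({\sf TC})$) bounds, since the required amplification scales with $\log({\cal R}{\cal W})$ rather than with anything bounded by the existing $\log{\sf T}$. Delete the truncation and the amplification, and the argument closes.
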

\begin{proof}
Let $g_u=1$ if $u\in M$, and $g_u=0$ else. We will make a new graph $G'$, which is obtained from $G$ by adding, for each vertex $u\in V(G)$, a new vertex $(u,g_u)$ connected only to $u$ by an edge of weight $\w_{u,\M}$, to be defined.  That is, 
\begin{align*}
V(G') &:=\underbrace{V(G)}_{=:V_0}\cup \underbrace{\{(u,g_u):u\in V(G)\}}_{=:V_\M}
\mbox{and }\overrightarrow{E}(G') :=\overrightarrow{E}(G)\cup \{(u,(u,g_u)):u\in V(G)\},
\end{align*}
with $\w_{u,(u,g_u)}=\w_{u,\M}$. Then let $M'=M\times\{1\}\subseteq V_\M$ be the marked set of $G'$. The new edges $(u,(u,g_u))$ will have label $0$, in both directions, so: $f_u(0)=(u,g_u)$ and $f_{(u,g_u)}(0)=u$.  
Since we can generate $\{\ket{\psi_\star^G(u)}\}_{u\in V(G)}$ in unit cost, we can also generate 
$$\{\ket{\psi_\star^{G'}(u)}=\ket{\psi_\star^G(u)}+\sqrt{\w_{u,\M}}\ket{0}\}_{u\in V(G)}$$ 
in unit cost, assuming $\w_{u,\M}$ is chosen so that we can prepare a state proportional to $\sqrt{\w_u}\ket{1}+\sqrt{\w_{u,\M}}\ket{0}$. 
For any $u\in V_\M$, we can check if $u\in M'$ in unit cost by examining $g_u$, and generate $\ket{\sigma}$ in cost ${\sf S}$, by assumption. Thus, to apply \thm{graph-fwk}, we just need to check the positive and negative conditions. \textbf{P4} and \textbf{N2} are satisfied because we assume no error, and for the positive condition, we will always choose $\theta$ to be a $\sigma$-$M'$ flow, so \textbf{P1} and \textbf{P2} are automatically satisifed. We thus only need to check \textbf{P3} and \textbf{N1}.
We will use $\alpha_t=t+1$ for the first half of the proof of each claim, and $\alpha_t=1$ for the second half. We note that we immediately have a variable-time subroutine for the transition map with $T_{u,(u,g_u)}=C_u$. 

\vskip7pt

\noindent\textbf{Proof of Claim 0:} To prove Claim 0, we let
$\w_{u,\M}:=\w_u.$
Note that this makes preparing a state proportional to $\sqrt{\w_u}\ket{0}+\sqrt{\w_{u,\M}}\ket{1}$ trivial.  

\noindent\textbf{P3:} 
Let $\theta$ be the flow from $\sigma$ to $M$ with minimal energy ${\cal R}_{\sigma,M}(G)$ (see \defin{flow}), and extend it to a flow on $G'$ from $\sigma$ to $M'$ by setting $\theta(u,(u,g_u)) = \theta(u)$ for all $u\in M$. Then:
\begin{equation}\label{eq:claim-0-pos}
\begin{split}
\sum_{(u,v)\in\overrightarrow{E}(G')}\frac{\theta(u,v)^2}{\w_{u,v}}\mathbb{E}\left[\sum_{t=0}^{T_{u,v}}\frac{1}{t+1}\right]
&= \sum_{(u,v)\in\overrightarrow{E}(G)}\frac{\theta(u,v)^2}{\w_{u,v}}\mathbb{E}\left[\sum_{t=0}^{T_{u,v}}\frac{1}{t+1}\right] + \sum_{u\in M}\frac{\theta(u)^2}{\w_{u,\M}}\mathbb{E}\left[\sum_{t=0}^{C_{u}}\frac{1}{t+1}\right]\\
&\leq {\cal R}_{\sigma,M}(G)(\ln ({\sf T}+1)+1)+\frac{3}{2}\sum_{u\in M}\frac{\theta(u)^2}{\w_u}
\end{split}
\end{equation}
since $C_u=1$ for all $u$, by assumption.
By Jensen's inequality, we have:
\begin{align*}
\left(\frac{\sum_{v\in\Gamma(u)}\w_{u,v}\frac{\theta(u,v)}{\w_{u,v}}}{\sum_{v\in\Gamma(u)}\w_{u,v}} \right)^2 &\leq \frac{\sum_{v\in\Gamma(u)}\w_{u,v}\left(\frac{\theta(u,v)}{\w_{u,v}}\right)^2}{\sum_{v\in\Gamma(u)}\w_{u,v}}\\
\frac{\left(\sum_{v\in\Gamma(u)}\theta(u,v)\right)^2}{\sum_{v\in\Gamma(u)}\w_{u,v}} &\leq \sum_{v\in\Gamma(u)}\frac{\theta(u,v)^2}{\w_{u,v}}\\
\sum_{u\in M}\frac{\theta(u)^2}{\w_u} &\leq \sum_{u\in M}\sum_{v\in\Gamma(u)}\frac{\theta(u,v)^2}{\w_{u,v}} \leq {\cal R}_{\sigma,M}(G),
\end{align*}
so we can continue from \eq{claim-0-pos}, using ${\cal W}(G){\cal R}_{\sigma,M}(G)\leq {\cal C}_{\sigma,\M}$:
\begin{equation}\label{eq:claim-0-pos-2}
\begin{split}
\sum_{(u,v)\in\overrightarrow{E}(G')}\frac{\theta(u,v)^2}{\w_{u,v}}\mathbb{E}\left[\sum_{t=0}^{T_{u,v}}\frac{1}{t+1}\right]
&\leq 4\frac{{\cal C}_{\sigma,\M}}{{\cal W}(G)}\ln {\sf T}=:{\cal R}.
\end{split}
\end{equation}

\noindent\textbf{N1:} We use $\pi(u) P_{u,v} = \frac{\w_{u,v}}{2{\cal W}(G)}$ from \eq{detailed-balance}. 
If $M= \emptyset$, we have:
\begin{equation}\label{eq:claim-0-neg}
\begin{split}
\sum_{(u,v)\in \overrightarrow{E}(G')}\w_{u,v}\mathbb{E}\left[\sum_{t=0}^{T_{u,v}}({t+1})\right]
&= \sum_{(u,v)\in\overrightarrow{E}(G)}\w_{u,v}\mathbb{E}\left[\sum_{t=0}^{T_{u,v}}({t+1})\right] + \sum_{u\in V(G)}\w_{u,\M}\mathbb{E}\left[\sum_{t=0}^{C_{u}}({t+1})\right]\\
&\leq 2{\cal W}(G)\sum_{(u,v)\in\overrightarrow{E}(G)}\pi(u)P_{u,v}\mathbb{E}[T_{u,v}^2] + 2\sum_{u\in V(G)}\w_u\\
&\leq 2{\cal W}(G){\sf T}_{\mathrm{avg}}^2+2{\cal W}(G)=:{\cal W}.
\end{split}
\end{equation}

\noindent To complete the first half of the proof of Claim 0, we apply \thm{graph-fwk} using the fact that:
\begin{align*}
\sqrt{\cal RW} &= O\left(\sqrt{{\cal C}_{\sigma,\M}{\sf T}_{\mathrm{avg}}^2\log {\sf T}}\right).
\end{align*}

\noindent\textbf{Known $\mathbb{E}[T_{u,v}]$:} For the second half of Claim 0, we scale the weights of $G'$ as follows:
$$\forall (u,v)\in\overrightarrow{E}(G),\; \w_{u,v}' := \w_{u,v}\mathbb{E}[T_{u,v}]
\quad\mbox{and}\quad
\forall u\in {V}(G),\; \w_{u,\M}':=\w_{u,\M}=\w_u.$$ 
Note that by changing the weights, we have changed the star states, but we can generate them by first generating a state proportional to $\sum_{i\in L(u)}\sqrt{\w_{u,i}\mathbb{E}[T_{u,i}]}\ket{i}$, which we assume can be done in unit cost, and then putting a phase in front of each $\ket{i}$ depending on if $i\in L^+(u)$ or $i\in L^-(u)$. We have also changed the total weight of each vertex, $\w_u'=\sum_{i}\w_{u,i}\mathbb{E}[T_{u,i}]$, but we assume we still have query access to these values.

Then for \textbf{P3}, now using $\alpha_t=1$, in place of \eq{claim-0-pos} and \eq{claim-0-pos-2}, we have:
\begin{equation}\label{eq:claim-0-2-pos}
\begin{split}
\sum_{(u,v)\in\overrightarrow{E}(G')}\frac{\theta(u,v)^2}{\w_{u,v}'}\mathbb{E}\left[\sum_{t=0}^{T_{u,v}}1\right]
&= \sum_{(u,v)\in\overrightarrow{E}(G)}\frac{\theta(u,v)^2}{\w_{u,v}\mathbb{E}[T_{u,v}]}\mathbb{E}\left[T_{u,v}+1\right] + \sum_{u\in M}\frac{\theta(u)^2}{\w_{u}}\mathbb{E}\left[C_{u}+1\right]\\
&\leq 2{\cal R}_{\sigma,M}(G)+2{\cal R}_{\sigma,M}(G)
\leq 4\frac{{\cal C}_{\sigma,\M}}{{\cal W}(G)}=:{\cal R}',
\end{split}
\end{equation}
and for \textbf{N1}, in place of \eq{claim-0-neg}, we have:
\begin{equation}\label{eq:claim-I-2-neg}
\begin{split}
\sum_{(u,v)\in \overrightarrow{E}(G')}\w_{u,v}'\mathbb{E}\left[\sum_{t=0}^{T_{u,v}}1\right]
&= \sum_{(u,v)\in\overrightarrow{E}(G)}\w_{u,v}\mathbb{E}[T_{u,v}]\mathbb{E}\left[T_{u,v}+1\right] + \sum_{u\in V(G)}\w_{u}\mathbb{E}\left[{C_{u}}+1\right]\\
&\leq 4{\cal W}(G)\sum_{(u,v)\in\overrightarrow{E}(G)}\pi(u)P_{u,v}\mathbb{E}[T_{u,v}]^2 + 2{\cal W}(G)\\
&\leq 4{\cal W}(G)({\sf T}_{\mathrm{avg}}')^2+2{\cal W}(G)=:{\cal W}'.
\end{split}
\end{equation}
Then a simple calculation of $\sqrt{{\cal R}'{\cal W}'}$ shows that the second part of Claim 0 follows from \thm{graph-fwk}.

\vskip7pt

\noindent\textbf{Proof of Claim I:} 
For the proof of Claim I, we now let 
$$\w_{u,\M}:=\frac{{\cal W}(G){\cal D}_\M\tau(u)}{\eps{\cal C}_{\sigma,\M}}=\frac{{\cal W}(G){\cal D}_\M\tau_{M}(u)\tau(M)}{\eps{\cal C}_{\sigma,\M}}\mbox{ if }u\in M.$$
With this choice of $\w_{u,\M}$, preparing a state proportional to $\sqrt{\w_u}\ket{1}+\sqrt{\w_{u,\M}}\ket{0}$ can be done using a single qubit rotation that depends on $\frac{{\cal W}(G){\cal D}_{\M}}{\eps {\cal C}_{\sigma,\M}}\frac{\tau(u)}{\w_u}$.

\noindent\textbf{P3:} 
Let $\theta$ be the flow from $\sigma$ to $M$ with minimal energy ${\cal R}_{\sigma,M}(G)$ (see \defin{flow}), and extend it to a flow on $G'$ from $\sigma$ to $M'$ by setting $\theta(u,(u,g_u)) = \theta(u)$ for all $u\in M$. Then we have:
\begin{equation}\label{eq:claim-I-pos}
\begin{split}
\sum_{(u,v)\in\overrightarrow{E}(G')}\frac{\theta(u,v)^2}{\w_{u,v}}\mathbb{E}\left[\sum_{t=0}^{T_{u,v}}\frac{1}{t+1}\right]
&= \sum_{(u,v)\in\overrightarrow{E}(G)}\frac{\theta(u,v)^2}{\w_{u,v}}\mathbb{E}\left[\sum_{t=0}^{T_{u,v}}\frac{1}{t+1}\right] + \sum_{u\in M}\frac{\theta(u)^2}{\w_{u,\M}}\mathbb{E}\left[\sum_{t=0}^{C_{u}}\frac{1}{t+1}\right]\\
&\leq 2{\cal R}_{\sigma,M}(G)\ln {\sf T}+2\frac{{\cal C}_{\sigma,\M}}{{\cal W}(G){\cal D}_\M}\frac{\eps}{\tau(M)}\sum_{u\in M}\frac{\theta(u)^2}{\tau_{M}(u)}\ln{\sf C}\\
&\leq 2\frac{{\cal C}_{\sigma,\M}}{{\cal W}(G)}\ln {\sf T}+2\frac{{\cal C}_{\sigma,\M}}{{\cal W}(G)}\ln{\sf C}=:{\cal R},
\end{split}
\end{equation}
where we used ${\cal W}(G){\cal R}_{\sigma,M}\leq {\cal C}_{\sigma,\M}$, $\sum_{u\in M}\theta(u)^2/\tau_{M}(u)\leq {\cal D}_\M$ and $\eps\leq \tau(M)$. 

\vskip7pt

\noindent\textbf{N1:} If $M= \emptyset$, we have:
\begin{equation}\label{eq:claim-I-neg}
\begin{split}
\sum_{(u,v)\in \overrightarrow{E}(G')}\w_{u,v}\mathbb{E}\left[\sum_{t=0}^{T_{u,v}}({t+1})\right]
&= \sum_{(u,v)\in\overrightarrow{E}(G)}\w_{u,v}\mathbb{E}\left[\sum_{t=0}^{T_{u,v}}({t+1})\right] + \sum_{u\in V(G)}\w_{u,\M}\mathbb{E}\left[\sum_{t=0}^{C_{u}}({t+1})\right]\\
&\leq 2{\cal W}(G)\sum_{(u,v)\in\overrightarrow{E}(G)}\pi(u)P_{u,v}\mathbb{E}[T_{u,v}^2] + \sum_{u\in V(G)}\frac{{\cal W}(G){\cal D}_\M \tau(u)}{\eps{\cal C}_{\sigma,\M}}\mathbb{E}[C_u^2]\\
&\leq 2{\cal W}(G){\sf T}_{\mathrm{avg}}^2+\frac{{\cal W}(G){\cal D}_\M}{{\cal C}_{\sigma,\M}}\frac{1}{\eps}{\sf C}_{\mathrm{avg}}^2=:{\cal W}.
\end{split}
\end{equation}

\noindent To complete the first half of the proof of Claim I, we apply \thm{graph-fwk} using the fact that:
\begin{align*}
\sqrt{\cal RW} &= O\left(\sqrt{\left( {\cal C}_{\sigma,\M}{\sf T}_{\mathrm{avg}}^2+\frac{{\cal D}_\M}{\eps}{\sf C}_{\mathrm{avg}}^2 \right)\log {\sf TC}}\right).
\end{align*}

\noindent\textbf{Known $\mathbb{E}[T_{u,v}]$ and $\mathbb{E}[C_u]$:} For the second half of Claim I, we scale the weights of $G'$ as follows:
\begin{equation}\label{eq:claim-I-weight-scaling}
\forall (u,v)\in\overrightarrow{E}(G),\; \w_{u,v}' := \w_{u,v}\mathbb{E}[T_{u,v}]
\mbox{ and }
\forall u\in V(G),\; \w_{u,\M}':=\w_{u,\M}\mathbb{E}[C_u]=\frac{{\cal W}(G){\cal D}_{\M}\tau(u)\mathbb{E}[C_u]}{\eps{\cal C}_{\sigma,\M}}.
\end{equation}
Note that by changing the weights, we have changed the star states, but we can generate them as follows. Let $\w_u'=\sum_{i\in L(u)}\w_{u,i}\mathbb{E}[T_{u,i}]$, which we can query, by assumption. Then we first prepare a state proportional to 
$\sqrt{\w_u'}\ket{1}+\sqrt{\w_{u,\M}\mathbb{E}[C_u]}\ket{0}$, using a single qubit rotation, and then map $\ket{1}$ to a state proportional to $\sum_{i\in L(u)}\sqrt{\w_{u,i}\mathbb{E}[T_{u,i}]}\ket{i}$, which we assume can be done in unit cost. Then all that remains to be done is to put a phase in front of each $\ket{i}$ depending on if $i\in L^+(u)$ or $i\in L^-(u)$.

Then for the positive analysis, now using $\alpha_t=1$, in place of \eq{claim-I-pos}, we have:
\begin{equation}\label{eq:claim-I-2-pos}
\begin{split}
\sum_{(u,v)\in\overrightarrow{E}(G')}\frac{\theta(u,v)^2}{\w_{u,v}'}\mathbb{E}\left[\sum_{t=0}^{T_{u,v}}1\right]
&= \sum_{(u,v)\in\overrightarrow{E}(G)}\frac{\theta(u,v)^2}{\w_{u,v}\mathbb{E}[T_{u,v}]}\mathbb{E}\left[T_{u,v}+1\right] + \sum_{u\in M}\frac{\theta(u)^2}{\w_{u,\M}\mathbb{E}[C_u]}\mathbb{E}\left[C_{u}+1\right]\\
&\leq 2{\cal R}_{\sigma,M}(G)+2\frac{{\cal C}_{\sigma,\M}}{{\cal W}(G){\cal D}_\M}\frac{\eps}{\tau(M)}\sum_{u\in M}\frac{\theta(u)^2}{\tau_M(u)}\\
&\leq 2\frac{{\cal C}_{\sigma,\M}}{{\cal W}(G)}+2\frac{{\cal C}_{\sigma,M}}{{\cal W}(G)}=:{\cal R}',
\end{split}
\end{equation}
and for the negative analysis, in place of \eq{claim-I-neg}, we have:
\begin{equation}\label{eq:claim-I-2-neg}
\begin{split}
\sum_{(u,v)\in \overrightarrow{E}(G')}\w_{u,v}'\mathbb{E}\left[\sum_{t=0}^{T_{u,v}}1\right]
&= \sum_{(u,v)\in\overrightarrow{E}(G)}\w_{u,v}\mathbb{E}[T_{u,v}]\mathbb{E}\left[T_{u,v}+1\right] + \sum_{u\in V(G)}\w_{u,\M}\mathbb{E}[C_u]\mathbb{E}\left[{C_{u}}+1\right]\\
&\leq 4{\cal W}(G)\sum_{(u,v)\in\overrightarrow{E}(G)}\pi(u)P_{u,v}\mathbb{E}[T_{u,v}]^2 + 2\sum_{u\in V(G)}\frac{{\cal W}(G){\cal D}_\M}{\eps{\cal C}_{\sigma,\M}}\tau(u)\mathbb{E}[C_u]^2\\
&\leq 4{\cal W}(G)({\sf T}_{\mathrm{avg}}')^2+2\frac{{\cal W}(G){\cal D}_\M}{{\cal C}_{\sigma,\M}}\frac{1}{\eps}({\sf C}_{\mathrm{avg}}')^2=:{\cal W}'.
\end{split}
\end{equation}
Then a simple calculation of $\sqrt{{\cal R}'{\cal W}'}$ show that the second part of Claim I follows from \thm{graph-fwk}.

\vskip7pt

\noindent\textbf{Proof of Claim II:} Next, we consider the special case where there is a unique marked element if $M\neq \emptyset$. In that case, when $M=\{m\}$  is non-empty, we have, for any $\sigma$-$M$ flow, $\theta(m)=-1$ (all flow leaves at $m$), so:
$$\sum_{u\in M} \frac{\theta(u)^2}{\tau_M(u)}=\frac{1}{\tau_M(m)} = \frac{\tau(M)}{\tau(m)}=1.$$
Thus Claim II follows from applying Claim I with ${\cal D}_{\M}=1$.

\vskip7pt

\noindent\textbf{Proof of Claim III:} 
Finally, let $\tau$ be the known distribution described in Claim III. If such a $\tau$ is known, we can define 
$$\w_{u,\M} := \frac{\tau(u) {\cal W}(G)}{\eps{\cal C}_{\sigma,\tau}}=\frac{\tau_M(u)\tau(M){\cal W}(G)}{\eps{\cal C}_{\sigma,\tau}}\mbox{ when }u\in M.$$
With this choice of $\w_{u,\M}$, preparing a state proportional to $\sqrt{\w_u}\ket{1}+\sqrt{\w_{u,\M}}\ket{0}$ can be done using a single qubit rotation that depends on $\frac{{\cal W}(G)}{\eps{\cal C}_{\sigma,\tau}}\frac{\tau(u)}{\w_u}$.

\vskip7pt

\noindent\textbf{P3:} 
Let $\theta$ be the flow from $\sigma$ to $\tau_M$ with minimal energy ${\cal R}_{\sigma,\tau_M}(G)$ (see \defin{resistance}), and extend it to a flow on $G'$ from $\sigma$ to $M'$ by setting $\theta(u,(u,g_u)) = \tau_M(u)$ for all $u\in M$. Then, similar to \eq{claim-I-pos}, we have:
\begin{align*}
\sum_{(u,v)\in\overrightarrow{E}(G')}\frac{\theta(u,v)^2}{\w_{u,v}}\mathbb{E}\left[\sum_{t=0}^{T_{u,v}}\frac{1}{t+1}\right]
&\leq 2\sum_{(u,v)\in\overrightarrow{E}(G)}\frac{\theta(u,v)^2}{\w_{u,v}}\ln {\sf T} + 2\sum_{u\in M}\frac{\tau_M(u)^2}{\w_{u,\M}}\ln{\sf C}\\
&= 2{\cal R}_{\sigma,\tau_M}(G)\ln {\sf T}+2\frac{\eps}{\tau(M)}\frac{{\cal C}_{\sigma,\tau}}{{\cal W}(G)}\sum_{u\in M}\frac{\tau_M(u)^2}{\tau_M(u)}\ln{\sf C}\\
&\leq 2\frac{{\cal C}_{\sigma,\tau}}{{\cal W}(G)}\ln {\sf T}+2\frac{{\cal C}_{\sigma,\tau}}{{\cal W}(G)}\ln{\sf C}=:{\cal R}.
\end{align*}

\noindent\textbf{N1:} If $M=\emptyset$, we have, similar to \eq{claim-I-neg}:
\begin{align*}
\sum_{(u,v)\in \overrightarrow{E}(G')}\w_{u,v}\mathbb{E}\left[\sum_{t=0}^{T_{u,v}}{t+1}\right]
&\leq 2\sum_{(u,v)\in\overrightarrow{E}(G)}\w_{u,v}\mathbb{E}[T_{u,v}^2] + 2\sum_{u\in V(G)}\w_{u,\M}\mathbb{E}[C_u^2]\\
&\leq 4 {\cal W}(G){\sf T}_{\mathrm{avg}}^2 + \frac{{\cal W}(G)}{{\cal C}_{\sigma,\tau}}\frac{1}{\eps}{\sf C}_{\mathrm{avg}}^2=:{\cal W}.
\end{align*}

\noindent To complete the proof of the first half of Claim III, we apply \thm{graph-fwk} using the fact that
\begin{align*}
\sqrt{\cal RW} &= O\left(\left( {\cal C}_{\sigma,\tau}{\sf T}_{\mathrm{avg}}^2+\frac{1}{\eps}{\sf C}_{\mathrm{avg}}^2 \right)\log {\sf TC}\right).
\end{align*}
The proof of the second half of Claim III proceeds just as in Claim I. 
\end{proof}

The expressions we achieve in \cor{variable-time-walk} in the case when checking is non-trivial and $M$ is not restricted to being a singleton are somewhat complicated. However, in the special case when $\sigma=\pi$, we can get a bound comparable to the MNRS framework. The following corollary also gives an alternative proof to the one in \cite{apers2019UnifiedFrameworkQWSearch} that the MNRS framework is a special case of the electric network framework (up to $\log{1}/{\pi_{\min}}$ factors, which we suspect can be removed).

\begin{corollary}[Variable-time MNRS]\label{cor:variable-time-MNRS}
Fix a network $G$ and marked set $M\subset V(G)$ that may implicitly depend on an input $x$. Let $P$ be the transition matrix for the random walk on $G$, and $\pi$ its stationary distribution. Let ${\sf S}$ be the cost of generating $\ket{\pi}$, and suppose $\{\ket{\psi_\star^G(u)}\}_{u\in V(G)}$ can be generated in unit cost.
Let $\delta$ be a lower bound on the spectral gap of $P$, and $\eps$ a lower bound on $\pi(M)$ whenever $M\neq\emptyset$. Let $\pi_{\min}=\min_{u\in V(G)}\pi(u)$.

Suppose there is a variable-time subroutine that checks, for any $u\in V(G)$, if $u\in M$, with stopping times $\{C_u\}_{u\in V(G)}$, which are random variables on $[{\sf C}]$, and all errors zero. Let ${\sf C}_{\mathrm{avg}}$ and ${\sf C}_{\mathrm{avg}}'$ be upper bounds such that whenever $M=\emptyset$,
$$\sqrt{\sum_{u\in V(G)}\pi(u)\mathbb{E}[C_u^2]} \leq {\sf C}_{\mathrm{avg}},
\quad\mbox{and}\quad
\sqrt{\sum_{u\in V(G)}\pi(u)\mathbb{E}[C_u]^2} \leq {\sf C}_{\mathrm{avg}}'.$$
Suppose there is a reversible variable-time subroutine that implements the transition map with stopping times $\{T_{u,v}\}_{(u,v)\in\overrightarrow{E}(G)}$, which are random variables on $[{\sf T}]$, and all errors zero. Let ${\sf T}_{\mathrm{avg}}$ and ${\sf T}_{\mathrm{avg}}'$ be upper bounds such that whenever $M=\emptyset$, 
$$\sqrt{\sum_{(u,v)\in\overrightarrow{E}(G)}\pi(u) P_{u,v}\mathbb{E}[T_{u,v}^2]} \leq {\sf T}_{\mathrm{avg}},
\quad\mbox{and}\quad
\sqrt{\sum_{(u,v)\in\overrightarrow{E}(G)}\pi(u) P_{u,v}\mathbb{E}[T_{u,v}]^2} \leq {\sf T}_{\mathrm{avg}}'.$$
Then there is a quantum algorithm that detects if $M=\emptyset$ with bounded error in complexity: 
$$O\left({\sf S}+\frac{1}{\sqrt{\eps}}\left(\frac{1}{\sqrt{\delta}}{\sf T}_{\mathrm{avg}}+{\sf C}_{\mathrm{avg}}\right)\sqrt{\log\frac{1}{\pi_{\min}}}\log^{1.5}({\sf T}{\sf C})\right).$$
Suppose in addition that the values $\mathbb{E}[T_{u,v}]=\mathbb{E}[T_{u,i}]$ are computable in the strong sense that for any $u$, we can generate a superposition proportional to $\sum_{i\in L(u)}\sqrt{\w_{u,i}\mathbb{E}[T_{u,i}]}\ket{i}$, and we have query access to $\w_u'=\sum_{i\in L(u)}\w_{u,i}\mathbb{E}[T_{u,i}]$ and $\mathbb{E}[C_u]$. Then there is a quantum algorithm that decides if $M=\emptyset$ with bounded error in complexity:
$$O\left({\sf S}+\frac{1}{\sqrt{\eps}}\left(\frac{1}{\sqrt{\delta}}{\sf T}_{\mathrm{avg}}'+{\sf C}_{\mathrm{avg}}'\right)\sqrt{\log\frac{1}{\pi_{\min}}}\log({\sf T}{\sf C})\right).$$
\end{corollary}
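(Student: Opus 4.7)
The plan is to obtain \cor{variable-time-MNRS} as a direct instantiation of Claim III of \cor{variable-time-walk}, with both the initial and terminal distributions chosen as $\sigma=\tau=\pi$. The setup cost is ${\sf S}$ by hypothesis, and the MNRS promise $\pi(M)\geq \eps$ is exactly the $\tau(M)\geq\eps$ premise of Claim III. Once this is in place, the entire argument reduces to a single purely classical estimate: an upper bound on the quantity ${\cal C}_{\sigma,\tau}$ -- which Claim III requires to upper bound ${\cal W}(G){\cal R}_{\pi,\pi_M}(G)$ -- of the form $O(\log(1/\pi_{\min})/(\eps\delta))$. Claim III then directly delivers complexity $O({\sf S}+(\sqrt{{\cal C}_{\sigma,\tau}}{\sf T}_{\mathrm{avg}}+\frac{1}{\sqrt{\eps}}{\sf C}_{\mathrm{avg}})\log^{1.5}({\sf TC}))$, and plugging in the resistance bound gives the stated scaling. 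The second half of the corollary (known $\mathbb{E}[T_{u,v}]$ and $\mathbb{E}[C_u]$) is obtained identically from the corresponding second half of Claim III, since the resistance bound is unchanged by rescaling the weights.

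\paragraph{Bounding the effective resistance.} I would use the standard spectral formula
\[
{\cal R}_{\pi,\pi_M}(G) \;=\; (\pi-\pi_M)^T L_G^+ (\pi-\pi_M),
\]
where $L_G = D-W$ is the weighted Laplacian. Factoring $L_G = D^{1/2}\tilde L D^{1/2}$ with $\tilde L = I - D^{-1/2}WD^{-1/2}$, and noting that the spectral gap of $\tilde L$ above its kernel (spanned by $\sqrt{\pi}$) equals $\delta$, the vector $v:=D^{-1/2}(\pi-\pi_M)$ is easily checked to satisfy $\langle \sqrt{\pi},v\rangle = \sum_u(\pi(u)-\pi_M(u))/\sqrt{2{\cal W}(G)} = 0$, giving ${\cal R}_{\pi,\pi_M}(G) \leq \delta^{-1}\|v\|^2$. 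A short computation using $\pi(u) = \w_u/(2{\cal W}(G))$ yields $\|v\|^2 = (1-\pi(M))/(2{\cal W}(G)\pi(M)) \leq 1/(2{\cal W}(G)\eps)$, and hence ${\cal W}(G){\cal R}_{\pi,\pi_M}(G) \leq 1/(2\eps\delta)$. This already suffices as the value of ${\cal C}_{\sigma,\tau}$, and in fact gives the claimed complexity without any $\sqrt{\log(1/\pi_{\min})}$ cushion, consistent with the remark immediately after the corollary that this factor is likely removable.

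\paragraph{Main obstacle.} The step requiring the most care is not the resistance bound itself -- which is a standard Poincar\'e-type argument -- but verifying that Claim III can actually be applied with $\tau=\pi$. One must check that the auxiliary edge weights $\w_{u,\M}$ in its proof, which are proportional to $\tau(u)/\w_u$, can be prepared in unit cost so that the star-state subroutine in \defin{QW-access} survives the reduction. For $\tau=\pi$ this ratio equals the uniform value $1/(2{\cal W}(G))$, so preparing a state proportional to $\sqrt{\w_u}\ket{1}+\sqrt{\w_{u,\M}}\ket{0}$ collapses to a single fixed rotation, and no complications arise. The only genuinely open question is whether the $\sqrt{\log(1/\pi_{\min})}$ factor in the stated bound can be removed formally; the spectral argument above strongly suggests it can, but matching the paper's conservative accounting is harmless.
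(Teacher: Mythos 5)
Your proposal is correct, but it takes a genuinely different route from the paper. The paper does \emph{not} derive \cor{variable-time-MNRS} from Claim III of \cor{variable-time-walk}; it applies \thm{graph-fwk} directly to the augmented graph $G'$ with $\w_{u,\M}=\delta\w_u$, and constructs the flow $\theta$ probabilistically, as the expected net edge-traversal counts of a random walk started from $\pi$ that checks for membership in $M$ only with probability $p=\delta/\log(2/\pi_{\min}^2)$ per step. Bounding the energy of that flow requires a rather long calculation (the chain of identities culminating in \eq{MNRS-eps-delta}, showing the energy is at most $\mathbb{E}[\tau]=O(\log(1/\pi_{\min})/(\eps\delta))$), and the $\log(1/\pi_{\min})$ enters precisely because the walk must run long enough between checks to re-mix to within $\pi_{\min}/2$ of $\pi$. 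You instead take $\sigma=\tau=\pi$ in Claim III and bound ${\cal W}(G){\cal R}_{\pi,\pi_M}(G)$ by the Poincar\'e/Thomson argument $(\pi-\pi_M)^TL_G^+(\pi-\pi_M)\leq \|D^{-1/2}(\pi-\pi_M)\|^2/\delta = (1-\pi(M))/(2{\cal W}(G)\pi(M))\cdot\delta^{-1}\leq 1/(2{\cal W}(G)\eps\delta)$; I checked the orthogonality to $\ker\tilde L$ and the norm computation, and both are right. Your observation that the resulting auxiliary weights collapse to $\w_{u,\M}=\delta\w_u$ (a $u$-independent rotation) correctly disposes of the only implementability concern in instantiating Claim III. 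The trade-off: your argument is shorter, purely spectral, and actually \emph{removes} the $\sqrt{\log(1/\pi_{\min})}$ factor that the paper explicitly conjectures is removable, so it proves a strictly stronger statement; the paper's combinatorial flow construction, on the other hand, is what lets it simultaneously reprove (in this framework) that MNRS is a special case of the electric network framework via an explicit walk-based flow, which your spectral bound does not exhibit. Since the stated bound is only an upper bound, matching it with a smaller ${\cal C}_{\sigma,\tau}$ is of course harmless.
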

\begin{proof}
As in the proof of \cor{variable-time-walk}, we will apply \thm{graph-fwk} to a modified graph $G'$ in which we connect each $u\in V(G)$ to a new vertex $(u,g_u)$, where $g_u=1$ if and only if $u\in M$, by an edge of weight $\w_{u,\M}=\delta\w_u$.

We will use the following process -- which is similar to a random walk on $G'$ starting from $\pi$ except for the log factor in $p$ -- to design a flow on $G'$.
\begin{enumerate}
    \item Sample a vertex $u$ from $\pi$.
    \item Repeat:
    \begin{enumerate}
        \item With probability $p=\frac{\delta}{\log({2}/{\pi_{\min}^2})}$, check if $u\in M$, and if so, halt.
        \item Take a step to some $v$ according to $P$ (the random walk on $G$) by setting $u\leftarrow v$.
    \end{enumerate}
\end{enumerate}
This process gives rise to a Markov chain $X_1,\dots,X_{\tau}$ on $V(G)$ for a random variable $\tau$ with  
$$\mathbb{E}[\tau]=O\left(\frac{1}{p\pi(M)}\right)=O\left(\frac{\log\frac{1}{\pi_{\min}}}{\eps\delta}\right).$$
From this process, we can define a flow on $G$ as follows. Let $[u\rightarrow v]$ be the number of times the above process moves from $u$ to $v$ and define for any $\{u,v\}\in E(G)$:
$$\theta(u,v) := \mathbb{E}[u\rightarrow v] - \mathbb{E}[v\rightarrow u].$$
This satisfies $\theta(u,v) = -\theta(v,u)$, and letting $N_u^\rightarrow$ be the number of times we leave $u$, and $N_u^{\leftarrow}$ be the number of times we enter $u$, we have:
$$\theta(u) = \sum_v\theta(u,v) =\left(\mathbb{E}[N_u^\rightarrow]-\mathbb{E}[N_u^\leftarrow]\right)
= \Pr[X_1=u] - \Pr[X_{\tau}=u].$$
It is clear that $\Pr[X_1=u]=\pi(u)$. We also only ever output if $u\in M$, so $\Pr[X_{\tau}=u]=0$ unless $u\in M$. We can extend this to a flow on $G'$ by setting $\theta(u,(u,1))=\theta(u)$ for all $u\in M$. Let $\tilde\pi$ be the distribution on $M$ defined $\tilde\pi(u) = \pi(u)-\theta(u)$ for all $u\in M$ -- in other words, $\tilde\pi(u)$ is the probability that $X_{\tau}=u$. Since there are $1/p=\frac{1}{\delta}\log\frac{1}{\pi_{\min}^2}$ steps between  checks, in expectation, each time we check if $u\in M$, we are distributed according to some distribution $\pi'$ that is expected $\pi_{\min}/2$-close to $\pi$~\cite[Theorem 12.3]{levin2017MarkovChainsMixingTimes}, meaning that for each $u\in M$
$$\mathbb{E}[\pi_M'(u)] \leq \frac{\pi(u)+\pi_{\min}/2}{\pi(M)-\pi_{\min}/2}\leq \frac{\frac{3}{2}\pi(u)}{\frac{1}{2}\pi(M)}=3\pi_M(u)$$
and so $\tilde\pi(u)\leq 3\pi_M(u)$, and thus:
$$|\theta(u)| \leq \tilde\pi(u)+\pi(u) \leq 3\pi_M(u)+\pi(M)\pi_M(u) \leq 4\pi_M(u).$$
Then we have (using $\alpha_t=t+1$):
\begin{align*}
2{\cal W}(G)\sum_{(u,v)\in\overrightarrow{E}(G')}\frac{\theta(u,v)^2}{\w_{u,v}}\mathbb{E}\left[\sum_{t=0}^{T_{u,v}}\frac{1}{t+1}\right] &\leq \sum_{(u,v)\in\overrightarrow{E}(G)}\frac{\theta(u,v)^2}{\pi(u)P_{u,v}}\mathbb{E}[\log T_{u,v}] + \sum_{u\in M}\frac{\theta(u)^2}{\delta\pi(u)}\mathbb{E}[\log C_u]\\
&\leq \sum_{(u,v)\in\overrightarrow{E}(G)}\frac{\theta(u,v)^2}{\pi(u)P_{u,v}}\log{\sf T} + \frac{16}{\delta\pi(M)^2}\sum_{u\in M}{\pi(u)}\log{\sf C}\\
&\leq \sum_{(u,v)\in\overrightarrow{E}(G)}\frac{\theta(u,v)^2}{\pi(u)P_{u,v}}\log{\sf T} + \frac{16}{\delta\eps}\log{\sf C}.
\end{align*}
Above, we used $\w_{u,v} = \pi(u)P_{u,v}/(2{\cal W}(G))$, $\theta(u)^2 \leq 16 \pi_M(u)^2 = 16 \pi(u)^2/\pi(M)^2$, and $\eps \leq \pi(M) = \sum_{u\in M}\pi(u)$.
We will show that,
\begin{equation}\label{eq:MNRS-eps-delta}
\sum_{(u,v)\in \overrightarrow{E}(G)}\frac{\theta(u,v)^2}{\pi(u)P_{u,v}} \leq \mathbb{E}[\tau] \leq \frac{c\log \frac{1}{\pi_{\min}}}{\eps\delta}
\end{equation}
for some constant $c$, from which it follows that
\begin{align*}
\sum_{(u,v)\in\overrightarrow{E}(G')}\frac{\theta(u,v)^2}{\w_{u,v}}\mathbb{E}\left[\sum_{t=0}^{T_{u,v}}\frac{1}{t+1}\right] &\leq \frac{1}{2{\cal W}(G)}\frac{1}{\eps\delta}\left( c\log\frac{1}{\pi_{\min}}\log{\sf T} + 16\log{\sf C}\right)=:{\cal R}.
\end{align*}
Then since:
\begin{align*}
\sum_{(u,v)\in\overrightarrow{E}(G')}\w_{u,v}\mathbb{E}\left[\sum_{t=0}^{T_{u,v}}(t+1)\right] &\leq 2{\cal W}(G)\left(\sum_{(u,v)\in\overrightarrow{E}(G)}\pi(u)P_{u,v}\mathbb{E}[T_{u,v}^2]+\sum_{u\in V(G)}\delta\pi(u)\mathbb{E}[C_u^2]\right)=:{\cal W}
\end{align*}
we have
$${\cal R W}=O\left(\left(\frac{1}{\eps\delta}\sum_{(u,v)\in\overrightarrow{E}(G)}\pi(u)P_{u,v}\mathbb{E}[T_{u,v}^2]+\frac{1}{\eps}\sum_{u\in V(G)}\pi(u)\mathbb{E}[C_u^2]\right)\log\frac{1}{\pi_{\min}}\log{\sf TC} \right)$$
and so the claim follows from \thm{graph-fwk}. Thus, it remains only to establish \eq{MNRS-eps-delta}, which we now undertake.


Let $[u\rightarrow v]_t$ be the event that we move from $u$ to $v$ in the $t$-th step, and let $E_t(u)$ be the event that $X_t=u$, and $\tau>t$ (i.e. we don't stop there). Then:
\begin{align*}
    \mathbb{E}[u\rightarrow v] &= \sum_{t=1}^{\infty}\Pr[[u\rightarrow v]_t]
    = 
    \sum_{t=1}^{\infty}\Pr[E_{t}(u)]P_{u,v}.
\end{align*}
Then since $\theta(u,v) = \mathbb{E}[u\rightarrow v] - \mathbb{E}[v\rightarrow u]$, by definition, we have:
\begin{equation}\label{eq:MNRS-intermediate}
\begin{split}
    \sum_{(u,v)\in \overrightarrow{E}(G)}\frac{\theta(u,v)^2}{\pi(u)P_{u,v}}
    &= \frac{1}{2}\sum_{u\in V(G)}\sum_{v\in \Gamma(u)} \pi(u)P_{u,v}\left(\frac{\mathbb{E}[u\rightarrow v]}{\pi(u)P_{u,v}}-\frac{\mathbb{E}[v\rightarrow u]}{\pi(v)P_{v,u}}\right)^2\\
    &= \frac{1}{2}\sum_{u\in V(G)}\sum_{v\in \Gamma(u)} \pi(u)P_{u,v}\left(\frac{1}{\pi(u)}\sum_{t=1}^{\infty}\Pr[E_t(u)]-\frac{1}{\pi(v)}\sum_{t=1}^{\infty}\Pr[E_t(v)]\right)^2\\
    &= \frac{1}{2}\sum_{u\in V(G)}\sum_{v\in \Gamma(u)} \pi(u)P_{u,v}\left(\frac{\mathbb{E}[N_u^{\rightarrow}]}{\pi(u)}-\frac{\mathbb{E}[N_v^{\rightarrow}]}{\pi(v)}\right)^2. 
\end{split}
\end{equation}

From \eq{MNRS-intermediate} and the detailed balance condition $\pi(u)P_{u,v} = \pi(v)P_{v,u}$, it follows that:
\begin{equation}\label{eq:MNRS-big-step}
\begin{split}
\sum_{(u,v)\in \overrightarrow{E}(G)} \frac{\theta(u,v)^2}{\pi(u)P_{u,v}} &= \frac{1}{2}\sum_{u\in V(G)}\sum_{v\in \Gamma(u)}\pi(u)P_{u,v}\left(\frac{\mathbb{E}[N_u^{\rightarrow}]^2}{\pi(u)^2}+ \frac{\mathbb{E}[N_v^{\rightarrow}]^2}{\pi(v)^2} - 2\frac{\mathbb{E}[N_u^{\rightarrow}]}{\pi(u)}\frac{\mathbb{E}[N_v^{\rightarrow}]}{\pi(v)}\right)\\
&= \sum_{u\in V(G)}\sum_{v\in \Gamma(u)}\pi(u)P_{u,v} \frac{\mathbb{E}[N_u^{\rightarrow}]^2}{\pi(u)^2}
- \sum_{u\in V(G)}\sum_{v\in \Gamma(u)}\pi(u)P_{u,v} \frac{\mathbb{E}[N_u^{\rightarrow}]}{\pi(u)}\frac{\mathbb{E}[N_v^{\rightarrow}]}{\pi(v)}\\
&= \sum_{u\in V(G)}\sum_{v\in \Gamma(u)}P_{u,v} {\mathbb{E}[N_u^{\rightarrow}]}\left(\frac{\mathbb{E}[N_u^{\rightarrow}]}{\pi(u)}-\frac{\mathbb{E}[N_v^{\rightarrow}]}{\pi(v)}\right).
\end{split}
\end{equation}

The number of visits to $u$ depends on if it is visited at step 1, which happens with probability $\pi(u)$, as well as the number of visits to its neighbours. Every time we leave a neighbour $v$ of $u$, we visit $u$ next with probability $P_{v,u}$. Thus, the expected number of visits to $u$ is:
\begin{align*}
\pi(u)+\sum_{v\in \Gamma(u)}P_{v,u}\mathbb{E}[N_v^{\rightarrow}].
\end{align*}
Then since we leave $u$ if we visit and don't halt, the expected number of times we leave $u\in\overline{M}$ is
\begin{equation}\label{eq:N-leave-notM}
    \mathbb{E}[N_u^{\rightarrow}] = \pi(u)+\sum_{v\in \Gamma(u)}P_{v,u}\mathbb{E}[N_v^{\rightarrow}],
\end{equation}
since we never halt on $u\in \overline{M}$; and for $u\in M$, since every visits ends in halting with probability $p$,
\begin{align}
    \mathbb{E}[N_u^\rightarrow] &= (1-p)\pi(u)+(1-p)\sum_{v\in \Gamma(u)} P_{v,u}\mathbb{E}[N_v^\rightarrow]\nonumber\\
\mbox{so }\mathbb{E}[N_u^\rightarrow] - \sum_{v\in \Gamma(u)}P_{v,u}\mathbb{E}[N_v^\rightarrow] &= \pi(u)-\frac{p}{1-p}\mathbb{E}[N_u^{\rightarrow}]. \label{eq:N-leave-M}
\end{align}
Thus:
\begin{align*}
\sum_{v\in \Gamma(u)}P_{u,v}\left(\frac{\mathbb{E}[N_u^{\rightarrow}]}{\pi(u)}-\frac{\mathbb{E}[N_v^{\rightarrow}]}{\pi(v)}\right) 
&= \frac{\mathbb{E}[N_u^{\rightarrow}]}{\pi(u)}-   \sum_{v\in \Gamma(u)}P_{u,v}\frac{\mathbb{E}[N_v^{\rightarrow}]}{\pi(v)}\\
    &= \frac{1}{\pi(u)}\left(\mathbb{E}[N_u^{\rightarrow}]-\sum_{v\in \Gamma(u)}P_{v,u}\mathbb{E}[N_v^{\rightarrow}]\right) & \mbox{since }\frac{P_{u,v}}{\pi(v)}=\frac{P_{v,u}}{\pi(u)}\\
    &=
    \left\{\begin{array}{ll}
    1 & \mbox{if }u\in\overline{M},\mbox{ by \eq{N-leave-notM}}\\
    1-\frac{p}{1-p}\frac{\mathbb{E}[N_u^\rightarrow]}{\pi(u)} & \mbox{if }u\in M,\mbox{ by \eq{N-leave-M}}.
    \end{array}\right.
\end{align*}
Thus, continuing from \eq{MNRS-big-step}, we have
\begin{align*}
    \sum_{(u,v)\in \overrightarrow{E}(G)}\frac{\theta(u,v)^2}{\pi(u)P_{u,v}} &= \sum_{u\in\overline{M}}\mathbb{E}[N_u^{\rightarrow}] +\sum_{u\in M}\mathbb{E}[N_u^{\rightarrow}]\left(1-\frac{p}{1-p}\frac{\mathbb{E}[N_u^{\rightarrow}]}{\pi(u)}\right)\\
    &\leq \sum_{u\in V(G)}\mathbb{E}[N_u^{\rightarrow}]=\mathbb{E}[\tau],
\end{align*}
establishing \eq{MNRS-eps-delta}.
\end{proof}

We now turn to the special case of search, where we have the following corollary, again assuming, for simplicity, that there are no errors:
\begin{corollary}[Variable-time Search ($\alpha_t=t+1$)]\label{cor:variable-time}
Fix a search space $[n]$ and a distribution $\pi$ on $[n]$ such that we can generate $\ket{\pi}$ in unit cost. Suppose we have a variable-time subroutine that computes, for any $i\in [n]$, a bit $g(i)$, in time $T_i$, which is a random variable on $[{\sf T}]$. Assume all errors are 0. Suppose $\eps>0$ is a lower bound on $\sum_{i:g(i)=1}\pi(i)$ whenever $\bigvee_{i\in [n]}g(i)=1$, and ${\sf T}_{\mathrm{avg}}$ and ${\sf T}_{\mathrm{avg}}'$ are upper bounds such that whenever $\bigvee_{i\in [n]}g(i)=0$:
$$\sqrt{\sum_{i\in [n]}\pi(i)\mathbb{E}[T_i^2]}\leq {\sf T}_{\mathrm{avg}}
\;\mbox{ and }\;
\sqrt{\sum_{i\in [n]}\pi(i)\mathbb{E}[T_i]^2}\leq {\sf T}_{\mathrm{avg}}'.
$$
Then there is a quantum algorithm that decides $\bigvee_{i\in [n]}g(i)$ with bounded error in complexity:
$$O\left(\frac{1}{\sqrt{\eps}}{\sf T}_{\mathrm{avg}} \log^{1.5}{\sf T}\right).$$
If, in addition, we assume the values $\mathbb{E}[T_i]$ are known, in the strong sense that we can generate a superposition proportional to $\sum_{i\in [n]}\sqrt{\pi(i)\mathbb{E}[T_i]}\ket{i}$ in unit cost, there is a quantum algorithm that decides $\bigvee_{i\in [n]}g(i)$ with bounded error in complexity:
$$O\left(\frac{1}{\sqrt{\eps}}{\sf T}_{\mathrm{avg}}'\log{\sf T}\right).$$
\end{corollary}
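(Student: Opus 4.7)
The plan is to derive this corollary from Claim~0 of \cor{variable-time-walk} applied to a carefully chosen star graph. Let $G$ have vertex set $V(G) = \{*\}\cup\{(i, g(i)) : i\in[n]\}$ and edge set $\{\{*, (i, g(i))\} : i \in [n]\}$, with weight $\w_{*,(i,g(i))} = \pi(i)$. Set $V_0 = \{*\}$, $V_\M = \{(i, g(i)) : i\in[n]\}$, initial distribution $\sigma = \delta_*$, and marked set $M = \{(i, 1) : g(i) = 1\}$. Take the transition subroutine on edge $\{*, (i, g(i))\}$ to be the given reversible variable-time subroutine computing $g(i)$: it implements $\ket{*, i}\mapsto \ket{(i,g(i)),0}$ with stopping time $T_i$. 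Since the second coordinate of the target vertex literally records $g(i)$, checking membership in $M$ costs $O(1)$, so the ``trivial checking'' version of the framework (Claim~0) applies.

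Next, verify the remaining hypotheses. The setup cost is ${\sf S}=O(1)$; the star state at $*$ is $\sum_{i\in[n]}\sqrt{\pi(i)}\ket{i}=\ket{\pi}$, prepared in unit cost by assumption; and the star state at each leaf involves a single label and is trivial. The total weight is ${\cal W}(G) = \sum_i \pi(i) = 1$, and when $M\neq\emptyset$ the minimum-energy unit $\sigma$-$M$ flow is $\theta(*,(i,1)) = \pi(i)/\pi(M)$ for each $i$ with $g(i) = 1$, giving ${\cal R}_{\sigma,M}(G) = 1/\pi(M)\le 1/\eps$ and hence ${\cal C}_{\sigma,\M}\le 1/\eps$. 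Since $\pi_G(*)P_{*,(i,g(i))} = \w_{*,(i,g(i))}/(2{\cal W}(G)) = \pi(i)/2$, the walk's weighted $\ell_2$-average of transition times reduces to
$$\sqrt{\sum_{(u,v)\in\overrightarrow{E}(G)} \pi_G(u) P_{u,v}\mathbb{E}[T_{u,v}^2]} = \sqrt{\tfrac{1}{2}\sum_i \pi(i)\mathbb{E}[T_i^2]}\le {\sf T}_{\mathrm{avg}}/\sqrt{2}.$$
Plugging into the first half of Claim~0 of \cor{variable-time-walk} gives total complexity $O({\sf S}+\sqrt{{\cal C}_{\sigma,\M}}\cdot({\sf T}_{\mathrm{avg}}/\sqrt{2})\cdot\log^{1.5}{\sf T}) = O(\eps^{-1/2}{\sf T}_{\mathrm{avg}}\log^{1.5}{\sf T})$, which is the first bound.

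For the sharper second bound, apply the second half of Claim~0 instead. Its additional hypothesis is that $\sum_{i\in L(*)}\sqrt{\w_{*,i}\mathbb{E}[T_{*,i}]}\ket{i} \propto \sum_i\sqrt{\pi(i)\mathbb{E}[T_i]}\ket{i}$ can be prepared in unit cost, which is exactly the strong-knownness assumption of the corollary (note also that $\w_*' = \sum_i \pi(i)\mathbb{E}[T_i]$ is a scalar, so queryability is immediate). Repeating the $\ell_2$-average calculation with $\mathbb{E}[T_i^2]$ replaced by $\mathbb{E}[T_i]^2$, and using the walk's $\sqrt{\log}$-factor saving in the known case, yields $O(\eps^{-1/2}{\sf T}_{\mathrm{avg}}'\log{\sf T})$. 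The only non-routine point is verifying that $\ket{*,i}\mapsto\ket{(i,g(i)),0}$ fits the reversible variable-time subroutine framework of \sec{variable-time}: the subroutine's auxiliary output is $g(i)\in\{0,1\}$, and the target isometry is uniformly $A_a\ket{*,i} = \ket{(i,a),0}$, which is a unit-cost controlled operation and satisfies $\bra{i'}A^\dagger A_a\ket{i} = 0$ for $i\neq i'$ because target vertices with distinct first coordinates are distinct. No substantive obstacle is anticipated; the content of this corollary lies almost entirely in \cor{variable-time-walk} (and ultimately in \thm{graph-fwk}).
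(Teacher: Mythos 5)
Your construction is essentially the paper's: both reduce to a star graph with centre connected to leaves $(i,g(i))$ by edges of weight $\pi(i)$ (rescaled to $\pi(i)\mathbb{E}[T_i]$ in the known case), with the variable-time subroutine as the edge transition. The difference is purely in routing: the paper applies \thm{graph-fwk} directly to the bare star graph, verifying \textbf{P1}--\textbf{P4} and \textbf{N1}--\textbf{N2} by hand with $\alpha_t=t+1$ (resp.\ $\alpha_t=1$ with rescaled weights), whereas you invoke Claim~0 of \cor{variable-time-walk}, which first attaches a dummy checking edge $(u,(u,g_u))$ of weight $\w_u$ to every vertex. For the first bound this is harmless and your calculation of ${\cal C}_{\sigma,\M}\le 1/\eps$ and of the weighted $\ell_2$-average is correct. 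For the second bound there is a small mismatch in hypotheses that you gloss over: Claim~0's ``known $\mathbb{E}[T_{u,v}]$'' half requires query access to $\w_u'=\sum_{i\in L(u)}\w_{u,i}\mathbb{E}[T_{u,i}]$ for \emph{every} vertex $u$, because the star state of $u$ in the modified graph mixes the rescaled transition edges with the unrescaled checking edge, and for a leaf $(i,g(i))$ this amounts to querying $\pi(i)\mathbb{E}[T_i]$ given $i$ --- a stronger assumption than the corollary's (which only asks that the single superposition $\propto\sum_i\sqrt{\pi(i)\mathbb{E}[T_i]}\ket{i}$ be preparable); your remark that queryability is ``immediate'' because $\w_*'$ is a scalar does not dispose of the leaf weights, which depend on $i$ and on the input. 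The paper's direct application of \thm{graph-fwk} avoids this because the bare star has no checking edges: the only nontrivial star state is the centre's, which is exactly the assumed preparable state, and each leaf's star state is a single basis vector independent of its weight. So either apply \thm{graph-fwk} directly as the paper does, or note explicitly that your route needs the marginally stronger queryability assumption (or that the checking-edge weights can be chosen independent of $\mathbb{E}[T_i]$, which requires redoing Claim~0's bookkeeping).
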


\begin{proof}
First note that, letting $A\ket{i}=(-1)^{g(i)}$, a variable-time algorithm for $g$ is trivially a reversible variable-time algorithm (just use $\tilde{U}_t=U_t$). We will define a graph $G$ as follows. Let $V(G)=\{u_0,\dots,u_n\}$, where $u_0=(0,0)$ and for all $i\in [n]$, $u_i=(i,g(i))$. Let $V_0=\{u_0\}$, so $\sigma$ is just a point function, and $M=\{u_i:g(i)=1\}$. The graph will be a star with $u_0$ the centre, connected to each $u_i$ by an edge of weight $\pi(i)$.
Define label sets $L(u_0)=L^+(u_0)=[n]$, and for all $i\in [n]$, $L(u_i)=L^-(u_i)=\{\leftarrow\}$, and transition function $f_{u_0}(i)=(i,g(i))$, $f_{u_i}(\leftarrow)=u_0$ for $i\in [n]$. Thus, the variable-time subroutine can be used to implement the edge transition subroutine in this graph. 

We note that it is trivial to generate $\ket{\sigma}=\ket{u_0}$, as well as to check if $(i,b)$ is marked, by checking if $b=1$. Generating star states is trivial for all $\{u_i:i\in [n]\}$, each of which has a single neighbour, and generating $\ket{\psi_\star(u_0)}=\ket{\pi}$ is assumed to have unit cost. Thus, to apply \thm{graph-fwk}, we just need to check the positive and negative conditions. We will use $\alpha_t=t+1$ for the first part of the proof.

\vskip7pt

\noindent\textbf{Positive Condition:} Define $\pi(M):=\sum_{u_i\in M}\pi(i)\geq \eps$. Let $\theta$ be defined as $\theta(u_0,u_i)=\frac{\pi(i)}{\pi(M)}$ for all $u_i\in M$, and 0 otherwise.
Then it is easy to see that $\theta$ satisfies \textbf{P1} and \textbf{P2}. \textbf{P4} is trivially satisfied, because all errors are 0. For \textbf{P3}, we have, using $\alpha_t=t+1$:
$$\sum_{(u,v)\in\overrightarrow{E}(G)}\frac{\theta(u,v)^2}{\w_{u,v}}\mathbb{E}\left[\sum_{t=0}^{T_{u,v}}\frac{1}{t+1}\right]
\leq \frac{2}{\pi(M)^2}\sum_{u_i\in M}\frac{\pi(i)^2}{\pi(i)}\mathbb{E}[\ln T_{u,v}]
\leq \frac{2}{\eps}\ln {\sf T}=:{\cal R}.$$

\vskip7pt

\noindent\textbf{Negative Condition:} \textbf{N2} is trivially satisfied because all errors are 0. For \textbf{N1}, we have: 
$$\sum_{(u,v)\in\overrightarrow{E}(G)}\w_{u,v}\mathbb{E}\left[\sum_{t=0}^{T_{u,v}}(t+1)\right]
\leq \sum_{i\in [n]}\pi(i)\mathbb{E}\left[T_{u,v}^2\right]\leq {\sf T}_{\mathrm{avg}}^2=:{\cal W}.$$
\noindent The stated complexity follows from \thm{graph-fwk}

For the case when the expected stopping times are known, we can set $\w_{u_0,u_i}=\pi(i) \mathbb{E}[T_i]$, but this means that we need to be able to generate a state proportional to
$\ket{\psi_\star(u_0)}=\sum_{i\in [n]}\sqrt{\pi(i)\mathbb{E}[T_i]}\ket{i}$
in unit time, which is precisely what we assume. Then to apply \thm{graph-fwk}, we just need to check the positive and negative conditions. We will now use $\alpha_t=1$.  

\vskip7pt

\noindent\textbf{Positive Condition:} Let $\theta$ be defined as $\theta(u_0,u_i)=\frac{\pi(i)}{\pi(M)}$ for all $u_i\in M$, and 0 otherwise.
Then it is easy to see that $\theta$ satisfies \textbf{P1} and \textbf{P2}. \textbf{P4} is trivially satisfied, because all errors are 0. For \textbf{P3}, we have:
$$\sum_{(u,v)\in\overrightarrow{E}(G)}\frac{\theta(u,v)^2}{\w_{u,v}}\mathbb{E}\left[\sum_{t=0}^{T_{u,v}}1\right]
= \frac{1}{\pi(M)^2}\sum_{u_i\in M}\frac{\pi(i)^2}{\pi(i)\mathbb{E}[T_i]}\mathbb{E}\left[T_i+1\right]
\leq \frac{2}{\eps}=:{\cal R}'.$$

\vskip7pt

\noindent\textbf{Negative Condition:} \textbf{N2} is trivially satisfied because all errors are 0. For \textbf{N1}, we have: 
$$\sum_{(u,v)\in\overrightarrow{E}(G)}\w_{u,v}\mathbb{E}\left[\sum_{t=0}^{T_{u,v}}1\right]
=\sum_{i\in [n]}\pi(i)\mathbb{E}\left[T_{u,v}\right]\mathbb{E}\left[T_{u,v}+1\right]
\leq 2 ({\sf T}_{\mathrm{avg}}')^2 =:{\cal W}'. $$
\noindent The stated complexity follows from \thm{graph-fwk}.
\end{proof}

\noindent We proved the first part of \cor{variable-time} (i.e. $\mathbb{E}[T_i]$ unknown) by applying \thm{graph-fwk} with the setting $\alpha_t=t+1$. 
By setting $\alpha_t=1$, or $\alpha_t=1/(t+1)$, we can get similar results with different complexities, as shown in \tabl{variable-time-search}. The proofs proceed as in the $\alpha_t=t+1$ case, except that we use different flow settings, $\theta$, to upper bound \textbf{P3}. For the $\alpha_t=1$ case, we obtain the expression in \tabl{variable-time-search} row (2) by setting $\theta(u_0,u_i)=\frac{\pi(i)/\mathbb{E}[T_i]}{\sum_{j\in M}\pi(j)/\mathbb{E}[T_j]}$. For the $\alpha_t=1/(t+1)$ case, we obtain the expression in \tabl{variable-time-search} row (3) by setting $\theta(u_0,u_i)=\frac{\pi(i)/\mathbb{E}[T_i^2]}{\sum_{j\in M}\pi(j)/\mathbb{E}[T_j^2]}$. One can verify that these settings of $\theta$ satsify the conditions of \thm{graph-fwk}.
Analogous to the second part of \cor{variable-time}, if the values $\mathbb{E}[T_i]$ are known to the extent that we can efficiently generate a state proportional to $\sum_{i\in [n]}\sqrt{\pi(i)/\mathbb{E}[T_i]}\ket{i}$, which allows the edge weights to be set to $\w_{u_0,u_i}=\pi(i)/\mathbb{E}[T_i]$, then 
by using $\alpha_t=1$ and $\theta(u_0,u_i)=\frac{\pi(i)/\mathbb{E}[T_i]^2}{\sum_{j\in M}\pi(j)/\mathbb{E}[T_j]^2}$, we can get a complexity like the one in (3), but better by a $\sqrt{\log{\sf T}}$ factor, and with $\mathbb{E}[T_i]^2$ in place of $\mathbb{E}[T_i^2]$.

\begin{table}
\centering
\begin{tabular}{|c|c|c|c|}
\hline
& $\alpha_t$ & Complexity & Special case $M=\{m\}$, $\pi(m)\geq \eps$\\
\hline
(1) & $\alpha_t=t+1$ & $\displaystyle \sqrt{\frac{\sum_{i\in [n]}\pi(i)\mathbb{E}[T_i^2]}{\min_{M:M\neq \emptyset}\sum_{i\in M}\pi(i)}}$ 
 & $\displaystyle \sqrt{\frac{1}{\eps}\sum_{i\in [n]}\pi(i)\mathbb{E}[T_i^2]}$
\\
\hline
(2) & $\alpha_t = 1$ & $\displaystyle \sqrt{\frac{\sum_{i\in [n]}\pi(i)\mathbb{E}[T_i]}{\min_{M:M\neq \emptyset}\sum_{i\in M}\frac{\pi(i)}{\mathbb{E}[T_i]}}}$
& $\displaystyle \sqrt{\frac{1}{\eps}\sum_{i\in [n]}\pi(i)\mathbb{E}[T_i]\mathbb{E}[T_m]}$\\
\hline
(3) & $\alpha_t=1/(t+1)$ & $\displaystyle\frac{1}{\sqrt{\min_{M:M\neq \emptyset}\sum_{i\in M}\frac{\pi(i)}{\mathbb{E}[T_i^2]}}}$ & 
$\displaystyle \sqrt{\frac{1}{\eps}\mathbb{E}[T_m^2]}$\\
\hline
\end{tabular}
\caption{Here we show three different versions of \cor{variable-time} that can be obtained by different settings of $\alpha_t$. (1) corresponds to \cor{variable-time}. We neglect log factors. For (3), as with (1) (\cor{variable-time}) if the weights are known, we can get a similar expression with $\mathbb{E}[T_i]^2$ instead of $\mathbb{E}[T_i^2]$, which is better when the variance of $T_i$ is large, and a slightly better dependence on $\log {\sf T}$. 
To allow a comparison of these three complexities, we show the special case where there is a single marked element $M=\{m\}$, promised to have weight at least $\eps$. In that case, we can see that which of the three expressions is optimal depends on how $\mathbb{E}[T_m]$ compares to the average: Suppose for simplicity that the stopping times have 0 variance, so the $T_i$ are just some natural numbers. If $T_m > \sum_{i\in [n]}\pi(i)T_i^2/\sum_{i\in [n]}\pi(i) T_i$, then (1) is the smallest. If $\sum_{i\in [n]}\pi(i) T_i < T_m < \sum_{i\in [n]}\pi(i)T_i^2/\sum_{i\in [n]}\pi(i) T_i$, then (2) is the smallest. And finally, if $T_m<\sum_{i\in [n]}\pi(i) T_i$, then (3) is the smallest.
This does not guarantee that each of the three variable-time search results is useful. This is somehow just saying that if we know $T_m$ is relatively small for the marked $m$, then we should not spend too much time in any subroutine, whereas if we know $T_m$ is relatively large, then we should. Perhaps there would be a simpler way to exploit such information.
}\label{tab:variable-time-search}
\end{table}

We stress that these results could also be applied to other quantum walk search algorithms, for example, in a version of element distinctness in which the query costs are variable. We leave specific applications for future work.

\subsection{Proof of \thm{graph-fwk}}\label{sec:graph-fwk-proof}

We start by defining the parameters of the phase estimation algorithm (see \sec{phase-estimation}) that will prove \thm{graph-fwk}.
Our algorithm will work on the space: 
\begin{multline}\label{eq:graph-fwk-H}
H=\mathrm{span}\left\{\ket{\rightarrow}\ket{u,i}\ket{0,0}\ket{0},\ket{\leftarrow}\ket{u,j}\ket{0,0}\ket{0}:u\in V(G),i\in L^+(u),j\in L^-(u) \right\}\\
\oplus\bigoplus_{t=1}^{\sf T}\mathrm{span}\left\{\ket{\rightarrow}\ket{u,i}\ket{a,z}\ket{t},\ket{\leftarrow}\ket{u,j}\ket{a,z}\ket{t}:u\in V(G),i\in L^+(u),j\in L^-(u),a\in{\cal A},z\in {\cal Z}_{\geq t}\right\}\\
\subset\mathrm{span}\left\{ \ket{d}\ket{u,i}\ket{a,z}\ket{t}:d\in\{\leftarrow,\rightarrow\}, u\in V(G),i\in L(u), a\in{\cal A},z\in{\cal Z},t\in\{0,\dots,{\sf T}\}\right\}.
\end{multline}
We first define the \emph{star states} in $H$. To this end, suppose we obtain a graph $G'$ from $G$ by adding a single new vertex, $v_0$, connected to each vertex $u\in V_0$ by an edge of weight ${\w_0\sigma(u)}$ and each vertex in $M$ by an edge of weight $\w_\M$  (for $\w_0$ and $\w_\M$ to be determined later). We can suppose each of these new edges have the label 0, and so we set:
$$L^+_{G'}(u) = \left\{\begin{array}{ll}
L^+_G(u)\cup\{0\} & \mbox{if }u\in V_0\cup M\\
L^+_G(u) & \mbox{else.}
\end{array}\right.$$
Then we define the star states:
\begin{equation}
\forall u\in V(G),\; \ket{\psi_\star(u)}=\sum_{i\in L_{G'}^+(u)}\sqrt{\w_{u,i}}\ket{\rightarrow}\ket{u,i}\ket{0,0}\ket{0}-\sum_{j\in L_{G}^-(u)}\sqrt{\w_{u,j}}\ket{\leftarrow}\ket{u,j}\ket{0,0}\ket{0}.\label{eq:graph-fwk-star-states}
\end{equation}
Throughout this section, we will let $L^+(u)=L^+_{G'}(u)$, so that we are always talking about the neighbours with respect to $G'$ unless explicitly stated otherwise.

Our algorithm will be based on the star states, as well as the \emph{transition states} $\Psi_0\cup \Psi_1$ (see \defin{transition-states}) of the variable stopping time transition subroutine we are assuming. We recall the form of those states in the setting of the graph transition algorithm, where 
$${\cal I}=\{(u,i):u\in V(G),i\in L^+(i)\}
\;\mbox{ and }\;
A\ket{u,i}=\ket{v,j}$$
where $v=f_u(i)$, and $j=f_v^{-1}(u)$:
\begin{equation}\label{eq:graph-fwk-alg-states}
\begin{split}
\Psi_t^{u,v,\rightarrow}:=\Psi_t^{u,i,\rightarrow}&=\left\{\ket{\psi_{a,z,t}^{u,i,\rightarrow}}=\ket{\rightarrow}\ket{u,i}(\sqrt{\alpha_t}\ket{a,z}\ket{t} - \sqrt{\alpha_{t+1}}U_{t+1}^{u,i}\ket{a,z}\ket{t+1}):a\in{\cal A},z\in{\cal Z}_{>t}\right\}\\
\Psi_t^{u,v,\leftarrow}:=\Psi_t^{u,i,\leftarrow}&=\left\{\ket{\psi_{a,z,t}^{u,i,\leftarrow}}=\ket{\leftarrow}\ket{v,j}(\sqrt{\alpha_t}\ket{a,z}\ket{t} - \sqrt{\alpha_{t+1}}U_{t+1}^{u,i}\ket{a,z}\ket{t+1}):a\in{\cal A},z\in{\cal Z}_{>t}\right\}\\
\Psi_t^{u,v,\leftrightarrow}:=\Psi_t^{u,i,\leftrightarrow}&=\left\{\ket{\psi_{a,z,t}^{u,i,\leftrightarrow}}=\sqrt{\alpha_t}\left(\ket{\rightarrow}\ket{u,i}-\ket{\leftarrow}A_a\ket{u,i}\right)\ket{a,z}\ket{t}:a\in{\cal A},z\in{\cal Z}_{t}\right\}.
\end{split}
\end{equation}
\noindent Also recall from \defin{transition-states}:
\begin{align*}
\Psi_b = \bigcup_{\substack{u\in V(G)\\ i\in L^+(u)}}\bigcup_{\substack{t=0\\ t=b\;\mathrm{mod} 2}}^{{\sf T}-1}\left(\Psi_t^{u,i,\rightarrow}\cup\Psi_t^{u,i,\leftarrow}\right)\cup\bigcup_{\substack{t=1\\ t=b\;\mathrm{mod}2}}^{{\sf T}}\Psi_{t}^{u,i,\leftrightarrow}.
\end{align*}
Then we define:
\begin{equation}
\begin{split}
\Psi^{\cal A} &:=\Psi_0
\qquad\mbox{and}\qquad
\Psi^{\cal B} := \Psi_1\cup \underbrace{\{\ket{\psi_\star(u)}:u\in V(G)\}}_{=:\Psi_\star}.
\end{split}\label{eq:graph-fwk-Psi}
\end{equation}
Each of these sets is pairwise orthogonal. This follows from the fact that each of $\Psi_0$ and $\Psi_1$ is a pairwise orthogonal set (\clm{variable-orthog}), the star states are pairwise orthogonal (easily verified), and each star state is orthogonal to $\Psi_1$ (easily verified). Another way of seeing this orthogonality is that starting from the graph $G$, we can form an orthogonality graph, where we replace each edge of $G$ with a \emph{ladder gadget} consisting of the orthogonality graph in \fig{variable-overlap-graph}. An example of such a graph when $G$ is a triangle is shown in \fig{G-overlap-graph}. Each node of the graph represents a set of pairwise orthogonal vectors, and there is an edge between two nodes if and only if states in the sets overlap. Since this graph is bipartite, and $\Psi^{\cal A}$ and $\Psi^{\cal B}$ are a bipartition (i.e. they are each independent sets) they are each pairwise orthogonal.  

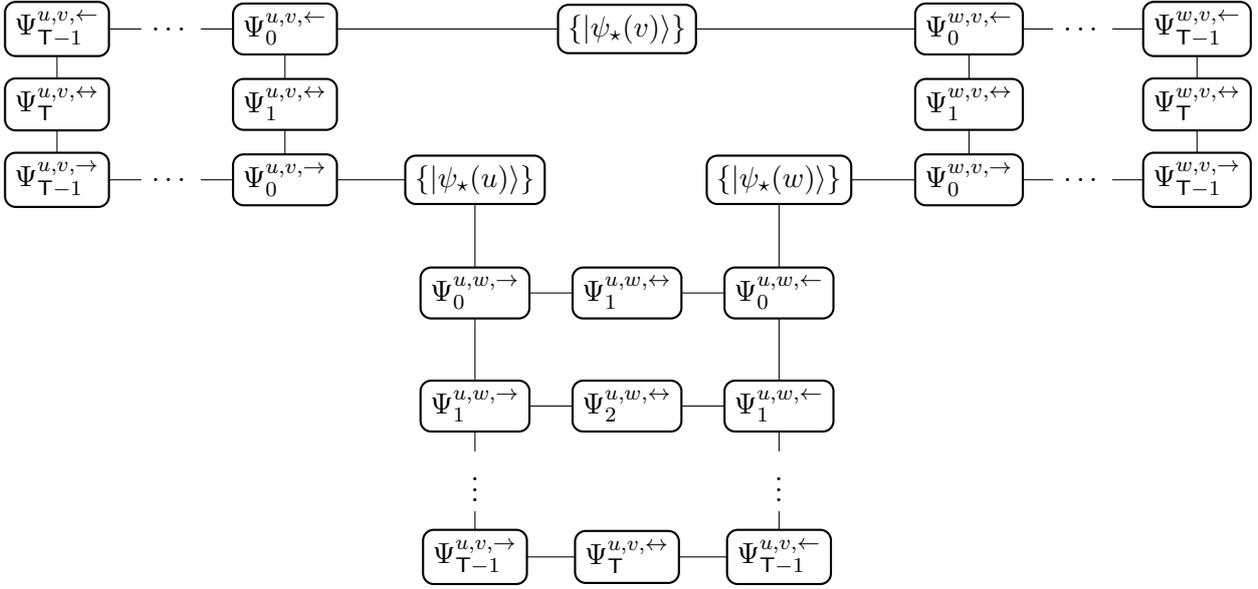
\begin{figure}
\begin{tikzpicture}
\draw (-2,0)--(-7.5,0)--(-7.5,2)--(0,2);
\draw (-4.5,0)--(-4.5,2);

\draw (2,0)--(7.5,0)--(7.5,2)--(0,2);
\draw (4.5,0)--(4.5,2);

\draw(-2,0)--(-2,-5);	\draw(2,0)--(2,-5);
\draw(-2,-1.5)--(2,-1.5);
\draw(-2,-3)--(2,-3);
\draw(-2,-5)--(2,-5);

\node[rectangle, rounded corners, draw, thick, fill=white] at (-4.5,2) {$\Psi_0^{u,v,\leftarrow}$};
\node[rectangle, rounded corners, draw, thick, fill=white] at (-4.5,1) {$\Psi_1^{u,v,\leftrightarrow}$};
\node[rectangle, rounded corners, draw, thick, fill=white] at (-4.5,0) {$\Psi_0^{u,v,\rightarrow}$};

\node[rectangle, fill=white] at (-6,2) {$\dots$};
\node[rectangle, fill=white] at (-6,0) {$\dots$};

\node[rectangle, rounded corners, draw, thick, fill=white] at (-7.5,2) {$\Psi_{{\sf T}-1}^{u,v,\leftarrow}$};
\node[rectangle, rounded corners, draw, thick, fill=white] at (-7.5,1) {$\Psi_{{\sf T}}^{u,v,\leftrightarrow}$};
\node[rectangle, rounded corners, draw, thick, fill=white] at (-7.5,0) {$\Psi_{{\sf T}-1}^{u,v,\rightarrow}$};

\node[rectangle, rounded corners, draw, thick, fill=white] at (0,2) {$\{\ket{\psi_\star(v)}\}$};
\node[rectangle, rounded corners, draw, thick, fill=white] at (-2,0) {$\{\ket{\psi_\star(u)}\}$};
\node[rectangle, rounded corners, draw, thick, fill=white] at (2,0) {$\{\ket{\psi_\star(w)}\}$};

\node[rectangle, rounded corners, draw, thick, fill=white] at (4.5,2) {$\Psi_0^{w,v,\leftarrow}$};
\node[rectangle, rounded corners, draw, thick, fill=white] at (4.5,1) {$\Psi_1^{w,v,\leftrightarrow}$};
\node[rectangle, rounded corners, draw, thick, fill=white] at (4.5,0) {$\Psi_0^{w,v,\rightarrow}$};

\node[rectangle, fill=white] at (6,2) {$\dots$};
\node[rectangle, fill=white] at (6,0) {$\dots$};

\node[rectangle, rounded corners, draw, thick, fill=white] at (7.5,2) {$\Psi_{{\sf T}-1}^{w,v,\leftarrow}$};
\node[rectangle, rounded corners, draw, thick, fill=white] at (7.5,1) {$\Psi_{{\sf T}}^{w,v,\leftrightarrow}$};
\node[rectangle, rounded corners, draw, thick, fill=white] at (7.5,0) {$\Psi_{{\sf T}-1}^{w,v,\rightarrow}$};

\node[rectangle, rounded corners, draw, thick, fill=white] at (-2,-1.5) {$\Psi_0^{u,w,\rightarrow}$};
\node[rectangle, rounded corners, draw, thick, fill=white] at (0,-1.5) {$\Psi_1^{u,w,\leftrightarrow}$};
\node[rectangle, rounded corners, draw, thick, fill=white] at (2,-1.5) {$\Psi_0^{u,w,\leftarrow}$};

\node[rectangle, rounded corners, draw, thick, fill=white] at (-2,-3) {$\Psi_1^{u,w,\rightarrow}$};
\node[rectangle, rounded corners, draw, thick, fill=white] at (0,-3) {$\Psi_2^{u,w,\leftrightarrow}$};
\node[rectangle, rounded corners, draw, thick, fill=white] at (2,-3) {$\Psi_1^{u,w,\leftarrow}$};

\node[rectangle, fill=white] at (-2,-4) {$\vdots$};		\node[rectangle, fill=white] at (2,-4) {$\vdots$};

\node[rectangle, rounded corners, draw, thick, fill=white] at (-2,-5) {$\Psi_{{\sf T}-1}^{u,v,\rightarrow}$};
\node[rectangle, rounded corners, draw, thick, fill=white] at (0,-5) {$\Psi_{{\sf T}}^{u,v,\leftrightarrow}$};
\node[rectangle, rounded corners, draw, thick, fill=white] at (2,-5) {$\Psi_{{\sf T}-1}^{u,v,\leftarrow}$};
\end{tikzpicture}
\caption{If $G$ is a triangle with vertices $u$, $v$ and $w$, and $\protect\overrightarrow{E}(G)=\{(u,v),(w,v),(u,w)\}$, this figure shows the overlap graph for the spaces that make up $\Psi^{\cal A}\cup\Psi^{\cal B}$. If $G$ is any graph, we can get a similar overlap graph by replacing each edge of $G$ with a ``ladder'' gadget, like those shown here. While the height of each ladder is ${\sf T}={\sf T}_{\max}$, if there is some probability of stopping at an earlier time, we can get flow through earlier rungs, thus saving time. The weights of the rungs should correspond to the probability that the algorithm halts at that time.}\label{fig:G-overlap-graph}
\end{figure}

The algorithm in \thm{graph-fwk} will be a phase estimation algorithm, a la \thm{phase-est-fwk}, on a unitary $U_{\cal AB}=(2\Pi_{\cal A}-I)(2\Pi_{\cal B}-I)$, where ${\cal A}=\mathrm{span}\{\Psi^{\cal A}\}$ and ${\cal B}=\mathrm{span}\{\Psi^{\cal B}\}$, on initial state:
\begin{equation}
\ket{\psi_0} = \ket{\rightarrow}\ket{\sigma}\ket{0,0}\ket{0}.\label{eq:graph-fwk-init}
\end{equation}

We have the following corollary of \lem{variable-time-unitary}, and the fact that $\Psi_\star$ can be generated in unit cost:
\begin{corollary}\label{cor:unitary}
The unitary $U_{\cal AB}$ can be implemented in $O(\log {\sf T})$ complexity.
\end{corollary}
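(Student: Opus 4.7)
The plan is to implement $U_{\cal AB} = (2\Pi_{\cal A}-I)(2\Pi_{\cal B}-I)$ by implementing each of the two reflections separately. For the first reflection, ${\cal A} = \mathrm{span}\{\Psi_0\}$ where $\Psi_0$ is the even-parity set of transition and reversal states of the reversible variable-time transition subroutine. So I can apply \lem{variable-time-unitary} directly, with input register ${\cal I}=\{(u,i):u\in V(G),i\in L^+(u)\}$ and isometry $A\ket{u,i}=\ket{v,j}$, which is available by the hypotheses of \thm{graph-fwk}. This yields $(2\Pi_{\cal A}-I)$ in cost $O(\log {\sf T})$.

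For $(2\Pi_{\cal B}-I)$, the key observation is that $\Psi_1$ and $\Psi_\star$ are mutually orthogonal, because every state in $\Psi_1$ has support only on time-register values $t\in\{1,\dots,{\sf T}\}$, whereas every star state defined in \eq{graph-fwk-star-states} sits at $t=0$. Hence $\Pi_{\cal B}=\Pi_{\Psi_1}+\Pi_{\Psi_\star}$. \lem{variable-time-unitary} again gives $(2\Pi_{\Psi_1}-I)$ in cost $O(\log {\sf T})$. The reflection $(2\Pi_{\Psi_\star}-I)$ is implementable in unit cost by the standard recipe $U_\star R_0 U_\star^\dagger$, where $U_\star$ is the star-state generation unitary (available in unit cost by the hypothesis of \thm{graph-fwk}, extended to the augmented graph $G'$ by a single-qubit rotation that mixes in the new edge of weight $\w_0\sigma(u)$ or $\w_\M$) and $R_0$ flips the sign of states whose $\ket{a,z,t}$-ancillas are all $\ket{0}$.

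To assemble these pieces into $(2\Pi_{\cal B}-I)$, I apply $(2\Pi_{\Psi_1}-I)$ unconditionally; since $\Pi_{\Psi_1}$ vanishes on the $t=0$ subspace, this acts correctly as $(2\Pi_{\cal B}-I)$ on $t\geq 1$ while producing a spurious $-I$ on $t=0$. I then apply $(I-2\Pi_{\Psi_\star})$ controlled on the time register being $\ket{0}$, which composes with the preceding $-I$ on the $t=0$ block to yield $(2\Pi_{\Psi_\star}-I)$ and acts as the identity elsewhere. The controlled step costs only $O(1)$ more than the unconditional version of $(2\Pi_{\Psi_\star}-I)$. Composing the two reflections gives $U_{\cal AB}$ in total cost $O(\log {\sf T})$. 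The only mildly finicky point is the sign bookkeeping when combining reflections on two orthogonal subspaces, but since $\Psi_1$ and $\Psi_\star$ are cleanly separated by the $t$-register, this is resolved by the single controlled fix-up above; no ideas beyond \lem{variable-time-unitary} and the star-state generation assumption are needed.
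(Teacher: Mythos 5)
Your proposal is correct and follows essentially the same route as the paper, which justifies this corollary in a single sentence by citing \lem{variable-time-unitary} together with the unit-cost generation of $\Psi_\star$; you have simply filled in the details that the paper leaves implicit, in particular the observation that $\Psi_1$ and $\Psi_\star$ are separated by the time register and the controlled fix-up needed to combine $(2\Pi_{\Psi_1}-I)$ and $(2\Pi_{\Psi_\star}-I)$ into $(2\Pi_{\cal B}-I)$ with the right sign. The sign bookkeeping you handle explicitly is exactly the kind of step the paper's proof sketch (and the analogous reflection $U_bR_bU_b^\dagger$ in \lem{variable-time-unitary}) absorbs into the choice of the checking reflection, so there is no substantive difference in approach.
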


\subsubsection{Positive Analysis}

Suppose there is a flow $\theta$ satisfying conditions \textbf{P1-P4} of \thm{graph-fwk}. We can extend $\theta$ to a circulation (see \defin{flow}) on $G'$ by assigning:
$$\forall u\in V_0\cup M,\;\theta(v_0,u) = \theta(u).$$
Then letting $\theta(u,i):=\theta(u,f_u(i))$, we define: 
\begin{equation}
\ket{w}:= \sum_{u\in V(G),i\in L_{G'}^+(u)}\frac{\theta(u,i)}{\sqrt{\w_{u,i}}}\ket{w_+(u,i)},\label{eq:graph-fwk-pos-witness}
\end{equation}
where 
$$\ket{w_+(u,i)}=\left(\ket{\rightarrow}\ket{u,i}+\ket{\leftarrow}\ket{v,j}\right)\sum_{t=0}^{{\sf T}}\frac{1}{\sqrt{\alpha_t}}\ket{w^t(u,i)}\ket{t}$$ 
is the subroutine's positive history state on input $(u,i)$, as defined in \defin{algorithm-states}. We will prove that this is a positive witness. 

\begin{lemma}\label{lem:graph-pos-witness}
The state $\ket{w}$ defined in \eq{graph-fwk-pos-witness} is a $\delta$-positive witness for 
\begin{align*}
\delta &= 3\w_0\sum_{u\in V(G),i\in L^+(u)}\frac{\theta(u,i)^2}{\w_{u,i}}\mathbb{E}\left[\frac{\epsilon_{u,i}^{T_{u,i}}}{\alpha_{T_{u,i}}}\right]. 
\end{align*}
Letting $\w_0={\cal R}^{-1}$, for sufficiently large $\w_\M$, $\ket{w}$ has complexity 
\begin{align*}
\frac{\norm{\ket{w}}^2}{|\braket{\psi_0}{w}|^2} &\leq 2\w_0\sum_{(u,v)\in \overrightarrow{E}(G)}\frac{\theta(u,v)^2}{\w_{u,v}}\mathbb{E}\left[ \sum_{t=0}^{T_{u,i}-1}\frac{1}{\alpha_t} \right]+4 \leq 6=: c_+.
\end{align*}
\end{lemma}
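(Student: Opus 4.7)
The plan is to verify the two requirements of \defin{pos-witness}: non-trivial overlap $\braket{\psi_0}{w}\neq 0$, and near-orthogonality $\|\Pi_{\Psi_b}\ket{w}\|^2 \leq \delta\|\ket{w}\|^2$ for each of $\Psi_0$, $\Psi_1$, and the star states $\Psi_\star$. The key structural observation is that distinct summands $\frac{\theta(u,i)}{\sqrt{\w_{u,i}}}\ket{w_+(u,i)}$ are pairwise orthogonal, since $\ket{w_+(u,i)}$ is supported on $\ket{\rightarrow}\ket{u,i}(\cdot)$ and $\ket{\leftarrow}\ket{v,j}(\cdot)$ (with $(v,j)$ the reverse of $(u,i)$), and different oriented edges give disjoint such sets. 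This orthogonality also lifts to the projections onto $\Psi_0$ and $\Psi_1$.

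First I would compute $\braket{\psi_0}{w}$: since $\ket{w^0(u,i)}=\ket{0,0}$ and $\alpha_0=1$, only the $t=0$ parts of $\ket{w_+(u,i)}$ can overlap with $\ket{\psi_0}=\ket{\rightarrow}\ket{\sigma}\ket{0,0}\ket{0}$, and only the artificial-edge terms $i=0$ at vertices $u\in V_0$ survive. Using $\w_{u,0}=\w_0\sigma(u)$, $\theta(u,0)=-\theta(u)$ (flow conservation at $u$ in $G'$), and condition \textbf{P1}, this collapses to $|\braket{\psi_0}{w}|^2=1/\w_0$. The transition-state projections are then handled summand-wise via \clm{ortho-alg-trans}, giving $\|\Pi_{\Psi_b}\ket{w}\|^2 \leq 2\sum_{(u,i)}\frac{\theta(u,i)^2}{\w_{u,i}}\mathbb{E}[\epsilon^{T_{u,i}}_{u,i}/\alpha_{T_{u,i}}]$, and Cauchy--Schwarz gives $\|\ket{w}\|^2\geq 1/\w_0$, which together yield the claimed $\delta$.

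The more substantive step is showing $\braket{\psi_\star(u)}{w}=0$ for every $u\in V(G)$. The $\ket{\rightarrow}$-branch contribution from $\ket{w_+(u,i)}$ with $i\in L^+_{G'}(u)$ produces $\sum_{v\in\Gamma^+_{G'}(u)}\theta(u,v)$ once the $\sqrt{\w_{u,i}}$ factors cancel, while the $\ket{\leftarrow}$-branch contribution (coming from $\ket{w_+(v,f_v^{-1}(u))}$ for $(v,u)\in\overrightarrow{E}(G)$) produces $\sum_{v\in\Gamma^-(u)}\theta(u,v)$ after using the anti-symmetry $\theta(v,f_v^{-1}(u))=-\theta(u,f_u^{-1}(v))$ to absorb the minus sign in the star state. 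Their sum is $\theta_{G'}(u)=0$, since $\theta$ was extended to a circulation on $G'$.

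Finally, for the complexity ratio, \cor{pos-witness-complexity} and summand-orthogonality give $\|\ket{w}\|^2=2\sum_{(u,i)\in\overrightarrow{E}(G')}\frac{\theta(u,i)^2}{\w_{u,i}}\mathbb{E}[\sum_{t=0}^{T_{u,i}}1/\alpha_t]$. I would split this into (i) the real edges of $G$ (handled by \textbf{P3} with $\w_0={\cal R}^{-1}$, producing the main term $2\w_0\sum_{(u,v)\in\overrightarrow{E}(G)}\frac{\theta(u,v)^2}{\w_{u,v}}\mathbb{E}[\sum_{t=0}^{T_{u,v}-1}1/\alpha_t]$ after peeling off the topmost rung of each ladder), (ii) the artificial edges at $V_0$ (bounded by \textbf{P2}), and (iii) the artificial edges at $M$ (negligible for $\w_\M$ large). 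The main obstacle will be the careful bookkeeping that absorbs the peeled $\mathbb{E}[1/\alpha_{T_{u,v}}]$ pieces and the \textbf{P2} constant into the stated additive $+4$; this is tedious but routine, and the slack in the final inequality $\leq 6$ comfortably absorbs the constants.
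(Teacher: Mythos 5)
Your proposal is correct and follows essentially the same route as the paper's proof: orthogonality to the forward/backward transition states via \clm{ortho-alg-trans}, orthogonality to the star states from the circulation condition $\sum_{i\in L(u)}\theta(u,i)=0$ on $G'$, the reversal-state overlap bounded by the error expectation, and the norm computed via \cor{pos-witness-complexity} and split into real edges (\textbf{P3}), $V_0$-edges (\textbf{P2}), and $M$-edges (killed by large $\w_\M$). The only (harmless) deviation is that you lower-bound $\norm{\ket{w}}^2$ by Cauchy--Schwarz against $\ket{\psi_0}$, whereas the paper uses \textbf{P2} to get $\norm{\ket{w}}^2\geq 1/(3\w_0)$; your bound is in fact slightly tighter.
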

\begin{proof}
We begin by showing that $\ket{w}$ is almost orthogonal to ${\cal A}+{\cal B}$. 
We first note that by \clm{ortho-alg-trans} and the definition of $\ket{w}$, $\ket{w}$ is orthogonal to all $\ket{\psi_{a,z,t}^{\leftarrow,u,i}}$ and all $\ket{\psi_{a,z,t}^{\rightarrow,u,i}}$. We now show that $\ket{w}$ is also orthogonal to all star states. First note that star states have $\ket{0}$ in the last (time) register. 
Referring to \defin{algorithm-states}, and using $A\ket{u,i}=\ket{v,j}$, we have:
$$(I\otimes\ket{0}\bra{0})\ket{w_+(u,i)} = \frac{1}{\sqrt{\alpha_0}}(\ket{\rightarrow}\ket{u,i}+\ket{\leftarrow}\ket{v,j})\ket{w^0(u,i)}\ket{0}.$$
Since $\ket{w^0(u,i)}=\ket{0,0}$ and $\alpha_0=1$, we have, using the notation $\overrightarrow{(v,j)}=(u,i)$ for $i\in L^+(u)$, $v=f_u(i)$, and $j=f_v^{-1}(u)$:
\begin{align*}
\braket{\psi_\star(u)}{w} &= \sum_{i\in L^+(u)}\sqrt{\w_{u,i}}\frac{\theta(u,i)}{\sqrt{\w_{u,i}}}\frac{1}{\sqrt{\alpha_0}}\braket{0,0}{{w}^0(u,i)}+\sum_{j\in L^-(u)}\sqrt{\w_{u,j}}\frac{\theta(u,j)}{\sqrt{\w_{u,j}}}\frac{1}{\sqrt{\alpha_0}}\braket{0,0}{{w}^0\overrightarrow{(u,j)}}\\
&=\sum_{i\in L(u)}\theta(u,i)=0,
\end{align*}
since $\theta$ is a circulation on $G'$. Thus, the only states in $\Psi^{\cal A}\cup\Psi^{\cal B}$ not orthogonal to $\ket{w}$ are the states $\ket{\psi_{a,z,t}^{u,i,\leftrightarrow}}$ in \eq{graph-fwk-alg-states}. We claim that such a state only overlaps $\ket{w}$ in the $\ket{w^+(u,i)}$ part. That is because $\ket{w^+(u,i)}$ has $(\ket{\rightarrow}\ket{u,i}+\ket{\leftarrow}\ket{v,j})$ in the first register, whereas $\ket{\psi_{a,z,t}^{\leftrightarrow,u,i}}$ has $(\ket{\rightarrow}\ket{u,i}-\ket{\leftarrow}A_a\ket{u,i})$, and we are assuming that $\bra{v',j'}A_a\ket{u,i}=0$ for all $a$ whenever $(v',j')\neq (v,j)$, where $v=f_u(i)$ and $j=f_v^{-1}(u)$. Thus:
\begin{align*}
\braket{\psi_{a,z,t}^{u,i,\leftrightarrow}}{w} &= \frac{\theta(u,i)}{\sqrt{\w_{u,i}}}\braket{\psi_{a,z,t}^{u,i,\leftrightarrow}}{w_+(u,i)}
= \frac{\theta(u,i)}{\sqrt{\w_{u,i}}}\left(1-\bra{u,i}A_a^\dagger\ket{v,j}\right)\braket{a,z}{w^t(u,i)}
\end{align*}
by \eq{ortho-alg-trans}. From this we get 
\begin{align*}
|\braket{\psi_{a,z,t}^{u,i,\leftrightarrow}}{w}|^2 &\leq \left\{\begin{array}{ll}
0 & \mbox{if }a = g(u,i)\\
4\frac{\theta(u,i)^2}{{\w_{u,i}}}|\braket{a,z}{w^t(u,i)}|^2 & \mbox{else.}
\end{array}\right.
\end{align*}
Thus, using the fact that reversal states are pairwise orthogonal, we have:
\begin{align*}
\norm{\Pi_{\cal A}\ket{w} + \Pi_{\cal B}\ket{w}}^2 &= \sum_{\substack{u\in V(G),\\ i\in L^+(u)}}\sum_{\substack{a\in {\cal A},\\ t\in \{0,\dots,{\sf T}\},\\ z\in {\cal Z}_t}}\frac{|\braket{\psi_{a,z,t}^{u,i,\leftrightarrow}}{w}|^2}{\norm{\ket{\psi_{a,z,t}^{u,i,\leftrightarrow}}}^2}
= 4\sum_{\substack{u\in V(G),\\ i\in L^+(u)}}\sum_{\substack{a\in{\cal A}\setminus\{g(u,i)\},\\ t\in \{0,\dots,{\sf T}\},\\ z\in {\cal Z}_t}}\frac{\theta(u,i)^2}{\w_{u,i}}\frac{|\braket{a,z}{w^t(u,i)}|^2}{2\alpha_t}.
\end{align*}
Letting $\Lambda_t^{\mathrm{bad}}$ be the projector onto the part of the state that outputs at time $t$ and is incorrect, we have
\begin{align*}
\sum_{\substack{a\in{\cal A}\setminus\{g(u,i)\},\\ z\in {\cal Z}_t}}|\braket{a,z}{w^t(u,i)}|^2=\norm{\Lambda_t^{\mathrm{bad}}\ket{w^t(u,i)}}^2 = \bar{p}_{u,i}(t)\epsilon_{u,i}^t,
\end{align*}
where $\bar{p}_{u,i}(t)$ is the probability that the algorithm outputs at time $t$, and $\epsilon_{u,i}^t$ is the probability that the algorithm errs on input $u,i$ given that it outputs at time $t$. Thus:
\begin{align*}
\norm{\Pi_{\cal A}\ket{w} + \Pi_{\cal B}\ket{w}}^2 
&= 2\sum_{\substack{u\in V(G),\\ i\in L^+(u)}}\frac{\theta(u,i)^2}{\w_{u,i}}\sum_{\substack{t=0}}^{{\sf T}}\frac{\bar{p}_{u,i}(t)\epsilon_{u,i}^t}{\alpha_t}
= 2\sum_{\substack{u\in V(G),\\ i\in L^+(u)}}\frac{\theta(u,i)^2}{\w_{u,i}}\mathbb{E}\left[\frac{\epsilon_{u,i}^{T_{u,i}}}{\alpha_{T_{u,i}}}\right],
\end{align*}
where $T_{u,i}=T_{u,v}$ is the stopping time of the subroutine on input $(u,i)$. 
We can conclude that $\ket{w}$ is a $\delta$-positive witness for any $\delta$ such that
\begin{equation}
\delta \geq \frac{2}{\norm{\ket{w}}^2}\sum_{\substack{u\in V(G),\\ i\in L^+(u)}}\frac{\theta(u,i)^2}{\w_{u,i}}\mathbb{E}\left[\frac{\epsilon_{u,i}^{T_{u,i}}}{\alpha_{T_{u,i}}}\right].\label{eq:graph-pos-delta}
\end{equation}

Next we compute $\norm{\ket{w}}^2$. First we have, 
by \cor{pos-witness-complexity}:
\begin{align*}
\norm{\ket{w_+(u,i)}}^2 &= 2\mathbb{E}\left[ \sum_{t=0}^{T_{u,i}-1}\frac{1}{\alpha_t} \right],
\end{align*}
from which we can compute:
\begin{align}
\norm{\ket{w}}^2 &= 2\sum_{\substack{u\in V(G), i\in L^+_G(u)}}\frac{\theta(u,i)^2}{\w_{u,i}}\mathbb{E}\left[ \sum_{t=0}^{T_{u,i}-1}\frac{1}{\alpha_t} \right] + \sum_{u\in V_0}\frac{\theta(u)^2}{\w_0\sigma(u)}+\sum_{u\in M}\frac{\theta(u)^2}{\w_\M}.\label{eq:graph-pos-wit-compl}
\end{align}
This already gives a lower bound of $\norm{\ket{w}}^2 \geq \frac{1}{\w_0}\sum_{u\in V_0}\frac{\theta(u)^2}{\sigma(u)}\geq \frac{1}{3\w_0}$
by \textbf{P2}, which, combined with \eq{graph-pos-delta}, yields the desired bound on $\delta$. Continuing from \eq{graph-pos-wit-compl}, and using $\sum_{u\in V_0}\frac{\theta(u)^2}{\sigma(u)}\leq 3$ (\textbf{P2}), we have:
\begin{align}
\norm{\ket{w}}^2 &\leq 2\sum_{u\in V(G), i\in L^+_G(u)}\frac{\theta(u,i)^2}{\w_{u,i}}\mathbb{E}\left[ \sum_{t=0}^{T_{u,i}-1}\frac{1}{\alpha_t} \right] + \frac{3}{\w_0}+\frac{1}{\w_0}\label{eq:graph-pos-wit-norm}
\end{align}
for sufficiently large $\w_\M$. Finally, referring to \eq{graph-fwk-init}, we compute
\begin{align*}
\braket{\psi_0}{w} &= \sum_{u\in V_0}\sqrt{\sigma(u)}\frac{\theta(u)}{\sqrt{\w_0\sigma(u)}} =\frac{1}{\sqrt{\w_0}}
\end{align*}
by \textbf{P1}, which, combined with \eq{graph-pos-wit-norm} and \textbf{P3}, yields the desired upper bound on the complexity. 
\end{proof}

\subsubsection{Negative Analysis} 

Define
\begin{equation}
\begin{split}
\ket{w_{\cal A}} &:= -\frac{1}{\sqrt{\w_0}}\sum_{\substack{u\in V(G)\\ i\in L_G^+(u)}}\sqrt{\w_{u,i}}\ket{w_-(u,i)}\\
\ket{w_{\cal B}} &:= -\frac{1}{\sqrt{\w_0}}\sum_{\substack{u\in V(G)\\ i\in L_G^+(u)}}\sqrt{\w_{u,i}}\left((\ket{\rightarrow}\ket{u,i}-\ket{\leftarrow}A\ket{u,i})\ket{0,0}\ket{0}-\ket{w_-(u,i)}\right)
+\frac{1}{\sqrt{\w_0}}\sum_{u\in V(G)}\ket{\psi_\star(u)}\!\!\!\!\!\!
\end{split}\label{eq:graph-neg-wtiness}
\end{equation}
where 
\begin{equation*}
\ket{w_-(u,i)}=(\ket{\rightarrow}\ket{u,i}-\ket{\leftarrow}\ket{v,j})\sum_{t=0}^{{\sf T}}\alpha_t(-1)^t\ket{w^t(u,i)}\ket{t}
\end{equation*}
is the subroutine's negative history state on input $(u,i)$, as defined in \defin{algorithm-states}. We will prove that $\ket{w_{\cal A}},\ket{w_{\cal B}}$ are a negative witness. We start by showing the following.
\begin{claim}\label{clm:graph-neg-witness-error}
\begin{align*}
\norm{(I-\Pi_{\cal A})\ket{w_{\cal A}}}^2 & \leq 2\sum_{u\in V(G),i\in L_G^+(u)}\frac{\w_{u,i}}{\w_0}\mathbb{E}\left[\alpha_{T_{u,i}}\epsilon_{u,i}^{T_{u,i}}\right]\\
\norm{(I-\Pi_{\cal B})\ket{w_{\cal B}}}^2 & \leq 2\sum_{u\in V(G),i\in L_G^+(u)}\frac{\w_{u,i}}{\w_0}\mathbb{E}\left[\alpha_{T_{u,i}}\epsilon_{u,i}^{T_{u,i}}\right].
\end{align*}
\end{claim}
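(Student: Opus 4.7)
The plan is to reduce the claim to the per-subroutine bound already established in \clm{neg-witness-error}, by exploiting the fact that $\mathrm{span}\{\Psi_0\}$ and $\mathrm{span}\{\Psi_1\}$ decompose as orthogonal direct sums indexed by the edges $(u,i)\in\overrightarrow{E}(G)$. Concretely, inspecting \eq{graph-fwk-alg-states}, every state in $\Psi_t^{u,i,\rightarrow}\cup\Psi_t^{u,i,\leftarrow}\cup\Psi_t^{u,i,\leftrightarrow}$ is supported, in the $\ket{u,i}$ register, only on $\ket{u,i}$ or on $\ket{v,j}$ (using the assumption that $\bra{v',j'}A_a\ket{u,i}=0$ whenever $(v',j')\neq(v,j)$ to handle the $\leftrightarrow$ case). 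Hence for $b\in\{0,1\}$, $\mathrm{span}\{\Psi_b\} = \bigoplus_{(u,i)\in\overrightarrow{E}(G)}\mathrm{span}\{\Psi_b^{u,i}\}$, where $\Psi_b^{u,i}$ collects the states with edge label $(u,i)$, and these subspaces live in pairwise orthogonal parts of $H$.

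For $\ket{w_{\cal A}}$, the vector $\ket{w_-(u,i)}$ is itself supported on the $\ket{u,i},\ket{v,j}$ subspace, so $\Pi_{\Psi_0}\ket{w_-(u,i)} = \Pi_{\Psi_0^{u,i}}\ket{w_-(u,i)}$. Thus the vectors $(I-\Pi_{\Psi_0})\ket{w_-(u,i)}$ for distinct $(u,i)\in\overrightarrow{E}(G)$ sit in orthogonal subspaces of $H$, so the norms add in quadrature, yielding
\[
\norm{(I-\Pi_{\cal A})\ket{w_{\cal A}}}^2 = \frac{1}{\w_0}\sum_{u,i}\w_{u,i}\norm{(I-\Pi_{\Psi_0})\ket{w_-(u,i)}}^2,
\]
at which point the first half of \clm{neg-witness-error}, applied to the transition subroutine on input $(u,i)$, gives the stated bound.

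For $\ket{w_{\cal B}}$, I would first observe that $\Psi_1 \perp \Psi_\star$ (every state in $\Psi_1$ has $t\ge 1$ in the time register, while star states have $t=0$), so $\Pi_{\cal B} = \Pi_{\Psi_1}+\Pi_{\Psi_\star}$. Next I would note that the $\frac{1}{\sqrt{\w_0}}\sum_u \ket{\psi_\star(u)}$ summand lies inside ${\cal B}$ and is annihilated by $I-\Pi_{\cal B}$, leaving
\[
(I-\Pi_{\cal B})\ket{w_{\cal B}} = -\frac{1}{\sqrt{\w_0}}\sum_{u,i}\sqrt{\w_{u,i}}(I-\Pi_{\cal B})\ket{v_{u,i}},
\]
where $\ket{v_{u,i}} := \ket{w_-(u,i)}-(\ket{\rightarrow}\ket{u,i}-\ket{\leftarrow}A\ket{u,i})\ket{0,0}\ket{0}$. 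The $t=0$ subtraction removes the only component of $\ket{w_-(u,i)}$ that could overlap $\Psi_\star$, so $\Pi_{\Psi_\star}\ket{v_{u,i}}=0$ and hence $(I-\Pi_{\cal B})\ket{v_{u,i}} = (I-\Pi_{\Psi_1})\ket{v_{u,i}}$. Applying the same edge-decomposition of $\mathrm{span}\{\Psi_1\}$ as before, the summands are orthogonal and the second half of \clm{neg-witness-error} gives the desired bound on each $\norm{(I-\Pi_{\Psi_1})\ket{v_{u,i}}}^2$.

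The only genuinely delicate step is verifying that $\Pi_{\Psi_0}\ket{w_-(u,i)}$ and $\Pi_{\Psi_1}\ket{v_{u,i}}$ really do factor through the edge-indexed subspaces; everything else is mechanical bookkeeping once that orthogonality is in hand. I would record this as a short preliminary observation (``for every $(u,i)$, every state in $\Psi_b^{u,i}$ has its second register in $\mathrm{span}\{\ket{u,i},\ket{v,j}\}$'') and then invoke it in both halves, after which the bounds follow immediately from the per-subroutine analysis of \sec{variable-time}.
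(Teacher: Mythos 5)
Your proof is correct and follows essentially the same route as the paper's: decompose $\ket{w_{\cal A}}$ and $\ket{w_{\cal B}}$ edge by edge, apply \clm{neg-witness-error} to the transition subroutine on each input $(u,i)$, and sum using orthogonality across edges, with the star-state summand of $\ket{w_{\cal B}}$ absorbed into ${\cal B}$ and the $t=0$ subtraction killing any overlap of the remaining terms with $\Psi_\star$. If anything you are more careful than the paper, which justifies the additivity only by noting that the $\ket{w_-(u,i)}$ are pairwise orthogonal; your observation that $\mathrm{span}\{\Psi_b\}$ splits as an orthogonal direct sum over edge-indexed subspaces (using condition 3 on $A_a$ for the $\leftrightarrow$ states), so that the residuals $(I-\Pi_{\Psi_b})\ket{w_-(u,i)}$ themselves remain in orthogonal parts of $H$, is exactly the point that makes the norms add.
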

\begin{proof}
Refer to \eq{graph-fwk-Psi} for the definitions of $\Psi^{\cal A}$ and $\Psi^{\cal B}$, from which ${\cal A}$ and ${\cal B}$ are defined. By \clm{neg-witness-error},
\begin{align*}
\norm{(I-\Pi_{\Psi_0})\ket{w_-(u,i)}}^2 \leq 2\mathbb{E}\left[\alpha_{T_{u,i}}\epsilon_{u,i}^{T_{u,i}}\right],
\end{align*}
where $\epsilon_{u,i}^t$ is the probability of erring given that the algorithm stops at time $t$ on input $(u,i)$. Thus, since $\Psi^{\cal A}=\Psi_0$: 
\begin{align*}
\norm{(I-\Pi_{\cal A})\ket{w_{\cal A}}}^2 &= \sum_{u\in V(G),i\in L_G^+(u)}\frac{\w_{u,i}}{\w_0}\norm{(I-\Pi_{\Psi_0})\ket{w_-(u,i)}}^2 
\leq 2\sum_{u\in V(G),i\in L_G^+(u)}\frac{\w_{u,i}}{\w_0}\mathbb{E}\left[\alpha_{T_{u,i}}\epsilon_{u,i}^{T_{u,i}}\right],
\end{align*}
where we used the fact that the $\ket{w_-(u,i)}$ are pairwise orthogonal, because of the state in their first register (see \defin{algorithm-states}). 

We have $\ket{\psi_\star(u)}\in {\cal B}$ for all $u\in V(G)$. Combining this with $\Psi^{\cal B}=\Psi_1\cup\Psi_\star$, we have: \phantom\qedhere
\begin{align*}
\norm{(I-\Pi_{\cal B})\ket{w_{\cal B}}}^2 &= \sum_{u\in V(G),i\in L_G^+(u)}\frac{\w_{u,i}}{\w_0}\norm{(I-\Pi_{\Psi_1})\left((\ket{\rightarrow}\ket{u,i}-\ket{\leftarrow}A\ket{u,i})\ket{0,0}\ket{0}-\ket{w_-(u,i)}\right)}^2\!\!\!\!\!\!\!\!\!\!\!\!\!\!\!\!\!\!\!\!\!\!\!\!\!\!\!\!\!\!\!\!\!\!\!\!\!\!\!\!\!\!\!\!\\
&\leq 2\sum_{u\in V(G),i\in L_G^+(u)}\frac{\w_{u,i}}{\w_0}\mathbb{E}\left[\alpha_{T_{u,i}}\epsilon_{u,i}^{T_{u,i}}\right], & \mbox{by \clm{neg-witness-error}.}\qed
\end{align*}
\end{proof}

\begin{lemma}\label{lem:graph-neg-witness}
The states $\ket{w_{\cal A}},\ket{w_{\cal B}}$ defined in \eq{graph-neg-wtiness} are a $\delta'$ negative witness for
$$\delta'=2\sum_{u\in V(G),i\in L^+(u)}\frac{\w_{u,i}}{\w_0}\mathbb{E}\left[\alpha_{T_{u,i}}\epsilon_{u,i}^{T_{u,i}}\right].$$
Letting $\w_0={\cal R}^{-1}$, as in \lem{graph-pos-witness}, $\ket{w_{\cal A}}$ has complexity:
$\norm{\ket{w_{\cal A}}}^2= 2{\cal RW}=:{\cal C}_-.$
\end{lemma}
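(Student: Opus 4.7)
The plan is to verify three things: the decomposition $\ket{\psi_0}=\ket{w_{\cal A}}+\ket{w_{\cal B}}$, the error bounds on $(I-\Pi_{\cal A})\ket{w_{\cal A}}$ and $(I-\Pi_{\cal B})\ket{w_{\cal B}}$, and the norm $\norm{\ket{w_{\cal A}}}^2$.

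First I would add the two expressions in \eq{graph-neg-wtiness}. The $\ket{w_-(u,i)}$ contributions cancel by construction, leaving
\[
-\tfrac{1}{\sqrt{\w_0}}\sum_{u,i\in L_G^+(u)}\sqrt{\w_{u,i}}\bigl(\ket{\rightarrow}\ket{u,i}-\ket{\leftarrow}A\ket{u,i}\bigr)\ket{0,0}\ket{0}+\tfrac{1}{\sqrt{\w_0}}\sum_{u\in V(G)}\ket{\psi_\star(u)}.
\]
Using $A\ket{u,i}=\ket{v,j}$ with $v=f_u(i)$ and $j=f_v^{-1}(u)$, I would reindex the $\ket{\leftarrow}$-sum over $\overrightarrow{E}(G)$ and observe that it matches exactly the $\ket{\leftarrow}$-part of $\sum_u\ket{\psi_\star(u)}$ (which runs over $L_G^-(u)$), while the $\ket{\rightarrow}$-sum matches the $L_G^+(u)$-portion of the $\ket{\rightarrow}$-part of $\sum_u\ket{\psi_\star(u)}$ by definition \eq{graph-fwk-star-states}. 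The only surviving contributions are the new $v_0$-edges with label $0$, namely $\tfrac{1}{\sqrt{\w_0}}\sum_{u\in V_0\cup M}\sqrt{\w_{u,0}}\ket{\rightarrow}\ket{u,0}\ket{0,0}\ket{0}$. Since we are in the negative case $M=\emptyset$, only the $V_0$ part remains, and with $\w_{u,0}=\w_0\sigma(u)$ it collapses to $\sum_{u\in V_0}\sqrt{\sigma(u)}\ket{\rightarrow}\ket{u,0}\ket{0,0}\ket{0}=\ket{\psi_0}$.

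Next, the error bounds are immediate from \clm{graph-neg-witness-error}, which already gives
\[
\norm{(I-\Pi_{\cal A})\ket{w_{\cal A}}}^2,\;\norm{(I-\Pi_{\cal B})\ket{w_{\cal B}}}^2 \;\leq\; 2\sum_{u,\,i\in L_G^+(u)}\tfrac{\w_{u,i}}{\w_0}\,\mathbb{E}\!\left[\alpha_{T_{u,i}}\epsilon_{u,i}^{T_{u,i}}\right]=\delta'.
\]
Plugging in $\w_0={\cal R}^{-1}$ and applying condition \textbf{N2} yields $\delta'=o(1)$, which comfortably meets the threshold required by \thm{phase-est-fwk} since $c_+$ is bounded by the constant from \lem{graph-pos-witness}.

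Finally, for $\norm{\ket{w_{\cal A}}}^2$, I would use that the states $\{\ket{w_-(u,i)}\}_{(u,i)}$ are pairwise orthogonal: each carries the distinct first-register pattern $\ket{\rightarrow}\ket{u,i}-\ket{\leftarrow}\ket{v,j}$, and different oriented edges give distinct $(u,i)$ and $(v,j)$ pairs. Hence
\[
\norm{\ket{w_{\cal A}}}^2=\tfrac{1}{\w_0}\sum_{(u,v)\in\overrightarrow{E}(G)}\w_{u,v}\,\norm{\ket{w_-(u,v)}}^2 =\tfrac{2}{\w_0}\sum_{(u,v)\in\overrightarrow{E}(G)}\w_{u,v}\,\mathbb{E}\!\left[\sum_{t=0}^{T_{u,v}}\alpha_t\right]
\]
by \cor{pos-witness-complexity}. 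Applying \textbf{N1} (with the $T_{u,v}$-th term absorbed into the $o(1/{\cal R})$ slack of \textbf{N2} or into constants) and $\w_0={\cal R}^{-1}$ produces $\norm{\ket{w_{\cal A}}}^2\leq 2{\cal RW}$, completing the claim. I expect the decomposition step to be the trickiest: the cancellation of the $\ket{\leftarrow}$-terms really hinges on the bookkeeping between $L_G^+(u)$, $L_G^-(v)$, and the precise action of $A$, and it is the only place where the introduction of the auxiliary vertex $v_0$ and its carefully chosen edge weights actually pays off.
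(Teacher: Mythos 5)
Your proposal is correct and follows essentially the same route as the paper: the same telescoping/reindexing argument showing $\sqrt{\w_0}(\ket{w_{\cal A}}+\ket{w_{\cal B}})$ reduces to the label-$0$ edges out of $V_0$ (hence to $\sqrt{\w_0}\ket{\psi_0}$ when $M=\emptyset$), the error bound read off directly from \clm{graph-neg-witness-error}, and the norm computed via pairwise orthogonality of the $\ket{w_-(u,i)}$ and \cor{pos-witness-complexity} together with \textbf{N1}. The only quibble is your aside about absorbing the $T_{u,v}$-th term into the slack of \textbf{N2}: the upper limit of the sum ($T_{u,v}$ versus $T_{u,v}-1$) is an off-by-one ambiguity already present between \cor{pos-witness-complexity} and \textbf{N1} in the paper itself, and is immaterial to the bound $\norm{\ket{w_{\cal A}}}^2\leq 2{\cal RW}$.
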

\begin{proof}
We first note that:
\begin{align}
\sqrt{\w_0}(\ket{w_{\cal A}}+\ket{w_{\cal B}}) &= -\sum_{u\in V(G),i\in L^+_G(u)}\sqrt{\w_{u,i}}\left(\ket{\rightarrow}\ket{u,i}-\ket{\leftarrow}A\ket{u,i}\right)\ket{0,0}\ket{0}+\sum_{u\in V(G)}\ket{\psi_\star(u)}.\label{eq:graph-neg-wit-sum}
\end{align}
Using the fact that $A\ket{u,i}=\ket{v,j}$ whenever $i\in L^+(u)$, $f_u(i)=v$ and $j=f_v^{-1}(u)$, we have:
\begin{equation}
\begin{split}
& \sum_{u\in V(G),i\in L^+_G(u)}\sqrt{\w_{u,i}}\left(\ket{\rightarrow}\ket{u,i}-\ket{\leftarrow}A\ket{u,i}\right)\\
={}& \sum_{u\in V(G),i\in L^+_G(u)}\sqrt{\w_{u,i}}\ket{\rightarrow}\ket{u,i}-\sum_{u\in V(G),j\in L_G^-(u)}\sqrt{\w_{u,j}}\ket{\leftarrow}\ket{v,j}.
\end{split}\label{eq:graph-neg-wit-sum1}
\end{equation}
On the other hand, referring to \eq{graph-fwk-star-states}:
\begin{equation}
\sum_{u\in V(G)}\ket{\psi_\star(u)} = \sum_{u\in V(G)}\left(\sum_{i\in L^+_{G'}(u)}\sqrt{\w_{u,i}}\ket{\rightarrow}\ket{u,i}-\sum_{j\in L^-_{G}(u)}\sqrt{\w_{u,j}}\ket{\leftarrow}\ket{u,j}\right)\ket{0,0}\ket{0}.\label{eq:graph-neg-wit-sum2}
\end{equation}
Recall that  $G'$ only differs from $G$ in that vertices in $M\cup V_0=V_0$ (since $M=\emptyset$ by assumption) are connected to an additonal vertex $v_0$ by an edge labelled $0\in L^+$. 
Thus, putting \eq{graph-neg-wit-sum1} and \eq{graph-neg-wit-sum2} into \eq{graph-neg-wit-sum}, we have:
we have:
\begin{align*}
\sqrt{\w_0}(\ket{w_{\cal A}}+\ket{w_{\cal B}}) &= \sum_{u\in V(G)}\sum_{i\in L^+_{G'}(u)\setminus L^+_G(u)}\sqrt{\w_{u,i}}\ket{\rightarrow}\ket{u,i}\ket{0,0}\ket{0}\\
&= \sum_{u\in V_0}\sqrt{\w_0\sigma(u)}\ket{\rightarrow}\ket{u,0}\ket{0,0}\ket{0}
= \sqrt{\w_0}\ket{\psi_0}.
\end{align*}
Combining this with \clm{graph-neg-witness-error}, we see that $\ket{w_{\cal A}},\ket{w_{\cal B}}$ form a $\delta'$-negative witness in the sense of \defin{neg-witness}. We upper bound the complexity:
\begin{align*}
\norm{\ket{w_{\cal A}}}^2 &= \frac{1}{\w_0}\sum_{u\in V(G),i\in L^+(u)}\w_{u,i}\norm{\ket{w_-(u,i)}}^2
= \frac{2}{\w_0}\sum_{u\in V(G),i\in L^+(u)}\w_{u,i}\mathbb{E}\left[\sum_{t=0}^{T_{u,i}-1}\alpha_t\right]\leq \frac{2{\cal W}}{\w_0},
\end{align*}
by \cor{pos-witness-complexity}, and condition \textbf{N1} of \thm{graph-fwk}. The result follows.
\end{proof}

\subsubsection{Conclusion of Proof of \thm{graph-fwk}}

Fix $H$ as in \eq{graph-fwk-H}, $\ket{\psi_0}$ as in \eq{graph-fwk-init}, and $\Psi^{\cal A}$ and $\Psi^{\cal B}$ as in \eq{graph-fwk-Psi}. Since 
$\ket{\psi_0}=\ket{\rightarrow}\ket{\sigma}\ket{0,0}\ket{0}$
and we assume we can generate $\ket{\sigma}$ in cost ${\sf S}$, we can generate $\ket{\psi_0}$ in cost ${\sf S}'={\sf S}+O(\log {\sf T})$, since we need to initialize $O(\log {\sf T})$ qubits in addition to $\ket{\sigma}$. By \cor{unitary}, we can implement 
$U_{\cal AB}$ in cost ${\sf A}=\log{\sf T}$. Let
$$c_+ = 6 
\mbox{ and }
{\cal C}_-=2{\cal RW}$$
and let 
$$\delta = 3{\cal R}^{-1}\sum_{(u,v)\in\overrightarrow{E}(G)}\frac{\theta(u,v)^2}{\w_{u,v}}\mathbb{E}\left[\frac{\epsilon_{u,v}^{T_{u,v}}}{\alpha_{T_{u,v}}}\right]
\mbox{ and }
\delta'=2{\cal R}\sum_{(u,v)\in\mathbb{E}(G)}\w_{u,v}\mathbb{E}\left[\alpha_{T_{u,v}}\epsilon_{u,v}^{T_{u,v}}\right]$$
as in \lem{graph-pos-witness} and \lem{graph-neg-witness}.
Then we can verify that 
$$\delta = {\cal R}^{-1}o(1/{\cal W}) \leq \frac{1}{(8c_+)^3\pi^8{\cal C}_-}$$
by condition \textbf{P4} of \thm{graph-fwk}, and
$$\delta' = {\cal R}o(1/{\cal R}) \leq \frac{3}{4}\frac{1}{\pi^4c_+},$$
by \textbf{N2}.
Furthermore:
\begin{description}
\item[Positive Condition:] by \lem{graph-pos-witness}, if $M\neq\emptyset$, there is a $\delta$-positive witness with $\frac{\norm{\ket{w}}^2}{|\braket{w}{\psi_0}|^2}\leq c_+$; and
\item[Negative Condition:] by \lem{graph-neg-witness}, if $M=\emptyset$, there is a $\delta'$-negative witness with $\norm{\ket{w_{\cal A}}}^2\leq {\cal C}_-$.
\end{description}
Thus, by \thm{phase-est-fwk}, there is a quantum algorithm that can distinguish these two cases with bounded error in complexity:
$$O\left({\sf S}'+\sqrt{{\cal C}_-}{\sf A} \right)=O\left({\sf S}+\sqrt{\cal RW}\log{\sf T}\right).$$

\section{Composition in Quantum Algorithms}\label{sec:alg-composition}

Fix $f:\{0,1\}^n\rightarrow\{0,1\}$, and for each $i\in [n]$, fix $g_i:\{0,1\}^m\rightarrow\{0,1\}$. Then we can define $g:[n]\times\{0,1\}^m\rightarrow\{0,1\}$ by $g(i,x)=g_i(x)$, and 
$f\circ g:\{0,1\}^m\rightarrow\{0,1\}$ by $f\circ g (x) = f(g_1(x),\dots,g_n(x))$. Given a bounded error quantum algorithm for $f$ that takes time $\sf L$ and $\sf Q$ queries, and a quantum algorithm for $g$ with sufficiently small error, that takes time at most ${\sf T}={\sf T}_{\max}$, we can compose these algorithms to get a quantum algorithm for the composed function $f\circ g$ that takes time ${\sf L}+{\sf Q}\cdot {\sf T}$. In this section, we present a better way to compose these algorithms when the subroutine's running time varies in $i$, and its own randomness. Before we state our main theorem, we fix some notation.

\paragraph{Outer Algorithm:} We suppose we have a quantum algorithm that computes $f:\{0,1\}^n\rightarrow\{0,1\}$ using quantum queries to an input $g\in\{0,1\}^n$. 
Let $V_1,\dots, V_{\sf L}$ be some unitaries acting on a space:
$$H_{\cal I}\otimes H_{\cal A}\otimes H_{\cal Y}=\mathrm{span}\{\ket{i}\ket{b,y}:i\in [n]\cup\{0\},b\in\{0,1\},y\in{\cal Y}\},$$
where $H_{\cal I}$ is the query register, $H_{\cal A}$ is the answer register, and $H_{\cal Y}$ is the workspace. 
Let ${\cal Q}\subset [{\sf L}]$ be such that:
\begin{itemize}
\item If $\ell\in{\cal Q}$, $V_{\ell+1}={\cal O}_g$, where ${\cal O}_g\ket{i}\ket{b,y}=(-1)^{g_i}\ket{i}\ket{b,y}$.
\item If $\ell\not\in {\cal Q}$, $V_{\ell+1}$ is an input-independent unitary that can be implemented in unit cost.
\end{itemize}
Let ${\sf Q}=|{\cal Q}|$. 
We suppose this algorithm computes $f$ with bounded error $\eps_O$. We assume, without loss of generality, that ${\sf L}$ is even, and that $\ell\in {\cal Q}$ only if $\ell$ is even.

We fix some notation for discussing this outer algorithm. First, define:
\begin{equation}
\begin{split}
\forall \ell\in\{0,\dots,{\sf L}\},\;\ket{w_O^{\ell}} &:= V_\ell\dots V_1\ket{0,0,0},
\end{split}\label{eq:alg-fwk-w_O}
\end{equation}
the algorithm's state at time $\ell$. For $\ell\in \{0,\dots,{\sf L}\}$, $y\in {\cal Y}$, $b\in\{0,1\}$, define:
\begin{equation}
\beta_{i,b,y}^{\ell}:=\braket{i,b,y}{w_O^{\ell}}.\label{eq:alg-comp-beta}
\end{equation}
Of special interest will be the \emph{query weight} on $i$ at time $\ell$, for $\ell\in {\cal Q}$:
\begin{equation}\label{eq:alg-comp-q_i-ell}
\q_{i,\ell}:=\norm{(\ket{i}\bra{i}\otimes I)\ket{w_O^{\ell}}}^2 = \sum_{b\in\{0,1\},y\in{\cal Y}}|\beta_{i,b,y}^{\ell}|^2.
\end{equation}
Then the \emph{average query weight} of $i\in [n]$ is:
\begin{equation}\label{eq:alg-comp-bar-q_i}
\bar\q_{i}:=\frac{1}{\sf Q}\sum_{\ell\in {\cal Q}}\q_{i,\ell}.
\end{equation}
Note that $\sum_{i\in [n]}\bar\q_i = 1$.

Similar to variable-time algorithms, we assume we can implement, in unit cost, the unitary:
$$\sum_{\ell\in\{0,\dots,{\sf L}-1\}\setminus{\cal Q}}\ket{\ell}\bra{\ell}\otimes V_{\ell+1}+\sum_{\ell\in{\cal Q}\cup\{{\sf L}\}}\ket{\ell}\bra{\ell}\otimes I.$$

\paragraph{Inner Subroutine:} In addition to the outer algorithm, we suppose we have an inner variable-time subroutine, as defined in \sec{variable-time}, that computes the function $g(i) = g_i = g_i(x)$ with maximum time ${\sf T}={\sf T}_{\max}$. For $i\in [n]$, let $\epsilon_i$ denote the error on input $i$, so in the notation of \sec{variable-time}, we have:
$$\epsilon_i = \sum_{t=0}^{\sf T}\bar{p}_t(i)\epsilon_i^t = \mathbb{E}[\epsilon_i^{T_i}].$$
As in \sec{variable-time}, we let $T_i$ denote the time at which the inner subroutine stops on input $i$, which is a random variable on $\{1,\dots,{\sf T}\}$. Then our main theorem of this section is the following.

\begin{theorem}\label{thm:alg-composition}
Fix functions $f:\{0,1\}^n\rightarrow\{0,1\}$ and $\{g_i:\{0,1\}^m\rightarrow\{0,1\}\}_{i\in [n]}$, and an outer algorithm for $f$ with time complexity ${\sf L}$, query complexity ${\sf Q}$, and bounded error $\eps_O$; and inner subroutine for $g$ with maximum time ${\sf T}={\sf T}_{\max}$, as described above. 
Let ${\sf T}_{\mathrm{avg}}$ be an upper bound such that:
$$\sum_{i\in [n]}\bar\q_i\mathbb{E}[T_i]\leq {\sf T}_{\mathrm{avg}}$$
and suppose the subroutine's errors satisfy the following condition:
$$\epsilon_{\mathrm{avg}}:=\sum_{i\in [n]}\bar\q_i \epsilon_i \leq \frac{\eta}{{\sf Q}({\sf L}+{\sf Q}{\sf T}_{\mathrm{avg}})}.$$
Then if $\eps_O$ and $\eta$ are sufficiently small constants, there is a quantum algorithm that computes $f\circ g$ with bounded error in complexity
$$O\left(({\sf L}+{\sf Q}{\sf T}_{\mathrm{avg}})\log ({\sf L}{\sf T})\right).$$
\end{theorem}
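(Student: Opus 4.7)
The plan is to mimic \sec{graph-fwk-proof}, replacing the input network $G$ with an orthogonality structure whose overlap graph is a path of length ${\sf L}$ (one node per step of the outer algorithm) with a ladder gadget (in the sense of \fig{intro} and \fig{variable-overlap-graph}) spliced in at each of the ${\sf Q}$ query positions. This is the algorithmic analogue of the quantum walk edge composition. Throughout I would take $\alpha_t = 1$, because the target complexity depends on $\mathbb{E}[T_i]$ rather than on $\mathbb{E}[T_i^2]$.

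Concretely, I would define a phase estimation algorithm $(H,\ket{\psi_0},\Psi^{\cal A},\Psi^{\cal B})$ in the sense of \defin{phase-est-alg}, where $H$ augments $H_{\cal I}\otimes H_{\cal A}\otimes H_{\cal Y}$ with a direction qubit $\ket{d}$ for $d\in\{\rightarrow,\leftarrow\}$, a position register $\ket{\ell}$ for $\ell\in\{0,\dots,{\sf L}\}$, and the subroutine's answer/workspace/time registers $H_{\cal A}\otimes H_{\cal Z}\otimes\ket{t}$. For each non-query step $\ell\notin{\cal Q}$, $\Psi^{\cal A}\cup\Psi^{\cal B}$ contains a pair of trivial transition states $\propto\ket{d}(\ket{w_O^\ell}\ket{\ell}-V_{\ell+1}\ket{w_O^\ell}\ket{\ell+1})$, implementable by the assumed controlled version of the outer algorithm. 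For each query step $\ell\in{\cal Q}$, $\Psi^{\cal A}\cup\Psi^{\cal B}$ contains the transition and reversal states of \defin{transition-states} for the reversible variable-time subroutine that computes $A\ket{i}=(-1)^{g_i}\ket{i}$ (a bit query is trivially reversible, as noted after \defin{algorithm-states}), tensored with $\ket{\ell}$ in the position register. I would add boundary "cap" star states at $\ell=0$ (fixing $\ket{\psi_0}$) and $\ell={\sf L}$ (playing the role of the marked vertex in \thm{graph-fwk}, with the cap effectively "marked" iff the outer algorithm's answer register holds $1$ on input $g(x)$). Bipartitioning by the parity of $\ell+t$ makes $\Psi^{\cal A}$ and $\Psi^{\cal B}$ each pairwise orthogonal, exactly as for the ladders of \fig{variable-overlap-graph} and \fig{G-overlap-graph}.

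The positive and negative witnesses would then be constructed as in \lem{graph-pos-witness} and \lem{graph-neg-witness}, using the unit flow along the path of length ${\sf L}$ in which the flow at each query $\ell\in{\cal Q}$ is routed through the ladder via the positive (resp.\ negative) history states $\ket{w_\pm(i)}$ of \defin{algorithm-states}, with branching into the subroutine weighted by the outer-algorithm amplitudes $\beta_{i,b,y}^\ell$ of \eq{alg-comp-beta}. With $\alpha_t=1$, \cor{pos-witness-complexity} gives $\norm{\ket{w_\pm(i)}}^2 = 2\mathbb{E}[T_i+1]$ at each query, so that summing over ${\cal Q}$ and weighting by the query weights $\q_{i,\ell}$ of \eq{alg-comp-q_i-ell} yields positive- and negative-witness norms of order ${\sf L}+{\sf Q}({\sf T}_{\mathrm{avg}}+1)$. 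By \clm{ortho-alg-trans} and \clm{neg-witness-error}, each query's witness-error is at most $2\sum_i\q_{i,\ell}\epsilon_i$, summing to $2{\sf Q}\epsilon_{\mathrm{avg}}$ across all queries.

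The main obstacle will be checking the error parameters of \thm{phase-est-fwk}. Following the weight-scaling of \sec{graph-fwk-proof} (set the initial cap weight $\w_0={\cal R}^{-1}$ so that the ratio $c_+ = \norm{\ket{w}}^2/|\braket{\psi_0}{w}|^2$ stays a constant), I expect ${\cal R} = O({\sf L}+{\sf Q}{\sf T}_{\mathrm{avg}})$ from \textbf{P3} and ${\cal W} = O({\sf L}+{\sf Q}{\sf T}_{\mathrm{avg}})$ from \textbf{N1}, so that $\sqrt{{\cal C}_-} = \sqrt{2{\cal R}{\cal W}} = O({\sf L}+{\sf Q}{\sf T}_{\mathrm{avg}})$ as required. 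The condition $\delta \leq 1/((8c_+)^3\pi^8{\cal C}_-)$ then translates, after scaling, into $({\sf Q}\epsilon_{\mathrm{avg}})/{\cal R} = O(1/({\cal R}{\cal W}))$, i.e., $\epsilon_{\mathrm{avg}} = O(1/({\sf Q}{\cal W})) = O(1/({\sf Q}({\sf L}+{\sf Q}{\sf T}_{\mathrm{avg}})))$, which is precisely the hypothesized bound; the negative condition is dual. The outer algorithm's own bounded error $\eps_O$ folds into $c_+$ at a constant-factor cost if $\eps_O$ is a sufficiently small constant. Since $U_{\cal AB}$ is implementable in $O(\log({\sf LT}))$ by \lem{variable-time-unitary} together with the unit-cost assumption on the controlled outer steps, \thm{phase-est-fwk} then delivers the advertised $O(({\sf L}+{\sf Q}{\sf T}_{\mathrm{avg}})\log({\sf LT}))$ complexity.
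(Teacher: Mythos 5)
Your proposal follows essentially the same route as the paper's proof: the paper lays the outer algorithm out as a line of overlap-graph nodes, splices in the ladder gadget of \defin{transition-states} (with $\alpha_t=1$) at each query position, uses the positive and negative history states of \defin{algorithm-states} weighted by the outer amplitudes $\beta^{\ell}_{i,b,y}$ as witnesses, and invokes \thm{phase-est-fwk} with $\w_0=1/({\sf L}+1+2{\sf Q}({\sf T}_{\mathrm{avg}}+1))$, exactly as you outline. The one design point your sketch elides is that the terminal cap at $\ell={\sf L}$ must carry edges for \emph{both} answer values $b\in\{0,1\}$, with weights $\w_{1,\mathrm{out}}\propto(1-\eps_O)\w_0$ and $\w_{0,\mathrm{out}}\propto\eps_O\w_0$, so that the residual $b=0$ amplitude in the positive case and the residual $b=1$ amplitude in the negative case each have somewhere to go at only a constant cost to $c_+$ and $\delta'$; with that adjustment your analysis matches the paper's.
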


\noindent The remainder of this section is devoted to proving \thm{alg-composition}.

\subsection{Parameters of the Phase Estimation Algorithm}

We let 
\begin{equation}\label{eq:alg-comp-H}
\begin{split}
H &= \mathrm{span}\{\ket{d}\ket{i}\ket{a,z}\ket{t}\ket{b,y}\ket{\ell}:d\in\{\leftarrow,\rightarrow\},i\in [n], a,b\in\{0,1\}, z\in{\cal Z}, y\in {\cal Y},\\
&\qquad\qquad\qquad\qquad\qquad\qquad\qquad\qquad\qquad\qquad\qquad\qquad\qquad\quad t\in \{0,\dots,{\sf T}\}, \ell\in\{0,\dots,{\sf L}\}\}\\
&\qquad\;\oplus \mathrm{span}\{\ket{\bot}\ket{i}\ket{0,0}\ket{0}\ket{b,y}\ket{\ell}:i\in [n], b\in\{0,1\}, z\in{\cal Z}, y\in {\cal Y},\ell\in\{0,\dots,{\sf L}\}\}\\
&\qquad\quad\oplus \mathrm{span}\{\ket{\circ}\ket{i}\ket{0,0}\ket{0}\ket{b,y}\ket{{\sf L}}:i\in [n], b\in\{0,1\},y\in {\cal Y}\}
\oplus \mathrm{span}\{\ket{\circ}\ket{0}\ket{0,0}\ket{0}\ket{0,0}\ket{0}\}.
\end{split}
\end{equation}

We first define outer transition states, extending the unitaries $V_{\ell}$ to $H$ by tensoring in identities:
\begin{multline}
\forall i\in [n],  \ell\in\{0,\dots,{\sf L}-1\}\setminus {\cal Q}\\
\Psi_{i,\ell}^{O}:=\left\{\ket{\psi_{i,b,y,\ell}^O} := \ket{\bot}\ket{i}\ket{0,0}\ket{0}\ket{b,y}\ket{\ell} - \ket{\bot}V_{\ell+1}(\ket{i}\ket{0,0}\ket{0}\ket{b,y})\ket{\ell+1}: b\in\{0,1\},y\in {\cal Y}\right\}.
\label{eq:alg-fwk-outer-states}
\end{multline}

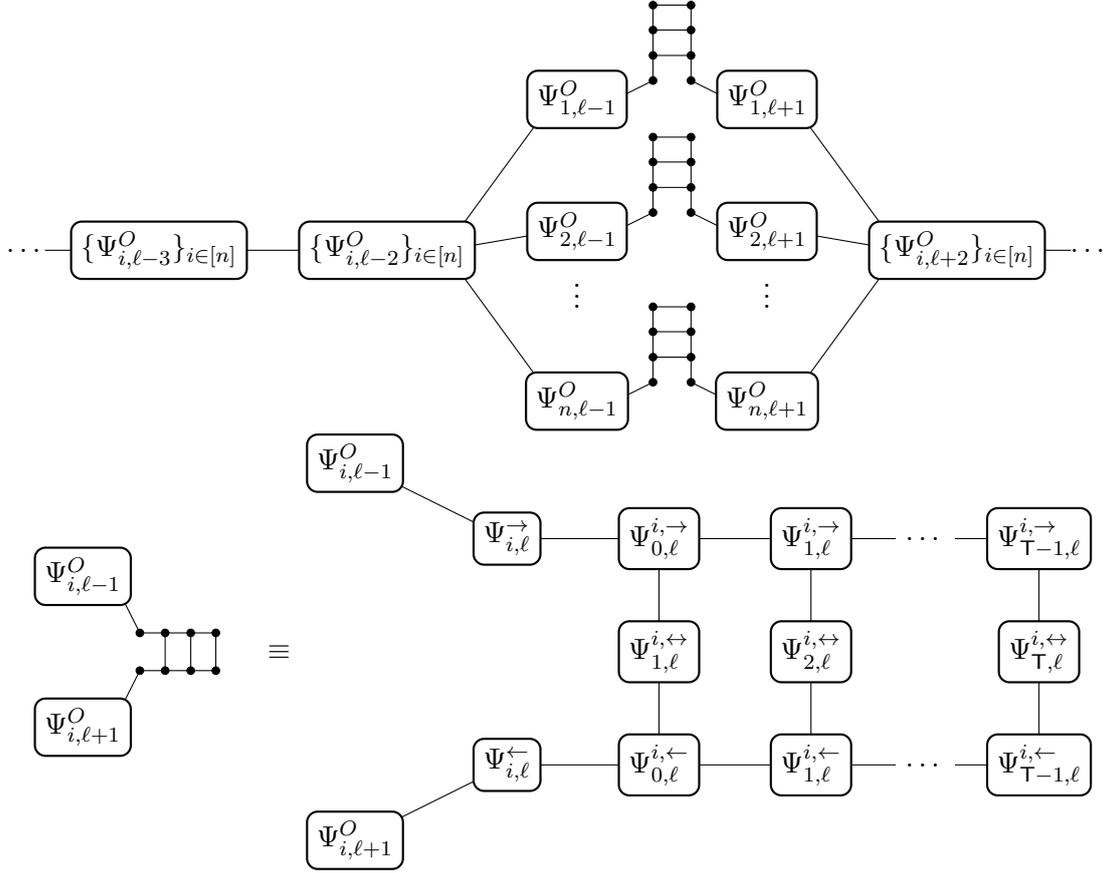
\begin{figure}
\centering
\begin{tikzpicture}
\draw (-1.5,0)--(3.75,0)--(5.2,2);
		\draw (3.75,0)--(5.2,.25);
		\draw (3.75,0)--(5.2,-2);

\draw (6,2)--(6.5,2.25)--(6.5,3.25)--(7,3.25)--(7,2.25)--(7.5,2);
	\filldraw (6.5,3.25) circle (.05);				\filldraw (7,3.25) circle (.05);
	\filldraw (6.5,2.92) circle (.05);	\draw (6.5,2.92)--(7,2.92);	\filldraw (7,2.92) circle (.05);
	\filldraw (6.5,2.583) circle (.05);	\draw (6.5,2.583)--(7,2.583);	\filldraw (7,2.583) circle (.05);
	\filldraw (6.5,2.25) circle (.05);				\filldraw (7,2.25) circle (.05);

\draw (6,.25)--(6.5,.5)--(6.5,1.5)--(7,1.5)--(7,.5)--(7.5,.25);
	\filldraw (6.5,1.5) circle (.05);				\filldraw (7,1.5) circle (.05);
	\filldraw (6.5,1.17) circle (.05);	\draw (6.5,1.17)--(7,1.17);	\filldraw (7,1.17) circle (.05);
	\filldraw (6.5,.833) circle (.05);	\draw (6.5,.833)--(7,.833);	\filldraw (7,.833) circle (.05);
	\filldraw (6.5,.5) circle (.05);				\filldraw (7,.5) circle (.05);

\draw (6,-2)--(6.5,-1.75)--(6.5,-.75)--(7,-.75)--(7,-1.75)--(7.5,-2);
	\filldraw (6.5,-.75) circle (.05);				\filldraw (7,-.75) circle (.05);
	\filldraw (6.5,-1.08) circle (.05);	\draw (6.5,-1.08)--(7,-1.08);	\filldraw (7,-1.08) circle (.05);
	\filldraw (6.5,-1.417) circle (.05);	\draw (6.5,-1.417)--(7,-1.417);	\filldraw (7,-1.417) circle (.05);
	\filldraw (6.5,-1.75) circle (.05);				\filldraw (7,-1.75) circle (.05);

\draw (12,0)--(9.75,0)--(8.3,2);
		\draw (9.75,0)--(8.3,.25);
		\draw (9.75,0)--(8.3,-2);

\node[rectangle, rounded corners, draw, thick, fill=white] at (0,0) {$\{\Psi_{i,\ell-3}^{O}\}_{i\in [n]}$};

\node[rectangle, rounded corners, draw, thick, fill=white] at (3,0) {$\{\Psi_{i,\ell-2}^{O}\}_{i\in [n]}$};

\node[rectangle, rounded corners, draw, thick, fill=white] at (5.5,2) {$\Psi_{1,\ell-1}^{O}$};
\node[rectangle, rounded corners, draw, thick, fill=white] at (5.5,.25) {$\Psi_{2,\ell-1}^{O}$};
\node[rectangle, rounded corners, draw, thick, fill=white] at (5.5,-2) {$\Psi_{n,\ell-1}^{O}$};

\node[rectangle, rounded corners, draw, thick, fill=white] at (8,2) {$\Psi_{1,\ell+1}^{O}$};
\node[rectangle, rounded corners, draw, thick, fill=white] at (8,.25) {$\Psi_{2,\ell+1}^{O}$};
\node[rectangle, rounded corners, draw, thick, fill=white] at (8,-2) {$\Psi_{n,\ell+1}^{O}$};

\node[rectangle, rounded corners, draw, thick, fill=white] at (10.5,0) {$\{\Psi_{i,\ell+2}^{O}\}_{i\in [n]}$};

\node at (-1.75,0) {$\dots$};

\node at (5.5,-.5) {$\vdots$};
\node at (8,-.5) {$\vdots$};

\node at (12.25,0) {$\dots$};
\end{tikzpicture}

\begin{tikzpicture}
\node[rectangle,rotate=270] at (-3,0) {\begin{tikzpicture}
\draw (6,-2)--(6.5,-1.75)--(6.5,-.75)--(7,-.75)--(7,-1.75)--(7.5,-2);
	\filldraw (6.5,-.75) circle (.05);				\filldraw (7,-.75) circle (.05);
	\filldraw (6.5,-1.08) circle (.05);	\draw (6.5,-1.08)--(7,-1.08);	\filldraw (7,-1.08) circle (.05);
	\filldraw (6.5,-1.417) circle (.05);	\draw (6.5,-1.417)--(7,-1.417);	\filldraw (7,-1.417) circle (.05);
	\filldraw (6.5,-1.75) circle (.05);				\filldraw (7,-1.75) circle (.05);
\node[rectangle, rounded corners, draw, thick, fill=white, rotate=90] at (5.75,-2.5) {$\Psi_{i,\ell-1}^O$};
\node[rectangle, rounded corners, draw, thick, fill=white, rotate=90] at (7.75,-2.5) {$\Psi_{i,\ell+1}^O$};
\end{tikzpicture}};

\node at (-1,0) {$\equiv$};

\draw (0,2.5)--(2,1.5)--(9,1.5)--(9,-1.5)--(2,-1.5)--(0,-2.5);
	\draw (4,1.5)--(4,-1.5);
	\draw (6,1.5)--(6,-1.5);

\node[fill=white] at (7.5,1.5) {$\dots$};
\node[fill=white] at (7.5,-1.5) {$\dots$};

\node[rectangle, rounded corners, draw, thick, fill=white] at (0,2.5) {$\Psi_{i,\ell-1}^O$};
\node[rectangle, rounded corners, draw, thick, fill=white] at (0,-2.5) {$\Psi_{i,\ell+1}^O$};

\node[rectangle, rounded corners, draw, thick, fill=white] at (2,1.5) {$\Psi_{i,\ell}^{\rightarrow}$};
\node[rectangle, rounded corners, draw, thick, fill=white] at (2,-1.5) {$\Psi_{i,\ell}^{\leftarrow}$};

\node[rectangle, rounded corners, draw, thick, fill=white] at (4,1.5) {$\Psi_{0,\ell}^{i,\rightarrow}$};
\node[rectangle, rounded corners, draw, thick, fill=white] at (4,0) {$\Psi_{1,\ell}^{i,\leftrightarrow}$};
\node[rectangle, rounded corners, draw, thick, fill=white] at (4,-1.5) {$\Psi_{0,\ell}^{i,\leftarrow}$};

\node[rectangle, rounded corners, draw, thick, fill=white] at (6,1.5) {$\Psi_{1,\ell}^{i,\rightarrow}$};
\node[rectangle, rounded corners, draw, thick, fill=white] at (6,0) {$\Psi_{2,\ell}^{i,\leftrightarrow}$};
\node[rectangle, rounded corners, draw, thick, fill=white] at (6,-1.5) {$\Psi_{1,\ell}^{i,\leftarrow}$};

\node[rectangle, rounded corners, draw, thick, fill=white] at (9,1.5) {$\Psi_{{\sf T}-1,\ell}^{i,\rightarrow}$};
\node[rectangle, rounded corners, draw, thick, fill=white] at (9,0) {$\Psi_{{\sf T},\ell}^{i,\leftrightarrow}$};
\node[rectangle, rounded corners, draw, thick, fill=white] at (9,-1.5) {$\Psi_{{\sf T}-1,\ell}^{i,\leftarrow}$};

\end{tikzpicture}\caption{A graph showing the overlap between the states in $\Psi^{\cal A}\cup \Psi^{\cal B}$. Each node of the graph represents a set of pairwise orthogonal states. An edge between two nodes indicates that the sets are not orthogonal. The states in $\Psi^{\cal A}$ (or $\Psi^{\cal B}$) are an independent set of this graph. Here we show just a part of the graph, for some $\ell\in{\cal Q}$.}\label{fig:alg-fwk-ortho}
\end{figure}

In addition to the outer algorithm, we are assuming we have an inner subroutine, which is a variable-time subroutine $\{U_t^i\}_{i\in [n],t\in [{\sf T}]}$ as in \sec{variable-time}, computing $g(i)=g_i$ -- that is, we have input space ${\cal I}=[n]$, and answer space ${\cal A}=\{0,1\}$. Note that letting $A\ket{i}=(-1)^{g_i}\ket{i}$, and $A_a\ket{i}=(-1)^a\ket{i}$ for $a\in\{0,1\}$, it is easy to verify that this variable-time subroutine \emph{reversibly computes $A$}, simply letting $\tilde{U}_t = U_t$. 
We will use \emph{inner transition states}, which are basically the transition states of the subroutine (see \defin{transition-states}), but we carry around the extra information of the current state of the outer algorithm, $\ket{b,y}\ket{\ell}$ as well. Throughout this section, we will use $\alpha_t=1$ for all $t$, so we have:
\begin{equation}
\begin{split}
&\forall i\in [n],t\in\{0,\dots,{\sf T}-1\},\ell\in {\cal Q}\\
&\quad\Psi_{t,\ell}^{i,\rightarrow}:=\left\{\ket{\psi_{a,z,t}^{i,\rightarrow}}\ket{b,y}\ket{\ell}
= \ket{\rightarrow}\ket{i}\left(\ket{a,z}\ket{t} - U_{t+1}^i\ket{a,z}\ket{t+1} \right)\ket{b,y}\ket{\ell}\right\}_{
\substack{a\in\{0,1\}, z\in {\cal Z}_{>t},\\ b\in\{0,1\},y\in{\cal Y}}}\\
&\quad\Psi_{t,\ell}^{i,\leftarrow}:=\left\{\ket{\psi_{a,z,t}^{i,\leftarrow}}\ket{b,y}\ket{\ell}
= \ket{\leftarrow}\ket{i}\left(\ket{a,z}\ket{t} - U_{t+1}^i\ket{a,z}\ket{t+1} \right)\ket{b,y}\ket{\ell}\right\}_{\substack{a\in\{0,1\}, z\in {\cal Z}_{>t},\\ b\in\{0,1\},y\in{\cal Y}}}\\
&\forall i\in [n],t\in[{\sf T}],\ell\in {\cal Q}\\
&\quad\Psi_{t,\ell}^{i,\leftrightarrow}:=\left\{\ket{\psi_{a,z,t}^{i,\leftrightarrow}}\ket{b,y}\ket{\ell}
= \left(\ket{\rightarrow} - (-1)^a\ket{\leftarrow}\right)\ket{i}\ket{a,z}\ket{t}\ket{b,y}\ket{\ell}\right\}_{\substack{a\in\{0,1\}, z\in {\cal Z}_{t},\\ b\in\{0,1\},y\in{\cal Y}}}.
\end{split}\label{eq:alg-fwk-inner-states}
\end{equation}
We note that for $d\in\{\leftarrow,\rightarrow,\leftrightarrow\}$, we have 
$$\Psi_{t,\ell}^{i,d} = \bigcup_{b\in\{0,1\},y\in{\cal Y}}\Psi_t^{i,d}\otimes \ket{b,y}\ket{\ell},$$
where $\Psi_{t}^{i,d}$ are as in \defin{transition-states}.

\noindent Next, we define \emph{connecting states}, which represent transitioning into and out of the inner subroutine:
\begin{equation}
\begin{split}
& \forall i\in [n], \ell\in {\cal Q}\\
& \qquad\qquad\qquad\Psi_{i,\ell}^{\rightarrow}:=\left\{\ket{\psi_{i,b,y,\ell}^\rightarrow} := \ket{\bot}\ket{i}\ket{0,0}\ket{0}\ket{b,y}\ket{\ell}-\ket{\rightarrow}\ket{i}\ket{0,0}\ket{0}\ket{b,y}\ket{\ell}\right\}_{b\in\{0,1\},y\in {\cal Y}}\\
& \qquad\qquad\qquad\Psi_{i,\ell}^{\leftarrow}:=\left\{\ket{\psi_{i,b,y,\ell}^\leftarrow} := \ket{\leftarrow}\ket{i}\ket{0,0}\ket{0}\ket{b,y}\ket{\ell} - \ket{\bot}\ket{i}\ket{0,0}\ket{0}\ket{b,y}\ket{\ell+1}\right\}_{b\in\{0,1\},y\in {\cal Y}}
\end{split}\label{eq:alg-fwk-connecting-states}
\end{equation}

Analogous to how we let star states for $u\in V_0$ overlap with the initial state, we define an \emph{initial transition state}: 
\begin{equation}
\ket{\psi_{0,v_0}}:= \sqrt{\w_0}\ket{\circ}\ket{0}\ket{0,0}\ket{0}\ket{0,0}\ket{0}-\ket{\bot}\ket{0}\ket{0,0}\ket{0}\ket{0,0}\ket{0}\label{eq:alg-fwk-V1-state}
\end{equation}

Finally, analogous to how we let star states of marked vertices in \sec{graph-composition} have an extra edge coming out of them (to $v_0$), we will consider a state at time $\sf L$ to be marked, but we would like it to be ``more marked'' if it has $b=1$ in the answer register than not, which we reflect through the use of different weights on the extra edge. 
With this intuition, we include the following \emph{terminal states}, for some weights $\w_{1,\mathrm{out}}>\w_{0,\mathrm{out}}$:
\begin{equation}
\begin{split}
\Psi_{\mathrm{out}}:=\left\{\ket{\psi_{i,b,y}^{\mathrm{out}}} := \ket{\bot}\ket{i}\ket{0,0}\ket{0}\ket{b,y}\ket{{\sf L}} - \sqrt{\w_{b,{\mathrm{out}}}}\ket{\circ}\ket{i}\ket{0,0}\ket{0}\ket{b,y}\ket{{\sf L}}: i\in [n],b\in\{0,1\},y\in {\cal Y}\right\}.
\end{split}\label{eq:alg-fwk-M-states}
\end{equation}

Recall that the transition states of the subroutine, $\ket{\psi_{a,z,t}^{i,\rightarrow}}$, $\ket{\psi_{a,z,t}^{i,\leftarrow}}$, and $\ket{\psi_{a,z,t}^{i,\leftrightarrow}}$, can be divided into two pairwise orthogonal sets, $\Psi_0$ and $\Psi_1$, based on the parity of $t$ (see \defin{transition-states}). We define the following two sets of states.

\begin{equation}
\begin{split}
\Psi^{\cal A} ={}& \bigcup_{i\in [n]}\bigcup_{\ell\in\{0,\dots,{\sf L}-1\}\setminus{\cal Q}\;\mathrm{even}}\Psi_{i,\ell}^O
\cup \bigcup_{i\in [n]}\bigcup_{\ell\in{\cal Q}}\left(\Psi_{i,\ell}^{\rightarrow}\cup\Psi_{i,\ell}^{\leftarrow}\right)\cup\Psi_{\mathrm{out}}\cup \bigcup_{\substack{b\in\{0,1\}\\ y\in {\cal Y}}}\bigcup_{\substack{\ell\in {\cal Q}}}\left(\Psi_1\otimes\ket{b,y}\ket{\ell}\right)
\\
\Psi^{\cal B}={} &\bigcup_{i\in [n]}\bigcup_{\ell\in\{0,\dots,{\sf L}-1\}\setminus{\cal Q}\;\mathrm{odd}}\Psi_{i,\ell}^O\cup \{\ket{\psi_{0,v_0}}\}\cup \bigcup_{\substack{b\in\{0,1\}\\ y\in {\cal Y}}}\bigcup_{\substack{\ell\in {\cal Q}}}\left(\Psi_0\otimes\ket{b,y}\ket{\ell}
\right)
.
\end{split}\label{eq:alg-fwk-Psi}
\end{equation}

We claim that the states in $\Psi^{\cal A}$ (or $\Psi^{\cal B}$) are pairwise orthogonal, for which the assumptions that ${\sf L}$ is even, and every $\ell\in {\cal Q}$ is even are necessary. To illustrate this, \fig{alg-fwk-ortho} shows a diagram of which of the states in $\Psi^{\cal A}\cup \Psi^{\cal B}$ overlap.

The algorithm in \thm{alg-composition} will be a phase estimation algorithm, a la \thm{phase-est-fwk}, on a unitary $U_{\cal AB}=(2\Pi_{\cal A}-I)(2\Pi_{\cal B}-I)$, where ${\cal A}=\mathrm{span}\{\Psi^{\cal A}\}$ and ${\cal B}=\mathrm{span}\{\Psi^{\cal B}\}$, on initial state: 
\begin{equation}
\ket{\psi_0}=\ket{\circ}\ket{0}\ket{0,0}\ket{0}\ket{0,0}\ket{0}.\label{eq:alg-fwk-psi-init}
\end{equation}
Then we have the following:
\begin{lemma}\label{lem:alg-comp-unitary}
$U_{\cal AB}$ can be implemented in complexity $O(\log {\sf T}+\log{\sf L})$.
\end{lemma}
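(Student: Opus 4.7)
The plan is to follow the approach of \lem{variable-time-unitary}: build state-preparation unitaries $U_{\cal A}$ and $U_{\cal B}$ mapping distinguished computational basis states to the normalized elements of the pairwise-orthogonal sets $\Psi^{\cal A}$ and $\Psi^{\cal B}$, and then write $(2\Pi_{{\cal A}/{\cal B}} - I) = U_{{\cal A}/{\cal B}}\, R_{{\cal A}/{\cal B}}\, U_{{\cal A}/{\cal B}}^\dagger$, where $R_{{\cal A}/{\cal B}}$ is a reflection around the image basis set, implementable by checking a few easily-computable flags on the $d$, $t$ and $\ell$ registers.

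I would construct $U_{\cal A}$ by splitting $\Psi^{\cal A}$ into its natural kinds -- outer transition states, $\rightarrow$- and $\leftarrow$-connecting states, terminal states, and the odd-$t$ inner transition piece -- and handling each kind separately, branching on the first register and on the flags $[\ell \in {\cal Q}]$, $[\ell = {\sf L}]$, $[t \geq 1]$. For the outer transitions $\Psi_{i,\ell}^O$ with even $\ell \not\in {\cal Q}$, starting from $\ket{\bot}\ket{i,0,0,0,b,y,\ell}$, I append an auxiliary qubit rotated into $\tfrac{1}{\sqrt{2}}(\ket{0} - \ket{1})$, then controlled on $\ket{1}$ apply $\sum_{\ell} \ket{\ell}\bra{\ell} \otimes V_{\ell+1}$ (unit cost by assumption) and increment $\ell$, finally uncomputing the auxiliary using the parity of $\ell$, exactly as in the proof of \lem{variable-time-unitary}. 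The connecting states need only a fixed two-level rotation mixing $\ket{\bot}$ with $\ket{\rightarrow}$ (resp.\ $\ket{\leftarrow}$), combined with a controlled increment on $\ell$ in the $\leftarrow$ case. The terminal states $\Psi_{\mathrm{out}}$ use a rotation on the first register, controlled on $\ell = {\sf L}$ and on $b$, that mixes $\ket{\bot}$ with $\ket{\circ}$ at the angle determined by $\w_{b,\mathrm{out}}$. The odd-$t$ inner piece $\bigcup_{b,y,\ell\in{\cal Q}}(\Psi_1 \otimes \ket{b,y}\ket{\ell})$ is handled by directly invoking \lem{variable-time-unitary} at cost $O(\log {\sf T})$, tensored with identity on $\ket{b,y}\ket{\ell}$. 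The unitary $U_{\cal B}$ is built analogously, additionally preparing the single initial transition state $\ket{\psi_{0,v_0}}$ of \eq{alg-fwk-V1-state} via a rotation on the first register conditioned on $\ell = 0$.

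The main obstacle is bookkeeping: verifying that these kind-wise state preparations act on mutually disjoint sets of basis states so that they coherently assemble into a single isometry. This is a direct inspection using \eq{alg-fwk-outer-states}--\eq{alg-fwk-M-states} and the standing assumptions that ${\sf L}$ and every $\ell \in {\cal Q}$ are even -- for example, an outer transition in $\Psi^{\cal A}$ at even $\ell' \not\in {\cal Q}$ has basis-state support at positions $\ell'$ and $\ell'+1$, neither of which can coincide with a position used by a connecting state at $\ell \in {\cal Q}$ (since $\ell$ and $\ell'$ are distinct even integers and $\ell'+1$ is odd while $\ell$ is even), and the reversal and forward/backward parts of the inner reflections are separated by $z \in {\cal Z}_t$ vs.\ $z \in {\cal Z}_{>t}$. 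Once this is established, the cost of $U_{\cal A}$ and $U_{\cal B}$ is dominated by the $O(\log {\sf T})$ from \lem{variable-time-unitary} and the $O(\log {\sf L})$ for arithmetic on the $\ell$-register, giving $U_{\cal AB}$ in cost $O(\log {\sf T} + \log {\sf L})$.
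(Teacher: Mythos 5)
Your proposal is correct and follows essentially the same route as the paper's proof: decompose $\Psi^{\cal A}$ (resp.\ $\Psi^{\cal B}$) into its kinds, implement each reflection via a state-preparation unitary followed by a simple basis reflection as in \lem{variable-time-unitary}, invoke that lemma directly for the inner $\Psi_1$ (resp.\ $\Psi_0$) piece at cost $O(\log{\sf T})$, and handle the outer/connecting/terminal/initial pieces with $O(\log{\sf L})$ arithmetic and fixed rotations on the direction register. Your added bookkeeping about the disjoint supports of the different kinds (using the parity assumptions on ${\sf L}$ and ${\cal Q}$) is a detail the paper leaves implicit, but it does not change the argument.
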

\begin{proof}
We describe how to reflect around the states in $\Psi^{\cal A}$ one part at a time. We use the fact that reflecting around a set of pairwise orthogonal states can be reduced to generating the states and then performing a much simpler reflection, as in \lem{variable-time-unitary}.

As in \lem{variable-time-unitary}, we can reflect around $\Psi_1$ in complexity $O(\log {\sf T})$, from which it follows that we can reflect around $\bigcup_{b,y,\ell\in {\cal Q}}(\Psi_1\otimes\ket{b,y}\ket{\ell})$ in this cost. 
By a similar argument to \lem{variable-time-unitary}, we can reflect around the outer algorithm states with $\ell$ even in complexity $O(\log {\sf L})$. 
We can reflect around the two types of connecting states in $O(\log {\sf L})$, using the maps $\ket{\bot}\mapsto \propto\ket{\bot}-\ket{\rightarrow}$ and 
$\ket{\bot}\ket{\ell}\mapsto \propto \ket{\rightarrow}\ket{\ell}-\ket{\bot}\ket{\ell+1}$. 
Finally, we can reflect around the states in $\Psi_{\mathrm{out}}$ using the $O(1)$ complexity map $\ket{\bot}\ket{b}\mapsto \propto\ket{\bot}-\sqrt{\w_{b,\mathrm{out}}}\ket{\circ}$. 

The case for $\Psi^{\cal B}$ is similar, but we note that we can reflect around $\ket{\psi_{0,v_0}}$ using the $O(1)$ complexity map $\ket{\circ}\mapsto \propto \sqrt{\w_0}\ket{\circ}-\ket{\bot}$.
\end{proof}

\subsection{Positive Analysis}

Recall from \defin{algorithm-states}, the positive history states of the inner subroutine (using $\alpha_t=1$):
$$\ket{w_+(i)}=(\ket{\rightarrow}\ket{i}+(-1)^{g_i}\ket{\leftarrow}\ket{i})\sum_{t=0}^{\sf T}\ket{w^t(i)}\ket{t},$$
where $\ket{w^t(i)}$ is the state of the inner subroutine at time $t$ on input $i$. 
Our positive witness will be the following, where $\beta_{i,b,y}^{\ell}=\braket{i,b,y}{w_O^{\ell}}$ are the amplitudes of the outer algorithm states, defined in \eq{alg-comp-beta}:
\begin{multline}
\ket{w} =\frac{1}{\sqrt{\w_0}}\ket{\circ}\ket{0}\ket{0,0}\ket{0}\ket{0,0}\ket{0}+ \ket{\bot}\sum_{\ell=0}^{{\sf L}}\sum_{\substack{i\in [n], b\in\{0,1\}, y\in {\cal Y}}}\beta_{i,b,y}^{\ell}\ket{i}\ket{0,0}\ket{0}\ket{b,y}\ket{\ell}\\
+\sum_{\ell\in{\cal Q}}\sum_{\substack{i\in [n], b\in\{0,1\}, y\in {\cal Y}}}\beta_{i,b,y}^{\ell}\ket{w_+(i)}\ket{b,y}\ket{\ell}
+\ket{\circ}\sum_{\substack{i\in [n], b\in\{0,1\}, y\in {\cal Y}}}\frac{1}{\sqrt{\w_{b,\mathrm{out}}}}\beta_{i,b,y}^{\sf L}\ket{i}\ket{0,0}\ket{0}\ket{b,y}\ket{{\sf L}}.
\label{eq:alt-fwk-pos-witness}
\end{multline}
Ignoring the first and last term, this can be seen (up to algorithm errors) as a history state of the algorithm obtained by composing the inner and outer algorithms. 
\begin{claim}\label{clm:alg-fwk-ortho}
$\ket{w}$ is orthogonal to all outer transition states (see \eq{alg-fwk-outer-states}), connecting states (see \eq{alg-fwk-connecting-states}), terminal states (see \eq{alg-fwk-M-states}), and the initial transition state (see \eq{alg-fwk-V1-state}). 
\end{claim}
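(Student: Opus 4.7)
The plan is to split $\ket{w}$ into its four summands (call them $\ket{w_{\mathrm{init}}}$, $\ket{w_{\bot}}$, $\ket{w_{\leftrightarrow}}$, $\ket{w_{\circ}}$, in the order they appear in \eqref{eq:alt-fwk-pos-witness}) and, for each of the four families of states in the claim, identify which summands of $\ket{w}$ can possibly overlap with them based on the first (``direction'') register, then compute the inner product directly. In every case the inner product is a difference of two quantities that coincide, either because of a defining identity of the outer algorithm ($\ket{w_O^{\ell+1}}=V_{\ell+1}\ket{w_O^{\ell}}$) or because of the boundary identity $\ket{w^{0}(i)}=\ket{0,0}$ for the inner subroutine.

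First I would handle the \emph{outer transition states} $\ket{\psi_{i,b,y,\ell}^O}$ for $\ell\in\{0,\dots,{\sf L}-1\}\setminus{\cal Q}$: they live in the $\ket{\bot}$-sector, so only $\ket{w_{\bot}}$ contributes. Expanding and using $V_{\ell+1}^{\dagger}\ket{w_O^{\ell+1}}=\ket{w_O^{\ell}}$ (which holds because $\ell\notin{\cal Q}$ means $V_{\ell+1}$ acts on the outer registers in the usual way) gives $\braket{\psi_{i,b,y,\ell}^O}{w}=\beta_{i,b,y}^{\ell}-\bra{i,b,y}V_{\ell+1}^{\dagger}\ket{w_O^{\ell+1}}=0$. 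The \emph{initial transition state} $\ket{\psi_{0,v_0}}$ overlaps $\ket{w_{\mathrm{init}}}$ and $\ket{w_{\bot}}$; the resulting inner product is $\sqrt{\w_0}\cdot\tfrac{1}{\sqrt{\w_0}}-\beta_{0,0,0}^{0}=1-1=0$, where we used $\ket{w_O^{0}}=\ket{0,0,0}$. The \emph{terminal states} $\ket{\psi_{i,b,y}^{\mathrm{out}}}$ overlap $\ket{w_{\bot}}$ and $\ket{w_{\circ}}$ and yield $\beta_{i,b,y}^{{\sf L}}-\sqrt{\w_{b,\mathrm{out}}}\cdot\tfrac{1}{\sqrt{\w_{b,\mathrm{out}}}}\beta_{i,b,y}^{{\sf L}}=0$.

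The only step that needs a little more care is the \emph{connecting states} for $\ell\in{\cal Q}$, since these bridge the $\ket{\bot}$-sector and the $\ket{\rightarrow}/\ket{\leftarrow}$-sector of the ladder, i.e.\ they see $\ket{w_{\bot}}$ and $\ket{w_{\leftrightarrow}}$ simultaneously. For the forward version $\ket{\psi_{i,b,y,\ell}^{\rightarrow}}$, the $\ket{\rightarrow}$-part of the inner product picks out the coefficient of $\ket{\rightarrow}\ket{i}\ket{0,0}\ket{0}\ket{b,y}\ket{\ell}$ in $\ket{w_{\leftrightarrow}}$, which by the definition of $\ket{w_+(i)}$ and $\ket{w^0(i)}=\ket{0,0}$ equals exactly $\beta_{i,b,y}^{\ell}$; this cancels the $\ket{\bot}$-contribution $\beta_{i,b,y}^{\ell}$. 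For the backward version $\ket{\psi_{i,b,y,\ell}^{\leftarrow}}$, the analogous computation produces $(-1)^{g_i}\beta_{i,b,y}^{\ell}-\beta_{i,b,y}^{\ell+1}$; here I would invoke the defining property of ${\cal Q}$, namely $V_{\ell+1}={\cal O}_g$ for $\ell\in{\cal Q}$, so $\beta_{i,b,y}^{\ell+1}=(-1)^{g_i}\beta_{i,b,y}^{\ell}$, and the difference vanishes.

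The proof is entirely bookkeeping on the registers, and no single step is a true obstacle. The one place to be careful is the backward connecting state, where the cancellation only works because the outer algorithm genuinely performs the query $\mathcal{O}_g$ at step $\ell+1$ and because the sign $(-1)^{g_i}$ baked into $\ket{w_+(i)}$ via $A\ket{i}=(-1)^{g_i}\ket{i}$ matches the sign introduced in $\beta_{i,b,y}^{\ell+1}$; this is exactly why $\ket{w}$ can rightly be thought of as the history state of the composed outer/inner algorithm, and is what forces the specific form of the third summand of $\ket{w}$.
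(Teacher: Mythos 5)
Your proposal is correct and follows essentially the same route as the paper's proof: a direct register-by-register computation of each inner product, using $\ket{w_O^{\ell+1}}=V_{\ell+1}\ket{w_O^{\ell}}$ for the outer transition states, $\ket{w^0(i)}=\ket{0,0}$ (with $\alpha_0=1$) for the connecting and initial states, and the query identity $\beta_{i,b,y}^{\ell+1}=(-1)^{g_i}\beta_{i,b,y}^{\ell}$ for the backward connecting states. The sector decomposition by the direction register is exactly the organizing principle the paper uses implicitly, and all the cancellations you identify match the paper's.
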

\begin{proof}
\noindent\textbf{Outer Transition States:} For outer algorithms states, we fix some $\ell\not\in {\cal Q}$. The first register of $\ket{\psi_{i,b,y,\ell}^O}=\ket{\bot}(\ket{i}\ket{0,0,0}\ket{b,y}\ket{\ell}-V_{\ell+1}\ket{i}\ket{0,0,0}\ket{b,y}\ket{\ell+1})$ is only supported on $\ket{\bot}$, so we have:
\begin{align*}
\braket{\psi_{i,b,y,\ell}^O}{w} &= \bra{\psi_{i,b,y,\ell}^O}\sum_{\ell'=0}^{{\sf L}}\ket{\bot}\sum_{\substack{i'\in [n]\\ b'\in\{0,1\}, y'\in {\cal Y}}}\beta_{i',b',y'}^{\ell'}\ket{i'}\ket{0,0}\ket{0}\ket{b',y'}\ket{\ell}\\
&=\bra{i,b,y}\sum_{\substack{i'\in [n]\\ b'\in\{0,1\}, y'\in {\cal Y}}}\beta_{i',b',y'}^{\ell}\ket{i}\ket{b',y'}-\bra{i,b,y}V_{\ell+1}^\dagger\sum_{\substack{i'\in [n]\\ b'\in\{0,1\}, y'\in {\cal Y}}}\beta_{i',b',y'}^{\ell+1}\ket{i'}\ket{b',y'}\\
&= \braket{i,b,y}{w_{O}^{\ell}}-\bra{i,b,y}V_{\ell+1}^\dagger \ket{w_{O}^{\ell+1}}
= \braket{i,b,y}{w_{O}^{\ell}}-\bra{i,b,y}V_{\ell+1}^\dagger V_{\ell+1}\ket{w_{O}^{\ell}}=0.
\end{align*}

\noindent\textbf{Connecting States:} For any $i\in [n]$, $b\in\{0,1\}$, $y\in{\cal Y}$, $\ell\in{\cal Q}$,
we can see by referring to \eq{alg-fwk-connecting-states} that $\ket{\psi_{i,b,y,\ell}^{\rightarrow}}=(\ket{\bot}-\ket{\rightarrow})\ket{i}\ket{0,0,0}\ket{b,y}\ket{\ell}$, so:
\begin{align*}
\braket{\psi_{i,b,y,\ell}^\rightarrow}{w} &= \beta_{i,b,y}^{\ell} - \beta_{i,b,y}^{\ell}\braket{\rightarrow,i,0,0,0}{w_+(i)}
= \beta_{i,b,y}^{\ell} - \beta_{i,b,y}^{\ell}\braket{0,0}{w^0(i)}=0,
\end{align*}
where we have used $\ket{w^0(i)}=\ket{0,0}$ (see \defin{algorithm-states}).
Similarly, $\ket{\psi_{i,b,y,\ell}^{\leftarrow}}=\ket{\leftarrow}\ket{i}\ket{0,0,0}\ket{b,y}\ket{\ell}-\ket{\bot}\ket{i}\ket{0,0,0}\ket{b,y}\ket{\ell+1}$, so:
\begin{align*}
\braket{\psi_{i,b,y,\ell}^\leftarrow}{w} &= \beta_{i,b,y}^{\ell}\braket{\leftarrow,i,0,0,0}{w_+(i)} - \beta_{i,b,y}^{\ell+1}
=(-1)^{g_i}\beta_{i,b,y}^{\ell}\braket{0,0}{w^0(i)} - \beta_{i,b,y}^{\ell+1}
=(-1)^{g_i}\beta_{i,b,y}^{\ell} - \beta_{i,b,y}^{\ell+1}.
\end{align*}
To see that this is 0, we use the fact that since $\ell\in {\cal Q}$, $V_{\ell+1}={\cal O}_g$ is a query, so: 
\begin{equation}
\begin{split}
\beta_{i,b,y}^{\ell+1} &= \braket{i,b,y}{w_O^{\ell+1}}
=\bra{i,b,y}V_{\ell+1}\ket{w_O^{\ell}} = (-1)^{g_i}\braket{i,b,y}{w_O^{\ell}} = (-1)^{g_i}\beta_{i,b,y}^{\ell}.
\end{split}\label{eq:alg-fwk-beta-query}
\end{equation}

\noindent\textbf{Initial Transition State $\ket{\psi_{0,v_0}}=\sqrt{\w_0}\ket{\circ}\ket{0}\ket{0,0}\ket{0}\ket{0,0}\ket{0}-\ket{\bot}\ket{0}\ket{0,0}\ket{0}\ket{0,0}\ket{0}$:}
\begin{align*}
\braket{\psi_{0,v_0}}{w} &= \sqrt{\w_0}\frac{1}{\sqrt{\w_0}}\braket{\circ,0,0,0,0,0,0,0}{\circ,0,0,0,0,0,0,0} - \beta_{0,0,0}^0=0,
\end{align*}
since $\ket{w_O^0}=\ket{0}\ket{0,0}$, so $\beta_{0,0,0}^{0}=\braket{0,0,0}{w_O^0}=1$.

\noindent\textbf{Terminal States:} For $i\in [n]$, $b\in\{0,1\}$, and $y\in{\cal Y}$, $\ket{\psi_{i,b,y}^{\mathrm{out}}}=(\ket{\bot}-\sqrt{\w_{b,\mathrm{out}}}\ket{\circ})\ket{i}\ket{0,0}\ket{0}\ket{b,y}\ket{{\sf L}}$, so:
\begin{align*}
\braket{\psi_{i,y}^M}{w} &= \braket{\bot}{\bot}\beta_{i,b,y}^{{\sf L}}
-\braket{\circ}{\circ}\sqrt{\w_{b,\mathrm{out}}}\frac{1}{\sqrt{\w_{b,\mathrm{out}}}}\beta_{i,b,y}^{{\sf L}}=0.\qedhere
\end{align*}
\end{proof}

\begin{lemma}[Positive Witness]\label{lem:alg-comp-pos-witness}
Suppose $f\circ g(x)=1$, and let $\ket{w}$ be as in \eq{alt-fwk-pos-witness}, and $\ket{\psi_0}$ as in \eq{alg-fwk-psi-init}. 
Then letting 
$\delta={2\w_0}{\sf Q}\epsilon_{\mathrm{avg}},$
$\ket{w}$ is a $\delta$-positive witness with 
$$\frac{\norm{\ket{w}}^2}{|\braket{w}{\psi_0}|^2}
\leq 1+\w_0\left({{\sf L}+1}+{2}{\sf Q}\left({\sf T}_{\mathrm{avg}}+1\right)+\frac{\eps_O}{\w_{0,\mathrm{out}}}+\frac{1-\eps_O}{\w_{1,\mathrm{out}}}\right).$$
\end{lemma}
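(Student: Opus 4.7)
The plan is to leverage Claim~\ref{clm:alg-fwk-ortho}, which already establishes exact orthogonality of $\ket{w}$ to the outer transition states, connecting states, terminal states, and the initial transition state $\ket{\psi_{0,v_0}}$. What remains is therefore to (i) bound the overlap of $\ket{w}$ with the inner-subroutine states $\Psi_0 \otimes \ket{b,y}\ket{\ell}$ and $\Psi_1 \otimes \ket{b,y}\ket{\ell}$ for $\ell\in\mathcal{Q}$, and (ii) compute $\norm{\ket{w}}^2$ and $|\braket{\psi_0}{w}|^2$.

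For (i), I would first observe that the only part of $\ket{w}$ whose support on the first register lies in $\{\ket{\rightarrow},\ket{\leftarrow}\}$ is the third (ladder-gadget) summand; the $\ket{\circ}$ and $\ket{\bot}$ parts are automatically orthogonal to all inner transition states. Thus only $\sum_{\ell\in\mathcal{Q}}\sum_{i,b,y}\beta_{i,b,y}^\ell\ket{w_+(i)}\ket{b,y}\ket{\ell}$ contributes. Since the outer registers $\ket{b,y}\ket{\ell}$ are distinct across different choices of $(b,y,\ell)$, the projection decomposes additively with no cross terms. Applying \clm{ortho-alg-trans} to each $\ket{w_+(i)}$ (with $\alpha_t=1$), I get that forward and backward transition overlaps vanish exactly, and reversal-state overlap satisfies $\norm{\Pi_{\Psi_b}\ket{w_+(i)}}^2\leq 2\mathbb{E}[\epsilon_i^{T_i}]=2\epsilon_i$. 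Summing,
\begin{equation*}
\norm{\Pi_{\mathcal{A}}\ket{w}}^2 \leq \sum_{\ell\in\mathcal{Q}}\sum_{i,b,y}|\beta_{i,b,y}^\ell|^2\cdot 2\epsilon_i
= 2\sum_{\ell\in\mathcal{Q}}\sum_{i}\q_{i,\ell}\epsilon_i
= 2{\sf Q}\sum_i\bar\q_i\epsilon_i
= 2{\sf Q}\epsilon_{\mathrm{avg}},
\end{equation*}
and analogously for $\Pi_{\mathcal{B}}$.

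For (ii), the overlap is immediate: only the $\ket{\circ}\ket{0,\ldots,0}$ summand contributes, giving $\braket{\psi_0}{w}=1/\sqrt{\w_0}$. For the norm, each of the four summands lives in an orthogonal subspace (distinguished by the first register and, for the terminal term, by the combination of $\ket{\circ}$ with $\ket{\ell}=\ket{{\sf L}}$), so $\norm{\ket{w}}^2$ is the sum of their individual squared norms. This yields $1/\w_0$ for the initial piece; $\sum_{\ell=0}^{\sf L}\norm{\ket{w_O^\ell}}^2={\sf L}+1$ for the $\ket{\bot}$ piece; and for the ladder piece, using \cor{pos-witness-complexity} with $\alpha_t=1$ to get $\norm{\ket{w_+(i)}}^2=2(\mathbb{E}[T_i]+1)$, the total is $2\sum_{\ell\in\mathcal{Q}}\sum_i\q_{i,\ell}(\mathbb{E}[T_i]+1)=2{\sf Q}\sum_i\bar\q_i(\mathbb{E}[T_i]+1)\leq 2{\sf Q}({\sf T}_{\mathrm{avg}}+1)$. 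For the terminal piece, let $p_0=\sum_{i,y}|\beta_{i,0,y}^{\sf L}|^2$ be the probability the outer algorithm outputs $0$; in the positive case $p_0\leq \eps_O$. The contribution is $p_0/\w_{0,\mathrm{out}}+(1-p_0)/\w_{1,\mathrm{out}}$, which, since $\w_{1,\mathrm{out}}>\w_{0,\mathrm{out}}$ (so the coefficient of $p_0$ is positive), is maximized at $p_0=\eps_O$, giving $\eps_O/\w_{0,\mathrm{out}}+(1-\eps_O)/\w_{1,\mathrm{out}}$.

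Dividing $\norm{\ket{w}}^2$ by $|\braket{\psi_0}{w}|^2=1/\w_0$ yields the stated complexity bound, and using $\norm{\ket{w}}^2\geq 1/\w_0$ in the ratio $\norm{\Pi_{\mathcal{A}}\ket{w}}^2/\norm{\ket{w}}^2$ gives $\delta=2\w_0{\sf Q}\epsilon_{\mathrm{avg}}$. The proof is essentially a bookkeeping exercise: the only nontrivial conceptual step is (i), where I expect the main subtlety to be keeping track of which parts of $\Psi^{\mathcal{A}}\cup\Psi^{\mathcal{B}}$ have already been handled by Claim~\ref{clm:alg-fwk-ortho} versus those that require the approximate analysis from \clm{ortho-alg-trans}, and verifying that the additive decomposition across distinct $(b,y,\ell)$ labels is sound because the ladder gadgets for different outer-algorithm time steps are attached to orthogonal copies of the outer register.
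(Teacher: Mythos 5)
Your proposal is correct and follows essentially the same route as the paper's proof: invoke Claim~\ref{clm:alg-fwk-ortho} for exact orthogonality to all but the inner transition states, bound the residual overlap via \clm{ortho-alg-trans} applied to each $\ket{w_+(i)}$ (summing with the weights $\q_{i,\ell}$ to get $2{\sf Q}\epsilon_{\mathrm{avg}}$ and then using $\norm{\ket{w}}^2\geq 1/\w_0$), and compute $\norm{\ket{w}}^2$ as a sum over the four mutually orthogonal pieces exactly as in \eq{alg-fwk-w-norm}, including the same monotonicity-in-$p_0$ argument for the terminal term. No gaps.
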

\begin{proof}
Let $\Xi_b$ be the orthogonal projector onto states with $b$ in the outer algorithm's answer register, and $p_0$ the probability that the outer algorithm outputs $0$ on input $g(x)$ (assuming ${\cal O}_g$ is implemented with no error), so in particular, $p_0\leq \eps_O$. Then we have:
\begin{align}
\norm{\ket{w}}^2 &= \frac{1}{\w_0}+\sum_{\ell=0}^{{\sf L}}\norm{\ket{w_{O}^\ell}}^2+\sum_{\ell\in {\cal Q}}\sum_{\substack{i\in[n]\\ b\in\{0,1\},y\in{\cal Y}}}|\beta_{i,b,y}^{\ell}|^2\norm{\ket{w_+(i)}}^2+\frac{\norm{\Xi_0\ket{w_{O}^{{\sf L}}}}^2}{\w_{0,\mathrm{out}}}+\frac{\norm{\Xi_1\ket{w_{O}^{{\sf L}}}}^2}{\w_{1,\mathrm{out}}}\!\!\!\!\!\!\!\!\!\!\!\!\!\!\!\!\!\!\!\!\!\!\!\!\!\!\!\!\!\!\!\!\!\!\!\!\label{eq:alg-fwk-w-norm}\\
&= \frac{1}{\w_0}+{\sf L}+1+\sum_{\ell\in {\cal Q}}\sum_{\substack{i\in[n]\\ b\in\{0,1\},y\in{\cal Y}}}|\beta_{i,b,y}^{\ell}|^2 2\mathbb{E}\left[T_i+1\right]+\frac{p_0}{\w_{0,\mathrm{out}}}+\frac{1-p_0}{\w_{1,\mathrm{out}}} & \mbox{\cor{pos-witness-complexity}}\nonumber\\
&\leq \frac{1}{\w_0}+{\sf L}+1+2\sum_{i\in [n]}\sum_{\ell\in {\cal Q}}\q_{i,\ell}\mathbb{E}[T_i+1] + \frac{\eps_O}{\w_{0,\mathrm{out}}}+\frac{1-\eps_O}{\w_{1,\mathrm{out}}} & \mbox{$\w_{1,\mathrm{out}}>\w_{0,\mathrm{out}}$ and \eq{alg-comp-q_i-ell}}\nonumber\\
&=\frac{1}{\w_0}+{\sf L}+1+{2}{{\sf Q}}\sum_{i\in [n]}\bar\q_i (\mathbb{E}[T_i]+1)+ \frac{\eps_O}{\w_{0,\mathrm{out}}}+\frac{1-\eps_O}{\w_{1,\mathrm{out}}} & \mbox{by \eq{alg-comp-bar-q_i}}\nonumber\\
&\leq \frac{1}{\w_0}+{\sf L}+1+2{\sf Q}\left({\sf T}_{\mathrm{avg}}+1\right)+ \frac{\eps_O}{\w_{0,\mathrm{out}}}+\frac{1-\eps_O}{\w_{1,\mathrm{out}}},\nonumber
\end{align}
where in the last line we have used the assumption from the theorem statement that ${\sf T}_{\mathrm{avg}}$ upper bounds the average stopping time. Combined with the observation that $\braket{\psi_0}{w}=\frac{1}{\sqrt{\w_0}}$, we get the desired complexity bound. 

To complete the proof, we upper bound $\norm{\Pi_{\cal A}\ket{w}}^2$ and $\norm{\Pi_{\cal B}\ket{w}}^2$, to show that $\ket{w}$ is a $\delta$-positive witness, as defined in \defin{pos-witness}.
By \clm{alg-fwk-ortho}, the only states in $\Psi^{{\cal A}}\cup \Psi^{{\cal B}}$ (see \eq{alg-fwk-Psi}) that overlap $\ket{w}$ are the inner algorithm states (see \eq{alg-fwk-inner-states}). Since the inner algorithm states do not have $\bot$ or $\circ$ in the first register, we have:
\begin{align*}
\Pi_{{\cal A}}\ket{w} &= \sum_{\substack{b\in\{0,1\}\\ y\in{\cal Y}}}\left(\sum_{\substack{\ell\in {\cal Q}}}\Pi_{\Psi_1}\otimes\ket{b,y,\ell}\bra{b,y,\ell}\right)\sum_{\ell\in{\cal Q}}\sum_{i\in[n]}\beta_{i,b,y}^{\ell}\ket{w_+(i)}\ket{b,y}\ket{\ell}\\
&= \sum_{i\in [n]}\sum_{\substack{\ell\in {\cal Q}}}\Pi_{\Psi_1}\ket{w_+(i)}\otimes\sum_{b\in\{0,1\},y\in{\cal Y}}\beta_{i,b,y}^{\ell}\ket{b,y}\ket{\ell}.
\end{align*}
Then applying \clm{ortho-alg-trans}, which says that (using $\alpha_t=1$)
$$\norm{\Pi_{\Psi_1}\ket{w_+(i)}}^2\leq 2\mathbb{E}\left[{\epsilon_i^{T_i}}\right] = 2\epsilon_i,$$
and since $\ket{w_+(i)}$ are orthogonal for different $i$, we have:
\begin{align*}
\norm{\Pi_{{\cal A}}\ket{w}}^2 &= \sum_{i\in [n]}\sum_{\substack{\ell\in {\cal Q}}}\norm{\Pi_{\Psi_1}\ket{w_+(i)}}^2\sum_{b\in\{0,1\},y\in{\cal Y}}|\beta_{i,b,y}^{\ell}|^2
\leq {2}{\sf Q}\sum_{i\in [n]}\bar\q_{i}\epsilon_i
& \mbox{by \eq{alg-comp-q_i-ell} and \eq{alg-comp-bar-q_i}}\\
&\leq 2{\sf Q}\epsilon_{\mathrm{avg}}\leq 2\w_0{\sf Q}\epsilon_{\mathrm{avg}}\norm{\ket{w}}^2 = \delta\norm{\ket{w}}^2,
\end{align*}
since, by \eq{alg-fwk-w-norm}, $\norm{\ket{w}}^2 \geq 1/\w_0$.
By a nearly identical proof where we simply swap $\Pi_{\Psi_0}$ with $\Pi_{\Psi_1}$, we have a similar upper bound on 
$\norm{\Pi_{\cal B}\ket{w}}^2$, and thus, $\ket{w}$ is a $\delta$-positive witness.  
\end{proof}

\subsection{Negative Analysis} 

We will use the negative history states $\ket{w_-(i)}$ defined in \defin{algorithm-states}, and the amplitudes of the states of the outer algorithm at time $\ell$, $\beta_{i,b,y}=\braket{i,b,y}{w_O^{\ell}}$, defined in \eq{alg-comp-beta}, to define a negative witness (see \defin{neg-witness}). We first define:
\begin{equation}
\begin{split}
\ket{\tilde{w}_{\cal A}} &:= \sum_{\substack{\ell\in\{0,\dots,{\sf L}-1\}\setminus {\cal Q}\\ \mathrm{even}}}\sum_{i,b,y}\beta_{i,b,y}^{\ell}\ket{\psi_{i,b,y,\ell}^O}
+\sum_{\substack{\ell\in {\cal Q}}}\sum_{i,b,y}\beta_{i,b,y}^{\ell}\Biggl(\ket{\psi_{i,b,y,\ell}^{\rightarrow}}+(-1)^{g_i}\ket{\psi_{i,b,y,\ell}^{\leftarrow}}\\
&\qquad\qquad\qquad\qquad -\ket{w_-(i)}\ket{b,y}\ket{\ell}+(\ket{\rightarrow}\ket{i}-(-1)^{g_i}\ket{\leftarrow}\ket{i})\ket{0,0}\ket{0}\ket{b,y}\ket{\ell}\Biggr)\\
\mbox{and }\ket{\tilde{w}_{\cal B}} &:= 
\sum_{\substack{\ell\in\{0,\dots,{\sf L}-1\}\setminus{\cal Q}\\ \mathrm{odd}}}\sum_{i,b,y}\beta_{i,b,y}^{\ell}\ket{\psi_{i,b,y,\ell}^O}
+\sum_{\substack{\ell\in {\cal Q}}}\sum_{i,b,y}\beta_{i,b,y}^{\ell}\ket{w_-(i)}\ket{b,y}\ket{\ell},
\end{split}\label{eq:alg-fwk-neg-witness-tilde}
\end{equation}
where 
the outer transition states $\ket{\psi_{i,b,y,\ell}^O}$ are defined in \eq{alg-fwk-outer-states}, and the connecting states $\ket{\psi_{i,b,y,\ell}^{\rightarrow}},\ket{\psi_{i,b,y,\ell}^{\leftarrow}}$ are defined in \eq{alg-fwk-connecting-states}. Recall that we assume that all $\ell\in {\cal Q}$ are even.

\begin{claim}\label{clm:alg-fwk-delta-prime}$\displaystyle
\norm{(I-\Pi_{\cal A})\ket{\tilde{w}_{\cal A}}}^2 \leq 2{\sf Q}\epsilon_{\mathrm{avg}}
\mbox{ and }
\norm{(I-\Pi_{\cal B})\ket{\tilde{w}_{\cal B}}}^2 \leq 2{\sf Q}\epsilon_{\mathrm{avg}}
.
$
\end{claim}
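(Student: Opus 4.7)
The plan is to use the structure of $\Psi^{\cal A}$ and $\Psi^{\cal B}$ from \eq{alg-fwk-Psi} to show that most of the summands defining $\ket{\tilde{w}_{\cal A}}$ and $\ket{\tilde{w}_{\cal B}}$ already lie in ${\cal A}$ and ${\cal B}$ respectively, so that only the subroutine-history pieces contribute to the orthogonal complement. For $\ket{\tilde{w}_{\cal A}}$: every outer transition state $\ket{\psi_{i,b,y,\ell}^O}$ with even $\ell\notin{\cal Q}$ and every connecting state $\ket{\psi_{i,b,y,\ell}^{\rightarrow}},\ket{\psi_{i,b,y,\ell}^{\leftarrow}}$ with $\ell\in{\cal Q}$ sits in $\Psi^{\cal A}$, so $(I-\Pi_{\cal A})$ annihilates them. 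Setting $\ket{u_i}:=\ket{w_-(i)}-(\ket{\rightarrow}\ket{i}-\ket{\leftarrow}A\ket{i})\ket{0,0}\ket{0}$ and using $A\ket{i}=(-1)^{g_i}\ket{i}$, what remains is
$$(I-\Pi_{\cal A})\ket{\tilde{w}_{\cal A}} \;=\; -\sum_{\ell\in{\cal Q}}\sum_{i,b,y}\beta_{i,b,y}^{\ell}(I-\Pi_{\cal A})\bigl(\ket{u_i}\ket{b,y}\ket{\ell}\bigr),$$
and the symmetric reduction for $\ket{\tilde{w}_{\cal B}}$ leaves $\sum_{\ell\in{\cal Q}}\sum_{i,b,y}\beta_{i,b,y}^{\ell}(I-\Pi_{\cal B})(\ket{w_-(i)}\ket{b,y}\ket{\ell})$.

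Next I would invoke \clm{neg-witness-error} with $\alpha_t=1$ to get the per-$i$ bounds $\norm{(I-\Pi_{\Psi_1})\ket{u_i}}^2\leq 2\epsilon_i$ and $\norm{(I-\Pi_{\Psi_0})\ket{w_-(i)}}^2\leq 2\epsilon_i$. To upgrade these to bounds on projections onto ${\cal A}^\perp$ and ${\cal B}^\perp$, I need that no component of $\Psi^{\cal A}\setminus(\Psi_1\otimes\ket{b,y}\ket{\ell})$ or $\Psi^{\cal B}\setminus(\Psi_0\otimes\ket{b,y}\ket{\ell})$ overlaps $\ket{u_i}\ket{b,y}\ket{\ell}$ or $\ket{w_-(i)}\ket{b,y}\ket{\ell}$. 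This is a case analysis on the first register: outer transition and terminal states carry $\ket{\bot}$ or $\ket{\circ}$, the initial transition state $\ket{\psi_{0,v_0}}$ is supported on input $\ket{0}$, and the connecting states (only relevant for the $\ket{\tilde{w}_{\cal A}}$ side) have subroutine time $\ket{0}$ -- a value absent from $\ket{u_i}$ since the $t=0$ component of $\ket{w_-(i)}$ is exactly cancelled by the correction. Hence $\Pi_{\cal A}\ket{u_i}\ket{b,y}\ket{\ell}=(\Pi_{\Psi_1}\ket{u_i})\ket{b,y}\ket{\ell}$ and $\Pi_{\cal B}\ket{w_-(i)}\ket{b,y}\ket{\ell}=(\Pi_{\Psi_0}\ket{w_-(i)})\ket{b,y}\ket{\ell}$, yielding $\norm{(I-\Pi_{\cal A})(\ket{u_i}\ket{b,y}\ket{\ell})}^2\leq 2\epsilon_i$ and the analogous bound on the $\ket{\tilde{w}_{\cal B}}$ side.

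Finally I would assemble the per-term estimates by Pythagoras. The vectors $\ket{u_i}\ket{b,y}\ket{\ell}$ (resp. $\ket{w_-(i)}\ket{b,y}\ket{\ell}$) for distinct $(i,b,y,\ell)$ occupy mutually orthogonal subspaces, distinguished by the input register $\ket{i}$, workspace register $\ket{b,y}$, and counter register $\ket{\ell}$; and the previous paragraph shows that their $\Pi_{\cal A}$- (resp. $\Pi_{\cal B}$-) images stay within those subspaces. Hence their $(I-\Pi_{\cal A})$-images are also pairwise orthogonal, and
\begin{align*}
\norm{(I-\Pi_{\cal A})\ket{\tilde{w}_{\cal A}}}^2 &= \sum_{\ell\in{\cal Q}}\sum_{i,b,y}|\beta_{i,b,y}^{\ell}|^2\,\norm{(I-\Pi_{\cal A})(\ket{u_i}\ket{b,y}\ket{\ell})}^2 \\
&\leq 2\sum_{\ell\in{\cal Q}}\sum_i \q_{i,\ell}\epsilon_i \;=\; 2{\sf Q}\sum_i\bar\q_i\epsilon_i \;=\; 2{\sf Q}\epsilon_{\mathrm{avg}},
\end{align*}
by \eq{alg-comp-q_i-ell} and \eq{alg-comp-bar-q_i}; the identical argument handles $\ket{\tilde{w}_{\cal B}}$. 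The main obstacle I anticipate is the orthogonality bookkeeping of paragraph two -- one has to trace through all components of $\Psi^{\cal A}$ and $\Psi^{\cal B}$ to verify that no cross-overlap spoils the per-$i$ Pythagorean decomposition -- but the register-by-register separation built into the space $H$ of \eq{alg-comp-H} makes those verifications mechanical.
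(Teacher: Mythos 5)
Your proof is correct and takes essentially the same route as the paper's: absorb the summands already lying in $\Psi^{\cal A}$ (resp.\ $\Psi^{\cal B}$), reduce to the residual history-state terms, apply \clm{neg-witness-error} with $\alpha_t=1$, and sum using \eq{alg-comp-q_i-ell} and \eq{alg-comp-bar-q_i}. The only difference is that you make explicit the orthogonality bookkeeping (including the cancellation of the $t=0$ component of $\ket{w_-(i)}$ against the correction term) that justifies the Pythagorean decomposition, which the paper leaves implicit.
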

\begin{proof}
Referring to \eq{alg-fwk-Psi}, we can see that $\ket{\psi_{i,b,y,\ell}^O}\in \Psi^{\cal A}$ whenever $\ell$ is even, and in $\Psi^{\cal B}$ whenever $\ell$ is odd. Furthermore, whenever $\ell\in {\cal Q}$ (and so by assumption $\ell$ is even), $\ket{\psi_{i,b,y,\ell}^{\rightarrow}},\ket{\psi_{i,b,y,\ell}^{\leftarrow}}\in \Psi^{\cal A}$, $\Psi_1\otimes \ket{b,y}\ket{\ell}\subset \Psi^{\cal A}$, and $\Psi_0\otimes\ket{b,y}\ket{\ell}\subset \Psi^{\cal B}$, so:
\begin{align*}
\norm{(I-\Pi_{\cal A})\ket{\tilde{w}_{\cal A}}}^2 &\leq  \sum_{\ell\in {\cal Q}}\sum_{i,b,y}|\beta_{i,b,y}^{\ell}|^2\norm{(I-\Pi_{\Psi_1})\left(\ket{w_-(i)}-(\ket{\rightarrow}\ket{i}-(-1)^{g_i}\ket{\leftarrow}\ket{i})\ket{0,0}\ket{0}\right)}^2\!\!\!\!\!\!\!\!\!\!\!\!\!\!\!\!\!\!\!\!\!\!\!\!\!\!\!\!\!\!\!\!\\
&\leq \sum_{\ell\in {\cal Q}}\sum_{i,b,y}|\beta_{i,b,y}^{\ell}|^22\mathbb{E}[\epsilon_i^{T_i}] & \mbox{by \clm{neg-witness-error}},\\
&= 2\sum_{\ell\in {\cal Q}}\sum_{i\in [n]}\q_{i,\ell}\epsilon_i = 2{\sf Q}\sum_{i\in [n]}\bar\q_i\epsilon_i = 2{\sf Q}\epsilon_{\mathrm{avg}} & \mbox{by \eq{alg-comp-q_i-ell} and \eq{alg-comp-bar-q_i}}.
\end{align*}
Similarly,  
again applying \clm{neg-witness-error}:
\begin{align*}
\norm{(I-\Pi_{\cal B})\ket{\tilde{w}_{\cal B}}}^2 &\leq  \sum_{\ell\in {\cal Q}}\sum_{i,b,y}|\beta_{i,b,y}^{\ell}|^2\norm{(I-\Pi_{\Psi_0})\ket{w_-(i)}}^2
\leq \sum_{\ell\in {\cal Q}}\sum_{i,b,y}|\beta_{i,b,y}^{\ell}|^22\mathbb{E}[\epsilon_i^{T_i}] = 2{\sf Q}\epsilon_{\mathrm{avg}}.\qedhere
\end{align*}
\end{proof}

\begin{claim}\label{clm:alg-fwk-neg-tilde}
Let $\ket{\tilde{w}_{\cal A}}$ and $\ket{\tilde{w}_{\cal B}}$ be as defined in \eq{alg-fwk-neg-witness-tilde}. For $\ell\in\{0,\dots,{\sf L}\}$, define
$$\ket{\bar{w}_O^{\ell}}:=\sum_{i,b,y}\beta_{i,b,y}^{\ell}\ket{i}\ket{0,0}\ket{0}\ket{b,y},$$
which is just the state of the outer algorithm at time $\ell$, $\ket{w_O^{\ell}}$, with extra registers set to 0. 
Then
$$\ket{\tilde{w}_{\cal A}}+\ket{\tilde{w}_{\cal B}} = \ket{\bot}\ket{0}\ket{0,0}\ket{0}\ket{0,0}\ket{0} - \ket{\bot}\ket{\bar{w}_O^{\sf L}}\ket{{\sf L}}.$$
\end{claim}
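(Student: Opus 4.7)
The plan is to verify that $\ket{\tilde{w}_{\cal A}}+\ket{\tilde{w}_{\cal B}}$ telescopes across $\ell$ to leave only the $\ell=0$ and $\ell={\sf L}$ endpoints. First I would note that the negative history state contributions cancel: the term $-\ket{w_-(i)}\ket{b,y}\ket{\ell}$ in $\ket{\tilde{w}_{\cal A}}$ is exactly negated by the term $+\ket{w_-(i)}\ket{b,y}\ket{\ell}$ in $\ket{\tilde{w}_{\cal B}}$. So the sum reduces to a contribution for every $\ell\in\{0,\dots,{\sf L}-1\}$, split into two cases depending on whether $\ell\in{\cal Q}$ or not.

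For $\ell\notin{\cal Q}$, the contribution is $\sum_{i,b,y}\beta_{i,b,y}^{\ell}\ket{\psi_{i,b,y,\ell}^O}$. Unpacking the definition in \eq{alg-fwk-outer-states} and using $\ket{w_O^{\ell+1}}=V_{\ell+1}\ket{w_O^{\ell}}$ together with the definition of $\ket{\bar{w}_O^{\ell}}$, this simplifies to $\ket{\bot}\ket{\bar{w}_O^{\ell}}\ket{\ell}-\ket{\bot}\ket{\bar{w}_O^{\ell+1}}\ket{\ell+1}$. For $\ell\in{\cal Q}$, the remaining contribution is
\[
\sum_{i,b,y}\beta_{i,b,y}^{\ell}\Bigl(\ket{\psi_{i,b,y,\ell}^{\rightarrow}}+(-1)^{g_i}\ket{\psi_{i,b,y,\ell}^{\leftarrow}}+(\ket{\rightarrow}\ket{i}-(-1)^{g_i}\ket{\leftarrow}\ket{i})\ket{0,0,0,b,y,\ell}\Bigr).
\]
Here, the $\ket{\rightarrow}$ and $\ket{\leftarrow}$ parts of the connecting states $\ket{\psi_{i,b,y,\ell}^{\rightarrow}}$ and $\ket{\psi_{i,b,y,\ell}^{\leftarrow}}$ cancel against the explicit correction term, leaving the $\ket{\bot}$ parts $\ket{\bot}\ket{i,0,0,0,b,y,\ell}-(-1)^{g_i}\ket{\bot}\ket{i,0,0,0,b,y,\ell+1}$. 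By \eq{alg-fwk-beta-query}, $(-1)^{g_i}\beta_{i,b,y}^{\ell}=\beta_{i,b,y}^{\ell+1}$, which packages this into exactly the same form $\ket{\bot}\ket{\bar{w}_O^{\ell}}\ket{\ell}-\ket{\bot}\ket{\bar{w}_O^{\ell+1}}\ket{\ell+1}$ as in the non-query case.

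With both cases giving the same contribution, summing over $\ell=0,\dots,{\sf L}-1$ produces a telescoping sum $\sum_{\ell=0}^{{\sf L}-1}\bigl(\ket{\bot}\ket{\bar{w}_O^{\ell}}\ket{\ell}-\ket{\bot}\ket{\bar{w}_O^{\ell+1}}\ket{\ell+1}\bigr)=\ket{\bot}\ket{\bar{w}_O^{0}}\ket{0}-\ket{\bot}\ket{\bar{w}_O^{\sf L}}\ket{{\sf L}}$. Since the outer algorithm is initialized to $\ket{0}\ket{0,0}$, we have $\ket{\bar{w}_O^{0}}=\ket{0}\ket{0,0}\ket{0}\ket{0,0}$, giving the claimed identity.

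The only step that needs any care is showing that the query and non-query cases produce identical per-$\ell$ contributions, which is precisely where the identity $(-1)^{g_i}\beta_{i,b,y}^{\ell}=\beta_{i,b,y}^{\ell+1}$ from \eq{alg-fwk-beta-query} is used; this is the conceptual reason for introducing the extra correction term $(\ket{\rightarrow}\ket{i}-(-1)^{g_i}\ket{\leftarrow}\ket{i})\ket{0,0,0,b,y,\ell}$ into $\ket{\tilde{w}_{\cal A}}$ in the first place. Once that bookkeeping is handled, the rest is a routine telescoping computation.
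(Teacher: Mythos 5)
Your proposal is correct and follows essentially the same route as the paper's proof: cancel the $\ket{w_-(i)}$ terms, reduce each $\ell$ (query or not) to the common form $\ket{\bot}\ket{\bar{w}_O^{\ell}}\ket{\ell}-\ket{\bot}\ket{\bar{w}_O^{\ell+1}}\ket{\ell+1}$ using \eq{alg-fwk-beta-query} in the query case, and telescope. The only cosmetic difference is that you cancel the $\ket{\rightarrow},\ket{\leftarrow}$ components before summing over $i,b,y$, whereas the paper sums each piece first; the content is identical.
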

\begin{proof}
Referring to \eq{alg-fwk-neg-witness-tilde}, we have:
\begin{equation}
\begin{split}
\ket{\tilde w_{\cal A}}+\ket{\tilde w_{\cal B}} &= \sum_{\ell=0}^{{\sf L}-1}\sum_{i,b,y}\beta_{i,b,y}^{\ell}\ket{\psi_{i,b,y,\ell}^O}\\
&+\sum_{\ell\in {\cal Q}}\sum_{i,b,y}\beta_{i,b,y}^{\ell} \left(\ket{\psi_{i,b,y,\ell}^{\rightarrow}}+(-1)^{g_i}\ket{\psi_{i,b,y,\ell}^{\leftarrow}}
+(\ket{\rightarrow}-(-1)^{g_i}\ket{\leftarrow})\ket{i}\ket{0,0}\ket{0}\ket{b,y}\ket{\ell}
\right).
\end{split}\label{eq:tilde-A-B}
\end{equation}
Referring to \eq{alg-fwk-connecting-states}, we can compute, for any $\ell\in{\cal Q}$:
\begin{equation}
\begin{split}
\sum_{i,b,y}\beta_{i,b,y}^{\ell}\ket{\psi_{i,b,y,\ell}^{\rightarrow}}
&= \sum_{i,b,y}\beta_{i,b,y}^{\ell}(\ket{\bot}\ket{i}\ket{0,0}\ket{0}\ket{b,y}\ket{\ell}-\ket{\rightarrow}\ket{i}\ket{0,0}\ket{0}\ket{b,y}\ket{\ell})
=(\ket{\bot}-\ket{\rightarrow})\ket{\bar{w}_O^{\ell}}\ket{\ell}.
\end{split}\label{eq:outer-sum-1}
\end{equation}
Similarly, since $\ell\in {\cal Q}$, $V_{\ell+1}$ is a query, so we have $\beta_{i,b,y}^{\ell+1}=(-1)^{g_i}\beta_{i,b,y}^{\ell}$ (see \eq{alg-fwk-beta-query}). Thus we have, again referring to \eq{alg-fwk-connecting-states}:
\begin{equation}
\begin{split}
\sum_{i,b,y}\beta_{i,b,y}^{\ell}(-1)^{g_i}\ket{\psi_{i,b,y,\ell}^{\leftarrow}}
&= \sum_{i,b,y}\beta_{i,b,y}^{\ell+1}(\ket{\leftarrow}\ket{i}\ket{0,0}\ket{0}\ket{b,y}\ket{\ell}-\ket{\bot}\ket{i}\ket{0,0}\ket{0}\ket{b,y}\ket{\ell+1})\\
&= \ket{\leftarrow}\ket{\bar{w}_O^{\ell+1}}\ket{\ell} - \ket{\bot}\ket{\bar{w}_O^{\ell+1}}\ket{\ell+1}.
\end{split}\label{eq:outer-sum-2}
\end{equation}
Still assuming $\ell\in{\cal Q}$, we have:
\begin{equation}
\begin{split}
&\sum_{i,b,y}\beta_{i,b,y}^{\ell}\left(\ket{\rightarrow}\ket{i}-(-1)^{g_i}\ket{\leftarrow}\ket{i}\right)\ket{0,0}\ket{0}\ket{b,y}\ket{\ell}\\
={}&\ket{\rightarrow}\sum_{i,b,y}\beta_{i,b,y}^{\ell}\ket{i}\ket{0,0}\ket{0}\ket{b,y}\ket{\ell} - \ket{\leftarrow}\sum_{i,b,y}\beta_{i,b,y}^{\ell+1}\ket{i}\ket{0,0}\ket{0}\ket{b,y}\ket{\ell}
= \ket{\rightarrow}\ket{\bar{w}_O^{\ell}}\ket{\ell} - \ket{\leftarrow}\ket{\bar{w}_O^{\ell+1}}\ket{\ell}.
\end{split}\label{eq:outer-sum-3}
\end{equation}
Thus, combining \eq{outer-sum-1}, \eq{outer-sum-2} and \eq{outer-sum-3}, we have that for all $\ell\in {\cal Q}$,
\begin{equation}
\begin{split}
&\sum_{i,b,y}\beta_{i,b,y}^{\ell}\left(\ket{\psi_{i,b,y,\ell}^{\rightarrow}}+(-1)^{g_i}\ket{\psi_{i,b,y,\ell}^{\leftarrow}}
+\left(\ket{\rightarrow}\ket{i}-(-1)^{g_i}\ket{\leftarrow}\ket{i}\right)\ket{0,0}\ket{0}\ket{b,y}\ket{\ell}\right)\\
={}&\ket{\bot}\left(\ket{\bar{w}_O^{\ell}}\ket{\ell}-\ket{\bar{w}_O^{\ell+1}}\ket{\ell+1}\right).
\end{split}\label{eq:outer-sum-queries-neg}
\end{equation}

\noindent Next, referring to \eq{alg-fwk-outer-states}, we can compute:
\begin{equation}
\begin{aligned}
&\sum_{\ell\in \{0,\dots,{\sf L}-1\}\setminus{\cal Q}}\sum_{i,b,y}\beta_{i,b,y}^{\ell}\ket{\psi_{i,b,y,\ell}^O}\\
={}& \sum_{\ell\in \{0,\dots,{\sf L}-1\}\setminus{\cal Q}}\sum_{i,b,y}\beta_{i,b,y}^{\ell}\ket{\bot}\left(\ket{i}\ket{0,0}\ket{0}\ket{b,y}\ket{\ell}-V_{\ell+1}(\ket{i}\ket{0,0}\ket{0}\ket{b,y})\ket{\ell+1}\right)\!\!\!\!\!\!\!\!\!\!\!\!\!\!\!\!\!\!\!\!\\
={}& \ket{\bot}\sum_{\ell\in \{0,\dots,{\sf L}-1\}\setminus{\cal Q}}\ket{\bar{w}_O^{\ell}}\ket{\ell}-\ket{\bot}\sum_{\ell\in \{0,\dots,{\sf L}-1\}\setminus{\cal Q}}V_{\ell+1}\ket{\bar{w}_O^{\ell}}\ket{\ell+1} \\
={}& \ket{\bot}\sum_{\ell\in \{0,\dots,{\sf L}-1\}\setminus{\cal Q}}\ket{\bar{w}_O^{\ell}}\ket{\ell}-\ket{\bot}\sum_{\ell\in \{0,\dots,{\sf L}-1\}\setminus{\cal Q}}\ket{\bar{w}_O^{\ell+1}}\ket{\ell+1}.
\end{aligned}\label{eq:outer-sum-neg}
\end{equation}

Combining \eq{outer-sum-queries-neg} and \eq{outer-sum-neg} in \eq{tilde-A-B}, we can compute:
\begin{align*}
\ket{\tilde w_{\cal A}}+\ket{\tilde w_{\cal B}} &= \ket{\bot}\left(\sum_{\ell\in \{0,\dots,{\sf L}-1\}\setminus{\cal Q}}\left(\ket{\bar{w}_O^{\ell}}\ket{\ell}-\ket{\bar{w}_O^{\ell+1}}\ket{\ell+1}\right)
+ \sum_{\ell\in {\cal Q}}\left(\ket{\bar{w}_O^{\ell}}\ket{\ell}-\ket{\bar{w}_O^{\ell+1}}\ket{\ell+1} \right)\right)\\
&= \sum_{\ell=0}^{{\sf L}-1}\ket{\bar{w}_O^{\ell}}\ket{\ell} - \sum_{\ell=0}^{{\sf L}-1}\ket{\bar{w}_O^{\ell+1}}\ket{\ell+1}
= \ket{\bot}\ket{\bar{w}_O^0}\ket{0} - \ket{\bot}\ket{\bar{w}_O^{{\sf L}}}\ket{{\sf L}}.
\end{align*}
The proof is completed by observing that $\ket{\bar{w}_O^0}=\ket{0}\ket{0,0}\ket{0}\ket{0,0}$ (see \eq{alg-fwk-w_O}).
\end{proof}

We are now ready to define the negative witness. We let:
\begin{equation}
\begin{split}
\ket{w_{\cal A}} &= \frac{1}{\sqrt{\w_0}}\left(\ket{\tilde w_{\cal A}} + \ket{\bot}\ket{\bar{w}_O^{{\sf L}}}\ket{{\sf L}}\right)
\;\mbox{and }\;\ket{w_{\cal B}} = \frac{1}{\sqrt{\w_0}}\left(\ket{\psi_{0,v_0}} + \ket{\tilde w_{\cal B}}\right),
\end{split}\label{eq:alg-fwk-neg-witness}
\end{equation}
where $\ket{\psi_{0,v_0}}$ is defined in \eq{alg-fwk-V1-state}, $\ket{\bar{w}_O^{{\sf L}}}$ is defined in \clm{alg-fwk-neg-tilde}, and $\ket{\tilde w_{\cal A}}$ and $\ket{\tilde w_{\cal B}}$ are defined in \eq{alg-fwk-neg-witness-tilde}. 

\begin{lemma}\label{lem:alg-comp-neg-witness}
Suppose $f(g(x))=0$. Let
$$\delta'= \frac{4}{\w_0}\left(2{\sf Q}\epsilon_{\mathrm{avg}} +\w_{1,\mathrm{out}}\eps_O+\w_{0,\mathrm{out}}(1-\eps_O)\right).$$
Then $\ket{w_{\cal A}}$, $\ket{w_{\cal B}}$ is a $\delta'$-negative witness (see \defin{neg-witness}) with complexity:
$$\norm{\ket{w_{\cal A}}}^2 \leq \frac{4}{\w_0}\left({\sf L}+2{\sf Q}({\sf T}_{\mathrm{avg}}+1)+1\right).$$
\end{lemma}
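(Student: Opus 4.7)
The plan is to verify the three defining properties of a $\delta'$-negative witness (decomposition, $\mathcal{A}$-error, $\mathcal{B}$-error) and then compute $\|\ket{w_{\mathcal{A}}}\|^2$.

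First I would check the decomposition $\ket{w_{\mathcal{A}}}+\ket{w_{\mathcal{B}}}=\ket{\psi_0}$. This is immediate from \clm{alg-fwk-neg-tilde}: summing the definitions in \eq{alg-fwk-neg-witness} and applying the claim, the $\ket{\bot}\ket{\bar{w}_O^{\sf L}}\ket{{\sf L}}$ and $-\ket{\bot}\ket{\bar{w}_O^{\sf L}}\ket{{\sf L}}$ terms cancel, leaving $\frac{1}{\sqrt{\w_0}}(\ket{\psi_{0,v_0}}+\ket{\bot}\ket{0,0,0,0,0,0}\ket{0})$, which by \eq{alg-fwk-V1-state} equals $\ket{\circ}\ket{0,0,0,0,0,0}\ket{0}=\ket{\psi_0}$.

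Next I would bound $\|(I-\Pi_{\mathcal{B}})\ket{w_{\mathcal{B}}}\|^2$. Since $\ket{\psi_{0,v_0}}\in\Psi^{\mathcal{B}}$, it contributes nothing, and \clm{alg-fwk-delta-prime} gives the remaining piece $\|(I-\Pi_{\mathcal{B}})\ket{\tilde{w}_{\mathcal{B}}}\|^2\leq 2{\sf Q}\eps_{\mathrm{avg}}$, so $\|(I-\Pi_{\mathcal{B}})\ket{w_{\mathcal{B}}}\|^2\leq \frac{2{\sf Q}\eps_{\mathrm{avg}}}{\w_0}\leq \delta'$. For $\|(I-\Pi_{\mathcal{A}})\ket{w_{\mathcal{A}}}\|^2$, the key trick is to rewrite the ``extra'' term using the terminal states: for each $i,b,y$,
\[
\ket{\bot}\ket{i}\ket{0,0,0}\ket{b,y}\ket{{\sf L}} = \ket{\psi_{i,b,y}^{\mathrm{out}}} + \sqrt{\w_{b,\mathrm{out}}}\ket{\circ}\ket{i}\ket{0,0,0}\ket{b,y}\ket{{\sf L}}.
\]
The first summand lies in $\mathrm{span}\{\Psi_{\mathrm{out}}\}\subset \mathcal{A}$, so $(I-\Pi_{\mathcal{A}})$ kills it. The residual has squared norm $\sum_{i,b,y}|\beta_{i,b,y}^{\sf L}|^2\w_{b,\mathrm{out}}=\w_{0,\mathrm{out}}p_0+\w_{1,\mathrm{out}}p_1$, where $p_b=\|\Xi_b\ket{w_O^{\sf L}}\|^2$. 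Since the outer algorithm has error $\eps_O$ and $f\circ g(x)=0$, we have $p_1\leq \eps_O$; using $p_0=1-p_1$ and $\w_{1,\mathrm{out}}>\w_{0,\mathrm{out}}$, this rearranges to $\w_{0,\mathrm{out}}(1-\eps_O)+\w_{1,\mathrm{out}}\eps_O$. Combining with \clm{alg-fwk-delta-prime} via $\|u+v\|^2\leq 2\|u\|^2+2\|v\|^2$ then yields the stated $\delta'$.

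Finally, for $\|\ket{w_{\mathcal{A}}}\|^2$, I would decompose $\ket{\tilde w_{\mathcal{A}}}$ into its three types of summands and check mutual orthogonality. Summands indexed by different $\ell$ are orthogonal because each lives in a single $\ket{\ell}$ (or $\ket{\ell},\ket{\ell+1}$) sector; using $\ell\in\mathcal{Q}$ even, the $\ket{\ell+1}$ leakage from $\ket{\psi_{i,b,y,\ell}^{\leftarrow}}$ lands in an odd time, where no outer-state summand lives. Within a fixed $\ell\in\mathcal{Q}$, the connecting-state pieces $\ket{\psi_{i,b,y,\ell}^{\rightarrow}},\ket{\psi_{i,b,y,\ell}^{\leftarrow}}$ are supported on time register $t=0$ in the inner subroutine, while the combination $-\ket{w_-(i)}\ket{b,y}\ket{\ell}+(\ket{\rightarrow}\ket{i}-(-1)^{g_i}\ket{\leftarrow}\ket{i})\ket{0,0,0}\ket{b,y}\ket{\ell}$ is (by cancelling the $t=0$ component of $\ket{w_-(i)}$, using $\ket{w^0(i)}=\ket{0,0}$ and $\alpha_0=1$) supported on $t\geq 1$, hence orthogonal. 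Per $\ell\in\mathcal{Q}$ the first two pieces contribute $\approx 4$ and the combination has squared norm $2\sum_i\bar{q}_{i,\ell}\mathbb{E}[T_i]$ (via \cor{pos-witness-complexity}); per outer $\ell$ the outer transition summand contributes $2$. Summing and using $\sum_{\ell\in\mathcal{Q}}\q_{i,\ell}={\sf Q}\bar\q_i$ and $\sum_i\bar\q_i\mathbb{E}[T_i]\leq {\sf T}_{\mathrm{avg}}$, then applying $\|\ket{w_{\mathcal{A}}}\|^2\leq \frac{2}{\w_0}(\|\ket{\tilde w_{\mathcal{A}}}\|^2+\|\ket{\bar w_O^{\sf L}}\|^2)$ with $\|\ket{\bar w_O^{\sf L}}\|^2=1$, yields the claimed $\frac{4}{\w_0}({\sf L}+2{\sf Q}({\sf T}_{\mathrm{avg}}+1)+1)$. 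The main obstacle will be the orthogonality bookkeeping in this last step — particularly being careful that the $\ket{\ell+1}$ tails of the $\leftarrow$-connecting states do not collide with other summands, which is where the evenness of ${\sf L}$ and of every $\ell\in\mathcal{Q}$ is essential.
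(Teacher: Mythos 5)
Your proposal is correct and follows essentially the same route as the paper: the same decomposition via \clm{alg-fwk-neg-tilde}, the same use of the terminal states to absorb $\ket{\bot}\ket{\bar{w}_O^{\sf L}}\ket{\sf L}$ up to the $\w_{1,\mathrm{out}}\eps_O+\w_{0,\mathrm{out}}(1-\eps_O)$ residual, and \clm{alg-fwk-delta-prime} for the subroutine-error contributions. The only (cosmetic) difference is in the norm computation of $\ket{\tilde{w}_{\cal A}}$: the paper merges the two connecting states with the $(\ket{\rightarrow}\ket{i}-(-1)^{g_i}\ket{\leftarrow}\ket{i})\ket{0,0}\ket{0}$ term into a purely $\ket{\bot}$-supported vector and keeps $\ket{w_-(i)}$ whole, whereas you cancel that term against the $t=0$ component of $\ket{w_-(i)}$ — both groupings give the same total $2+2\mathbb{E}[T_i+1]$ per query weight.
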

\begin{proof}
We first verify that $\ket{w_{\cal A}}+\ket{w_{{\cal B}}}=\ket{\psi_0}$ (see \eq{alg-fwk-psi-init}), letting $\ket{0^6}=\ket{0}\ket{0,0}\ket{0}\ket{0,0}$:
\begin{align*}
\sqrt{\w_0}(\ket{w_{\cal A}}+\ket{w_{\cal B}})
&= \ket{\psi_{0,v_0}}+\ket{\bot}\ket{0^6}\ket{0} - \ket{\bot}\ket{\bar{w}_O^{\sf L}}\ket{{\sf L}}+ \ket{\bot}\ket{\bar{w}_O^{{\sf L}}}\ket{{\sf L}} & \mbox{by \clm{alg-fwk-neg-tilde}}\\
&=\sqrt{\w_0}\ket{\circ}\ket{0^6}\ket{0} - \ket{\bot}\ket{0^6}\ket{0} +\ket{\bot}\ket{0^6}\ket{0} & \mbox{by \eq{alg-fwk-V1-state}}\\
&= \sqrt{\w_0}\ket{\psi_0} & \mbox{by \eq{alg-fwk-psi-init}.}
\end{align*}
We thus analyze the error of this negative witness. We have:
\begin{align}
\norm{(I-\Pi_{\cal A})\ket{w_{\cal A}}}^2 &\leq \frac{4}{\w_0}\left(\norm{(I-\Pi_{\cal A})\ket{\tilde w_{\cal A}}}^2 + \norm{(I-\Pi_{\cal A})\ket{\bot}\ket{\bar{w}_O^{{\sf L}}}\ket{{\sf L}}}^2\right).\label{eq:delta-prime}
\end{align}
For the second term, we note that since $\ket{\psi_{i,b,y}^{\mathrm{out}}}\in \Psi^{\cal A}$, we have:
\begin{align*}
(I-\Pi_{\cal A})\ket{\bot}\ket{\bar{w}_O^{{\sf L}}}\ket{{\sf L}} &= (I-\Pi_{\cal A})\left(\ket{\bot}\ket{\bar{w}_O^{{\sf L}}}\ket{{\sf L}}
-\sum_{i,b,y}\beta_{i,b,y}^{{\sf L}}\ket{\psi_{i,b,y}^{\mathrm{out}}}\right)
\end{align*}
and, letting $\Xi_b$ denote the orthogonal projector onto states with $b$ in the outer algorithm's answer register:
\begin{align*}
&\ket{\bot}\ket{\bar{w}_O^{{\sf L}}}\ket{{\sf L}}-\sum_{i,b,y}\beta_{i,b,y}^{{\sf L}}\ket{\psi_{i,b,y}^{\mathrm{out}}}\\
={}&\ket{\bot}\sum_{i,b,y}\beta_{i,b,y}^{\sf L}\ket{i}\ket{0,0}\ket{0}\ket{b,y}\ket{{\sf L}}
 - \sum_{i,b,y}\beta_{i,b,y}^{{\sf L}}\left(\ket{\bot}-\sqrt{\w_{b,\mathrm{out}}}\ket{\circ}\right)\ket{i}\ket{0,0}\ket{0}\ket{b,y}\ket{{\sf L}}\\
={}& \sqrt{\w_{1,\mathrm{out}}}\ket{\circ}\Xi_1\ket{\bar{w}_O^{\sf L}}\ket{{\sf L}}+\sqrt{\w_{0,\mathrm{out}}}\ket{\circ}\Xi_0\ket{\bar{w}_O^{\sf L}}\ket{{\sf L}}.
\end{align*}
Since $f(g)=0$, we have $\norm{\Xi_1\ket{w_O^{\sf L}}}^2\leq \eps_O$, since this is the probability that the outer algorithm outputs 1 (when ${\cal O}_g$ is implemented perfectly). Thus, using $\w_{1,\mathrm{out}}>\w_{0,\mathrm{out}}$:
\begin{equation*}
\norm{(I-\Pi_{\cal A})\ket{\bot}\ket{\bar{w}_O^{{\sf L}}}\ket{{\sf L}}}^2 \leq \w_{1,\mathrm{out}}\norm{\Xi_1\ket{\bar{w}_O^{{\sf L}}}}^2+\w_{0,\mathrm{out}}\left(1-\norm{\Xi_1\ket{\bar{w}_O^{{\sf L}}}}^2\right) \leq \w_{1,\mathrm{out}}\eps_O+\w_{0,\mathrm{out}}(1-\eps_O).
\end{equation*}
Combining this with \clm{alg-fwk-delta-prime}, which gives the upper bound $\norm{(I-\Pi_{\cal A})\ket{\tilde w_{\cal A}}}^2\leq 2{\sf Q}\epsilon_{\mathrm{avg}}$, 
in \eq{delta-prime} we have:
\begin{align*}
\norm{(I-\Pi_{\cal A})\ket{w_{\cal A}}}^2 &\leq \frac{4}{\w_0}\left(2{\sf Q}\epsilon_{\mathrm{avg}} +\w_{1,\mathrm{out}}\eps_O+\w_{0,\mathrm{out}}(1-\eps_O)\right)=\delta'.
\end{align*}
Similarly, since $\ket{\psi_{0,v_0}}\in {\cal B}$, applying \clm{alg-fwk-delta-prime}, we have:
$$\norm{(I-\Pi_{\cal B})\ket{w_{\cal B}}}^2\leq \frac{1}{\w_0}\norm{(I-\Pi_{\cal B})\ket{\tilde{w}_{\cal B}}}^2 \leq \frac{2{\sf Q}}{\w_0}\epsilon_{\mathrm{avg}} < \delta'.
$$

To complete the proof, we give an upper bound on $\norm{\ket{w_{\cal A}}}^2$. By \eq{alg-fwk-neg-witness}, we have:
\begin{equation}\label{eq:alg-fwk-neg-wit-compl}
\norm{\ket{w_{\cal A}}}^2 \leq \frac{4}{\w_0}\left( \norm{\ket{\tilde{w}_{\cal A}}}^2 + \norm{\ket{\bot}\ket{\bar{w}_O^{\sf L}}\ket{{\sf L}}}^2 \right)
=\frac{4}{\w_0}\left( \norm{\ket{\tilde{w}_{\cal A}}}^2 + 1 \right).
\end{equation}
Using the fact (see \eq{alg-fwk-connecting-states}):
\begin{align*}
&\ket{\psi_{i,b,y,\ell}^{\rightarrow}}+(-1)^{g_i}\ket{\psi_{i,b,y,\ell}^{\leftarrow}}+(\ket{\rightarrow}\ket{i}-(-1)^{g_i}\ket{\leftarrow}\ket{i})\ket{0,0}\ket{0}\ket{b,y}\ket{\ell}\\
={}& \ket{\bot}\ket{i}\ket{0,0}\ket{0}\ket{b,y}\ket{\ell} - (-1)^{g_i}\ket{\bot}\ket{i}\ket{0,0}\ket{0}\ket{b,y}\ket{\ell+1} 
\end{align*}
and referring to \eq{alg-fwk-neg-witness-tilde}, we have:
\begin{align*}
\norm{\ket{\tilde w_{\cal A}}}^2 &\leq 
\sum_{\substack{\ell\in\{0,\dots,{\sf L}-1\}\setminus{\cal Q}\\ \mathrm{even}}}\sum_{i,b,y}|\beta_{i,b,y}^{\ell}|^2\norm{\ket{\psi_{i,b,y,\ell}^O}}^2\\
& \quad+ \sum_{\substack{\ell\in{\cal Q}}}\sum_{i,b,y}|\beta_{i,b,y}^{\ell}|^2\norm{\ket{\bot}\ket{i}\ket{0,0}\ket{0}\ket{b,y}(\ket{\ell}-(-1)^{g_i}\ket{\ell+1})-\ket{w_-(i)}\ket{b,y}\ket{\ell}}^2 
\\
&= 
2\sum_{\substack{\ell\in\{0,\dots,{\sf L}-1\}\setminus{\cal Q}\\ \mathrm{even}}}\norm{\ket{w_O^{\ell}}}^2 
+\sum_{\substack{\ell\in{\cal Q}}}\sum_{i\in[n]}\q_{i,\ell}\left(2+\norm{\ket{w_-(i)}}^2\right) & \mbox{by \eq{alg-comp-q_i-ell}}\\
&=
2({\sf L}/2-{\sf Q}) 
+2{\sf Q}
+\sum_{i\in[n]}{\sf Q}\bar\q_i\norm{\ket{w_-(i)}}^2
\leq {\sf L}+\sum_{i\in[n]}{\sf Q}\bar\q_{i}2\mathbb{E}\left[T_i+1\right]
, 
\end{align*}
by \eq{alg-comp-bar-q_i} and \cor{pos-witness-complexity}, using $\alpha_t=1$. Plugging this into \eq{alg-fwk-neg-wit-compl}, we have:
\begin{align*}
\norm{\ket{w_{\cal A}}}^2 &\leq \frac{4}{\w_0}\left({\sf L}+ 2{\sf Q}\left(\sum_{i\in [n]}\bar\q_i\mathbb{E}[T_i]+1\right) + 1\right)
\leq\frac{4}{\w_0}\left({\sf L}+2{\sf Q}({\sf T}_{\mathrm{avg}}+1)+1\right),
\end{align*}
by the assumption of \thm{alg-composition}.
\end{proof}

\subsection{Conclusion of Proof of \thm{alg-composition}}

We set the parameters $\w_0$, $\w_{1,\mathrm{out}}$ and $\w_{0,\mathrm{out}}$ as follows:
$$\w_0:=\frac{1}{{\sf L}+1+2{\sf Q}({\sf T}_{\mathrm{avg}}+1)},\quad
\w_{1,\mathrm{out}}:=\frac{1-\eps_O}{8}\w_0 
,\quad\mbox{and}\quad
\w_{0,\mathrm{out}}:=\frac{\eps_O}{8}\w_0 
.$$
Note that $\w_{1,\mathrm{out}}>\w_{0,\mathrm{out}}$ is satisfied.

We apply \thm{phase-est-fwk} with $H$ as in \eq{alg-comp-H}, $\ket{\psi_0}=\ket{\circ}\ket{0}\ket{0,0}\ket{0}\ket{0,0}\ket{0}$ (as in \eq{alg-fwk-psi-init}), and $\Psi^{\cal A}$ and $\Psi^{\cal B}$ as in \eq{alg-fwk-Psi}. First note that we can easily generate $\ket{\psi_0}$ in cost ${\sf S}=O(1)$.
By \lem{alg-comp-unitary} we can implement $U_{\cal AB}$ in cost ${\sf A}=\log{\sf T}+\log{\sf L}$. Let
$$c_+=18
\quad\mbox{ and }\quad
{\cal C}_- = 4({\sf L}+1+2{\sf Q}({\sf T}_{\mathrm{avg}}+1))^2$$
and let 
\begin{align*}
\delta &= 2\w_0{\sf Q}\epsilon_{\mathrm{avg}} = \frac{2{\sf Q}\epsilon_{\mathrm{avg}}}{{\sf L}+1+2{\sf Q}({\sf T}_{\mathrm{avg}}+1)}\\
\mbox{and }
\delta' &=\frac{4}{\w_0}\left(2{\sf Q}\epsilon_{\mathrm{avg}}+\w_{1,\mathrm{out}}\eps_O+\w_{0,\mathrm{out}}(1-\eps_O) \right)=8{\sf Q}\epsilon_{\mathrm{avg}}({{\sf L}+1+2{\sf Q}({\sf T}_{\mathrm{avg}}+1)})+{\eps_O(1-\eps_O)}
\end{align*}
as in \lem{alg-comp-pos-witness} and \lem{alg-comp-neg-witness}.
Then, using the assumption in the theorem statement that $\epsilon_{\mathrm{avg}}\leq \eta/({\sf Q}({\sf L}+{\sf Q}{\sf T}_{\mathrm{avg}}))$,
 we can verify that:
\begin{align*}
\delta &= 2{\sf Q}\epsilon_{\mathrm{avg}}\frac{4({\sf L}+1+2{\sf Q}({\sf T}_{\mathrm{avg}}+1))^2}{{\sf L}+1+2{\sf Q}({\sf T}_{\mathrm{avg}}+1)}\frac{1}{{\cal C}_-}
{\leq} \frac{2{\sf Q}\eta}{{\sf Q}({\sf L}+{\sf Q}{\sf T}_{\mathrm{avg}})}4({\sf L}+1+2{\sf Q}({\sf T}_{\mathrm{avg}}+1))\frac{1}{{\cal C}_-}
\leq \frac{1}{(8c_+)^3\pi^8}\frac{1}{{\cal C}_-}
\end{align*}
when $\eta$ is a sufficiently small constant, since ${\sf T}_{\mathrm{avg}}\geq 1$; and
\begin{align*}
\delta' &{\leq} 8{\sf Q}\eta\frac{{\sf L}+1+2{\sf Q}({\sf T}_{\mathrm{avg}}+1)}{{\sf Q}({\sf L}+{\sf Q}{\sf T}_{\mathrm{avg}})}+\eps_O(1-\eps_O)\leq \frac{3}{4}\frac{1}{\pi^4c_+}
\end{align*}
for sufficiently small constants $\eps_O$ and $\eta$. 
Furthermore:
\begin{description}
\item[Positive Condition:] By \lem{alg-comp-pos-witness}, if $f(g)=1$, there is a $\delta$-positive witness with 
\begin{align*}
\frac{\norm{\ket{w}}^2}{|\braket{w}{\psi_0}|^2} & \leq 1+\w_0\left({\sf L}+1+2{\sf Q}({\sf T}_{\mathrm{avg}}+1)\right)+\frac{\w_0}{\w_{0,\mathrm{out}}}\eps_O+\frac{\w_0}{\w_{1,\mathrm{out}}}(1-\eps_O)\\
&=1+1+8+8 = 18=c_+.
\end{align*}
\item[Negative Condition:] By \lem{alg-comp-neg-witness}, if $f(g)=0$, there is a $\delta'$-negative witness with $\norm{\ket{w_{\cal A}}}^2 \leq {\cal C}_-$.
\end{description}
Thus, by \thm{phase-est-fwk}, there is a quantum algorithm that decides $f\circ g$ with bounded error in complexity:
$$O({\sf S}+\sqrt{{\cal C}_-}{\sf A})=O(({\sf L}+{\sf Q}{\sf T}_{\mathrm{avg}})(\log {\sf L}+\log{\sf T})).$$

\section{Acknowledgements}

We thank Fr\'ed\'eric Magniez for early discussions about the concept of variable-time quantum walks. We thank Simon Apers, Aleksandrs Belovs, Shelby Kimmel, Jevg\={e}nijs Vihrovs, Ronald de Wolf, Duyal Yolcu and Sebastian Zur for helpful discussions and feedback on these results. We thank the referees for their helpful suggestions and improvements, and Wouter van Bastelaere for pointing out a typo in equation \eq{alg-comp-H}.

\bibliographystyle{alpha}
\bibliography{refs}

\newcommand{\etalchar}[1]{$^{#1}$}
\newcommand{\lName}{1}\newcommand{\arxiv}[1]{arXiv:
  \href{https://arxiv.org/abs/#1}{\ttfamily{#1}}\removefirstdot}\newcommand{\arXiv}[1]{arXiv:
  \href{https://arxiv.org/abs/#1}{\ttfamily{#1}}\removefirstdot}\def\removefirstdot#1{\if.#1{}\else#1\fi}\providecommand{\multiletter}[1]{#1}\renewcommand{\multiletter}[1]{#1}\DeclareRobustCommand{\dutchPrefix}[2]{#2}\providecommand{\dutchPrefix}[2]{#2}\renewcommand{\dutchPrefix}[2]{#2}\newcommand{\skp}[3]{#2}\newcommand{\focs
  }[1]{\if\lName1\skp{ }{Proceedings of the #1 {IEEE} Symposium on Foundations
  of Computer Science ({FOCS})}{ }\else{FOCS}\fi}\newcommand{\stoc
  }[1]{\if\lName1\skp{ }{Proceedings of the #1 {ACM} Symposium on the Theory of
  Computing ({STOC})}{ }\else{STOC}\fi}\newcommand{\soda }[1]{\if\lName1\skp{
  }{Proceedings of the #1 {ACM-SIAM} Symposium on Discrete Algorithms
  ({SODA})}{ }\else{SODA}\fi}\newcommand{\stacs }[1]{\if\lName1\skp{
  }{Proceedings of the #1 Symposium on Theoretical Aspects of Computer Science
  ({STACS})}{ }\else{STACS}\fi}\newcommand{\itcs }[1]{\if\lName1\skp{
  }{Proceedings of the #1 Innovations in Theoretical Computer Science
  Conference (ITCS)}{ }\else{ITCS}\fi}\newcommand{\fsttcs }[1]{\if\lName1\skp{
  }{Proceedings of the #1 International Conference on Foundations of Software
  Technology and Theoretical Computer Science (FSTTCS)}{
  }\else{FSTTCS}\fi}\newcommand{\mfcs }[1]{\if\lName1\skp{ }{Proceedings of the
  #1 International Symposium on Mathematical Foundations of Computer Science
  ({MFCS})}{ }\else{MFCS}\fi}\newcommand{\ccc }[1]{\if\lName1\skp{
  }{Proceedings of the #1 {IEEE} Conference on Computational Complexity
  ({CCC})}{ }\else{CCC}\fi}\newcommand{\isit }[1]{\if\lName1\skp{ }{Proceedings
  of the #1 {IEEE} International Symposium on Information Theory ({ISIT})}{
  }\else{ISIT}\fi}\newcommand{\colt }[1]{\if\lName1\skp{ }{Proceedings of the
  #1 Conference On Learning Theory (COLT)}{ }\else{COLT}\fi}\newcommand{\nips
  }[1]{\if\lName1\skp{ }{Advances in Neural Information Processing Systems #1
  ({NIPS})}{ }\else{NIPS}\fi}\newcommand{\aistats }[1]{\if\lName1\skp{
  }{Proceedings of the #1 International Conference on Artificial Intelligence
  and Statistics ({AISTATS})}{ }\else{AISTATS}\fi}\newcommand{\icml
  }[1]{\if\lName1\skp{ }{Proceedings of the #1 International Conference on
  Machine Learning (ICML)}{ }\else{ICML}\fi}\newcommand{\icalp
  }[1]{\if\lName1\skp{ }{Proceedings of the #1 International Colloquium on
  Automata, Languages, and Programming (ICALP)}{
  }\else{ICALP}\fi}\newcommand{\esa }[1]{\if\lName1\skp{ }{Proceedings of the
  #1 Annual European Symposium on Algorithms (ESA)}{
  }\else{ESA}\fi}\newcommand{\tqc }[1]{\if\lName1\skp{ }{Proceedings of the #1
  Conference on the Theory of Quantum Computation, Communication, and
  Cryptography (TQC)}{}\else{TQC}\fi}\newcommand{\jacm }{\if\lName1\skp{
  }{Journal of the ACM}{ }\else{J. ACM}\fi}\newcommand{\acmta }{\if\lName1\skp{
  }{ACM Transactions on Algorithms}{ }\else{{ACM} Tr.
  Alg}\fi}\newcommand{\acmtct }{\if\lName1\skp{ }{ACM Transactions on
  Computation Theory}{ }\else{ACM Tr. Comp. Th.}\fi}\newcommand{\jams
  }{\if\lName1\skp{ }{Journal of the AMS}{ }\else{J. AMS}\fi}\newcommand{\pams
  }{\if\lName1\skp{ }{Proceedings of the AMS}{ }\else{Proc.
  AMS}\fi}\newcommand{\linalgappl }{\if\lName1\skp{ }{Linear Algebra and its
  Applications}{ }\else{Lin. Alg. \& App.}\fi}\newcommand{\jalgo
  }{\if\lName1\skp{ }{Journal of Algorithms}{ }\else{J.
  Alg.}\fi}\newcommand{\jcss }{\if\lName1\skp{ }{Journal of Computer and System
  Sciences}{ }\else{J. Comp. Sys. Sci.}\fi}\newcommand{\cc }{\if\lName1\skp{
  }{Computational Complexity}{ }\else{Comp. Comp.}\fi}\newcommand{\algor
  }{\if\lName1\skp{ }{Algorithmica}{ }\else{Alg.}\fi}\newcommand{\comb
  }{\if\lName1\skp{ }{Combinatorica}{ }\else{Comb.}\fi}\newcommand{\cacm
  }{\if\lName1\skp{ }{Communications of the ACM}{ }\else{Comm.
  ACM}\fi}\newcommand{\sigart }{\if\lName1\skp{ }{SIGART Bulletin}{
  }\else{SIGART Bull.}\fi}\newcommand{\sigactn }{\if\lName1\skp{ }{SIGACT
  News}{ }\else{SIGACT News}\fi}\newcommand{\eatcsbul }{\if\lName1\skp{
  }{Bulletin of the {EATCS}}{ }\else{Bull. {EATCS}}\fi}\newcommand{\siamrev
  }{\if\lName1\skp{ }{SIAM Review}{ }\else{SIAM Rev.}\fi}\newcommand{\siamjc
  }{\if\lName1\skp{ }{SIAM Journal on Computing}{ }\else{SIAM J.
  Comp.}\fi}\newcommand{\siamjo }{\if\lName1\skp{ }{SIAM Journal on
  Optimization}{ }\else{SIAM J. Opt.}\fi}\newcommand{\siamjdm }{\if\lName1\skp{
  }{SIAM Journal on Discrete Mathematics}{ }\else{SIAM J. Disc.
  Math.}\fi}\newcommand{\siamjnum }{\if\lName1\skp{ }{SIAM Journal on Numerical
  Analysis}{ }\else{SIAM J. Num. Anal.}\fi}\newcommand{\siamjmathanal
  }{\if\lName1\skp{ }{SIAM Journal on Mathematical Analysis}{ }\else{SIAM J.
  Math. Anal.}\fi}\newcommand{\discmath }{\if\lName1\skp{ }{Discrete
  Mathematics}{ }\else{Disc. Math.}\fi}\newcommand{\das }{\if\lName1\skp{
  }{Discrete Applied Mathematics}{ }\else{Disc. App.
  Math.}\fi}\newcommand{\amatstat }{\if\lName1\skp{ }{Annals of Mathematical
  Statistics}{ }\else{Ann. Math. Stat.}\fi}\newcommand{\rms }{\if\lName1\skp{
  }{Russian Mathematical Surveys}{ }\else{Russ. Math.
  Surv.}\fi}\newcommand{\invmath }{\if\lName1\skp{ }{Inventiones Mathematicae}{
  }\else{Inv. Math.}\fi}\newcommand{\jnumber }{\if\lName1\skp{ }{Journal of
  Number Theory}{ }\else{J. Num. Th.}\fi}\newcommand{\toc }{\if\lName1\skp{
  }{Theory of Computing}{ }\else{Th. Comp.}\fi}\newcommand{\cjtcs
  }{\if\lName1\skp{ }{Chicago Journal of Theoretical Computer
  Science}{}\else{Chic. J. Th. Comp. Sci.}\fi}\newcommand{\tocsys
  }{\if\lName1\skp{ }{Theory of Computing Systems}{}\else{Theory Comput.
  Syst.}\fi}\newcommand{\quantum }{\if\lName1\skp{ }{{Quantum}}{
  }\else{Quant.}\fi}\newcommand{\cmp }{\if\lName1\skp{ }{Communications in
  Mathematical Physics}{ }\else{Comm. Math. Phys.}\fi}\newcommand{\jmp
  }{\if\lName1\skp{ }{Journal of Mathematical Physics}{ }\else{J. Math.
  Phys.}\fi}\newcommand{\rspa }{\if\lName1\skp{ }{Proceedings of the Royal
  Society A}{ }\else{Proc. Roy. Soc. A}\fi}\newcommand{\qic }{\if\lName1\skp{
  }{Quantum Information and Computation}{ }\else{Quant. Inf. \&
  Comp.}\fi}\newcommand{\physrev }{\if\lName1\skp{ }{Physical Review}{
  }\else{Phys. Rev.}\fi}\newcommand{\pra }{\if\lName1\skp{ }{Physical Review
  A}{ }\else{Phys. Rev. A}\fi}\newcommand{\prb }{\if\lName1\skp{ }{Physical
  Review B}{ }\else{Phys. Rev. B}\fi}\newcommand{\pre }{\if\lName1\skp{
  }{Physical Review E}{ }\else{Phys. Rev. E}\fi}\newcommand{\prr
  }{\if\lName1\skp{ }{Physical Review Research}{ }\else{Phys. Rev.
  Research}\fi}\newcommand{\prx }{\if\lName1\skp{ }{Physical Review X}{
  }\else{Phys. Rev. X}\fi}\newcommand{\prl }{\if\lName1\skp{ }{Physical Review
  Letters}{ }\else{Phys. Rev. Lett.}\fi}\newcommand{\njp }{\if\lName1\skp{
  }{New Journal of Physics}{ }\else{New J. Phys.}\fi}\newcommand{\prapp
  }{\if\lName1\skp{ }{Physical Review Applied}{ }\else{Phys. Rev.
  Appl.}\fi}\newcommand{\physrep }{\if\lName1\skp{ }{Physics Reports}{
  }\else{Phys. Rep.}\fi}\newcommand{\rmp }{\if\lName1\skp{ }{Reviews of Modern
  Physics}{ }\else{Rev. Mod. Phys. }\fi}\newcommand{\phystoday
  }{\if\lName1\skp{ }{Physics Today}{ }\else{Phys.
  Today}\fi}\newcommand{\physics }{\if\lName1\skp{ }{Physics}{
  }\else{Phys.}\fi}\newcommand{\nature }{\if\lName1\skp{ }{Nature}{
  }\else{Nat.}\fi}\newcommand{\natcomm }{\if\lName1\skp{ }{Nature
  Communications}{ }\else{Nat. Comm.}\fi}\newcommand{\natphys }{\if\lName1\skp{
  }{Nature Physics}{ }\else{Nat. Phys.}\fi}\newcommand{\npjqi }{\if\lName1\skp{
  }{npj Quantum Information}{ }\else{npj Quant. Inf.}\fi}\newcommand{\scirep
  }{\if\lName1\skp{ }{Scientific Reports}{ }\else{Sci.
  Rep.}\fi}\newcommand{\science }{\if\lName1\skp{ }{Science}{
  }\else{Sci.}\fi}\newcommand{\jpa }{\if\lName1\skp{ }{Journal of Physics A:
  Mathematical and Theoretical}{ }\else{J. Phys. A}\fi}\newcommand{\ijtp
  }{\if\lName1\skp{ }{International Journal of Theoretical Physics}{
  }\else{Int. J. Th. Phys.}\fi}\newcommand{\jmo }{\if\lName1\skp{ }{Journal of
  Modern Optics}{ }\else{J. Mod. Opt.}\fi}\newcommand{\jstatph
  }{\if\lName1\skp{ }{Journal of Statistical Physics}{ }\else{J. Stat.
  Phys.}\fi}\newcommand{\pnas }{\if\lName1\skp{ }{Proceedings of the National
  Academy of Sciences}{ }\else{PNAS}\fi}\newcommand{\lncs }{\if\lName1\skp{
  }{Lecture Notes in Computer Science}{ }\else{L. Notes Comp.
  Sci.}\fi}\newcommand{\lnai }{\if\lName1\skp{ }{Lecture Notes in Artificial
  Intelligence}{ }\else{L. Notes Art. Int.}\fi}\newcommand{\lnm
  }{\if\lName1\skp{ }{Lecture Notes in Mathematics}{ }\else{L. Notes
  Math.}\fi}\newcommand{\tams }{\if\lName1\skp{ }{Transactions of the American
  Mathematical Society}{ }\else{Trans. AMS}\fi}\newcommand{\ieeetit
  }{\if\lName1\skp{ }{{IEEE} Transactions on Information Theory}{ }\else{{IEEE}
  Trans. Inf. Th.}\fi}\newcommand{\iscs }{\if\lName1\skp{ }{International
  Series in Computer Science}{ }\else{Int. Ser. Comp.
  Sci.}\fi}\newcommand{\tocl }{\if\lName1\skp{ }{Theory of Computing Library}{
  }\else{Th. Comp. Lib.}\fi}
\begin{thebibliography}{B{\dutchPrefix{Wolf}{d}}W03}

\bibitem[ACR{\etalchar{+}}07]{childs2007NANDformula}
Andris Ambainis, Andrew~M. Childs, Ben~W. Reichardt, Robert {\v{S}}palek, and
  Shengyu Zhang.
\newblock Every {AND-OR} formula of size $n$ can be evaluated in time
  $n^{\frac{1}{2}+o(1)}$ on a quantum computer.
\newblock In {\em \focs{48th}}, pages 363--372, 2007.

\bibitem[AGJ20]{apers2019UnifiedFrameworkQWSearch}
Simon Apers, Andr{\'a}s Gily{\'e}n, and Stacey Jeffery.
\newblock A unified framework of quantum walk search.
\newblock In {\em \stacs{38th}}, pages 6:1--6:13, 2020.
\newblock \arxiv{1912.04233}.

\bibitem[Amb10]{ambainis2010VTSearch}
Andris Ambainis.
\newblock Quantum search with variable times.
\newblock {\em \tocsys}, 47:786--807, 2010.
\newblock \arxiv{quant-ph/0609168}.

\bibitem[Amb12]{ambainis2010VTAA}
Andris Ambainis.
\newblock Variable time amplitude amplification and quantum algorithms for
  linear algebra problems.
\newblock In {\em \stacs{29th}}, pages 636--647, 2012.
\newblock \arxiv{1010.4458}.

\bibitem[BCJ{\etalchar{+}}13]{belovs2013TimeEfficientQW3Distintness}
Aleksandrs Belovs, Andrew~M. Childs, Stacey Jeffery, Robin Kothari, and
  Fr\'{e}d\'{e}ric Magniez.
\newblock Time-efficient quantum walks for 3-distinctness.
\newblock In {\em \icalp{40th}}, pages 105--122, 2013.

\bibitem[Bel13]{belovs2013ElectricWalks}
Aleksandrs Belovs.
\newblock Quantum walks and electric networks.
\newblock \arxiv{1302.3143}, 2013.

\bibitem[BJY24]{belovs2024Taming}
Aleksandrs Belovs, Stacey Jeffery, and Duyal Yolcu.
\newblock Taming quantum time complexity.
\newblock {\em Quantum}, page 1444, 2024.
\newblock \arxiv{2311.15873}.

\bibitem[B{\dutchPrefix{Wolf}{d}}W03]{buhrman2003zeroerror}
Harry Buhrman and Ronald {\dutchPrefix{Wolf}{d}}e~Wolf.
\newblock Quantum zero-error algorithms cannot be composed.
\newblock {\em Information Processing Letters}, 87(2):79--84, 2003.
\newblock \arxiv{quant-ph/0211029}.

\bibitem[BY23]{belovs2023LasVegas}
Aleksandrs Belovs and Duyal Yolcu.
\newblock One-way ticket to as vegas and the quantum adversary.
\newblock \arxiv{2301.02003}, 2023.

\bibitem[CJKM13]{childs2013arXivTimeEfficientQW3Distintness}
Andrew~M. Childs, Stacey Jeffery, Robin Kothari, and Fr{\'e}d{\'e}ric Magniez.
\newblock A time-efficient quantum walk for 3-distinctness using nested
  updates.
\newblock \arxiv{1302.7316}, 2013.

\bibitem[CJOP20]{cornelissen2020SpanProgramTime}
Arjan Cornelissen, Stacey Jeffery, Maris Ozols, and Alvaro Piedrafita.
\newblock Span programs and quantum time complexity.
\newblock In {\em \mfcs{45th}}, pages 21:1--26:14, 2020.
\newblock \arxiv{2005.01323}.

\bibitem[CKKD{\etalchar{+}}22]{childs2022Divide}
Andrew~M. Childs, Robin Kothari, Matt Kovacs-Deak, Aarthi Sundaram, and Daochen
  Wang.
\newblock Quantum divide and conquer.
\newblock \arxiv{2210.06419}, 2022.

\bibitem[CLM20]{carette2019triangleLG}
Titouan Carette, Mathieu Laurière, and Frédéric Magniez.
\newblock Extended learning graphs for triangle finding.
\newblock {\em Algorithmica}, 82:980--1005, 2020.

\bibitem[Cor23]{cornelissen2023thesis}
Arjan Cornelissen.
\newblock {\em Quantum multivariate estimation and span program algorithms}.
\newblock PhD thesis, University of Amsterdam, 2023.

\bibitem[CRR{\etalchar{+}}96]{chandra1996ElectricalResAndCommute}
Ashok~K. Chandra, Prabhakar Raghavan, Walter~L. Ruzzo, Roman Smolensky, and
  Prasoon Tiwari.
\newblock The electrical resistance of a graph captures its commute and cover
  times.
\newblock {\em Computational Complexity}, 6(4):312--340, 1996.

\bibitem[DH17]{dohotaru2017controlledQAmp}
C\v{a}t\v{a}lin Dohotaru and Peter H{\o}yer.
\newblock Controlled quantum amplification.
\newblock In {\em \icalp{44th}}, pages 18:1--18:13, 2017.

\bibitem[DS84]{doyle1984RandomWalksAndElectriNetw}
Peter~G. Doyle and J.~Laurie Snell.
\newblock {\em Random walks and electric networks}.
\newblock Mathematical Association of America, 1984.
\newblock \arxiv{math/0001057}.

\bibitem[JKM12]{jeffery2012NestedQW}
Stacey Jeffery, Robin Kothari, and Fr\'ed\'eric Magniez.
\newblock Nested quantum walks with quantum data structures.
\newblock In {\em \soda{24th}}, pages 1474--1485, 2012.
\newblock \arxiv{1210.1199}.

\bibitem[JZ22]{jeffery2022kDist}
Stacey Jeffery and Sebastian Zur.
\newblock Multidimensional quantum walks and application to $k$-distinctness.
\newblock \arxiv{2208.13492}, 2022.

\bibitem[Kit96]{kitaev1996PhaseEst}
Alexei~Y. Kitaev.
\newblock Quantum measurements and the {A}belian stabilizer problem.
\newblock {\em ECCC}, TR96-003, 1996.
\newblock \arxiv{quant-ph/9511026}.

\bibitem[LG14]{legall2014triangle}
François Le~Gall.
\newblock Improved quantum algorithm for triangle finding via combinatorial
  arguments.
\newblock In {\em \focs{55th}}, pages 216--225, 2014.
\newblock \arxiv{1407.0085}.

\bibitem[LWP09]{levin2017MarkovChainsMixingTimes}
David~A. Levin, Elizabeth Wilmer, and Yuval Peres.
\newblock {\em Markov chains and mixing times}.
\newblock AMS, Providence, RI, USA, 1st edition, 2009.

\bibitem[MNRS11]{magniez2006SearchQuantumWalk}
Frédéric Magniez, Ashwin Nayak, Jérémie Roland, and Miklos Santha.
\newblock Search via quantum walk.
\newblock {\em \siamjc}, 40(1):142--164, 2011.
\newblock Earlier version in STOC'07. \arxiv{quant-ph/0608026}.

\bibitem[Rei11a]{reichardt2009GameTree}
Ben~W. Reichardt.
\newblock Faster quantum algorithm for evaluating game trees.
\newblock In {\em \soda{22nd}}, pages 546--559, 2011.

\bibitem[Rei11b]{reichardt2009unbalanced}
Ben~W. Reichardt.
\newblock Span-program-based quantum algorithm for evaluating unbalanced
  formulas.
\newblock In {\em \tqc{9th}}, pages 73--103, 2011.

\bibitem[R{\v{S}}12]{reichardt2011spanformulas}
Ben~W. Reichardt and Robert {\v{S}}palek.
\newblock Span-program-based quantum algorithm for evaluating formulas.
\newblock {\em \toc}, 8(13):291--319, 2012.

\bibitem[Sze04]{szegedy2004QMarkovChainSearch}
Mario Szegedy.
\newblock Quantum speed-up of {M}arkov chain based algorithms.
\newblock In {\em \focs{45th}}, pages 32--41, 2004.
\newblock \arxiv{quant-ph/0401053}.

\end{thebibliography}

\end{document}